\pdfoutput=1 
\documentclass[11pt]{article}
\usepackage{fullpage}
\usepackage[pagebackref, colorlinks = true, linkcolor = blue, urlcolor  = blue, citecolor = purple, bookmarks, hypertexnames=false]{hyperref}
\usepackage{amssymb}
\usepackage{amsmath}
\usepackage{amsthm}
\usepackage{graphicx,color,colordvi}
\usepackage{bbm}
\usepackage{varioref}
\usepackage[capitalise]{cleveref}
\usepackage{stmaryrd}
\usepackage[utf8]{inputenc}
\usepackage[blocks]{authblk}
\usepackage{dsfont}
\usepackage{mathtools}
\usepackage{authblk}
\usepackage{wrapfig}
\usepackage[english]{babel}
\usepackage{physics}
\usepackage{braket}
\usepackage[table]{xcolor}
\usepackage[center]{caption}
\usepackage{tikz}
\usetikzlibrary{shapes,arrows,patterns,positioning,shapes.geometric,arrows.meta,decorations.markings}
\usepackage{ifthen}
\usepackage{pgfplots}
\pgfplotsset{compat=1.9}
\usetikzlibrary{shapes,arrows.meta}
\usetikzlibrary{positioning}
\usetikzlibrary{shapes.geometric}
\RequirePackage[framemethod=default]{mdframed}
\usepackage[margin=1in]{geometry}
\usepackage{comment}
\usepackage{url}
\usepackage{ upgreek }
\usepackage{makecell}
\usepackage{fancyref} 
\usepackage{float}
\usepackage{enumitem}
\usepackage{autonum}
\usepackage{subcaption}
\newfloat{algorithm}{t}{lop}
\usepackage{array,rotating ,makecell, multirow, tabularx}
\usepackage{scrextend}
\usepackage{mathrsfs}
\usepackage{enumitem}
\usepackage{soul}

\usepackage{IEEEtrantools}
\usepackage[normalem]{ulem}
\usepackage{cancel}

\DeclareMathOperator{\opt}{opt}

\DeclareMathOperator{\id}{id}
\DeclareMathOperator{\co}{co}
\DeclareMathOperator{\cb}{cb}

\newcommand{\1}{\ensuremath{\mathbbm{1}}}

\newcommand{\SWAP}{\mathsf{SWAP}}

\newtheoremstyle{newdefinition}{}{}{\normalfont}{}{\bfseries}{}
{ }
{\thmname{#1} \thmnumber{#2}\thmnote{ (#3)}}

\newtheoremstyle{newplain}{}{}{\itshape}{}{\bfseries}{}{1em}
{\thmname{#1} \thmnumber{#2}\thmnote{ (#3)}}

\newtheoremstyle{newremark}{}{}{\normalfont}{}{\bfseries}{}{1em}
{\thmname{#1}}

\theoremstyle{newdefinition}
\newtheorem{definition}{Definition}[]

\theoremstyle{newplain}
\newtheorem{theorem}{Theorem}[section]
\newtheorem{lemma}{Lemma}[section]
\newtheorem{proposition}{Proposition}[section]
\newtheorem{corollary}{Corollary}[section]
\newtheorem{remark}{Remark}[section]

\newtheoremstyle{myplain}{5pt}{5pt}{\itshape}{0pt}{\bfseries}{}{5pt plus 1pt minus 1pt}{}
\theoremstyle{myplain}
\newtheorem*{theorem*}{Theorem}
\newtheorem*{corollary*}{Corollary}
\newtheorem*{lemma*}{Lemma}
\newtheorem*{definition*}{Definition}
\newtheorem*{remark*}{Remark}
\newtheorem*{proposition*}{Proposition}

\DeclareMathOperator{\R}{\mathbb{R}}

\DeclareMathOperator{\C}{\mathbb{C}}

\DeclareMathOperator{\cH}{\mathcal{H}}

\DeclareMathOperator{\cS}{\mathcal{S}}
\DeclareMathOperator{\cT}{\mathcal{T}}
\DeclareMathOperator{\cK}{\mathcal{K}}
\DeclareMathOperator{\cC}{\mathcal{C}}
\DeclareMathOperator{\cB}{\mathcal{B}}

\DeclareMathOperator{\cE}{\mathcal{E}}

\DeclareMathOperator{\cX}{\mathcal{X}}
\DeclareMathOperator{\cY}{\mathcal{Y}}

\DeclareMathOperator{\cD}{\mathcal{D}}

\DeclareMathOperator{\cQ}{\mathcal{Q}}
\DeclareMathOperator{\cP}{\mathcal{P}}
\DeclareMathOperator{\cR}{\mathcal{R}}

\title{Multi-Indexed Schatten Quasi- and Anti-Norms \\ and Additivity of Conditional Rényi Entropies}
\author[]{Jan Kochanowski\thanks{Email: jan.kochanowski@inria.fr}}

\affil[]{Inria, Télécom Paris - LTCI, Institut Polytechnique de Paris, 91120 Palaiseau, France}

\date{\today}

\begin{document}

\maketitle

\begin{abstract}
\noindent We introduce and study multi-indexed Schatten spaces which lift the commutative $\ell_{q_1}\![\ell_{q_2}\![\ell_{q_3}\![\dots]]]$ -spaces to the quantum toolbox. They extend Pisier's operator-valued Schatten norms and the recently introduced two-indexed Schatten quasi-norms. For every ordered tuple $\cQ=(q_1,\dots,q_n)$ of non-zero indices we define a quasi-norm (for positive indices) or an anti-norm (for negative indices) on operators on an $n$-fold tensor product of Hilbert spaces. We do this for indices of the same sign satisfying the mutual compatibility condition $|\frac{1}{q_i}-\frac{1}{q_j}|\leq1$. Beyond establishing desirable properties of these quasi- and anti-norms via factorization formulas that combine operator space and matrix-analytic techniques, we lift several central tools and results of the seminal [Devetak, Junge, King, Ruskai, CMP 2006] to the multi-indexed and quasi-normed setting. These include a general non-commutative Minkowski inequality relating swapped systems and two identity-removal equalities. As a consequence, complete hypercontractivity and complete reverse hypercontractivity tensorize for all compatible indices, including negative ones.

Concerning quantum information theory we prove that the completely bounded minimal and the maximal output conditional entropies are additive under tensor products of quantum channels for all $\alpha\geq\frac12$. This extends the case $\alpha>1$ of [Fawzi, Kochanowski, Rouzé, Van Himbeeck, CMP 2026] and the unconditional results of [Devetak, Junge, King, Ruskai, CMP 2006]. Beyond tensor products, we establish chain rules for $\tilde H^\uparrow_\alpha$, $\alpha\in[\frac12,1)$, for compositions of possibly signaling channels, extending upon the smooth max-entropy chain rule used in device-independent entanglement certification to the whole interval $\alpha\in[1/2,1)$.
\end{abstract}

\newpage  
\setcounter{tocdepth}{3} 
\tableofcontents  
\newpage 

\section{Introduction} 

Entropy is a central concept of information theory, with applications throughout mathematics, physics and computer science. It quantifies disorder in statistical physics \cite{Boltzmann.1877, Jaynes.1957, LiebYngvason.1999}, uncertainty and compressibility in information theory \cite{Shannon.1948}, and, through von Neumann's quantum extension \cite{vonNeumann.1927, book:vonNeumann.1932}, correlations and entanglement in quantum systems \cite{Horodecki.2009}, where it also captures the structure of ground states of many-body systems and of quantum field theories \cite{Eisert.2010, book:Hollands.2018}. In the form of relative entropies, conditional entropies and mutual informations, it determines the optimal rates of operational tasks such as data compression, communication, hypothesis testing, key distribution and entanglement manipulation \cite{book:Watrous.2018, book:KhatriWilde.2024}. The von Neumann entropy and the Umegaki relative entropy \cite{Umegaki.1962}, however, describe these tasks only in the asymptotic limit of infinitely many independent repetitions. Rényi's one-parameter family of entropies \cite{Renyi.1961} and its quantum generalizations \cite{Petz.1986, Lennert.2013, Wilde.2014} refine this picture to settings with finite resources: they appear in one-shot and finite-block-length bounds, in error and strong converse exponents, and in security proofs with finitely many rounds \cite{Tomamichel.2013, book:Tomamichel.2016, Mosonyi.2011, Mosonyi.2015, Dupuis.2022}, and they have direct operational meaning in single step quantum thermodynamics \cite{Brandao.2015.PNAS}. Understanding how such quantities behave under physical processes and protocols is therefore a basic and relevant structural question.

In classical information theory, Rényi quantities are conveniently described by so called mixed norms.

\subsection{Classical Mixed Norms}

\emph{Mixed norms} measure functions of or sequences in several variables with a different exponent in each variable. For a doubly indexed sequence $a=(a_{xy})$, the $\ell_q[\ell_p]$ norm $\|a\|_{\ell_q[\ell_p]}=\big(\sum_y(\sum_x|a_{xy}|^p)^{q/p}\big)^{1/q}$ first takes the $\ell_p$ norm along $x$ and then the $\ell_q$ norm along $y$. Iterating this procedure yields the nested spaces $\ell_{q_1}[\ell_{q_2}[\ell_{q_3}[\cdots]]]$. Since the definition is pointwise, it applies verbatim to arbitrarily many non-zero exponents: if all exponents lie in $[1,\infty]$ one obtains norms, if all are positive one obtains quasi-norms, and if all lie in $(-\infty,1]\setminus\{0\}$ the resulting functionals are anti-norms. Such spaces were introduced systematically by Benedek and Panzone \cite{Benedek.1961} and have since become a standard tool in harmonic analysis, Banach space theory, and classical information theory.

he Rényi entropy of a distribution is a logarithm of an $\ell_\alpha$ (quasi-)norm \cite{Campbell.1965, Erven.2014}, and many other Rényi quantities are logarithms of mixed (quasi-)norms of joint distributions. Arimoto's conditional entropy, for instance, is an $\ell_1[\ell_\alpha]$ (quasi-)norm of $P_{XY}$ \cite{Arimoto.1977, Fehr.2014},
\begin{align}\label{equ:intro.arimoto}
        H_\alpha(X|Y)_P=\frac{\alpha}{1-\alpha}\log\sum_y\Big(\sum_xP(x,y)^\alpha\Big)^{\frac1\alpha}=\frac{\alpha}{1-\alpha}\log\|P_{YX}\|_{\ell_1[\ell_\alpha]}.
\end{align}
Gallager's exponent function and Ar{\i}kan's guessing moments are of the same $\ell_1[\ell_\alpha]$ form with $\alpha=\frac{1}{1+\rho}<1$ \cite{Gallager.1965, Arikan.1996, Bunte.2014} and Sibson's $\alpha$-mutual information is the mutual information counterpart, given as an $\ell_1[\ell_\alpha]$ quasi-norm, of $\big(P_X(x)^{1/\alpha}P_{Y|X}(y|x)\big)$ \cite{Sibson.1969, Verdu.2015}. Multi-system settings likewise lead to operational quantities with more than two levels nesting \cite{Scarlett.2016}.

\subsection{Non-Commutative Mixed Norms}

Quantum information theory inherits these entropic quantities, for instance through the Petz \cite{Petz.1986} and the sandwiched \cite{Lennert.2013, Wilde.2014} Rényi divergences and the conditional entropies and mutual informations they induce. It does not, however, inherit the convenient coordinate structure that makes mixed norms easy to define. Non-commutativity, the existence of entanglement, and the non-existence of a canonical basis means there is no direct way lift the above mixed norm toolbox to the quantum regime. For instance, the naïve expression $\tr_1\big[\tr_2[|X|^p]^{q/p}\big]^{1/q}$ is not even convex \cite{Carlen.2008} for $p>2$ and hence generally cannot be a norm. 
A systematic answer for indices in $[1,\infty]$ is Pisier's theory of operator-valued Schatten spaces \cite{book:Pisier.1998}, see also \cite{Beigi.2023} for a review in quantum information theory. It provides, among others, two-indexed norms $\|\cdot\|_{(q,p)}$ on $\cB(\cH_1\otimes\cH_2)$ which act locally as Schatten-$q$ and Schatten-$p$ norms, satisfy $\|X\otimes Y\|_{(q,p)}=\|X\|_q\|Y\|_p$, collapse to $\ell_q[\cS_p]$ on block-diagonal operators, and come with Hölder duality, a non-commutative Minkowski inequality \cite[Corollary 1.10]{book:Pisier.1998} and, most importantly, Pisier's factorization formula \cite[Theorem 1.5]{book:Pisier.1998}, which makes them tractable. They are thus non-commutative versions of the $\ell_q[\ell_p]$ spaces for $q,p\geq1$. These norms have found applications in quantum information theory, among others to the additivity of completely bounded minimal output entropies \cite{Devetak.2006} and to complete hypercontractivity and complete (modified) logarithmic Sobolev inequalities \cite{Beigi.2016, Gao.2020, Bardet.2022, Bardet.2024}. They also provide the quantum analogue of \eqref{equ:intro.arimoto}: for $\alpha>1$, the optimized sandwiched conditional R\'enyi entropy and the R\'enyi entropy of a marginal satisfy \cite{Devetak.2006, Beigi.2023}
\begin{align}\label{equ:Intro.opt.cond.entropy}
    \tilde H^\uparrow_\alpha(A|B)_{\rho_{AB}}=\frac{\alpha}{1-\alpha}\log\|\rho_{BA}\|_{(1,\alpha)}
    \qquad\text{and}\qquad
    H_\alpha(A)_{\tr_B[\rho_{AB}]}=\frac{\alpha}{1-\alpha}\log\|\rho_{AB}\|_{(\alpha,1)},
\end{align}
where for a pure state $\rho_{AB}$ the latter is its Rényi entanglement entropy.

Pisier's theory has two limitations. First, being built on operator space theory, it only gives rise to norms, and hence indices $q,p\geq1$. Second, although his construction can in principle be iterated to arbitrarily many indices, the resulting norms are not sufficiently analytically tractable for applications beyond two, or with additional effort three, indices, since Pisier's factorization formula \cite[Theorem 1.5]{book:Pisier.1998} does not suitably apply to these \cite{Fawzi.2026}. The first limitation was recently lifted for two indices in joint work with Fawzi and Rouz\'e \cite{Kochanowski.2026}, where we put Pisier's two-indexed theory on a matrix-analytic foundation. This allows one to construct $\cS_q[\cS_p]$ spaces for all $q,p>0$ satisfying the \emph{compatibility condition} $|\frac1q-\frac1p|\leq1$, which is automatically satisfied for $q,p\geq1$. This condition is sufficient, and essentially necessary, for these objects to behave like the commutative $\ell_q[\ell_p]$ spaces. For the pair $(1,\alpha)$ in \eqref{equ:Intro.opt.cond.entropy} it holds exactly when $\alpha\geq\frac12$, which is precisely the range in which the sandwiched R\'enyi divergence satisfies the data-processing inequality \cite{FrankLieb.2013}. Consequently, \eqref{equ:Intro.opt.cond.entropy} holds for all $\alpha\geq\frac12$ \cite{Kochanowski.2026}, and the extension to quasi-norms reaches an information-theoretically meaningful regime: at $\alpha=\frac12$ the optimized conditional entropy is the max-entropy, which has a direct operational meaning \cite{Koenig.2009} and underlies device-independent entanglement certification \cite{Arnon-Friedman.2019}, and sandwiched R\'enyi divergences of order $\alpha\in[\frac12,1)$ are strong converse exponents \cite{Li.2022}. The construction of \cite{Kochanowski.2026} was, however, restricted to two indices.

Multi-Indexed quantities with more than two indices arise classically, as we saw \cite{Scarlett.2016}, but they are in some sense even more natural in the quantum setting, in particular for quantities defined through quantum channels, such as channel uncertainty relations \cite{Gao.2018.1} and completely bounded minimal output (conditional) entropies \cite{Devetak.2006, Fawzi.2026, Kochanowski.2026}. Every system carries its own index, so that complete boundedness, which is a uniquely quantum phenomenon, and tensor products of channels each add indices to the input and output systems. Accordingly, the proofs in all of these works require intermediate norms with three or four nested indices. In \cite{Devetak.2006}, three-indexed $\cS_q[\cS_p[\cS_q]]$ norms appear, which in their use-case, Pisier's theory can still handle; the four-indexed $\cS_1[\cS_\alpha[\cS_1[\cS_\alpha]]]$ norms with $\alpha>1$ arising in \cite{Fawzi.2026} for completely bounded minimal output conditional entropies already required a generalization of Pisier's factorization formula \cite[Lemma 3.1, Theorem 3.4]{Fawzi.2026}, however, and in \cite{Kochanowski.2026} a three-indexed quasi-norm had to be introduced as an auxiliary object.

\medskip

In this work we lift both restrictions. We allow arbitrarily many indices, as well as negative ones, while retaining the tractability that made Pisier's norms and the two-indexed quasi-norms useful in applications. In this way we do not only extend \cite{book:Pisier.1998, Devetak.2006, Fawzi.2026, Kochanowski.2026}, but lift the classical theory of mixed $\ell_{q_1}[\ell_{q_2}[\ell_{q_3}[\cdots]]]$ spaces to the non-commutative setting, under a compatibility condition on the indices involved. The key is a manifestly multi-indexed factorization formula through a single Schatten quasi- or anti-norm, from which the structural properties follow.

As a consequence, we can extend the multiplicativity tools of \cite{book:Pisier.1998, Devetak.2006} to multi-indexed Schatten spaces and apply them to conditional entropies with $\alpha<1$ under tensor product and more complex signaling channel structures. 

Operationally, all of these statements address the same basic question of whether worse-case bounds on an adversary having access to environments of many rounds of a parallel or sequential strategy are worse then the sum of the single round worst case bounds. Or dually whether entangled inputs to many rounds do provide advantages over product inputs. Additivity and chain rules show that, they cannot, for the quantities considered here and the analysis of a multi-round protocol can be reduced to single-round quantities. Specifically we prove, for all $\alpha\geq\frac12$, additivity of the completely bounded minimal and of the maximal output conditional entropies, as well as chain rules for compositions of signaling channels for $\alpha\in[\frac12,1)$. The latter complement the smooth max-entropy chain rule used in device-independent entanglement certification \cite{Vitanov.2013, Arnon-Friedman.2019}, which corresponds to $\alpha=\frac12$. These applications follow in just a few lines from the tools we develop in this work, but do require a genuine understanding of multi-indexed Schatten spaces which is novel as of this work. We summarize our contributions next. 

As a consequence we cannot just extend all the multiplicativity tools of \cite{Devetak.2006, book:Pisier.1998} to multi-indexed Schatten spaces, but also consider $\alpha<1$ conditional entropies under more complex channel structures. Since the conditional output entropies of channels are $(1,\alpha)$-type mixed norms of channels, tensor products of two such channels involve four indices $(1,\alpha,1,\alpha)$ and sequential, possibly signaling compositions of channels, as they arise e.g. in device-independent settings, require these improved tools, like swapping systems deep inside a multi-indexed norm. In particular do we show additivity of completely bounded minimal output and `regular' maximal output conditional entropies, whose proofs require going beyond positive and two-indexed formulations, as well as derive chain rules for $\alpha\in[\frac{1}{2},1)$-conditional entropies under non-product, possibly signaling setups, which are relevant for device independent entanglement certification. Given the developed toolbox, the entropic consequences  follow in a few lines. See the next subsection for a more detailed summary of the results.

\subsection{Contributions}

We define multi-indexed Schatten spaces via an inductive construction, inspired by \cite[Theorem 1.3]{Kochanowski.2026}. We call a tuple $(\cQ)=(q_1,\cdots, q_n)$ of indices $q_i>0$ or $q_i<0$ \emph{compatible} if $|\frac{1}{q_i}-\frac{1}{q_j}|\leq 1\, \forall i,j$.

\begin{definition*}[Multi-indexed Schatten spaces, \Cref{def:GeneralMultiIndexed}, informal]
Let $\cQ=(q_1,\dots,q_n)$ be a compatible collection of positive indices. For an operator $X$ on $\cH_1\otimes\cdots\otimes\cH_n$ we define $\|X\|_{(\cQ)}$ recursively, by absorbing the first system into the second one: with $\frac1r=\big|\frac1{q_1}-\frac1{q_2}\big|$,
\begin{align}
    \|X\|_{(q_1,q_2,\dots,q_n)}=\begin{cases}
        \inf_{a,b>0}\ \|a\|_{2r}\|b\|_{2r}\,\big\|(a^{-1}\otimes\1)X(b^{-1}\otimes\1)\big\|_{(q_2,q_3,\dots,q_n)} & q_1\leq q_2,\\[2pt]
        \sup_{a,b>0}\ \|a\|^{-1}_{2r}\|b\|^{-1}_{2r}\,\big\|(a\otimes\1)X(b\otimes\1)\big\|_{(q_2,q_3,\dots,q_n)} & q_1\geq q_2,
    \end{cases}
\end{align}
where $a,b$ act on $\cH_1$ only, and on the right-hand side $\cH_1\otimes\cH_2$ is treated as a single system carrying the index $q_2$. The recursion ends with the Schatten (quasi-)norm $\|X\|_{(q)}=\|X\|_q$. For negative indices we use the same formula, restricted to $X>0$ and $a=b^*>0$. 
\end{definition*}

\noindent In words, going from the outside in, each system is weighted by an optimal operator, either divided out (infimum) or multiplied in (supremum), and then merged with its neighbor, until a single Schatten (quasi-)norm remains. 
We will show that the resulting objects are quasi- and anti-norms for $q_i>0$ and $q_i<0$, respectively. They are manifestly multi-indexed extensions of the $n=2$ quasi-norms of \cite{Kochanowski.2026} or the iterated Pisier norms \cite{Fawzi.2026, Beigi.2023, book:Pisier.1998}. For details see \cref{sec:GenMultiIndex}. 

These quasi- or anti-norms should be thought of as `norms' on $n$-partite quantum systems which assign to each system $\cH_i$ an index $q_i$ and measure this system locally in that Schatten-$q_i$-(quasi- or anti-)norm.

\subsubsection{Properties of and Factorization Formulas for Multi-Indexed Schatten Spaces}

In order to make these quasi-norms tractable we require a manifestly multi-indexed factorization formula, which we supply with \Cref{thm:Multi.Thm.Pisier.like}. It is the technical main result of this work and a factorization formula from which we will be able to derive almost all structural properties of these multi-indexed Schatten quasi-norms and anti-norms. It can be thought of as a multi-indexed Hölder inequality; effectively we prove that the multi-indexed Schatten quasi-norm $\|\cdot\|_{(q_1,\dots, q_n)}$ can be factored through the Schatten-$s$-quasi-norm, while the weights carry the indices $s_i$, $\frac{1}{s_i}=\frac{1}{q_i}-\frac{1}{s}$.\footnote{Compare to the regular Hölder inequality, which relates the $q$-quasi-norm to the $s$ and $r, \frac{1}{r}=\frac{1}{q}-\frac{1}{s}$ quasi-norm for $s\geq q$.} 

\begin{theorem*}[Factorization through Schatten spaces, \Cref{thm:Multi.Thm.Pisier.like} and \Cref{cor:multi.Pisier.like.negative}, informal]
Let $\cQ=(q_1,\dots,q_n)$ be a compatible collection of indices, either all positive or all negative, and let $\underline{q}\leq q_i \leq \overline{q}$ $\forall i$ be indices of the same sign, where in addition $\underline{q}\leq1$ in the positive and $\overline{q}\geq-1$ in the negative case. Then
\begin{align}
    \|X\|_{(\cQ)}=\inf_{\underset{}{A,B}}\|AA^*\|^\frac{1}{2}_{(\cS)}\|B^*B\|^\frac{1}{2}_{(\cS)}\|A^{-1}XB^{-1}\|_{\overline{q}}=\sup_{\underset{}{A,B}}\|A^*A\|^{-\frac{1}{2}}_{(\cT)}\|BB^*\|^{-\frac{1}{2}}_{(\cT)}\|AXB\|_{\underline{q}} 
\end{align}
where $A,B$ act on all $n$ systems and $\cS$, $\cT$ are the collections given by $\frac{1}{s_i}=\frac{1}{q_i}-\frac{1}{\overline{q}}$ and $\frac{1}{t_i}=\frac{1}{\underline{q}}-\frac{1}{q_i}$. For negative indices this applies only to $X>0$ with $A=B^*>0$.
Moreover, for self-adjoint $X$ one may take $A=B^*\geq0$. 
\end{theorem*}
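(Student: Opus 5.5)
We prove the factorization formula by induction on the number $n$ of systems, following the recursive definition of $\|\cdot\|_{(\cQ)}$. We treat the positive case first. For $n=1$ the claim is a Hölder-type identity: $\|X\|_{q}=\inf_{A,B}\|AA^*\|_{s}^{1/2}\|B^*B\|_s^{1/2}\|A^{-1}XB^{-1}\|_{\overline q}$ with $\frac1s=\frac1q-\frac1{\overline q}$. The inequality ``$\leq$'' is the generalized Hölder inequality $\|AYB\|_q\leq\|A\|_{2s}\|Y\|_{\overline q}\|B\|_{2s}$. The reverse inequality follows from the polar decomposition $X=U|X|$ by choosing $A=U|X|^{q/(2s)}$ and $B=|X|^{q/(2s)}$, with a regularization $|X|+\epsilon$ if $X$ is not invertible. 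The supremum version with $\underline q\leq q$ is the reverse Hölder inequality, which is valid for quasi-norms as long as $\underline q\leq 1$ is not needed here.

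For the induction step we write $\cQ=(q_1,\cQ')$, where $\cQ'=(q_2,\dots,q_n)$ acts on $(\cH_1\otimes\cH_2)\otimes\cH_3\cdots$, and we apply the induction hypothesis to $\cQ'$ with the same $\underline q,\overline q$. Suppose first $q_1\leq q_2$. We insert the factorization of $\|(a^{-1}\otimes\1)X(b^{-1}\otimes\1)\|_{(\cQ')}$ into the defining infimum. The result is an infimum over $a,b,A',B'$ of $\|a\|_{2r}\|b\|_{2r}\|A'A'^*\|^{1/2}_{(\cS')}\|B'^*B'\|^{1/2}_{(\cS')}\|A'^{-1}(a^{-1}\otimes\1)X(b^{-1}\otimes\1)B'^{-1}\|_{\overline q}$. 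Setting $A=(a\otimes\1)A'$ and $B=B'(b\otimes\1)$, we must show
\begin{align}
\|AA^*\|_{(\cS)}^{1/2}=\inf\Big\{\|a\|_{2r}\,\|A'A'^*\|_{(\cS')}^{1/2}\ :\ (a\otimes\1)A'=A\Big\},
\end{align}
where $\cS'$ carries $s_2$ on the merged system. Since $\frac1{s_1}-\frac1{s_2}=\frac1{q_1}-\frac1{q_2}$, the pair $(s_1,s_2)$ has the same gap $r$. After passing to the positive operator $P=AA^*$ with indices $\cS/2$, this identity is exactly the recursive definition of $\|P\|_{(\cS/2)}$ applied to $P=(a\otimes\1)A'A'^*(a^*\otimes\1)$, specialized to a symmetric weight. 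To justify the specialization we need that the optimal weights for positive operators may be taken with $a=b^*$. We obtain this from the recursive definition, using a Cauchy--Schwarz argument on the $\cS/2$-level. Here $s_i/2$ may be less than $1$, but compatibility of $\cS/2$ is inherited because $|\frac{2}{s_i}-\frac2{s_j}|$ is controlled by the condition $\underline q\leq 1$; this is where that hypothesis enters. The case $q_1\geq q_2$ is analogous. The supremum over $a,b$ combines with the supremum representation of $\cQ'$ through $\cT$, where $\frac1{t_1}-\frac1{t_2}=\frac1{q_2}-\frac1{q_1}$. Mixed situations, an infimum in the definition against a supremum in the formula, are handled by first proving both formulas at once and using the duality that each recursive step gives.

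The main obstacle is the identity above: showing that the composite weights $A=(a\otimes\1)A'$, with $a$ local, exhaust all global weights $A$ on $n$ systems without increasing the cost. The direction ``$\geq$'' is a Hölder inequality for multi-indexed norms under local multiplication, which follows from the definition. For ``$\leq$'' we would, given a global $A$, construct $a$ as a suitable power of the partial ``marginal'' of $AA^*$ produced by the optimizer in $\|AA^*\|_{(\cS/2)}$. To do this we would first prove, by a separate induction, a Hölder inequality $\|AYB\|_{(\cQ)}\leq\|AA^*\|^{1/2}_{(\cS)}\|Y\|_{\overline q}\|B^*B\|^{1/2}_{(\cS)}$. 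That inequality gives ``$\leq$'' in the theorem, and the reverse direction then only requires exhibiting near-optimal $A,B$, which the recursion produces explicitly as products of the local weights.

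The negative case, with $X>0$ and $A=B^*>0$, follows the same scheme. There the Hölder inequalities are reversed and the infimum and supremum roles are exchanged. For self-adjoint $X$, the reduction to $A=B^*\geq0$ follows from the $2\times2$ block trick or from the operator inequality $|\langle\cdot\rangle|$ under symmetrization $A\mapsto (AA^*)^{1/2}$, together with the unitary invariance of $\|\cdot\|_{\overline q}$.
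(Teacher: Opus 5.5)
There is a genuine gap, and it sits in the step you dismiss as ``analogous''. In the infimum formula with $q_1\geq q_2$, the recursive definition is a supremum $\sup_{a,b}\|a\|^{-1}_{2r}\|b\|^{-1}_{2r}\|a_1Xb_1\|_{(\cQ^\prime)}$. After inserting the induction hypothesis you therefore face a $\sup_{a,b}\inf_{A^\prime,B^\prime}$ that must become a single $\inf_{A,B}$. Symmetrically, the supremum formula with $q_1\leq q_2$ gives an $\inf_{a,b}\sup_{A^\prime,B^\prime}$.

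Your Hölder inequality handles only the easy direction. ``Exhibiting near-optimal $A,B$ as products of local weights'' works only when the recursive step is itself an infimum: near-optimal $a,b$ for a supremum do not factor $X$, so they give no upper bound on $\inf_{A,B}$. The ``duality that each recursive step gives'' is not available either. Hölder duality is lost for quasi-norms, and reverse Hölder duality holds only for positive operators with indices $\leq1$; in the paper it is itself derived from this very theorem.

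The missing idea is a minimax exchange, carried out in the paper as follows:
\begin{itemize}
\item rewrite the product as a sum (\Cref{lem:Inf.Prod.to.Sum}, \Cref{lem:Sup.Prod.to.Sum});
\item use joint convexity of $(\tilde C,\tilde D)\mapsto\|\tilde C^{-1/2}X\tilde D^{-1/2}\|_s^s$, respectively joint concavity of $\|\tilde C^{1/2}X\tilde D^{1/2}\|_t^t$ (this is exactly where $\underline q\leq1$ is needed);
\item use concavity/convexity of $a\mapsto\|a^{1/(2r)}\tilde C a^{1/(2r)}\|_{(\cS^\prime)}$ from \Cref{lem.app.op.concavity};
\item apply \Cref{thm:Sion}; after the swap the local optimization collapses to $\|\tilde C\|_{(\cS)}$ by \Cref{thm:VariationalSpOperatorspace}.
\end{itemize}
Both of these tools need $1\leq s_2\leq r$, i.e.\ that $\cS$ (resp.\ $\cT$) is a normed Pisier space. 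This holds when $\overline q$ (resp.\ $\underline q$) is compatible with $\cQ$. General $\overline q,\underline q$ are then reached by a separate reduction through $\max_iq_i$ (resp.\ $\min\{1,\min_iq_i\}$), which you also do not address.

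Several intermediate claims are also wrong:
\begin{itemize}
\item No ``$\cS/2$'' appears. The identity you need for $q_1\leq q_2$ is just the recursive definition of $\|CC^*\|_{(\cS)}$ with symmetric weights.
\item Your compatibility claim for $\cS/2$ is false: $|\frac2{s_i}-\frac2{s_j}|=2|\frac1{q_i}-\frac1{q_j}|$ can reach $2$ and does not involve $\underline q$ at all.
\item The Cauchy--Schwarz bound $\|aPb\|\leq\|aPa\|^{1/2}\|bPb\|^{1/2}$ justifies symmetric weights only for a supremum; for an infimum it points the wrong way. There you need either the symmetric case of \Cref{thm:VariationalSpOperatorspace} for the normed $\cS$, or the paper's averaging $\tilde C=\frac12(\tilde A+\tilde B)$ combined with joint convexity and the triangle inequality of $\|\cdot\|_{(\cS)}$.
\item For self-adjoint $X$, the same averaging is needed. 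Polar decomposition $A\mapsto(AA^*)^{1/2}$ only removes a unitary; it does not identify left and right weights.
\item The negative case needs no separate scheme. It follows at once from $\|X\|_{(\cQ)}=\|X^{-1}\|^{-1}_{(-\cQ)}$ (\Cref{prop:negative.multi.indexed}).
\end{itemize}
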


\begin{remark*}
    For a single index $\cQ=(q)$ these formulas reduce to the well-known variational formulas for Schatten (quasi-)norms, \Cref{prop:SpVariationalFormulas}, and in the two-indexed case they recover \cite[Lemma 1.1]{Kochanowski.2026} for $s=1$ and \cite[Lemma 3.4]{Kochanowski.2026} for $t=\min\{1,q,p\}$. Moreover, self-adjointness $X=X^*$ is sufficient for symmetric optimizers $A=B^*$, in particular also in the supremum formula, where previously $X\geq0$ was required \cite{Devetak.2006, Kochanowski.2026}.
\end{remark*}

The proof uses the matrix analytic tools established by, among others, the author in \cite{Kochanowski.2026} together with some operator space prerequisites and a strengthening of \cite[Lemma 3.1]{Fawzi.2026} in \cref{thm:VariationalSpOperatorspace}. On the way, in \Cref{sec:Multi.Pisier}, we also prove a manifestly multi-indexed extension of Pisier's seminal formula \cite[Theorem 1.5]{book:Pisier.1998} with \Cref{thm:main.multi.pisier}, which informally states that 
\begin{align}\label{equ:multi.Pisier.intro}
    \|X\|_{(\cQ;\cX)} =\inf_{X=(A\otimes\1)Y(B\otimes\1)}\|AA^*\|_{(\cQ)}^{\frac{1}{2}}\|B^*B\|_{(\cQ)}^{\frac{1}{2}}\|Y\|_{(\infty,\dots,\infty;\cX)}. \tag{\cref{thm:main.multi.pisier}}
\end{align} for any operator space $\cX$ and normed multi-indexed Schatten space $\cQ=(q_1,\dots, q_n)$ with $q_i\in[1,\infty]$, stemming from iteration of Pisier's construction \cite{book:Pisier.1998}. Even though Pisier's original construction is iterable \cite{book:Pisier.1998, Beigi.2023} and thus allows for definition of multi-indexed Schatten quasi-norms, these are not analytically very tractable beyond two indices without the above mentioned \cref{thm:main.multi.pisier}. In particular, this generalizes the two-indexed formula \cite[Theorem 3.4]{Fawzi.2026}, which was used to prove the additivity of completely bounded minimal output conditional entropies and, through it, improved security proofs for time-dependent quantum key distribution.

One of the consequences of \Cref{thm:Multi.Thm.Pisier.like} is a multi-indexed version of the relational consistency result of \cite[Theorem 1.3]{Kochanowski.2026} with \Cref{thm:multi.quasi.relational.consistency}, which relates the $(q,\cQ)$-quasi-norm with the $(p,\cQ)$-quasi-norm in the same manner as its two-indexed \cite[Theorem 1.3]{Kochanowski.2026} and operator space \cite[Lemma 3.1]{Fawzi.2026} versions. Informally for compatible $q\leq p$ it states
\begin{align}
        \|X\|_{(q,\cQ)} &= \inf_{\underset{\Pi_a=L_X,\, \Pi_b=R_X}{a,b\geq0}}\|a\|_{2r}\|b\|_{2r}\|a_1^{-1}Xb_1^{-1}\|_{(p,\cQ)}, \tag{\cref{thm:multi.quasi.relational.consistency}} \\
         \|X\|_{(p,\cQ)} &= \sup_{\underset{\Pi_a=L_X,\, \Pi_b=R_X}{a,b\geq0}}\|a\|^{-1}_{2r}\|b\|^{-1}_{2r}\|a_1Xb_1\|_{(q,\cQ)}, 
\end{align} where $\frac{1}{r}=\frac{1}{q}-\frac{1}{p}$.
From these formulas we prove in \cref{cor:multi.block.collapse} the expected collapse on block-diagonal operators
\begin{align}
    \|\oplus_iX_i\|_{(q,\cQ)}= \Big\|\sum_i|i\rangle\langle i|\otimes X_i\Big\|_{(q,\cQ)}= \left(\sum_i\|X_i\|^q_{(\cQ)}\right)^\frac{1}{q}, \tag{\cref{cor:multi.block.collapse}}
\end{align}
which extends \cite[Theorem 1.4]{Kochanowski.2026} to many indices and \cite[Corollary 1.3]{book:Pisier.1998} to quasi- and anti-norms; the case $(\cQ)=(1,\alpha)$ with $\alpha\geq\frac{1}{2}$ appears in \cite[Proposition 5.4]{book:Tomamichel.2016}. We also prove multiplicativity over product operators in \cref{cor.multi.tensor.multiplicativity}
\begin{align}
    \|X_Q\otimes Y_P\|_{(\cQ, \cP)} = \|X_Q\|_{(\cQ)}\|Y_P\|_{(\cP)} \tag{\cref{cor.multi.tensor.multiplicativity}}
\end{align} along arbitrary ordered splittings. This is the expected property; the point is that it holds for arbitrary cuts compatible with the ordering of the indices.

We prove, as a consequence of \Cref{thm:Multi.Thm.Pisier.like}, that for $q_i>0$ our construction gives rise to $\min\{1,\min_i\{q_i\}\}$-normable quasi-norms, \Cref{thm:Multi.Quasi.Norms}, and in particular that they satisfy the triangle inequality when $q_i\geq 1,\ \forall i$. They also satisfy a multi-indexed Hölder duality when $q_i\geq 1$ as per \Cref{prop:MultiHölderDual} and we further show that when $0< q_i\leq 1$ are compatible indices, then on positive operators the associated quasi-norms satisfy the reverse triangle (Minkowski) inequality in \Cref{thm:Multi.Rev.Triangle} and the reverse Hölder duality in \Cref{cor:reverse.multi.hölder}. All of these are the natural multi-indexed generalizations of the corresponding properties of the two-indexed Schatten quasi-norms. These properties can be viewed as natural extensions of \cite{Kochanowski.2026} to the multi-indexed setting and of Pisier's theory \cite{book:Pisier.1998} to the quasi- and anti-norm setting.

\noindent Thus we have practically extended all of the classical mixed-norm framework to the quantum setting, under a compatibility assumption on the involved indices, and effectively established a complete theory of multi-indexed Schatten spaces.

\subsubsection{Some Properties for Practitioners}

Some further properties of these quasi-norms that we prove, and which may be helpful for quantum information practitioners, are the following.

\begin{proposition*}[Useful properties of multi-indexed quasi-norms]
Assume that $\cQ=(q_1,\dots, q_n)$, $\cP=(p_1,\dots, p_m)$, and all multi-indexed Schatten spaces below are compatible. Then it holds that
\begin{itemize}
    \item \Cref{cor:partial.trace.1}: A $1$ as the last index acts as a partial trace: For any positive $X\geq0$,
    \begin{align}
        \|X\|_{(q_1,\dots, q_n,1)}= \|\tr_{n+1}[X]\|_{(q_1,\dots, q_n)}
    \end{align}
    \item \Cref{thm:QuasiNormMinkowskiInequality}: Swapping adjacent decreasing indices is non-contractive: Let $a\geq b$; then\footnote{Here we associate to each subsystem $Q_i,A,B,P_j$ an index $q_i,a,b,p_j$ and the order is dictated by the order in the quasi-norm. For a detailed explanation of this notation see \Cref{sec:notation}.}
    \begin{align}
        \|X\|_{(Q:\cQ,\, A:a,\,B:b,\,P:\cP)} \leq \|\SWAP_{A\leftrightarrow B}(X)\|_{(Q:\cQ,\, B:b,\,A:a,\,P:\cP)}
    \end{align}
    \item \Cref{cor.multi.tensor.multiplicativity}: Multi-indexed quasi-norms are multiplicative under tensor products of operators along ordered cuts: Let $X,Y$ live on, respectively, $Q=Q_1\dots Q_n$ and $P=P_1\dots P_m$; then
    \begin{align}
        \|X\otimes Y\|_{(Q:\cQ,P:\cP)}= \|X\|_{(Q:\cQ)} \|Y\|_{(P:\cP)},
    \end{align}
    \item \Cref{lem:basic.4.index.multiplicativity}: Mixed tensor products also split:
    Let $X_{Q_1P_1}, Y_{Q_2P_2}$ be operators and $q,p>0$ compatible indices; then it holds that
    \begin{align}
        \|X_{Q_1P_1}\otimes Y_{Q_2P_2}\|_{(Q_1:q,Q_2:q,P_1:p,P_2:p)} = \|X\|_{(Q_1:q,P_1:p)}\|Y\|_{(Q_2:q,P_2:p)}.
    \end{align}
    \item \Cref{prop:negative.multi.indexed}: For any $X>0$ it holds that 
    \begin{align}
    \|X\|_{(q_1,\dots, q_n)} = \|X^{-1}\|^{-1}_{(-q_1,\dots, -q_n)}.
    \end{align}
\end{itemize}
\end{proposition*}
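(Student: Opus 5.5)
The five items are largely independent, so I would prove them separately, in each case reducing to the recursive definition (\Cref{def:GeneralMultiIndexed}), the factorization formula \Cref{thm:Multi.Thm.Pisier.like}, or the relational consistency \Cref{thm:multi.quasi.relational.consistency}.

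\emph{Partial trace.} I would use induction on $n$. For $n=1$, compatibility of $(q_1,1)$ forces $q_1\geq\frac12$, and the claim $\|X\|_{(q,1)}=\|\tr_2X\|_q$ for $X\geq0$ is the two-indexed statement of \cite{Kochanowski.2026}. For the inductive step I unfold the outermost system via the recursion: the weights $a,b$ act on $\cH_1$ only and therefore commute with $\tr_{n+1}$. The inner quantity is $\|(a^{-1}\otimes\1)X(b^{-1}\otimes\1)\|_{(q_2,\dots,q_n,1)}$, but its argument is not positive unless $a=b$. To fix this I first invoke the ``self-adjoint $\Rightarrow$ symmetric optimizers'' part of \Cref{thm:Multi.Thm.Pisier.like}, restrict to $a=b$, and then apply the induction hypothesis with the first two systems merged.

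\emph{Minkowski swap.} For two indices this is the quasi-normed Minkowski inequality of \cite{Kochanowski.2026}. To embed it deep inside the tuple, I apply the recursion successively to the systems in $Q$. At each step the operators $a^{\pm1},b^{\pm1}$ act only on $Q$-systems and hence commute with $\SWAP_{A\leftrightarrow B}$, so it suffices to treat $Q=\emptyset$ (with $Q$ merged into $A$ where needed). On the other side, I handle $P$ using the factorization \Cref{thm:Multi.Thm.Pisier.like} with $\overline{q}$ chosen among the indices of $P$, or via relational consistency applied from the inside. The core inequality $\|X\|_{(a,b,\cP)}\leq\|\SWAP(X)\|_{(b,a,\cP)}$ for $a\geq b$ I would then prove from the inf formula on the left and the sup formula on the right. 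Choosing $\frac1r=\frac1b-\frac1a$, the weight on $B$ in one formula matches the weight on $A$ in the other, and a Hölder step closes the inequality.

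\textbf{I expect this step to be the main obstacle}, because the weights live on a merged system $AB$ rather than on $A$ alone. I would first try proving the case $\cP=\emptyset$ by interpolation in the style of \cite{Devetak.2006}, and then extend it via the $P$-reduction above.

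\emph{Multiplicativity.} For $\leq$, I take near-optimal factorizations $X=A_1Y_1B_1$ and $Y=A_2Y_2B_2$ from \Cref{thm:Multi.Thm.Pisier.like} with a common $\overline{q}$. The products $A_1\otimes A_2$, $B_1\otimes B_2$ form an admissible factorization for $X\otimes Y$, and the weight norms $\|\cdot\|_{(\cS,\cS')}$ factor by induction on the number of indices, the base case being the Schatten norm $\|X\otimes Y\|_q=\|X\|_q\|Y\|_q$. For $\geq$, I do the same with the sup formula and $\underline{q}$. Because the infimum and supremum formulas coincide, the two bounds close.

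\emph{Mixed tensor products.} The reordered system order $Q_1Q_2P_1P_2$ is not an ordered cut, so multiplicativity does not apply directly. Here I apply the factorization with $\overline{q}=\max(q,p)$ and $\underline{q}=\min(q,p,1)$, and use product weights $A_1\otimes A_2$, which are legitimate because the $q$-weights sit on $Q_1Q_2$ jointly. This gives both inequalities; alternatively, one can combine the Minkowski swap (both directions, since $q=q$ on the swapped systems) with multiplicativity.

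\emph{Negative duality.} I would use induction via the recursion together with the single-index identity $\|X\|_q=\|X^{-1}\|^{-1}_{-q}$ for $X>0$. Under $X\mapsto X^{-1}$ with $a=b^*$, the expression $(a^{-1}\otimes\1)X(a^{-*}\otimes\1)$ inverts to $(a^*\otimes\1)X^{-1}(a\otimes\1)$. This exchanges the inf-branch for the sup-branch, while $r$ flips sign consistently, since $\frac1{-q_1}-\frac1{-q_2}=-(\frac1{q_1}-\frac1{q_2})$.
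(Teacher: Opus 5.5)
The genuine gap is the reduction to $Q=\emptyset$ in the Minkowski item. Peeling off $Q_1,\dots,Q_{k-1}$ by the recursion is fine, since those weights act only on $Q$ and commute with $\SWAP_{A\leftrightarrow B}$. The last outer system $Q_k$, however, is merged into $A$ on the left and into $B$ on the right. So the branch and exponent are governed by $q_k$ versus $a$ on one side and by $q_k$ versus $b$ on the other. For $b<q_k<a$ one side is an infimum and the other a supremum, and there is no common weight to pass through. Merging $Q$ into $A$ changes the index of $Q$, so it is not an identity.

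Controlling arbitrary outer systems is exactly complete contractivity of the swap, which is the real content of \Cref{thm:QuasiNormMinkowskiInequality}. The paper obtains it from \Cref{thm:Multi.Thm.Pisier.like} with a compatible upper bound $m$. Both sides then share the factor $\|A^{-1}XB^{-1}\|_m$, since the swap is unitary. The claim therefore reduces to a swap inequality for the \emph{normed} weight norms with indices $\frac1{s_i}=\frac1{q_i}-\frac1m$. That inequality follows from three ingredients: Pisier's complete contractivity of $\SWAP$ \cite[Corollary 1.10]{book:Pisier.1998}, the normed identity removal \Cref{lem:reductionLemma} (inner systems treated as an operator space $\cX$), and \Cref{lem:cb.norm.via.t} (multi-indexed outer systems).

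By contrast, the step you flag as the main obstacle is easy and needs neither a $P$-reduction nor interpolation. For $a\ge b$ the left side is the supremum branch, with weights $\alpha,\beta$ on $A$, and the right side is the infimum branch, with weights $\gamma,\delta$ on $B$; you have these labels reversed. Write $X=\gamma_BZ\delta_B$ and set $\frac1r=\frac1b-\frac1a$. Relational consistency \Cref{thm:multi.quasi.relational.consistency} on the merged system $AB$ then gives $\|(\alpha\otimes\gamma)Z(\beta\otimes\delta)\|_{(AB:b,\cP)}\le\|\alpha\|_{2r}\|\beta\|_{2r}\|\gamma\|_{2r}\|\delta\|_{2r}\|Z\|_{(AB:a,\cP)}$. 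This settles $Q=\emptyset$ for arbitrary inner $\cP$, up to support bookkeeping for the product weights.

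In the mixed-tensor item, neither of your routes yields both inequalities. Take $q\le p$. Product weights on $Q_1Q_2$ in the infimum branch give ``$\le$''. A product ansatz in the supremum formula of \Cref{thm:Multi.Thm.Pisier.like} gives ``$\ge$'' only if $\|W_1\otimes W_2\|_{(Q_1Q_2:t_q,P_1P_2:t_p)}\le\|W_1\|_{(t_q,t_p)}\|W_2\|_{(t_q,t_p)}$. The weight indices satisfy $t_q\ge t_p$, so this is the nontrivial half of the very statement being proved. Your alternative also fails: the swap producing the ordered cut exchanges $Q_2$ (index $q$) with $P_1$ (index $p$), so \Cref{thm:QuasiNormMinkowskiInequality} gives only one inequality. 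The paper combines the two routes: swap plus ordered-cut multiplicativity for one direction, and product weights in the recursion for the other, with roles exchanged for $q\ge p$.

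Two smaller points.
\begin{itemize}
\item In the ordered-cut multiplicativity, the normed weight spaces $(\cS,\cS')$ have as many indices as $(\cQ,\cP)$. Peeling outer indices keeps the cut ordered only until $\cQ$ is a single system; after that the merge crosses the cut. So your base case is $\|w\otimes W\|_{(s,\cS')}$, not a Schatten identity. The paper uses \Cref{cor:Pisier.multi.tensor.splitting} with $\cX=\cR_P$ at this point.
\item In the negative-index item, inversion produces only symmetric weights, whereas the positive-index definition of $\|X^{-1}\|_{(-\cQ)}$ optimizes over independent $a,b$. To identify the two you need the self-adjoint clause of \Cref{thm:multi.quasi.relational.consistency}, as the paper does.
\end{itemize}
Your partial-trace argument is the paper's. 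The symmetric-weight statement you need there is also the self-adjoint clause of \Cref{thm:multi.quasi.relational.consistency}.
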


\begin{remark*}
    \Cref{cor:partial.trace.1} generalizes \cite[Corollary 3.2]{Kochanowski.2026} to multi-indexed spaces, the swap inequality is the non-commutative Minkowski inequality discussed below, and the special case $q=1$, $p=\alpha\in[\frac{1}{2},\infty]$ of \Cref{lem:basic.4.index.multiplicativity} is the additivity of the optimized sandwiched Rényi conditional entropy under tensor-product states \cite[Corollary 5.9]{book:Tomamichel.2016}. As a further direct consequence, the sandwiched umlaut information of \cite{Kochanowski.2026} is additive under tensor-product states, \Cref{prop.add.sand.umlaut}. For its Petz-Rényi, Umegaki, and Kullback--Leibler versions see \cite{Girardi.2025, Girardi.2025.2}.
\end{remark*}

\subsubsection{Linear Maps and Tools for Multiplicativity}

In \Cref{sec:LinerMaps} we consider the induced quasi-norms and co-quasi-norms on the space of linear maps between multi-indexed Schatten spaces, indexed by some $(\cQ)=(q_1,\dots, q_n)$ and $(\cP)=(p_1,\dots, p_m)$. That is, given a linear map $\Phi$ we consider
\begin{align}
    \|\Phi\|^{(+)}_{(Q:\cQ)\to (P:\cP)}&:=\sup_{\underset{(X\geq0)}{X\in{\cS_{\cQ}}}}\frac{\|\Phi(X)\|_{(P:\cP)}}{\|X\|_{(Q:\cQ)}}, \\
\|\Phi\|^+_{\co,(Q:\cQ)\to (P:\cP)}&:=\inf_{\underset{X\geq0}{X\in\mathcal{S}_{\cQ}}}\frac{\|\Phi(X)\|_{(P:\cP)}}{\|X\|_{(Q:\cQ)}},
\end{align} and their completely bounded counterparts
\begin{align}
    \|\Phi\|^{(+)}_{\cb,(Q:\cQ)\to (P:\cP)}&:=\sup_E\|\id_E\otimes\Phi\|^{(+)}_{(E:t,Q:\cQ)\to (E:t,P:\cP)}\,, \tag{\cref{def:cb.quasi.norm}} \\
    \|\Phi\|^+_{\cb,\co,(Q:\cQ)\to (P:\cP)}&:=\inf_E\|\id_E\otimes\Phi\|^+_{\co,(E:t,Q:\cQ)\to (E:t,P:\cP)}\,. \tag{\cref{def:cb.co.quasi.norm}}
\end{align}
For these we show in \Cref{lem:cb.norm.via.t} and \Cref{lem:cb.co.norm.via.t} that just as in Pisier's theory, the index placed on the completely bounding system, $E$ above, is irrelevant, as long as it is compatible with the input and output indices.

Importantly, this allows us to define completely bounded (co-)quasi-norms between multi-indexed quasi-normed spaces, extending the completely bounded (co-)quasi-norms between single-indexed Schatten spaces introduced in \cite[Section 4.2]{Kochanowski.2026}, see also \cite{Li.2026}. 

The rest of \cref{sec:LinerMaps} is focused on deriving multiplicativity statements for such mixed norms under compositions and particularly tensor products of (CP) linear maps. To this end we require two important tools.

First we prove with \cref{thm:QuasiNormMinkowskiInequality} a statement about the contractivity of swapping adjacent systems in multi-indexed Schatten quasi- (and, by extension, anti-)norms.
\begin{theorem*}[The non-commutative Minkowski inequality, \cref{thm:QuasiNormMinkowskiInequality}, informal]
For any compatible $\cE,p,q,\cR$ with $q\leq p$ it holds for any $X$ that
\begin{align}
    \|X_{EQPR}\|_{(E:\cE,Q:q,P:p,R:\cR)} \geq \|X_{EPQR}\|_{(E:\cE,P:p,Q:q,R:\cR)} 
\end{align} 
\end{theorem*}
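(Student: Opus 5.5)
We will prove the inequality in two stages: first strip off the outer block $\cE$, then the inner block $\cR$, so that everything reduces to the core two-index swap on $Q,P$.

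\textbf{Reducing $\cE$.} By the recursive \Cref{def:GeneralMultiIndexed}, the outer systems $E_1,\dots,E_k$ are absorbed one at a time. Each step is an infimum (or supremum) over weights $a\otimes\1, b\otimes\1$ acting on $E$ alone. These weights commute with the swap $\SWAP_{Q\leftrightarrow P}$, since the swap only permutes the inner tensor factors. So it suffices to prove the inequality for the merged operator $(a^{-1}\otimes\1)X(b^{-1}\otimes\1)$, uniformly in $a,b$, with one fewer outer index. Taking the infimum (respectively supremum) on both sides preserves the inequality. Inducting, we reduce to the case where $\cE$ is a single index $e$ carried by a merged system $E$. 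Alternatively we can absorb $E$ into $Q$ or $P$; care is needed because, once absorbed, the system carries the index $q$ or $p$, not $e$. We therefore keep $E$ as a genuine first index.

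\textbf{Reducing $\cR$.} Here we use \Cref{thm:Multi.Thm.Pisier.like}. Write the left side $\|X\|_{(E:\cE,Q:q,P:p,R:\cR)}$ via the infimum factorization through the Schatten $\overline q$ quasi-norm, with weights $A,B$ on all systems and $\frac{1}{s_i}=\frac{1}{q_i}-\frac{1}{\overline q}$. Conjugating by the swap gives $\|A^{-1}XB^{-1}\|_{\overline q}=\|\SWAP(A)^{-1}\SWAP(X)\SWAP(B)^{-1}\|_{\overline q}$, because the swap is unitary conjugation. The claimed inequality then reduces to
\begin{align}
\|\SWAP(AA^*)\|_{(E:\cS_E,P:s_p,Q:s_q,R:\cS_R)}\le\|AA^*\|_{(E:\cS_E,Q:s_q,P:s_p,R:\cS_R)}.
\end{align}
Since $q\le p$ gives $s_q\le s_p$, this is the same statement for positive operators and the shifted indices. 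Two points remain. First, $\overline q$ should be chosen as $\max_i q_i$, so that all $s_i\in(0,\infty]$. Second, the shifted tuple is again compatible: $\frac{1}{s_i}-\frac{1}{s_j}=\frac{1}{q_i}-\frac{1}{q_j}$, so compatibility is inherited. This exchanges a general $X$ for a positive one but does not by itself terminate.

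\textbf{Core case.} To terminate we choose $\overline q$ so that the shifted indices become large, making the norms genuine (Pisier-type) norms. The plan is to prove the positive-operator case $q_i\ge1$ directly. For positive $Y$, \Cref{thm:main.multi.pisier} together with the partial-trace and block structure reduce the statement to Pisier's non-commutative Minkowski inequality \cite[Corollary 1.10]{book:Pisier.1998}, lifted to operator-space valued $\cS_q[\cS_p[\cX]]$ with $\cX$ carrying the $R$-indices and $\cS_e$ on the outside. This lift is a completely contractive identity map between $\cS_q[\cS_p[\cX]]$ and $\cS_p[\cS_q[\cX]]$, where $\cX$ enters as the operator space $\cS_{\cR}$.

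\textbf{Main obstacle.} The hardest step is guaranteeing that the shifted indices can all be made $\ge1$. Compatibility gives $\max_i\frac{1}{q_i}-\min_i\frac{1}{q_i}\le 1$, and choosing $\frac{1}{\overline q}=\min_i\frac{1}{q_i}$ shifts the reciprocals into $[0,1]$. This is exactly the regime in which Pisier's theory applies, so the obstacle is overcome by this choice.
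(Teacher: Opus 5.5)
Your reduction matches the paper's. Use the infimum formula of \Cref{thm:Multi.Thm.Pisier.like} with $\overline q=\max_i q_i$, together with the fact that $\|\cdot\|_{\overline q}$ is invariant under conjugation by the swap unitary. This reduces the claim to $\|\SWAP(AA^*)\|_{(E:\cS_E,P:s_p,Q:s_q,R:\cS_R)}\le\|AA^*\|_{(E:\cS_E,Q:s_q,P:s_p,R:\cS_R)}$ with $s_q\le s_p$. For this choice compatibility already gives $\frac{1}{s_i}\in[0,1]$, so you are in the normed regime immediately. The ``main obstacle'' you describe is therefore not one, and the preliminary peeling of $\cE$ is unnecessary.

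The real gap is the normed core step. You need the swap inequality for normed indices with a \emph{non-trivial inner operator space} $\cX=\cS_{\cR}$, and also an outer $\cS_E$. You only assert that Pisier's Corollary 1.10 ``lifts'' to a complete contraction $\cS_{s_q}[\cS_{s_p}[\cX]]\to\cS_{s_p}[\cS_{s_q}[\cX]]$. That lift is exactly the new ingredient: the paper points out that inner systems had not been treated even in the normed operator-space setting. The appeal to ``partial-trace and block structure'' does not supply it.

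The paper fills this step in two parts. The outer $\cE$ is dominated by the completely bounded norm (\Cref{lem:cb.norm.via.t}). Since $\SWAP$ is completely positive, the identity-removal \Cref{lem:reductionLemma} gives $\|\SWAP\otimes\id_{\cX}\|_{\cb,(Q:q,P:p;\cX)\to(P:p,Q:q;\cX)}=\|\SWAP\|_{\cb,(Q:q,P:p)\to(P:p,Q:q)}\le1$, where the last bound is Pisier's corollary without $\cX$.

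Alternatively, your mention of \Cref{thm:main.multi.pisier} can do the job if you use it explicitly:
\begin{enumerate}
\item Factor $Y=(C\otimes\1)Z(D\otimes\1)$ with $C,D$ acting on $EQP$ and $Z$ measured in $(\infty,\infty,\infty;\cX)$, i.e.\ in $M_{d_Ed_Qd_P}(\cX)$.
\item The norm of $M_{d_Ed_Qd_P}(\cX)$ is invariant under the permutation unitary implementing the swap.
\item This reduces the $\cX$-valued inequality to the scalar ones $\|\SWAP(CC^*)\|_{(\cS_E,s_p,s_q)}\le\|CC^*\|_{(\cS_E,s_q,s_p)}$ and the same for $D^*D$.
\item These follow from Pisier's corollary together with \Cref{prelim:Pisier.cb.via.q}, iterated over the outer indices.
\end{enumerate}
This is essentially the mechanism inside the proof of \Cref{lem:reductionLemma}.

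Finally, you do not address negative indices. They follow from the positive case via \Cref{prop:negative.multi.indexed} applied to $X^{-1}$, since negating the indices reverses the order $q\le p$.
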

Such an inequality is also known as a \emph{non-commutative Minkowski inequality} and the above extends \cite[Corollary 1.10]{book:Pisier.1998}, a seminal result in Pisier's theory, to arbitrary compatible indices. 
In particular, it allows for inner indices $\cR$, which, to our knowledge, had not been considered even in the normed, operator space setting. Equivalently, the $\SWAP$ of two adjacent systems is completely contractive from $(Q:q,P:p,R:\cR)$ to $(P:p,Q:q,R:\cR)$, and in the reverse direction completely expansive for the corresponding co-quasi-norm.
Just as in the normed case \cite{Devetak.2006, Fawzi.2026}, this inequality is the key tool behind all following multiplicativity and additivity results of this work, and the ability to swap systems deep inside a multi-indexed norm, i.e.\ with non-trivial inner systems $R$, is crucial for more advanced applications.

The second type of tool consists of two ``chain-rule'' type equalities, developed in \Cref{sec:3.removing.indices}, concerning the removal of identity channels in mixed (quasi-)norms. The first instance of an equality of this type was \cite[Lemma 5]{Devetak.2006}, which the authors dubbed ``an important Lemma'', and which was instrumental in the proof of the multiplicativity of $\cb,q\to p$ norms, finding applications in \cite{Devetak.2006, Beigi.2016, Bardet.2022, Beigi.2023, JungePalazuelos.2015} as well as the additivity of the completely bounded minimal output entropy \cite{Jencova.2006}. 

\begin{theorem*}[Identity removal, \Cref{lem:reductionLemma} and \Cref{thm:id.right.quasi}, informal]
Let $\Phi:Q\to P$ be a completely positive map and $\cQ,\cP,\cT$ multi-indexed Schatten spaces on systems $Q,P,T$. If all indices lie in $[1,\infty]$, then
\begin{align}
    \|\Phi\otimes\id_T\|_{(Q:\cQ,T:\cT)\to(P:\cP,T:\cT)}=\|\Phi\|_{(Q:\cQ)\to(P:\cP)}, 
\end{align}
and this remains true with $\cT$ replaced by an arbitrary operator space $\cX$. If all indices lie in $[\frac12,1]$, then the analogous statement holds for co-quasi-norms,
\begin{align}
    \|\Phi\otimes\id_T\|^+_{\co,(Q:\cQ,T:\cT)\to(P:\cP,T:\cT)}=\|\Phi\|^+_{\co,(Q:\cQ)\to(P:\cP)}. 
\end{align}
\end{theorem*}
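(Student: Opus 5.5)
We prove the two equalities separately. In both, one direction is trivial: restricting to product inputs $X=Y\otimes Z$ and using multiplicativity along ordered cuts (\cref{cor.multi.tensor.multiplicativity}) gives $\|(\Phi\otimes\id_T)(Y\otimes Z)\|_{(\cP,\cT)}/\|Y\otimes Z\|_{(\cQ,\cT)}=\|\Phi(Y)\|_{(\cP)}/\|Y\|_{(\cQ)}$. Hence the norm on the left is $\geq$ the norm on the right, and the co-quasi-norm on the left is $\leq$ the co-quasi-norm on the right. The substance is the reverse inequality in each case. We prove it by induction on the number of indices in $\cT$, peeling off the innermost (last) system at each step. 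The base case is $\cT=(t)$, a single index.

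\textbf{Normed case, $q_i,p_j,t\in[1,\infty]$.} Here we follow \cite[Lemma 5]{Devetak.2006}. We write $\|X\|_{(\cQ,\cT)}$ through the multi-indexed Pisier formula \cref{thm:main.multi.pisier}, taking the operator space $\cX$ to be $\cS_\cT$ (or, more generally, any operator space). Any $X$ with $\|X\|_{(\cQ;\cX)}<1$ factors as $X=(A\otimes\1)Y(B\otimes\1)$ with $\|AA^*\|_{(\cQ)},\|B^*B\|_{(\cQ)}<1$ and $\|Y\|_{(\infty,\dots,\infty;\cX)}<1$. For $\Phi$ completely positive, the map
\[ Y\mapsto(\Phi\otimes\id)\big((A\otimes\1)Y(B\otimes\1)\big) \]
is a compression of the CP map $\Phi$ applied in a $2\times 2$ block form. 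Stinespring dilation gives $\Phi(A\,\cdot\,B)=V^*\pi(A\,\cdot\,B)W$. We then refactor the output as $(A'\otimes\1)Y'(B'\otimes\1)$ with
\[ A'A'^*=\Phi(AA^*),\qquad B'^*B'=\Phi(B^*B), \]
and with $Y'$ a complete contraction of $Y$ in the $(\infty,\dots,\infty;\cX)$-norm. Since $\|\Phi(AA^*)\|_{(\cP)}\le\|\Phi\|\,\|AA^*\|_{(\cQ)}$, and the CP norm is attained on positive inputs (as in the two-indexed theory), \cref{thm:main.multi.pisier} on the output side yields $\|(\Phi\otimes\id)(X)\|_{(\cP;\cX)}\le\|\Phi\|$. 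Because $\cX$ is an arbitrary operator space, this already covers the full multi-indexed $\cT$ without any induction.

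\textbf{Co-quasi-norm case, indices in $[\frac12,1]$, positive inputs.} For $X\ge0$ we use the supremum form of \Cref{thm:Multi.Thm.Pisier.like} with $\underline q=\frac12$; symmetric optimizers $A=B^*\ge0$ are available for self-adjoint $X$. A more convenient route is the relational-consistency formula \cref{thm:multi.quasi.relational.consistency}, applied at the innermost $T$-level. Since every index is $\le 1$, we append and strip 1's using \cref{cor:partial.trace.1}, and we express the $T$-part as a supremum over weights $b\ge0$ on $T$ alone. The weights on $T$ commute with $\Phi\otimes\id_T$. For a fixed weight $b$, the operator
\[ \tilde X=(\1\otimes b)X(\1\otimes b) \]
is again positive, and the one-level-shorter quasi-norm of $(\Phi\otimes\id)(\tilde X)$ is bounded below by the induction hypothesis. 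Taking the supremum over $b$ then gives
\[ \|(\Phi\otimes\id)(X)\|_{(\cP,\cT)}\ge\|\Phi\|^+_{\co}\,\|X\|_{(\cQ,\cT)}. \]
The step that requires care is matching suprema. For the output-side supremum we must take the same $b$ as the one that is near-optimal on the input side, which gives only a lower bound. That is exactly the direction we need, because the output is a supremum.

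\textbf{Main obstacle.} The expected main obstacle is the base step $\cT=(t)$ with $t$ smaller than some $p_j$, where the innermost level is an infimum rather than a supremum. There we plan to use the non-commutative Minkowski inequality \cref{thm:QuasiNormMinkowskiInequality} to move $T$ past the other systems at a cost in the correct direction. The fallback is to reduce, via \cref{lem:cb.co.norm.via.t}, to the completely bounded co-norm, in which the index on the auxiliary system is irrelevant and can be chosen to be $\frac12$.
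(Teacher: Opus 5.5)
Your normed argument is correct and is essentially the paper's proof of \Cref{lem:reductionLemma}. You factor the input through \Cref{thm:main.multi.pisier} and absorb the Kraus (Stinespring) operators into the outer factors, so that $A'A'^*=\Phi(AA^*)$ and $B'^*B'=\Phi(B^*B)$. The middle factor is then controlled by Ruan's axiom; the paper pads with an auxiliary block system $N$ where you take polar decompositions, which is equivalent.

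The co-quasi-norm part has a genuine gap. Its key step, writing $\|\cdot\|_{(\cQ,\cT)}$ as an optimization over weights $b\geq0$ acting on the innermost system $T$ alone, is not available. \Cref{def:GeneralMultiIndexed} and \Cref{thm:multi.quasi.relational.consistency} only place weights on the outermost system, or on outer blocks after merging. The recursion absorbs systems from the outside in, so there is no induction that peels off the last system, apart from \Cref{cor:partial.trace.1} for an inner index exactly $1$ on positive operators. The weights in \Cref{thm:Multi.Thm.Pisier.like} act on all systems and therefore do not commute with $\Phi\otimes\id_T$.

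Such an inner-weight formula is in fact false. For $\rho_{BA}\geq0$, the natural candidate for $\|\rho\|_{(B:1,A:\frac12)}$ is $\inf_{b}\Tr[b^2]\,\Tr[b_A^{-1}\rho_{BA}b_A^{-1}]$. This equals $\|\rho_A\|_{1/2}$, so it computes $H_{1/2}(A)$ instead of $H_{\max}(A|B)$, and the two differ by $2\log d$ on a maximally entangled state. More structurally, by your own argument any representation through $T$-weights commuting with $\Phi\otimes\id_T$ would prove the norm and the co-norm versions of identity removal alike. The norm version is known to fail for inner indices below $1$ \cite[Counterexample 4.1]{Kochanowski.2026}.

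Your fallbacks do not repair this. Moving $T$ to the front with \Cref{thm:QuasiNormMinkowskiInequality} goes in the right direction on both sides only if $p_j\leq t\leq q_i$ for all $i,j$. Even then it turns $\id_T$ into a reference system. So this route, like the reduction via \Cref{lem:cb.co.norm.via.t}, only certifies the constant $\|\Phi\|^+_{\cb,\co}$, which is in general strictly smaller than the required $\|\Phi\|^+_{\co}$.

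The missing idea is the duality transfer the paper uses:
\begin{itemize}
\item By reverse H\"older duality (\Cref{cor:reverse.multi.hölder}), the claim is equivalent to the same statement for $\Phi^*$ with H\"older dual indices in $[-\infty,-1]$.
\item By \Cref{prop:negative.multi.indexed}, these are governed, via $X\mapsto X^{-1}$, by normed spaces with indices in $[1,\infty]$. This is exactly why the range $[\frac12,1]$ appears.
\item In that regime \Cref{thm:main.multi.pisier} gives $\|X\|_{(\cP',\cT')}=\sup_{A>0}\|A\|^{-1}_{(-\cP')}\|A_P^{1/2}XA_P^{1/2}\|_{(P:-\infty,T:\cT')}$, with weights on the outer block $P$ only.
\item Hence the Kraus operators of $\Phi^*$ can be absorbed into $W_A$ with $W_AW_A^*=\Phi^*(A^{-1})$, as in the normed case.
\item The operator Jensen inequality $(VYV^*)^{-1}\leq VY^{-1}V^*$, for the co-isometry $V=(W_AW_A^*)^{-1/2}W_A$ (tensored with $\1_T$), replaces the contraction step on the inverted middle factor.
\end{itemize}
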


\noindent In words: an identity channel acting on the innermost systems can be removed without changing the (co-)norm. The normed statement generalizes \cite[Lemma 5]{Devetak.2006} and \cite[Theorem 4.1]{Fawzi.2026}, which were instrumental there in proving the additivity of minimal output entropies, respectively conditional entropies, to normed multi-indexed input and output spaces. The requirement that the inner space $\cT$ is normed is necessary, since otherwise the equality fails by \cite[Counterexample 4.1]{Kochanowski.2026}. 
The co-quasi-normed statement extends the simple \cite[Lemma 4.8]{Kochanowski.2026}, where the inner index is $1$, to inner multi-indexed spaces with indices in $[\frac12,1]$, based on a very different proof, which effectively relates the latter to the former via reverse Hölder duality. Entropically, both equalities can be interpreted as chain rules, and together with the non-commutative Minkowski inequality they are the main tools behind the multiplicativity and additivity results of this work.

As a consequence of these proofs, which are based on the construction of \cite{Devetak.2006}, we also show in \cref{cor:mixed.positivity.normed} that CP maps achieve their mixed norms between arbitrary multi-indexed Schatten spaces $\cQ,\cP$ on positive operators, i.e.
\begin{align}
        \|\Phi\|_{(Q:\cQ)\to (P:\cP)}=\|\Phi\|^+_{(Q:\cQ)\to (P:\cP)}. \tag{\cref{cor:mixed.positivity.normed}}
    \end{align}
This generalizes the corresponding results for mixed Schatten norms ($q\to p$, $q,p\geq1$) from \cite{Watrous.2004} and completely bounded mixed Schatten norm $(\cb,q\to p)$, $q,p\geq1$ from \cite{Devetak.2006}, as well as $\cb,q\to (r,s)$ norms, $q,r,s\geq1$ of \cite[Lemma A.2]{Fawzi.2026} to arbitrary numbers of positive indices. It is thus the most general statement of this kind, since for negatively indexed input spaces the quantity is trivially $0$ without the positivity restriction.

\subsubsection{Applications in Quantum Information Theory}

Concerning complete (reverse) hypercontractivity, two concepts relevant in the study of dissipative Markovian open system dynamics, see e.g.\ \cite{Beigi.2016, Bardet.2022}, we show in \Cref{sec:Mult.cb.reverse.hyper}  that the $\cb,q\to p$ quasi-norm is multiplicative under tensor products of linear maps for all compatible $q\leq p$, \Cref{thm:mixed.up.cb.additivity}, and the $\cb,\co,q\to p$ co-quasi-norm for all compatible $q\geq p$, \Cref{thm:cb.reverse.hypercontractive.bound}, both including negative indices. The former extends \cite[Theorem 11]{Devetak.2006}, which establishes the corresponding statement for $1\leq q\leq p$ and CP maps, and the latter extends \cite[Theorem 1.6]{Kochanowski.2026} which had only proven super-multiplicativity for $0<p\leq q$. These multiplicativity results are relevant in the context of dissipative open system dynamics where they imply tensorization of the notions of complete hypercontractivity, respectively complete reverse hypercontractivity. 

We do this by establishing a simple multiplicativity result for states in form of the previously mentioned \cref{lem:basic.4.index.multiplicativity}, which directly yields the achievability parts of these statements and gives a complete proof of the achievability claimed in \cite[Theorem 11]{Devetak.2006}. This also proves the additivity of the sandwiched umlaut information stated in \cite[Proposition 4.3]{Kochanowski.2026}.

In \cref{sec:Renyi.coherent.info} we first consider the $\alpha$-Rényi coherent information of a quantum channel \cite{Gao.2018.2, Beigi.2023, Berta.2026.1, Li.2026} which is effectively the negative of a minimal output conditional entropy, and show that its completely bounded version is additive under tensor products of channels, see \cref{thm:cb.coherent.add}.

Then in \Cref{sec:Additivity.Extremal.Output.Entropies} we prove, for all $\alpha\in[\frac{1}{2},1)\cup(1,\infty]$, additivity under tensor products of quantum channels of the completely bounded minimal output conditional sandwiched Rényi entropy, i.e.
\begin{align}
    \inf_{E,\rho_{EQ_1Q_2}}\tilde{H}^\uparrow_p(A_1A_2|B_1B_2E)_{(\id_E\otimes\Phi\otimes\Psi)(\rho_{EQ_1Q_2})} &= \tag{\cref{thm:cb.min.cond.entropy.add}}\inf_{E,\rho_{EQ_1}}\tilde{H}^\uparrow_p(A_1|B_1E)_{(\id_E\otimes\Phi)(\rho_{EQ_1})}\\&\qquad+\inf_{E,\rho_{EQ_2}}\tilde{H}^\uparrow_p(A_2|B_2E)_{(\id_E\otimes\Psi)(\rho_{EQ_2})}, 
\end{align} and of the maximal output conditional sandwiched Rényi entropy, i.e.
\begin{align}
    \sup_{\rho_{Q_1Q_2}}\tilde{H}^\uparrow_p(A_1A_2|B_1B_2)_{(\Phi\otimes\Psi)(\rho_{Q_1Q_2})} &= \tag{\cref{thm:max.cond.entropy.add}}\sup_{\rho_{Q_1}}\tilde{H}^\uparrow_p(A_1|B_1)_{\Phi(\rho_{Q_1})}\\&\qquad +\sup_{\rho_{Q_2}}\tilde H^\uparrow_p(A_2|B_2)_{\Psi(\rho_{Q_2})}. 
\end{align}
In the mixed-norm picture these are multiplicativity statements for $(\cb),1\to(1,\alpha)$ mixed norms and $(\cb),\co,1\to(1,\alpha)$ mixed co-quasi-norms, which follow in a few lines with the previously derived tools. For $\alpha>1$ the former was established in \cite{Fawzi.2026}, where it was applied to security proofs of quantum key distribution protocols with time-varying channels; we extend it to $\alpha\in[\frac{1}{2},1)$. The latter shows that, unlike the minimal one, the maximal output conditional entropy is additive without complete boundedness and coincides with its completely bounded version; to the best of our knowledge this has not appeared explicitly in the literature before. For trivial conditioning systems both statements reduce to the known additivity of the completely bounded minimal \cite{Devetak.2006, Kochanowski.2026, Li.2026} and of the maximal \cite[Corollary 1.2]{Kochanowski.2026} output Rényi entropies.

Finally, we turn to a setting that is not covered by tensor products. In
\Cref{sec:DI.ent.cert} we consider compositions of maps in which each map may act directly on outputs of the previous one. This describes in particular device-independent protocols in which one cannot assume a certain channel structure and so has to assume that later rounds may depend on earlier outcomes, see \cref{fig:sequential.intro}. We prove in \Cref{thm:signaling.sup.chain.rule} and \ref{thm:signaling.inf.chain.rule} chain rules bounding the conditional entropy of the composition of two such maps by the sum of the single-round quantities. Specifically we show the following chain rule and interpret its operational significance in the task of device-independent entanglement certification. 
\begin{theorem*}[Signaling chain rule, \cref{thm:signaling.sup.chain.rule}, informal]
  For quantum systems $Q, Q_1, Q_2, A_1, A_2,$ $B_1, B_2$ and quantum channels $\Phi:Q\to A_1B_1Q_1, \Psi:Q_1\to A_2B_2$, we have for any $\alpha\in[\frac{1}{2},1]$ that  
  \begin{align}
       \sup_{\rho\in\cD(\cH_Q)} \tilde H_\alpha^\uparrow(A_1A_2|B_1B_2)_{(\id_{A_1B_1}\otimes\Psi)\circ\Phi(\rho)} &\leq  \sup_{\rho\in\cD(\cH_Q)} \tilde H_\alpha^\uparrow(A_1|B_1)_{\tr_{Q_1}[\Phi(\rho)]} \notag  \\& \quad+\sup_{\rho\in\cD(\cH_{Q_1})} \tilde H_\alpha^\uparrow(A_2|B_2)_{\Psi(\rho)}.
  \end{align}
\end{theorem*}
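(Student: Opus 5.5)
We translate the statement into multi-indexed Schatten language and reduce it to the two tools of \Cref{sec:LinerMaps}: the non-commutative Minkowski (swap) inequality and the co-quasi-norm identity-removal equality. For $\alpha\in[\frac12,1)$ we have $\frac{\alpha}{1-\alpha}>0$ and, by \eqref{equ:Intro.opt.cond.entropy}, $\tilde H^\uparrow_\alpha(A|B)_\rho=\frac{\alpha}{1-\alpha}\log\|\rho_{BA}\|_{(1,\alpha)}$. The left-hand side is therefore $\frac{\alpha}{1-\alpha}\log$ of
\begin{align}
\sup_{\rho\geq0,\tr\rho=1}\big\|(\id_{A_1B_1}\otimes\Psi)\circ\Phi(\rho)\big\|_{(B_1B_2:1,A_1A_2:\alpha)}.
\end{align}
The supremum over states of a positive map is a $1\to(1,\alpha)$ norm restricted to positive inputs. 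The case $\alpha=1$ follows by taking the limit, or directly from the von Neumann chain rule.

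The key steps are as follows.

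\textbf{Step 1 (splitting systems).} Split the output indices into four systems and write the quantity as $\|\cdot\|_{(B_1:1,B_2:1,A_1:\alpha,A_2:\alpha)}$. This uses that merged systems with equal index equal the multi-indexed space, as in the inductive definition with $r=\infty$.

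\textbf{Step 2 (swap).} Apply \Cref{thm:QuasiNormMinkowskiInequality} with $q=1\leq p=\alpha$? Here $\alpha<1$, so the swap moves the larger index $1$ past the smaller $\alpha$. This gives
\begin{align}
\|X\|_{(B_1:1,B_2:1,A_1:\alpha,A_2:\alpha)}\leq\|X\|_{(B_1:1,A_1:\alpha,B_2:1,A_2:\alpha)},
\end{align}
since by the footnoted form of the Minkowski inequality, swapping adjacent decreasing indices $(a\geq b)$ is non-contractive. The inequality direction is exactly the one we need for an upper bound.

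\textbf{Step 3 (bounding the second channel).} With $X=(\id_{B_1A_1}\otimes\Psi)(Y)$ and $Y=\Phi(\rho)$ living on $B_1A_1Q_1$, bound $\|(\id\otimes\Psi)(Y)\|_{(B_1:1,A_1:\alpha,B_2:1,A_2:\alpha)}$ by $\|\Psi\|^+_{\cb,1\to(1,\alpha)}\,\|Y\|_{(B_1:1,A_1:\alpha,Q_1:1)}$. This is the completely bounded norm with a multi-indexed environment $(1,\alpha)$ on the outer systems. Its independence from the environment index structure is the content of \Cref{lem:cb.norm.via.t}, extended to multi-indexed environments (environment indices compatible with $1,\alpha$). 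To identify this cb quantity with the non-cb single-round supremum $\sup_\rho\|\Psi(\rho)\|_{(1,\alpha)}$, use the additivity result for the maximal output conditional entropy (\cref{thm:max.cond.entropy.add}): the maximal output quantity coincides with its completely bounded version. Equivalently, use the block-collapse and partial-trace arguments with the optimal weights of \Cref{thm:multi.quasi.relational.consistency}.

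\textbf{Step 4 (bounding the first channel).} By \Cref{cor:partial.trace.1}, a trailing index $1$ is a partial trace, so $\|Y\|_{(B_1:1,A_1:\alpha,Q_1:1)}=\|\tr_{Q_1}\Phi(\rho)\|_{(1,\alpha)}$. Taking the supremum over $\rho$ and then the logarithm gives the claim.

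The main obstacle is Step 3. We need the single-round supremum to control the map with an untouched environment $B_1A_1$ carrying indices $(1,\alpha)$, where $\alpha<1$ makes the environment only quasi-normed. This is precisely the regime where identity removal can fail for general quasi-norms (\cite[Counterexample 4.1]{Kochanowski.2026}). We would handle it with the supremum factorization of \Cref{thm:Multi.Thm.Pisier.like}, taking $\underline q=\alpha$ and $\overline q=1$ with symmetric weights $A=B^*\geq0$. The $B_1A_1$ part is written as $\sup$ over weights $a$ on $B_1A_1$ of $\|a\|^{-1}\|a Y a\|_{(\alpha\text{ or }1)}$, and conjugation by $a$ commutes with $\id\otimes\Psi$. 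This reduces everything to a single-index environment, which has index $1$ after choosing the weights appropriately. In that case the claim is just the linearity of $\Psi$ on the block structure combined with \Cref{cor:multi.block.collapse} and convexity of the $(1,\alpha)$ quantity in the input. If this exchange of supremum and channel fails to be tight, we would instead invoke \Cref{thm:id.right.quasi} after a reverse-Hölder dualization.
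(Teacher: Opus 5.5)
Your proposal is correct and follows essentially the same route as the paper. Both arguments use submultiplicativity (you split the output systems first), then the non-commutative Minkowski inequality of \Cref{thm:QuasiNormMinkowskiInequality} to swap $B_2:1$ past $A_1:\alpha$, then the cb bound with the multi-indexed environment $(B_1:1,A_1:\alpha)$ via \Cref{lem:cb.norm.via.t}, the cb-to-non-cb collapse, and finally the partial trace over the trailing $Q_1:1$. The worry in your last paragraph is unfounded, and the backup machinery you sketch there is unnecessary: no inner identity with index below $1$ ever appears, because the collapse $\|\Psi\|^+_{\cb,(Q_1:1)\to(B_2:1,A_2:\alpha)}=\|\Psi\|^+_{(Q_1:1)\to(B_2:1,A_2:\alpha)}$ follows simply by swapping the environment (index $1$) past $A_2:\alpha$ to the innermost position and tracing it out, exactly as in the proof of \Cref{thm:p.leq.1.cond.multiplicativity}.
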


\noindent\begin{minipage}[c]{0.49\textwidth}
This \Cref{thm:signaling.sup.chain.rule} generalizes the smoothed max-entropy chain rule \cite{Vitanov.2013} used in \cite{Arnon-Friedman.2019}, which effectively corresponds to the single point $\alpha=\frac12$, to the whole interval $\alpha\in[\frac12,1)$, so that the resulting bounds for DI entanglement distillation certification can in principle be optimized over $\alpha$. This is morally similar to the improved QKD rates which resulted from \cite{Fawzi.2026}'s $\alpha\in[1,\infty]$ chain rule, compared to the smoothed-min entropy version. To the best of our knowledge these are the first chain rules in a signaling setting for optimized conditional sandwiched Rényi entropies with $\alpha<1$ \cite{Dupuis.2015, Marwah.2025, ArqandHahnTan.2025}. 
\end{minipage}
\begin{minipage}[c]{0.49\textwidth}
\centering
\begin{tikzpicture}[x=1cm, y=1cm, >={Stealth[length=1.6mm]}, font=\footnotesize,
  box/.style={draw, rounded corners=1pt, fill=black!5, minimum width=6.5mm, minimum height=6.5mm, inner sep=1pt}]
\node[box] (P1) at (0.9,0) {$\Phi_1$};
\node[box] (P2) at (2.3,0) {$\Phi_2$};
\node (d) at (3.45,0) {$\cdots$};
\node[box] (Pn) at (4.6,0) {$\Phi_n$};
\draw[->] (0,0) -- (P1.west) node[midway, above] {$Q$};
\draw[->] (P1.east) -- (P2.west) node[midway, above] {$Q_1$};
\draw[->] (P2.east) -- (3.1,0);
\draw[->] (3.8,0) -- (Pn.west);
\draw[->] (Pn.east) -- (5.5,0) node[midway, above] {$Q_n$};
\foreach \n/\i in {P1/1, P2/2, Pn/n} {
  \draw[->] (\n.north) -- ++(0,0.5) node[above, inner sep=1pt] {$A_\i$};
  \draw[->] (\n.south) -- ++(0,-0.5) node[below, inner sep=1pt] {$B_\i$};
}
\end{tikzpicture}
\captionof{figure}{\footnotesize A simplified view of the DI entanglement certification structure of \cite{Arnon-Friedman.2019}. In round $i$ the channel $\Phi_i:Q_{i-1}\to A_iB_iQ_i$ either stores the possibly entangled pair $A_iB_i$ or measures it, and passes the outcomes and remaining register in $Q_i$ on. 
The certified quantity is the maximal output conditional entropy $\tilde H^\uparrow_\alpha(A_1\cdots A_n\mid B_1\cdots B_n)$, whose negativity witnesses entanglement. Iterating \Cref{thm:signaling.sup.chain.rule} bounds it by the sum of the $n$ single-round quantities, for any $\alpha\in[\tfrac12,1)$.}
\label{fig:sequential.intro}
\end{minipage}

\medskip

\Cref{thm:signaling.inf.chain.rule} is a chain rule analogous to \cref{thm:signaling.sup.chain.rule} for minimal output conditional entropies and extends the completely bounded $\alpha>1$ version established in \cite[Theorem 4.13]{Fawzi.2026} to $\alpha\in[\frac{1}{2},1)$.

\subsection{Layout of the paper}

The remainder of this introduction fixes our notation in \Cref{sec:notation} and recalls Pisier's operator-valued Schatten spaces in \Cref{Prelim:Pisier}.

\Cref{sec:MultiIndex} develops the theory of multi-indexed Schatten spaces. In \Cref{sec:Multi.Pisier} we first extend Pisier's factorization formula to iterated operator-valued Schatten norms with indices in $[1,\infty]$, \Cref{thm:main.multi.pisier}.
In \Cref{sec:GenMultiIndex} with \Cref{def:GeneralMultiIndexed} we then define multi-indexed Schatten quasi- and anti-norms for arbitrary compatible indices and prove the key factorization formula of \Cref{thm:Multi.Thm.Pisier.like}. From these we derive the structural properties: relational consistency, multiplicativity under tensor products, (reverse) Hölder duality, the collapse on block-diagonal operators, quasi-normability, and the (reverse) triangle inequality, see \cref{sub:structural} and \cref{sub:quasi.anti}.

\Cref{sec:LinerMaps} then considers linear maps between multi-indexed Schatten spaces. After discussing completely bounded quasi-norms and co-quasi-norms between multi-indexed spaces in \Cref{Lin.1}, we prove the non-commutative Minkowski inequality in \Cref{Lin.2} and the two identity-removal equalities, one normed and one co-quasi-normed, in \Cref{sec:3.removing.indices}. Combining these tools, we establish in \Cref{sec:Mixed.Multiplicativity.Statements} the multiplicativity of complete hypercontractivity and complete reverse hypercontractivity for arbitrary ordered indices under tensor products and the additivity of the sandwiched umlaut information under tensor-product states.

\Cref{sec:ApplictoQI} contains the applications to quantum information theory related to entropies. After discussing the relevance of additivity statements and entropic chain rules, we recap in \Cref{sec:Entopic.Norms} the expressions of optimized sandwiched Rényi conditional entropies, and the extremal output entropies of quantum channels derived from them, as (mixed) multi-indexed Schatten quasi-norms. We then prove additivity of the completely bounded minimal and the maximal output conditional entropies in \Cref{sec:Additivity.Extremal.Output.Entropies} and additivity of the completely bounded Rényi coherent information in \Cref{sec:Renyi.coherent.info}. Finally, in \Cref{sec:DI.ent.cert} we go beyond tensor products and prove chain rules for compositions of signaling channels, with applications to device-independent entanglement certification.

We conclude in \Cref{sec:out.multi.indexed} with open questions and directions. \Cref{app:matrix.tools} collects the matrix-analytic tools, including Sion's minimax theorem and the single-indexed factorization formulas for Schatten spaces, require for the proofs in \cref{sec:MultiIndex}. \Cref{App:OpSpaces} recalls the required background on operator spaces, and \Cref{app:Generalized.Variational.Expressions} contains the proofs of the factorization formulas for the multi-indexed version of Pisier's operator-valued Schatten norms in \cref{sec:Multi.Pisier}, including that of \Cref{thm:main.multi.pisier}.

\subsection{Notation}\label{sec:notation}
Most of our notation is standard; here we collect the conventions specific to this work.

\medskip

For a vector $v=(v_i)_{i\in I}\in\C^{I}$ indexed by a finite set $I$ we write its $\ell_p$-quasi-norms, for any $p\neq0$, as
\begin{align}
    \|v\|_{\ell_p}:=\Big(\sum_{i\in I}|v_i|^p\Big)^\frac{1}{p}, \quad \|v\|_{\ell_\infty}:=\max_{i\in I}|v_i|, \quad \|v\|_{\ell_{-\infty}}:=\min_{i\in I}|v_i|
\end{align} and denote the associated sequence space by $\ell_p(I):=(\C^{I},\|\cdot\|_{\ell_p})$. Whenever $p<0$ we assume the sequences to contain only strictly positive elements. These are norms for $p\in[1,\infty]$ and $p$-normable quasi-norms for $p\in(0,1]$ and anti-norms for $p\in [-\infty,0)\cup (0,1]$.

\noindent Two indices $p,p^\prime\in\R\setminus\{0\}$ (or $\pm\infty$) are called \emph{Hölder dual} (or Hölder conjugate) if $\frac{1}{p}+\frac{1}{p^\prime}=1$, and we consistently write $p^\prime$ for the Hölder dual of $p$. 

\noindent All Hilbert spaces $\cH,\cK$ are finite dimensional unless stated otherwise, and $\cB(\cH,\cK)$ denotes the linear operators $\cH\to\cK$, $\cB(\cH):=\cB(\cH,\cH)$, with identity $\1$. The canonical (full) trace on $\cB(\cH)$ is denoted as $\Tr[\cdot]$. We write $\langle X,Y\rangle=\Tr[X^*Y]$ for the Hilbert--Schmidt inner product and $|X|:=\sqrt{X^*X}$, so that polar decompositions read $X=U|X|=|X^*|V$ with unitaries $U,V$. We denote the partial trace $\tr_2[\cdot]:\cB(\cH_1\otimes\cH_2)\to \cB(\cH_1)$ and the reduced operator $\tr_2[X_{12}]$ is at times denoted by $X_1$.  We write $X\geq0$ for positive semidefinite and $X>0$ for positive definite operators, and $\cB^+(\cH)$ for the positive semidefinite cone.

\medskip

For $0<\kappa\leq1$, a \emph{$\kappa$-normable quasi-norm} on a complex vector space $\mathcal V$ is a map $\|\cdot\|:\mathcal V\to[0,\infty)$ which is positively homogeneous, $\|\lambda v\|=|\lambda|\|v\|$, positive definite, and satisfies $\|v+w\|^\kappa\leq\|v\|^\kappa+\|w\|^\kappa$. It then satisfies $\|v+w\|\leq2^{\frac1\kappa-1}(\|v\|+\|w\|)$ \cite{book:Kalton.03}, and a norm is a $1$-normable quasi-norm. By the Aoki--Rolewicz theorem \cite{Aoki.1942, Rolewicz.1957} every quasi-norm is equivalent to a $\kappa$-normable one for some $\kappa\in(0,1]$. 

\noindent Dually, following \cite{Bourin.Hiai.2011, Bourin.2014}, we call a positively homogeneous, positive definite, functional on a cone of strictly positive operators that satisfies the reverse triangle inequality $\|X+Y\|\geq\|X\|+\|Y\|$ an \emph{anti-norm}.

\noindent The Schatten-$p$ functional for any $p\notin\{-\infty,0,\infty\}$
\begin{align}\label{equ:def.Schatten}
    \|X\|_p:=\Tr[|X|^p]^\frac{1}{p}, \quad |X|:=\sqrt{X^*X}\geq0.
\end{align} For $p\in\{-\infty,\infty\}$ we set $\|X\|_\infty$ to be the operator norm and $\|X\|_{-\infty}:=\|X^{-1}\|_\infty^{-1}$.
When $p>0$ these are $\min\{1,p\}$-normable quasi-norms and hence norms for $p\geq1$ \cite{McCarthy.1967, Rotfeld.1969, Simon.2005}. We write $\cS_p(\cH)=(\cB(\cH),\|\cdot\|_p)$
for the associated quasi-normed \emph{Schatten spaces} and denote its positive cone as
\begin{align}
    \cS_p^+(\cH):
=\{X>0\,|\, \|X\|_p<\infty\},\quad \cS^+_\infty(\cH):=\{X\in\cB(\cH)\,|\, X>0 \text{ compact}\}.
\end{align}
When $p<0$, these functionals are only defined on the cone of strictly positive operators $X>0$ and we have the identity
\begin{align}
    \|X\|_p= \|X^{-1}\|^{-1}_{-p}.
\end{align}

\noindent Since for $p\in(-\infty,1]\setminus\{0\}$ and $X,Y\geq0$ ($X,Y>0$ if $p<0$) the \emph{reverse Minkowski (triangle) inequality} 
\begin{align}
    \|X+Y\|_p\geq\|X\|_p+\|Y\|_p
\end{align}
holds, see e.g. \cite[Lemma 6]{Cubitt.2015}, they are anti-norms for such $0\neq p\leq 1$ on $\cS_p^+(\cH)$. The reverse Hölder duality
\begin{align}\label{equ:reverse.Hölder.Schatten}
    \|X\|_p=\inf_{Y>0}\Tr[YX]\,\|Y\|_{p^\prime}^{-1}
\end{align}
holds for $p\in(-\infty,1)\setminus\{0\}$ and $X\geq0$ ($X>0$ if $p<0$), where $p^\prime$ is the Hölder dual index of $p$ defined above. The \emph{generalized Hölder inequality} \cite{borell1982positivity, tomamichel2014relating} states that for $p,q,r>0$ with $\frac1p=\frac1q+\frac1r$
\begin{align}\label{equ:Schatten.gen.Hölder}
    \|XY\|_p\leq\|X\|_q\|Y\|_r,
\end{align}
with equality if and only if $|X|^q\propto|Y^*|^r$.

\medskip

Given some positive operator $a\geq0$ we denote its support projection as $\Pi_a$, that is the minimal projection s.t. $a=\Pi_aa\Pi_a$. Hence for any operator $X$ the projection onto its image (left support) is $\Pi_{XX^*}$ and the one onto its support (right support) is $\Pi_{X^*X}$. Given a bipartite operator $X\in\cB(\cH\otimes\cK)$, then its \emph{left-},  respectively \emph{right-marginal support projections} $L_X,R_X$ on $\cH$ are the minimal projections s.t.
\begin{align}\label{equ:full.to.marginal.support}
    X=(L_X\otimes\1)X, \quad X= X(R_X\otimes\1)
\end{align} hold, respectively. We will interchangeably call them \emph{projections onto the marginal image}, respectively, \emph{projections onto the marginal support} and the marginal in question will be clear from the context.
They are also equally characterized by being the minimal projections which satisfy $\Pi_{XX^*}\leq L_X\otimes\1, \quad \Pi_{X^*X}\leq R_X\otimes\1$. 

\noindent When inverting an operator $a\geq0$ we will generically be using the generalized Moore-Penrose inverse, which is a bounded operator $a^{-1}$ s.t. $a^{-1}a=aa^{-1}=\Pi_a$. Effectively it is the inverse on the support of that operator and $0$ elsewhere.

\medskip 

Operators acting on a subsystem are indexed by the subsystem, and identities are suppressed, e.g.\ for $a\in\cB(\cH_1)$ we write $a_1\equiv a\otimes\1_2$, and for system labels $A,B,E,Q,P,\dots$ we write $X_{AB}$, $\1_B$, etc. The identity map on $\cB(\cH_E)$ is $\id_E$ (or $\id_n$ if $\cH_E\simeq\C^n$), and $\SWAP_{Q\leftrightarrow P}$ denotes the unitary channel exchanging two systems (or, more generally, reordering tensor factors). A linear map $\Phi:\cB(\cH_Q)\to\cB(\cH_P)$ is also written $\Phi:Q\to P$ and its adjoint with respect to the Hilbert--Schmidt inner product is denoted by $\Phi^*$. We use the standard notions of completely positive (CP) and trace-preserving (TP) maps and a \emph{quantum channel} is a CPTP map. Every CP map admits a Kraus decomposition $\Phi(X)=\sum_iK_iXK_i^*$ which satisfies $\sum_iK^*_iK_i=\1$ if the map is also TP.

\medskip

An ordered collection $\cQ=(q_1,\dots,q_n)$ of indices, either all positive or all negative, is called \emph{(mutually) compatible} if $\max_{i,j}\big|\frac1{q_i}-\frac1{q_j}\big|\leq1$, and we write $|\cQ|=n$ for its length and commonly $-\cQ=(-q_1,\dots,-q_n)$. Two compatible collections $\cQ,\cP$ are compatible with each other if the union of their indices is compatible. We say that $\cQ$ is \emph{normed} if all $q_i\in[1,\infty]$.

\medskip

For multi-indexed (quasi-)norms we use two interchangeable notations. In the \emph{positional} notation $\|X\|_{(q_1,\dots,q_n)}$, $X\in\cB(\cH_1\otimes\cdots\otimes\cH_n)$, where the index $q_i$ refers to the $i$-th tensor factor. In the \emph{labeled} notation each index is tagged with the system carrying it, e.g.\ $\|X_{QP}\|_{(Q:q,P:p)}\equiv\|X\|_{(q,p)}$, and the order of the systems is the order of the labels in the subscript, so technically $\|X_{QP}\|_{(Q:q,P:p)}\equiv\|X_{PQ}\|_{(Q:q,P:p)}$, though we will avoid mixing the order of the indices in the operator and the index, wherever possible. If $Q=Q_1\cdots Q_n$ carries the collection $\cQ=(q_1,\dots, q_n)$ we write $\|X\|_{(Q:\cQ)}$, and e.g.\ $(E\!:\!t,Q\!:\!\cQ)$ denotes the collection obtained by prepending a system $E$ with index $t$. Inside a subscript three separators occur, each with a single meaning: a \emph{comma} separates indices of different tensor factors, a \emph{colon} attaches an index to a system label and a \emph{semicolon} separates a collection of indices from a trailing operator space, $(\cQ;\cX)$, see \Cref{Prelim:Pisier}.

\subsection{Preliminaries: Pisier's Operator-Valued Schatten Spaces}\label{Prelim:Pisier}

For completeness' sake we recall some basics of Pisier's \textit{operator-valued Schatten norms}, sometimes also referred to as \textit{Pisier norms}. For a proper exposition we refer to \cite{book:Pisier.2003, book:EffrosRuan.2000, book:Paulsen.2003} for operator spaces, to \cite{book:Pisier.1998} for operator-valued Schatten spaces, and to \cite{Beigi.2023} for an overview aimed at quantum information theory people. For a full account of Pisier's construction we refer the reader to Pisier's \cite{book:Pisier.1998} or the comprehensive \cite{book:Junge.2010}. We summarized some basics of operator spaces, including their definition via embeddings and dualization in \cref{App:OpSpaces}.

\noindent Let $\cX\subset\cB(\cK)$ be some operator space. Then Pisier constructs the operator-valued $q=\infty$ Schatten space via
\begin{align}
    \cS_\infty[\cH;\cX] := \mathcal{C}\otimes_h \cX \otimes_h \mathcal{R} = \cB_{\text{cpt}}(\cH)\otimes_{\min} \cX,
\end{align} where $\otimes_h$ is the Haagerup tensor product and $\cC, \cR$ are, respectively, column- and row-operator spaces, and $\otimes_{\min}$ the min tensor product, see \cite{book:Pisier.2003, Beigi.2023}.
Thus, effectively, $\cS_\infty[\cH;\cX]$ is the space of compact operators in $\cB(\cH\otimes\cK)$ which gain their operator space structure from exactly this embedding. If $\dim(\cH)=d$ then one can show that this implies $\cS_\infty[\mathbb{C}^d;\cX]=M_d(\cX)$ with equal operator space structures.

\noindent Next Pisier defines the operator-valued $q=1$ Schatten space 
\begin{align}
    \cS_1[\cH;\cX]:= (\cS_\infty[\cH;\cX^*])^*
\end{align} as the operator space dual of the operator-valued Schatten space $\cS_\infty[\cH;\cX^*]$ over the operator space dual $\cX^*$. Finally, for arbitrary indices $p\in[1,\infty]$, the corresponding operator-valued Schatten space is defined via complex interpolation
\begin{align}
    \cS_p[\cH;\cX]:= [\cS_\infty[\cH;\cX]\,,\, \cS_1[\cH;\cX]]_\frac{1}{p}.
\end{align}

\noindent As a direct consequence it follows that they satisfy the following, natural, operator space duality relation
\begin{align}
\label{eq:duality-op-space}
   \left( \mathcal{S}_p[\cH;\cX]\right)^* = \mathcal{S}_{p^\prime}[\cH;\cX^*],
\end{align} where $\frac{1}{p}+\frac{1}{p^\prime}=1$ are Hölder dual indices and $\mathcal{X}^*$ is the operator space dual of $\mathcal{X}$ \cite{book:Pisier.2003}.

\noindent When taking $\cX=\mathbb{C}$, it turns out that, as expected, all these spaces recover the usual $\cS_p(\cH)$ Schatten spaces. The neat and powerful point, as mentioned before, is that since $\cS_{p}(\cH)$ is a Banach space, it can be imbued with a natural operator space structure, i.e., we can also view it as an operator space. Thus the above construction gives rise to the spaces
\begin{align}
    \cS_q[\cH_1,\cS_p(\cH_2)], \, \cS_q[\cH_1,\cS_p[\cH_2;\cX]],\, \dots\,,
\end{align}
where the spaces most often considered in the quantum information literature, though not exclusively, are the two-indexed $\cS_q[\cH_1,\cS_p(\cH_2)]$, see e.g.\ \cite{Devetak.2006, Beigi.2016, Bardet.2024, Beigi.2023, Fawzi.2026}.

\noindent From these definitions Pisier also demonstrated the following property.
\begin{proposition}[Corollary 1.3 \cite{book:Pisier.1998}]\label{prelim:pisier.block.colaps}
Let $X_i\in\cX$, then for $\oplus_iX_i\in \cS_p[\cH;\cX]$ and $p\in[1,\infty)$ it holds that
\begin{align}
    \|\oplus_iX_i\|_{\cS_p[\cH;\cX]}&= \left(\sum_i\|X_i\|^p_{\cX}\right)^\frac{1}{p}, \\
    \|\oplus_iX_i\|_{\cS_\infty[\cH;\cX]}&= \sup_i\|X_i\|_{\cX}
\end{align}
\end{proposition}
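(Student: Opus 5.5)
We prove the case $p=\infty$ first and then get $p=1$ by duality and general $p$ by interpolation. Throughout, $\oplus_iX_i=\sum_i|i\rangle\langle i|\otimes X_i$ with respect to an orthonormal family $\{|i\rangle\}\subset\cH$.

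\emph{The case $p=\infty$.} Here $\cS_\infty[\cH;\cX]=\cB_{\text{cpt}}(\cH)\otimes_{\min}\cX$. We may assume $\cH$ is finite dimensional, since the diagonal operator lives on the span of the $|i\rangle$. This span is completely isometrically embedded via the compression by its projection, which is completely contractive both ways. Then $\cS_\infty[\C^d;\cX]=M_d(\cX)$, and the norm of a block-diagonal element of $M_d(\cX)\subset M_d(\cB(\cK))=\cB(\C^d\otimes\cK)$ is its operator norm there. The operator norm of a block-diagonal operator is the maximum of the block norms, which gives $\sup_i\|X_i\|_\cX$.

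\emph{The case $p=1$.} We use $\cS_1[\cH;\cX]=(\cS_\infty[\cH;\cX^*])^*$, where the pairing is $\langle \sum e_{jk}\otimes x_{jk},\sum e_{jk}\otimes \phi_{jk}\rangle=\sum_{jk}\phi_{jk}(x_{jk})$.
\begin{itemize}
\item For ``$\geq$'', test against $\sum_i|i\rangle\langle i|\otimes\phi_i$ with $\|\phi_i\|_{\cX^*}\leq1$ and $\phi_i(X_i)=\|X_i\|$. By the $p=\infty$ case applied to $\cX^*$, this test element has norm $\max_i\|\phi_i\|\leq1$. It pairs to $\sum_i\|X_i\|$.
\item For ``$\leq$'', use the triangle inequality: $\|\oplus_iX_i\|_1\leq\sum_i\||i\rangle\langle i|\otimes X_i\|_1$. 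Each term is the norm of a rank-one-in-$\cH$ element. Pairing it with any $Y$ of $\cS_\infty$-norm $\leq1$ gives $\phi_{ii}(X_i)$, and $\|\phi_{ii}\|\leq\|Y\|_{M_d(\cX^*)}$ by compression. So each term equals $\|X_i\|_\cX$.
\end{itemize}

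\emph{General $p$ by interpolation.} The main obstacle is $1<p<\infty$. We would let $D$ be the set of block-diagonal elements. The conditional expectation $\mathbb{E}(Y)=\sum_i(|i\rangle\langle i|\otimes\1)Y(|i\rangle\langle i|\otimes\1)$ is a contraction onto $D$ at both endpoints. At $\infty$ it is an average of unitary conjugations by diagonal sign/phase unitaries, and these act isometrically on $M_d(\cX)$. At $1$ the same holds by duality. Hence $D$ is complemented in both endpoint spaces, compatibly with the couple.

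By the standard fact that interpolation commutes with complemented subcouples, we get
$$[D_\infty,D_1]_\theta=[\ell_\infty(\cX),\ell_1(\cX)]_\theta$$
isometrically, with $\theta=1/p$. The endpoints were identified above as $\ell_\infty(\{\cX\})$ and $\ell_1(\{\cX\})$.

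The classical vector-valued interpolation $[\ell_\infty(\cX),\ell_1(\cX)]_{1/p}=\ell_p(\cX)$ holds isometrically for a constant Banach space $\cX$. This yields $\|\oplus_iX_i\|_{\cS_p[\cH;\cX]}=(\sum_i\|X_i\|^p)^{1/p}$.

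The delicate point is the isometric transfer through the contractive projection. The inclusion $D\hookrightarrow\cS_\theta$ gives ``$\leq$''. Applying $\mathbb{E}$ to any analytic family representing the interpolation norm gives ``$\geq$''.
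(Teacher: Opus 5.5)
Your argument is correct. The paper gives no proof of its own here: the proposition is quoted from Pisier's monograph and then used as a black box, e.g.\ in the proof of \Cref{cor:multi.block.collapse}. What you supply is the standard derivation from the definitions:
\begin{itemize}
\item $p=\infty$ from property (ii) of operator spaces;
\item $p=1$ by testing against diagonal norming functionals plus the triangle inequality;
\item $1<p<\infty$ by interpolating through the diagonal conditional expectation, which is contractive at both endpoints as an average of conjugations by diagonal sign unitaries.
\end{itemize}
Each step checks out. This includes the two-sided transfer of the interpolation norm through the contractive projection and the final appeal to $[\ell^d_\infty(\cX),\ell^d_1(\cX)]_{1/p}=\ell^d_p(\cX)$. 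One small point: the reduction to $\dim\cH<\infty$ is automatic under the paper's standing convention. With infinitely many blocks the reason you give would not suffice, and one would truncate instead.

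A genuinely different and shorter route uses \Cref{thm:PisierOriginal}, which the paper states right afterwards. For the upper bound, take the diagonal factorization $a=b=\sum_i\|X_i\|_\cX^{1/2}|i\rangle\langle i|$ and $Y=\sum_{i:X_i\neq0}|i\rangle\langle i|\otimes X_i/\|X_i\|_\cX$. For the lower bound, take an arbitrary factorization $\oplus_iX_i=aYb$ and compress with the row $\langle i|a$ and the column $b|i\rangle$. Property (i) of operator spaces then gives $\|X_i\|_\cX\le (aa^*)_{ii}^{1/2}(b^*b)_{ii}^{1/2}\|Y\|_{\cS_\infty[\cH;\cX]}$. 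Cauchy--Schwarz together with $\sum_i(aa^*)_{ii}^p\le\Tr[(aa^*)^p]$ for $p\geq1$ then yields
\begin{align}
\Big(\sum_i\|X_i\|^p_\cX\Big)^{\frac1p}\leq\|a\|_{2p}\,\|Y\|_{\cS_\infty[\cH;\cX]}\,\|b\|_{2p}.
\end{align}
That route needs no interpolation theory and is in the spirit of how the paper treats the multi-indexed collapse. However, it presupposes Pisier's factorization theorem. Yours uses only the definitions of $\cS_\infty$, $\cS_1$ and $\cS_p$, which makes it the safer foundation: in Pisier's numbering the collapse (Corollary 1.3) comes before the factorization (Theorem 1.5).
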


A consequence of this is the following multiplicativity under tensor products.
\begin{corollary}
Let $\cX$ be an operator space and $Y\in \cX$. Let $p\in[1,\infty]$ and $X\in\cS_p(\cH)$, then
\begin{align}
    \|X\otimes Y\|_{\cS_p[\cH;\cX]} = \|X\|_{\cS_p(\cH)}\cdot \|Y\|_{\cX}.
\end{align}
\end{corollary}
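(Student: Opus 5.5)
The plan is to deduce the identity from the block-diagonal collapse of \Cref{prelim:pisier.block.colaps} by diagonalizing $X$ and then transporting the general case to the diagonal one with unitary invariance on the $\cH$ factor.

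\textbf{Step 1: unitary invariance.} For unitaries $U,V$ on $\cH$, the map $Z\mapsto (U\otimes\1)Z(V\otimes\1)$ is a complete isometry on $\cS_\infty[\cH;\cX]=\cB_{\mathrm{cpt}}(\cH)\otimes_{\min}\cX$. It is also a complete isometry on $\cS_1[\cH;\cX]$, since there it is the adjoint of the corresponding map on $\cS_\infty[\cH;\cX^*]$. By complex interpolation it is then an isometry on $\cS_p[\cH;\cX]$ for every $p\in[1,\infty]$.

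\textbf{Step 2: reduction to the diagonal case.} Take the singular value decomposition $X=U\,\mathrm{diag}(\sigma_i)\,V$ with $\sigma_i\geq0$. Then
\begin{align}
X\otimes Y=(U\otimes\1)\big(\mathrm{diag}(\sigma_i)\otimes Y\big)(V\otimes\1),
\end{align}
so by Step 1 it suffices to treat $X=\mathrm{diag}(\sigma_i)$. In that case
\begin{align}
\mathrm{diag}(\sigma_i)\otimes Y=\sum_i|i\rangle\langle i|\otimes\sigma_iY=\oplus_i\,\sigma_iY,
\end{align}
which is a block-diagonal element with entries $X_i=\sigma_iY\in\cX$.

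\textbf{Step 3: apply the block collapse.} \Cref{prelim:pisier.block.colaps} gives, for $p<\infty$,
\begin{align}
\big(\textstyle\sum_i\sigma_i^p\|Y\|_\cX^p\big)^{1/p}=\|X\|_p\|Y\|_\cX ,
\end{align}
and for $p=\infty$ it gives $\sup_i\sigma_i\|Y\|_\cX=\|X\|_\infty\|Y\|_\cX$.

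If $\cH$ is infinite-dimensional, I would first prove the identity for finite-rank $X$ by the argument above. I would then approximate a general $X\in\cS_p$ by finite-rank truncations of its singular value decomposition and pass to the limit, using that both sides are continuous norms.

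\textbf{Main obstacle.} The delicate step is Step 1, namely justifying that unitary multiplication on the $\cH$ factor is isometric at level $\cS_1$ and therefore after interpolation. I would handle it through duality, checking the pairing relation $\langle (U\otimes\1)Z(V\otimes\1),W\rangle=\langle Z,(V\otimes\1)W(U\otimes\1)\rangle$ with suitable transpose conventions for the duality pairing. Since $(U\otimes\1)\cdot(V\otimes\1)$ has the isometric inverse $(U^*\otimes\1)\cdot(V^*\otimes\1)$, contractivity in both directions then yields isometry.
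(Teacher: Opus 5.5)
Your proposal is correct and follows the same route as the paper. Both arguments take the singular value decomposition, remove the unitaries by local unitary invariance on the $\cH$ factor, and apply the block-diagonal collapse of \Cref{prelim:pisier.block.colaps} to $\mathrm{diag}(\sigma_i)\otimes Y=\oplus_i\sigma_iY$. The only difference is in how local unitary invariance is justified. The paper simply cites it as a consequence of Pisier's Lemma 1.6 (and rederives it from Pisier's factorization formula in the appendix). You instead prove it by duality at $p=1$ followed by complex interpolation, which is a valid self-contained alternative.
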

\begin{proof}
Let $X=UDV$ be its singular value decomposition with $D=\text{diag}(x_i)$, then
\begin{align}
    \|X\otimes Y\|_{\cS_p[\cH;\cX]} &= \|(U\otimes\1)(D\otimes Y)(V\otimes\1)\|_{\cS_p[\cH;\cX]} = \|D\otimes Y\|_{\cS_p[\cH;\cX]} \\&= \|\oplus_ix_i Y\|_{\cS_p[\cH;\cX]} = \left(\sum_i\|x_iY\|^p_{\cX}\right)^\frac{1}{p}= \|X\|_{\cS_p(\cH)}\|Y\|_{\cX},
\end{align} where we used the local isometric invariance of these norms, which is a consequence of \cite[Lemma 1.6]{book:Pisier.1998}.
\end{proof}

\noindent So far these norms were defined somewhat abstractly, and the problem of computing them for explicit elements seems rather difficult. This is changed with a seminal result of Pisier's pertaining to the structure of his norms, which is sometimes known as \emph{Pisier's theorem} or \emph{Pisier's formula}.

\begin{theorem}[Pisier's formula; Theorem 1.5 in \cite{book:Pisier.1998}] \label{thm:PisierOriginal}
Let $\mathcal{H}$ be a separable Hilbert space and $\mathcal{X}$ an operator space. Then for $1\leq p\leq\infty$ the following variational formulas hold for any $X\in \mathcal{S}_p[\cH;\cX]$.
\begin{align}
    \|X\|_{\mathcal{S}_p[\cH;\cX]}= \inf_{\underset{X=aYb}{a,b\in \mathcal{S}_{2p}(\mathcal{H}), Y\in \mathcal{S}_\infty[\cH;\cX]}}\|a\|_{2p}\|b\|_{2p}\|Y\|_{\mathcal{S}_\infty[\cH;\cX]}.
\end{align}
\end{theorem}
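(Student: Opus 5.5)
The plan is to prove the two inequalities separately. The upper bound will come from the fact that $\cS_\infty[\cH;\cX]$ is a left/right $\cS_{2p}(\cH)$-bimodule, and the lower bound from duality combined with complex interpolation.

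\emph{Upper bound ($\leq$).} Given a factorization $X=aYb$ with $a,b\in\cS_{2p}(\cH)$ and $Y\in\cS_\infty[\cH;\cX]$, I will show
\begin{align}
\|aYb\|_{\cS_p[\cH;\cX]}\leq\|a\|_{2p}\|Y\|_{\cS_\infty[\cH;\cX]}\|b\|_{2p}.
\end{align}
The endpoint $p=\infty$ is trivial, since then $a,b$ are bounded and left/right multiplication by bounded operators on $\cH$ is contractive on $M_d(\cX)=\cB_{\rm cpt}(\cH)\otimes_{\min}\cX$. The endpoint $p=1$ comes from duality $\cS_1[\cH;\cX]=(\cS_\infty[\cH;\cX^*])^*$. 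Pairing $aYb$ with $Z\in\cS_\infty[\cH;\cX^*]$ gives $\langle aYb,Z\rangle=\langle Y, a^*Zb^*\rangle$. Writing $a^*Zb^*$ through the Haagerup picture $\cC\otimes_h\cX\otimes_h\cR$, the column/row norms of $a^*$ and $b^*$ are bounded by the Hilbert--Schmidt norms $\|a\|_2,\|b\|_2$. This yields the bound $\|a\|_2\|b\|_2\|Y\|_\infty\|Z\|_\infty$.

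For intermediate $p$, I will use Stein-type interpolation. Write $a=u|a^*|$ and $b=|b|v$ with $\|a\|_{2p}=\|b\|_{2p}=1$ after normalization. Consider the analytic family
\begin{align}
F(z)=u|a^*|^{pz}\,Y\,|b|^{pz}v,
\end{align}
which at $\Re z=0$ has bounded unitary-type factors and at $\Re z=1$ has factors of unit $\cS_2$-norm. The three-lines lemma for $[\cS_\infty,\cS_1]_{1/p}$ then gives $\|F(1/p)\|_{\cS_p}\leq\|Y\|_\infty$.

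\emph{Lower bound ($\geq$).} Let $\|X\|_{\cS_p[\cH;\cX]}<1$. By the definition of complex interpolation, there is an analytic $F$ on the strip with $F(1/p)=X$, $\|F(it)\|_{\cS_\infty}<1$ and $\|F(1+it)\|_{\cS_1}<1$. First I treat $p=1$ directly. I will show that every $X$ in the open unit ball of $\cS_1[\cH;\cX]$ factors as $aYb$ with $\|a\|_2,\|b\|_2,\|Y\|_\infty<1$. This is the identification $\cS_1[\cH;\cX]\cong\cR\otimes_h\cX\otimes_h\cC$, the dual of $\cC\otimes_h\cX^*\otimes_h\cR$ under self-duality of the Haagerup tensor product. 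Its unit ball is exactly the set of such products, where the column/row factors are Hilbert--Schmidt operators.

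For general $p$, I reduce to the endpoints. The goal is to show that the quantity $N_p(X):=\inf\|a\|_{2p}\|b\|_{2p}\|Y\|_\infty$ is a norm, and that it satisfies $N_\infty=\|\cdot\|_{\cS_\infty}$ and $N_1=\|\cdot\|_{\cS_1}$. The triangle inequality for $N_p$ follows by the standard trick: given $X_j=a_jY_jb_j$, set $a=(\sum a_ja_j^*)^{1/2}$ and $b=(\sum b_j^*b_j)^{1/2}$, and write $X_1+X_2=a\tilde Y b$ with $\tilde Y$ a row-times-diagonal-times-column contraction. I then need to show that the spaces $(\cS_\infty,N_\infty)$ and $(\cS_1,N_1)$ interpolate to $N_p$, i.e.\ $N_p\leq\|\cdot\|_{[\cS_\infty,\cS_1]_{1/p}}$. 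Given the interpolating family $F$, the endpoint factorizations $F(it)=a_tY_tb_t$ and $F(1+it)=a'_tY'_tb'_t$ have to be glued into a single factorization at $1/p$. I would do this through an outer-function/Szegő-type factorization of operator weights, following Pisier's route: produce an analytic family $A(z)$ of operator weights interpolating between $\cS_\infty$ and $\cS_2$ boundary values, and similarly $B(z)$. Then apply the three-lines lemma to $A(z)^{-1}F(z)B(z)^{-1}$ in $\cS_\infty[\cH;\cX]$.

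I expect this gluing, namely the construction of analytic weight families with prescribed boundary norms, to be the main obstacle. If it proves too heavy, the fallback is to argue dually. I would prove the upper bound for the dual index $p'$, and then use the duality $(\cS_p[\cH;\cX])^*=\cS_{p'}[\cH;\cX^*]$ together with Hahn--Banach, in the form of a minimax over the weights $a,b$, to show that $N_p$ dominates the norm. I would also use the upper bound to handle the remaining comparison between $N_p$ and $\|\cdot\|_{\cS_p}$ after passing to biduals, where finite dimensionality ensures reflexivity.
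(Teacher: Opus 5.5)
\textbf{Verdict: genuine gap.} The reverse inequality for $1<p<\infty$ is only announced, not proved.

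\textbf{What is fine.} Your upper bound is sound for all $p$. For the Stein family, take $a=|a^*|u$, $b=v|b|$ and $F(z)=|a^*|^{pz}uYv|b|^{pz}$; the factorization $a=u|a^*|$ you wrote is not a polar decomposition. The case $p=1$ is also sound: it follows from the complete isometry $\cR\otimes_h\cX\otimes_h\cC\hookrightarrow(\cC\otimes_h\cX^*\otimes_h\cR)^*$, followed by polar decomposition to pass from rectangular to square factors.

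\textbf{The gap.} What is missing is $\|X\|_{\cS_p[\cH;\cX]}\geq N_p(X)$ for $1<p<\infty$, which is the actual content of the theorem.

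Your gluing route can be made to work, but only with ingredients you do not supply:
\begin{itemize}
\item the identity weight on $\Re z=0$;
\item a regularized weight $a'_t(a'_t)^*+\epsilon\1$ on $\Re z=1$, with a measurable (e.g.\ piecewise constant, using continuity of $F$) choice of near-optimal factorizations;
\item a matrix-valued Szeg\H{o}/Wiener--Masani theorem giving a bounded analytic $A$ with bounded analytic inverse and $A(1+it)A(1+it)^*$ equal to that weight.
\end{itemize}
Without the regularization $A^{-1}$ is unbounded, and the maximum principle for $A^{-1}FB^{-1}$ cannot be applied.

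Your fallback, as written, aims to show that ``$N_p$ dominates the norm''. That is the direction you already have. Dualizing your upper bound at $p'$ (for $\cX^*$) actually gives $\|X\|_{\cS_p[\cH;\cX]}\geq\sup_{\xi}|\langle X,\xi\rangle|/N_{p'}(\xi)$ over $\xi\in M_d(\cX^*)$. The whole difficulty is to identify this dual norm with $N_p$, and you leave that to an unspecified minimax.

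\textbf{The missing idea.} The Haagerup description you used at $p=1$ works at every $p$.
\begin{itemize}
\item Set $\cC_p:=[\cC,\cR]_{1/p}$ and $\cR_p:=[\cR,\cC]_{1/p}$. Then $M_{1,n}(\cC_p)=[M_{1,n}(\cC),M_{1,n}(\cR)]_{1/p}$ is $\cS_{2p}$ on $d\times n$ matrices, since the endpoints carry the operator norm and the Hilbert--Schmidt norm.
\item Hence $N_p$ is exactly the Haagerup norm of $\cC_p\otimes_h\cX\otimes_h\cR_p$, and likewise $N_{p'}$ is that of $\cC_{p'}\otimes_h\cX^*\otimes_h\cR_{p'}$.
\item Since $\cC_{p'}^{\,*}=\cC_p$ and $\cR_{p'}^{\,*}=\cR_p$, the complete isometry $E^*\otimes_hF^*\hookrightarrow(E\otimes_hF)^*$ and injectivity of $\otimes_h$ give $\sup_\xi|\langle X,\xi\rangle|/N_{p'}(\xi)=N_p(X)$.
\item Combining this with $(\cS_p[\cH;\cX])^*=\cS_{p'}[\cH;\cX^*]$ and your upper bound at $p'$ finishes the proof.
\end{itemize}
This is, to my understanding, essentially Pisier's route: $\cS_p[E]=\cC_p\otimes_h E\otimes_h\cR_p$ via Kouba's interpolation theorem. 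The paper cites it rather than reproducing it.

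\textbf{Making your minimax rigorous instead.} This also works. Inserting the $p=1$ formula gives, for invertible $a,b$,
\[
\|aXb\|_{\cS_1[\cH;\cX]}=\inf_{C,D>0}\Tr[a^*aC]^{1/2}\Tr[bb^*D]^{1/2}\|C^{-1/2}XD^{-1/2}\|_{\cS_\infty[\cH;\cX]}.
\]
Convert the product into a sum as in \Cref{lem:Inf.Prod.to.Sum}. Then apply \Cref{thm:Sion} over $\|a^*a\|_{p'},\|bb^*\|_{p'}\leq1$; the inner supremum $\sup_A\Tr[AC]=\|C\|_p$ returns exactly $N_p(X)$. The essential input is joint convexity of $(C,D)\mapsto\|C^{-1/2}XD^{-1/2}\|_{\cS_\infty[\cH;\cX]}$. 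This is the same fact used in the proof of \Cref{thm:main.multi.pisier}, and it does not appear in your proposal.
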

\begin{remark}
    This is an incredibly important result. Not just because in theory it makes these norms tractable to compute, even numerically as a convex optimization problem, or because it also allows us to directly relate them to entropic quantities, but chiefly because it is the basis of all proofs of their important structural properties \cite{book:Pisier.1998, Beigi.2023, Fawzi.2026}. 
\end{remark}

\noindent One such consequence is the following about completely bounded norms.
\begin{proposition}[Lemma 1.7 \cite{book:Pisier.1998}]\label{prelim:Pisier.cb.via.q}
Let $\cX,\cY$ be operator spaces and $\phi:\cX\to \cY$ a completely bounded map, then
\begin{align}
    \|\phi\|_{\cb,\cX\to\cY} = \sup_{n\in\mathbb{N}}\|\id_n\otimes \phi\|_{\cS_p[\C^n;\cX]\to \cS_p[\C^n;\cY]}
\end{align} holds for any $p\in[1,\infty]$.\footnote{Note that for $p=\infty$ the right-hand side is exactly equal to the definition as $\cS_\infty[\C^n;\cX]\simeq M_n(\cX)$ completely isometrically, so the difficulty lies in proving this for $p\in[1,\infty)$.}
\end{proposition}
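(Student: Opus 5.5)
The claim to prove is Proposition~\ref{prelim:Pisier.cb.via.q}: the completely bounded norm $\|\phi\|_{\cb,\cX\to\cY}=\sup_n\|\id_n\otimes\phi\|_{M_n(\cX)\to M_n(\cY)}$ can be computed with $\cS_p[\C^n;\cdot]$ in place of $\cS_\infty[\C^n;\cdot]\simeq M_n(\cdot)$. I would prove the two inequalities separately, using Pisier's formula (Theorem~\ref{thm:PisierOriginal}) in both directions.

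\emph{Upper bound: $\|\id_n\otimes\phi\|_{\cS_p\to\cS_p}\leq\|\phi\|_{\cb}$.} Fix $X\in\cS_p[\C^n;\cX]$ and $\varepsilon>0$. By Theorem~\ref{thm:PisierOriginal} choose a factorization $X=aYb$ with $a,b\in\cS_{2p}^n$, $Y\in M_n(\cX)$, and $\|a\|_{2p}\|b\|_{2p}\|Y\|_{M_n(\cX)}\leq\|X\|_{\cS_p[\cX]}+\varepsilon$. The map $\id_n\otimes\phi$ commutes with left and right multiplication by scalar matrices, so $(\id_n\otimes\phi)(X)=a\,(\id_n\otimes\phi)(Y)\,b$. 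This is an admissible factorization for the $\cY$-valued norm. Applying the other direction of Pisier's formula in $\cS_p[\C^n;\cY]$ gives
\[
\|(\id_n\otimes\phi)(X)\|_{\cS_p[\cY]}\leq\|a\|_{2p}\|b\|_{2p}\|\phi\|_{\cb}\|Y\|_{M_n(\cX)}\leq\|\phi\|_{\cb}(\|X\|+\varepsilon).
\]
Letting $\varepsilon\to0$ proves this direction.

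\emph{Lower bound: $\|\phi\|_{\cb}\leq\sup_m\|\id_m\otimes\phi\|_{\cS_p\to\cS_p}$.} It suffices to show $\|(\id_n\otimes\phi)(Y)\|_{M_n(\cY)}\leq\sup_m\|\id_m\otimes\phi\|_{p\to p}$ for every $Y\in M_n(\cX)$ with $\|Y\|_{M_n(\cX)}\le1$. I expect this step to be the main obstacle, since the $\infty$-norm has to be recovered from $p$-norms.

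\begin{enumerate}
\item First I would establish a dual form of Pisier's formula for $p<\infty$:
\[
\|Z\|_{M_n(\cY)}=\sup\bigl\{\|aZb\|_{\cS_p[\C^n;\cY]}:\ \|a\|_{2p},\|b\|_{2p}\le1\bigr\}.
\]
The inequality $\ge$ follows directly from Theorem~\ref{thm:PisierOriginal}.
\item For $\le$, one route is Hölder duality~\eqref{eq:duality-op-space}: pair $Z$ against $\cS_1[\C^n;\cY^*]$, factor elements there via Theorem~\ref{thm:PisierOriginal} with $2$-norm weights, and redistribute the weights. Doing this requires the $\cS_1$ factorization together with Hölder exponents $\tfrac12=\tfrac1{2p}+\tfrac1{2p'}$.
\item An alternative route, which I would try first because it needs less duality, is to amplify. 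Take $m=nk$ and replace $Y$ by $\1_k\otimes Y$ normalized by $k^{-1/p}$; by the block collapse in Proposition~\ref{prelim:pisier.block.colaps} this has $\cS_p$-norm equal to $\|Y\|_{\cS_p}$. This alone does not recover the $\infty$-norm, however. So I would instead use the supremum formula above, with $a,b$ running over all weights.
\end{enumerate}

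With the dual form in hand, the lower bound follows in one line. Write $(\id_n\otimes\phi)(Y)$ and insert the weights:
\[
\|a(\id\otimes\phi)(Y)b\|_{\cS_p[\cY]}=\|(\id\otimes\phi)(aYb)\|_{\cS_p[\cY]}\leq\|\id_n\otimes\phi\|_{p\to p}\,\|aYb\|_{\cS_p[\cX]}\leq\|\id_n\otimes\phi\|_{p\to p}\,\|Y\|_{M_n(\cX)}.
\]
The last step again uses Pisier's formula. Taking the supremum over $a,b$ and then over $n$ gives the lower bound, completing the equality. The finite-dimensionality of $\C^n$ ensures that the weights $a,b$ may be taken invertible by perturbation, which justifies all the manipulations above.
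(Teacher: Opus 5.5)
Your argument is correct. The paper gives no proof of this proposition; it quotes it from Pisier's Lemma~1.7. Your route is the standard one:
\begin{itemize}
\item For the upper bound you use the factorization formula of \Cref{thm:PisierOriginal} together with the bimodule identity $(\id_n\otimes\phi)(aYb)=a\,(\id_n\otimes\phi)(Y)\,b$.
\item For the lower bound you use the dual supremum form $\|Z\|_{M_n(\cY)}=\sup\{\|aZb\|_{\cS_p[\C^n;\cY]}:\|a\|_{2p},\|b\|_{2p}\le1\}$.
\end{itemize}
Your duality sketch for that supremum form goes through. Factor $W\in\cS_1[\C^n;\cY^*]$ with $\cS_2$-weights. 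Then split each weight as, for example, $\alpha=(u|\alpha|^{1/p'})\,|\alpha|^{1/p}$, which puts the pieces in $\cS_{2p'}$ and $\cS_{2p}$ with norms multiplying to $\|\alpha\|_2$.

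Some shortcuts and refinements:
\begin{itemize}
\item Within this paper, the dual form you derive in steps~1--2 is exactly \Cref{thm:VariationalSpOperatorspace}~(ii) with $q=\infty$, so $2r=2p$. Its proof uses only Pisier's formula and duality, not the present proposition, so you can cite it instead of re-deriving it.
\item Both of your inequalities are proved at a fixed level $n$. Your argument therefore gives the stronger levelwise identity $\|\id_n\otimes\phi\|_{M_n(\cX)\to M_n(\cY)}=\|\id_n\otimes\phi\|_{\cS_p[\C^n;\cX]\to\cS_p[\C^n;\cY]}$ for every $n$, and the statement follows by taking the supremum over $n$.
\item The amplification idea in step~3 can be dropped.
\item The closing perturbation to invertible weights is also unnecessary, since no inverse of $a$ or $b$ ever appears.
\end{itemize}
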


In practice, however, \Cref{thm:PisierOriginal} is not quite sufficient for certain applications, in particular when studying the two-indexed $\cS_q[\cH_1,\cS_p(\cH_2)]$ spaces where it would relate the $\cS_q[\cH_1,\cS_p(\cH_2)]$ on the LHS with the $\cS_\infty[\cH_1,\cS_p(\cH_2)]$-norm on the RHS. Instead, using \Cref{thm:PisierOriginal} and the duality relations, which hold by definition, one can derive more specialized factorization formulas in these cases.

\begin{theorem}[\cite{Devetak.2006, Beigi.2023}]\label{thm:2.pisier.variational.expressions}
\label{thm:expression-pq}
Given an element $X\in \mathcal{S}_q[\mathcal{H}_1,\mathcal{S}_p(\mathcal{H}_2)]$ acting on the Hilbert space $\mathcal{H}_{12}=\mathcal{H}_1\otimes\mathcal{H}_2$ and $q,p\in[1,\infty]$, then it holds that
\begin{align}
\|X_{12}\|_{\mathcal{S}_q[\mathcal{H}_1,\mathcal{S}_p(\mathcal{H}_2)]}
=
    \begin{cases}
        \inf_{\underset{X_{12}=a_1Y_{12}b_1}{a,b\in \mathcal{S}_{2r}(\mathcal{H}_1),Y\in \mathcal{S}_{p}(\mathcal{H}_{12})}} \|a\|_{2r}\|b\|_{2r}\|Y\|_{p}\, & \quad \textup{for } q< p, \\
        \|X_{12}\|_p\, & \quad \textup{for } q=p, \\
        \sup_{a,b\in\mathcal{S}_{2r}(\mathcal{H}_1)} \|a\|^{-1}_{2r}\|b\|^{-1}_{2r}\|a_1X_{12}b_1\|_p & \quad \textup{for } q> p.
    \end{cases}
\end{align}
where the infimum and supremum are over $a,b\in \mathcal{S}_{2r}(\mathcal{H}_1)$ acting only on the first Hilbert space and $Y\in \mathcal{S}_p(\mathcal{H}_{12})$ with $\frac{1}{r}:=\left|\frac{1}{q}-\frac{1}{p}\right|$.
Further if $X_{12}\geq0$, then one can choose $a=b\geq0$, see \cite[Proposition 3.1 (v,vi)]{Bardet.2022} and \cite[Proposition 5.2 iii)]{Bardet.2024}.
\end{theorem}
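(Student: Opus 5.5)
The plan is to derive both variational expressions from Pisier's formula (\Cref{thm:PisierOriginal}) together with operator space duality \eqref{eq:duality-op-space} and local isometric invariance. Throughout, $X=X_{12}$, $\cX=\cS_p(\cH_2)$, and $\frac1r=|\frac1q-\frac1p|$. The case $q=p$ only requires the standard identification $\cS_p[\cH_1;\cS_p(\cH_2)]=\cS_p(\cH_1\otimes\cH_2)$ isometrically. This is Fubini-type and follows from \cite[Theorem 1.9]{book:Pisier.1998}; by interpolation it reduces to the endpoints $p=1,\infty$, where it is the definition of the natural operator space structure on $\cS_p$.

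For $q<p$, the first step is to apply Pisier's formula in the form relative to index $p$ instead of $\infty$. Namely, I would first show
\[
\|X\|_{\cS_q[\cH_1;\cX]}=\inf_{X=a_1Yb_1}\|a\|_{2r}\|b\|_{2r}\|Y\|_{\cS_p[\cH_1;\cX]}.
\]
For "$\leq$", factor $Y=\alpha_1 Z\beta_1$ via \Cref{thm:PisierOriginal} at index $p$. Then $X=(a\alpha)_1Z(\beta b)_1$, and the generalized Hölder inequality \eqref{equ:Schatten.gen.Hölder} gives $\|a\alpha\|_{2q}\leq\|a\|_{2r}\|\alpha\|_{2p}$, since $\frac1{2q}=\frac1{2r}+\frac1{2p}$. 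Applying \Cref{thm:PisierOriginal} at index $q$ then yields the bound. For "$\geq$", take an optimal Pisier factorization $X=\gamma_1Z\delta_1$ with $\gamma,\delta\geq0$ (polar decomposition) of $\|\cdot\|_{2q}$-norm $\|X\|^{1/2}$. Split $\gamma=\gamma^{q/r}\gamma^{q/p}$, so that $a=\gamma^{q/r}$ and $Y=\gamma^{q/p}_1Z\delta^{q/p}_1$. The norms then match exactly by the Hölder equality case. Finally, the $\cS_p[\cH_1;\cS_p(\cH_2)]$-norm of $Y$ equals $\|Y\|_p$ by the $q=p$ case.

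For $q>p$, the supremum formula will follow by duality. By \eqref{eq:duality-op-space}, $\|X\|_{\cS_q[\cS_p]}=\sup\{|\Tr[W^*X]| : \|W\|_{\cS_{q'}[\cS_{p'}]}\leq1\}$. Since $q'<p'$ with the same $r$, the infimum formula just proved applies to $W=a_1Vb_1$. This gives
\[
|\Tr[W^*X]|=|\Tr[V^*a^*_1Xb^*_1]|\leq\|V\|_{p'}\|a_1^*Xb_1^*\|_p .
\]
Normalizing $\|a\|_{2r}=\|b\|_{2r}=1$ then gives "$\leq$" against the supremum. For "$\geq$", given $a,b$, choose $V$ Hölder-dual to $a_1Xb_1$ in $\cS_p$, and bound $\|a^*_1Vb^*_1\|_{(q',p')}\leq\|a\|_{2r}\|b\|_{2r}\|V\|_{p'}$ via the infimum formula.

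The main obstacle is the equality case in the "$\geq$" direction for $q<p$. There one must ensure that a near-optimal Pisier factorization can be taken with positive, commuting-friendly weights, so that the exponent splitting is legitimate. Supports also need care, which is handled with Moore–Penrose inverses. The positivity refinement $a=b\geq0$ for $X\geq0$ would be cited from \cite{Bardet.2022, Bardet.2024}. Alternatively, it can be obtained by a symmetrization argument: replace $(a,b)$ by $(|a^*|,|b|)$ using unitary invariance, and then use a Cauchy–Schwarz-type bound $\|a_1Xb_1\|\leq\|a_1Xa_1\|^{1/2}\|b_1Xb_1\|^{1/2}$ for positive $X$.
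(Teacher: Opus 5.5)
Your main argument is correct and follows the paper's route. The paper only cites this statement, but its appendix proof of the operator-space generalization \Cref{thm:VariationalSpOperatorspace} does what you do: H\"older plus \Cref{thm:PisierOriginal} for one direction, the splitting $\gamma=\gamma^{q/r}\gamma^{q/p}$ of a positive (after polar decomposition) near-optimal Pisier weight for the other, and operator space duality for $q>p$. Specializing to $\cX=\cS_p(\cH_2)$ and using Pisier's Fubini identification $\cS_p[\cH_1;\cS_p(\cH_2)]=\cS_p(\cH_{12})$ then gives the statement.

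The one thing to correct is your optional alternative for the refinement $a=b\geq0$. The Cauchy--Schwarz bound $\|a_1Xb_1\|_p\leq\|a_1Xa_1^*\|_p^{1/2}\|b_1^*Xb_1\|_p^{1/2}$ does settle the supremum case $q>p$. There no polar replacement is needed, since $\|a_1Xa_1^*\|_p$ depends only on $|a|$; and the replacement that would be correct for the supremum is $(|a|,|b^*|)$, not $(|a^*|,|b|)$.

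For the infimum case $q<p$, however, the analogous bound
\[
\|a_1^{-1}Xb_1^{-1}\|_p\leq\|a_1^{-1}Xa_1^{-1}\|_p^{1/2}\,\|b_1^{-1}Xb_1^{-1}\|_p^{1/2}
\]
only shows that asymmetric weights do no better than the geometric mean of symmetric ones. That is the trivial direction. What is needed is a symmetrization that \emph{lowers} the objective.

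The paper does this inside Pisier's $\cS_\infty$-factorization. It replaces $(a,b)$ by $Q=\bigl(\tfrac{a^2+b^2}{2}\bigr)^{1/2}$ and controls the row block $(Q^{-1}a\ \ Q^{-1}b)$ and the corresponding column in operator norm. It then pushes the resulting symmetric factorization through exactly your splitting step, and duality transfers symmetry to the supremum case. Alternatively, as in the proof of \Cref{thm:Multi.Thm.Pisier.like}, you can use joint convexity and symmetry of $(\tilde a,\tilde b)\mapsto\|\tilde a_1^{-1/2}X\tilde b_1^{-1/2}\|_p^p$ for self-adjoint $X$. You then replace $\tilde a,\tilde b$ by their average under the normalization $\|\tilde a\|_r=\|\tilde b\|_r\leq1$.
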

\begin{remark}
This formula is often taken as the starting point for these two-indexed norms in more quantum-information-focused papers \cite{Devetak.2006, Beigi.2023, Fawzi.2026, Bardet.2022, Beigi.2016}. Pisier's construction allows, though, for a more general version of this formula, in which the second system is kept a general operator space $\cX$ instead of $\cS_p(\cH_2)$. This was proven in \cite[Lemma 3.1]{Fawzi.2026} using only Pisier's formula and the operator space duality \eqref{eq:duality-op-space}. 
\end{remark}
We state it, in a slightly strengthened form, as \Cref{thm:VariationalSpOperatorspace} below, as it will be important for the proof of our technical main result \Cref{thm:Multi.Thm.Pisier.like}. For that we will require another property of operator-valued Schatten norms, namely the following concavity-convexity from \cite{Kochanowski.2026}.

\begin{lemma}[Lemma C.1 of \cite{Kochanowski.2026}]\label{lem.app.op.concavity}
Let $\cH$ be a finite-dimensional Hilbert space, $\cX$ be an operator space and fix $1\leq r\leq s$. Then it holds that the map 
\begin{align}
  (a,A)\mapsto  \|(a^{\frac{1}{2s}}\otimes\1)A(a^{\frac{1}{2s}}\otimes\1)\|_{\cS_r[\cH;\cX]}
\end{align} is concave and upper semi-continuous in $a\in\cS^+_1(\cH), \|a\|_1\leq1$ and continuous and convex in $0\leq A\in\cS_\infty[\cH;\cX]$.
\end{lemma}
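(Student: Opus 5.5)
The plan is to treat the two variables separately, since the map is not claimed to be jointly concave–convex. Convexity and continuity in $A$ are routine. For fixed $a$, the map $A\mapsto (a^{\frac1{2s}}\otimes\1)A(a^{\frac1{2s}}\otimes\1)$ is linear. Composing it with the norm $\|\cdot\|_{\cS_r[\cH;\cX]}$, which is a genuine norm because $r\geq1$, gives a convex function. Continuity follows from the boundedness of this linear map, since $\cH$ is finite dimensional and all the spaces $\cS_r[\cH;\cX]$ carry equivalent norms on the finite-dimensional factor. All the difficulty is in the concavity and upper semi-continuity in $a$.

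For the $a$-variable I would first reduce to the diagonal case. Positivity of $A$ makes $(a^{\frac1{2s}}\otimes\1)A(a^{\frac1{2s}}\otimes\1)$ positive, and it lets one pick symmetric optimizers, as in \Cref{thm:2.pisier.variational.expressions}. With this I would rewrite the $\cS_r[\cH;\cX]$-norm through Pisier's formula, \Cref{thm:PisierOriginal}, in the symmetric form
\begin{align}
    \|(a^{\frac1{2s}}\otimes\1)A(a^{\frac1{2s}}\otimes\1)\|_{\cS_r[\cH;\cX]}=\inf_{c}\ \|c\|_{2r}^2\,\big\|(c^{-1}a^{\frac1{2s}}\otimes\1)A(a^{\frac1{2s}}c^{-1}\otimes\1)\big\|_{\cS_\infty[\cH;\cX]}.
\end{align}
Alternatively, using operator space duality \eqref{eq:duality-op-space}, I would write it as a supremum of $\Re\langle Z,(a^{\frac1{2s}}\otimes\1)A(a^{\frac1{2s}}\otimes\1)\rangle$ over the unit ball of $\cS_{r'}[\cH;\cX^*]$. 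I then want to absorb the weights $a^{\frac1{2s}}$ into the dual variable. To do so I would factor the dual element as $Z=(b\otimes\1)W(b\otimes\1)$ with $b\in\cS_{2r'}$ and $W$ in the unit ball of $\cS_\infty[\cH;\cX^*]$, again by Pisier's formula, now applied on the dual side. The norm then becomes a supremum over $(b,W)$ of $\Re\langle W,(b\,a^{\frac1{2s}}\otimes\1)A(a^{\frac1{2s}}b\otimes\1)\rangle$.

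The goal is to turn this into an infimum of functions that are each concave in $a$, or a supremum of jointly concave functions of $(a,\text{weights})$. Either form gives concavity of the value. Upper semi-continuity comes for free from the infimum form, since an infimum of continuous functions is upper semi-continuous. I would aim for a representation of the form $\inf_{\sigma}\, g(\sigma,a)$ in which $a$ enters only through an expression like $\Tr[\sigma^{1-\theta}a^{\theta}]$ with $\theta=\frac{r}{s}\in(0,1]$. Such expressions are concave in $a$ by Lieb's concavity theorem, or in the simplest case by operator concavity of $t\mapsto t^{\theta}$. The condition $r\leq s$ is exactly what makes $\theta\leq1$.

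The main obstacle is this middle step. One must show that, after the change of variables, the weight $a$ appears only through an operator-concave power. This requires controlling a possibly non-commuting product $b\,a^{\frac1{2s}}$ inside the operator-space norm. Here I would use the generalized Hölder inequality \eqref{equ:Schatten.gen.Hölder} together with its equality condition, to replace $\inf_b$ over such products by an infimum over a single positive weight $\sigma$ with $\|\sigma\|_1\leq1$. This is the same mechanism used to pass between the inf- and sup-factorizations in \Cref{thm:2.pisier.variational.expressions}. If this direct route fails, the fallback is complex interpolation between $r=1$ and $r=s$. At $r=s$ the map is a norm of a congruence by $a^{\frac1{2s}}$, which should be checked directly. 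The endpoint $r=1$ follows from the trace-duality pairing with $\cS_\infty[\cH;\cX^*]$. Concavity at intermediate $r$ would then be obtained by a three-lines argument in the exponent of $a$.
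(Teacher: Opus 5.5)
Your treatment of the $A$-variable is fine. Upper semicontinuity in $a$ also follows from your displayed symmetric Pisier formula once $c$ is restricted to invertible operators.

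\textbf{The gap is concavity in $a$.} This is the whole content of the lemma. Your ``middle step'' is never carried out, and the mechanisms you propose for it do not work.
\begin{itemize}
\item \emph{Dualizing.} After dualizing you get a supremum, over the unit ball of $\cS_\infty[\cH;\cX^*]$, of real parts of linear functionals of $(b\,a^{\frac1{2s}}\otimes\1)A(a^{\frac1{2s}}b\otimes\1)$. A supremum is concave only if the integrand is jointly concave in $a$ and the dual variables, and that fails here. For $\cX=\C$, $b=\1$, $W=-\1$ the term is $-\Tr[a^{1/s}A]$, which is convex in $a$. In the abstract dual $\cX^*$ there is no positivity you could impose on $W$. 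Nor can you take both Pisier factors of a non-self-adjoint $Z$ equal to $b$.
\item \emph{The Hölder step.} Hölder with its equality case gives $F(a)\leq\|a\|_1^{1/s}\|A\|_{\cS_u[\cH;\cX]}$ with $\frac1u=\frac1r-\frac1s$, where $F$ is the map in question. Equality holds only at the maximizing $a$. So this computes $\sup_{\|a\|_1\leq1}F(a)$, which is \Cref{thm:VariationalSpOperatorspace}(ii), but says nothing about the shape of $F$ elsewhere.
\item \emph{The hoped-for representation.} No formula in which $a$ enters only through $\Tr[\sigma^{1-\theta}a^\theta]$ should be expected. Already for $\cX=\C$ the claim is the Epstein/Carlen--Lieb concavity of $a\mapsto\Tr[(A^{1/2}a^{1/s}A^{1/2})^r]^{1/r}$ \cite{Carlen.2008}, where $A$ enters non-commutatively.
\item \emph{The interpolation fallback.} This points the wrong way. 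Three-lines estimates give upper bounds of the type $F_\theta\leq F_0^{1-\theta}F_1^\theta$, whereas concavity needs lower bounds at convex combinations. Neither endpoint is ``direct'' either: $r=s$ is Epstein's theorem for $\cX=\C$, and $r=1$ does not follow from trace duality, for the reason given in the first item.
\end{itemize}

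\textbf{How to close the gap.} The missing idea is to make the dual variable positive and then move $a$ out of the objective into a jointly convex constraint.

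First use that $\cS_\infty[\cH;\cX]=M_d(\cX)\subset\cB(\cH\otimes\cK)$ isometrically; this is where the embedding of $\cX$ enters. Your symmetric formula then becomes, for $Y\geq0$,
\[
\|Y\|_{\cS_r[\cH;\cX]}=\inf\{\|c\|_r:\ c\geq0,\ Y\leq c\otimes\1_{\cK}\}.
\]
Lagrange duality turns this into
\[
\sup\{\Tr[ZY]:\ Z\geq0 \text{ trace class on } \cH\otimes\cK,\ \|\tr_{\cK}[Z]\|_{r'}\leq1\}.
\]
Substituting $W=(a^{\frac1{2s}}\otimes\1)Z(a^{\frac1{2s}}\otimes\1)$ gives, for invertible $a$,
\[
F(a)=\sup\Big\{\Tr[WA]:\ W\geq0,\ \big\|a^{-\frac1{2s}}\tr_{\cK}[W]\,a^{-\frac1{2s}}\big\|_{r'}\leq1\Big\}.
\]
The objective no longer depends on $a$. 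Concavity therefore follows once the feasible set is jointly convex in $(a,W)$. That is, it follows from joint convexity of $(w,a)\mapsto\Tr[(a^{-\frac1{2s}}wa^{-\frac1{2s}})^{r'}]$.

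By \cite[Theorem 1.1 (2)]{Zhang.2020}, with exponents $1$ on $w$ and $-\frac1s$ on $a$, this joint convexity holds whenever $r'\geq s'$, i.e.\ $r\leq s$. For $r=1$ it reduces to operator concavity of $t\mapsto t^{1/s}$. So the hypothesis $r\leq s$ enters through a genuinely non-commutative joint convexity, not through operator concavity of $t\mapsto t^{r/s}$ as in your heuristic.
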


\section{Multi-Indexed Schatten Spaces}\label{sec:MultiIndex}

It is a natural question how to extend these constructions, proof techniques, and properties to indeed arbitrarily many indices.
The goal of this section is to answer exactly this question by introducing and studying \emph{multi-indexed Schatten Spaces}. We will show that the naïve extension of the factorization idea from the two-indexed case to arbitrarily many, and even negative, indices leads to a theory of multi-indexed Schatten spaces, of which the two-indexed case is a special case. Unlike in the two-indexed construction of \cite{Kochanowski.2026}, here we do require some minimal results of Pisier's operator-valued Schatten norms theory to deal with certain technicalities. 

We will first extend Pisier's operator-valued Schatten space approach, in particular \Cref{thm:PisierOriginal}, to arbitrarily many indices using some of the non-commutative analysis tools already used in \cite{Kochanowski.2026}.

\subsection{A Multi-Indexed Extension of Pisier's Formula}\label{sec:Multi.Pisier}

Recall that Pisier's construction of operator-valued Schatten spaces $\cS_q[\cH;\cX]$, with $q\in[1,\infty]$ and $\cX$ any operator space, allows for iteration, since by construction these $\cS_q[\cH;\cX]$ are themselves again operator spaces. This leads to the following straightforward definition, used implicitly in the literature, for multi-indexed operator-valued Schatten spaces, with indices in $[1,\infty]$. 

\begin{definition}[Multi-indexed normed operator-valued Schatten spaces]\label{def:Multi.Pisier.Schatten}
Given an ordered list of indices $q_1,...,q_n\in[1,\infty]$ and Hilbert spaces $\{\cH_i\}_{i=1}^n$ we define the associated \emph{normed multi-indexed Schatten space} $\mathcal{Q}$ to indices $(q_1,...,q_n)$ as
\begin{align}
    \cS_\mathcal{Q}(\{\cH_i\}_{i=1}^n):=\cS_{q_1}[\cH_1,\cS_{q_2}[\cH_2,...,\cS_{q_n}(\cH_n)...]]
\end{align}
and we will in abuse of notation write $\mathcal{Q}\equiv(q_1,...,q_n)$. Intuitively it is the space of (suitably regular) operators acting on $\otimes_{i=1}^n\cH_i$ which assigns Schatten indices $q_i$ to system $i$.
Its associated norm is then written as 
\begin{align}
    \|X\|_{(\cQ)}\equiv\|X\|_{(q_1,...,q_n)}.
\end{align}
Likewise, given an operator space $\cX$ we define the associated \emph{multi-indexed operator-valued Schatten space} $\cS_\mathcal{Q}[\{\cH_i\}_{i=1}^n;\cX]$ to indices $(q_1,...,q_n)$ as
\begin{align}
    \cS_\mathcal{Q}[\{\cH_i\}_{i=1}^n;\cX]:=\cS_{q_1}[\cH_1;\cS_{q_2}[\cH_2,...,\cS_{q_n}[\cH_n,\cX]...]]
\end{align}
and we write $(\mathcal{Q};\cX)=(q_1,...,q_n;\cX)$. It is the space of (suitably regular) $\cX$-valued operators on $\otimes_{i=1}^n\cH_i$ which assigns Schatten indices $q_i$ to system $i$.
Its associated norm is then written as 
\begin{align}
    \|X\|_{(\cQ;\cX)}\equiv\|X\|_{(q_1,...,q_n;\cX)}.
\end{align}
\end{definition}

This definition alone is, however, arguably not too useful, as it does not tell us how to compute and work with these norms $\|\cdot\|_{(\cQ)}$ analytically. What made the (single-indexed) operator-valued Schatten norms $\cS_q[\cH;\cX]$ tractable was Pisier's theorem \Cref{thm:PisierOriginal}. We shall now derive a multi-indexed extension of it to make the above norms tractable. We do this in two parts. Firstly we effectively rewrite \cite[Lemma 3.1]{Fawzi.2026} using some of the ideas of \cite[Section 3.1]{Kochanowski.2026}.

\begin{theorem}[Generalized variational expressions, adaptation of Lemma 3.1 from \cite{Fawzi.2026}] \label{thm:VariationalSpOperatorspace}
Let $\mathcal{X}$ be an operator space. Then for $1\leq p\leq q\leq \infty$ and $\frac{1}{r}=\frac{1}{p}-\frac{1}{q}\,$, the following factorization formulas hold for any $X\in \mathcal{S}_p[\mathcal{H};\mathcal{X}],\mathcal{S}_q[\mathcal{H};\mathcal{X}]$ respectively:
\begin{align}
\operatorname{(i)}\qquad \|X\|_{\mathcal{S}_p[\mathcal{H};\mathcal{X}]}&=\inf_{\substack{a,b\in \mathcal{S}_{2r}(\mathcal{H}), Y\in\mathcal{S}_q[\mathcal{H}; \mathcal{X}] \\X=aYb}} \|a\|_{2r}\|b\|_{2r} \|Y\|_{\mathcal{S}_q[\cH;\mathcal{X}]}\\
&= \inf_{\underset{\Pi_a=L_X, \Pi_b=R_X}{a,b\geq0}}\|a\|_{2r}\|b\|_{2r}\|a_1^{-1}Xb_1^{-1}\|_{\mathcal{S}_q[\cH;\mathcal{X}]}
\\
\operatorname{(ii)}\qquad \|X\|_{\mathcal{S}_q[\mathcal{H};\mathcal{X}]} &= \sup_{a,b\in \mathcal{S}_{2r}(\mathcal{H})}\|a\|_{2r}^{-1}\|b\|_{2r}^{-1}\|a_1Xb_1\|_{\mathcal{S}_p[\mathcal{H};\mathcal{X}]} \\ 
&= \sup_{\underset{\Pi_a=L_X, \Pi_b=R_X}{a,b\geq0}}\|a\|^{-1}_{2r}\|b\|^{-1}_{2r}\|a_1Xb_1\|_{\mathcal{S}_p[\cH;\mathcal{X}]}\,.
\end{align}
Moreover, when $X=X^*$ is self-adjoint, one can restrict the optimization to $a=b$. 
\end{theorem}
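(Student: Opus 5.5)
For (i) I would first prove the two inequalities between $\|X\|_{\cS_p[\cH;\cX]}$ and the first infimum by running Pisier's formula, \Cref{thm:PisierOriginal}, at the indices $p$ and $q$.

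\emph{Lower bound} ($\geq$ for the infimum). Take any factorization $X=aYb$ with $Y\in\cS_q[\cH;\cX]$. Apply \Cref{thm:PisierOriginal} to $Y$ to write $Y=cZd$ with $\|c\|_{2q}\|d\|_{2q}\|Z\|_{\cS_\infty[\cH;\cX]}\leq(1+\varepsilon)\|Y\|_{\cS_q[\cH;\cX]}$. Then $X=(ac)Z(db)$, and the generalized Hölder inequality \eqref{equ:Schatten.gen.Hölder} with $\frac1{2p}=\frac1{2r}+\frac1{2q}$ gives $\|ac\|_{2p}\leq\|a\|_{2r}\|c\|_{2q}$, and likewise for $db$. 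Applying \Cref{thm:PisierOriginal} at index $p$ then yields $\|X\|_{\cS_p}\leq\|a\|_{2r}\|b\|_{2r}\|Y\|_{\cS_q}$.

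\emph{Upper bound.} Take a near-optimal Pisier factorization $X=\alpha Z\beta$ at index $p$ and polar decompose $\alpha=|\alpha^*|u$. Split $a:=|\alpha^*|^{p/r}$ and $c:=|\alpha^*|^{p/q}u$, and do the mirror construction for $\beta$. Then $\|a\|_{2r}\|c\|_{2q}=\|\alpha\|_{2p}^{p/r+p/q}=\|\alpha\|_{2p}$. Setting $Y:=cZd$, \Cref{thm:PisierOriginal} at index $q$ gives $\|Y\|_{\cS_q}\leq\|c\|_{2q}\|Z\|_{\cS_\infty}\|d\|_{2q}$, which closes (i).

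\emph{Restriction to positive supported weights.} Given $X=aYb$, replace $a$ by $L_Xa$; this does not change $X$ and only decreases $\|a\|_{2r}$. Polar decompose $L_Xa=|(L_Xa)^*|v$ and absorb the unitary into $Y$, which is harmless by the local isometric invariance \cite[Lemma 1.6]{book:Pisier.1998}. The new positive weight $a'$ has $\Pi_{a'}\leq L_X$. Its support cannot be strictly smaller, since $X=(a'\otimes\1)(\cdots)$ forces $L_X\leq\Pi_{a'}$. Then $a'^{-1}_1Xb'^{-1}_1=\Pi_{a'}vYw\Pi_{b'}$, and local contractions do not increase the $\cS_q[\cH;\cX]$ norm. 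So the restricted infimum equals the unrestricted one.

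For (ii) I would use the duality \eqref{eq:duality-op-space}, $\cS_q[\cH;\cX]=(\cS_{q'}[\cH;\cX^*])^*$, and note that $\frac1{q'}-\frac1{p'}=\frac1p-\frac1q=\frac1r$. The easy direction, $\|aXb\|_{\cS_p}\leq\|a\|_{2r}\|b\|_{2r}\|X\|_{\cS_q}$, is (i) with $Y=X$. For the reverse, write $\|X\|_{\cS_q}=\sup|\langle X,W\rangle|$ over the unit ball of $\cS_{q'}[\cH;\cX^*]$. Factor $W=aYb$ via (i) at the dual indices $q'\leq p'$, so that $\|a\|_{2r}\|b\|_{2r}\|Y\|_{\cS_{p'}}\leq1+\varepsilon$. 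Then
\begin{align}
|\langle X,W\rangle|=|\langle a_1^*Xb_1^*,Y\rangle|\leq\|a^*_1Xb^*_1\|_{\cS_p}\|Y\|_{\cS_{p'}}.
\end{align}
Reducing to $a,b\geq0$ with $\Pi_a=L_X$, $\Pi_b=R_X$ again uses polar decompositions and compression by $L_X,R_X$. Such compressions leave $X$ unchanged and do not increase $\|a\|_{2r}$.

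The step I expect to be the main obstacle is the symmetric restriction $a=b$ for merely self-adjoint $X$, because \Cref{lem.app.op.concavity} is stated only for $A\geq0$. My first attempt is a Hermitian dilation. Put $\hat X:=\begin{pmatrix}0&X\\X&0\end{pmatrix}$ on $\C^2\otimes\cH$ and $c:=a\oplus b$. Then $c\hat Xc$ has off-diagonal blocks $aXb$ and $bXa$, and its norm is computable via the block collapse \Cref{prelim:pisier.block.colaps} together with a swap of the blocks.

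If that fails, I would instead symmetrize the weights. Set $c:=\big(\tfrac12(a^{2r}+b^{2r})\big)^{1/2r}$. I would compare $\|c_1Xc_1\|$ with $\|a_1Xb_1\|$ by writing $a=c^{1/2}ga$-type contractions and using an interpolation/three-lines argument in a complex parameter. For the infimum the analogous statement should follow by applying the same argument to the inverses $a^{-1},b^{-1}$. Alternatively, it follows from the duality proof, since the dual of a self-adjoint $X$ is tested against self-adjoint $W$.
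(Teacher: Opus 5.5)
Your arguments for the two factorization formulas in (i), the duality step in (ii), and the reduction to positive weights with prescribed marginal supports are correct and essentially the paper's. One small point: in the supremum, passing to $|aL_X|$ can shrink the support below $L_X$, so you need a perturbation $|aL_X|+\delta(L_X-\Pi_{|aL_X|})$, $\delta\downarrow0$, to meet $\Pi_a=L_X$ exactly.

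The genuine gap is the final claim that one may take $a=b$ for self-adjoint $X$. None of your three routes is an argument:
\begin{itemize}
\item The dilation only gives you the symmetric weight $a\oplus b$ for $\hat X$ on $\C^2\otimes\cH$. Computing $\|(a\oplus b)\hat X(a\oplus b)\|$ by block collapse merely relates it back to $\|aXb\|$. It never yields a symmetric weight on $\cH$ for $X$ itself.
\item The three-lines comparison of $\|c_1Xc_1\|$ with $\|a_1Xb_1\|$ is not specified.
\item The remaining two remarks only transfer the symmetric statement between (i) and (ii), by inverses or by duality against self-adjoint $W\in\cS_{q'}[\cH;\cX^*]$. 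They therefore presuppose it on one side.
\end{itemize}
No concavity statement such as \Cref{lem.app.op.concavity} is needed here, so that worry is beside the point.

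The missing idea is an explicit averaging at the level of Pisier's formula, i.e.\ with inner index $\infty$. Take a near-optimal factorization $X=aYb$ from \Cref{thm:PisierOriginal} at index $p$, with $a,b\geq0$, $\|a\|_{2p}=\|b\|_{2p}=1$ and $Y\in\cS_\infty[\cH;\cX]$. Since $X=X^*$, also $X=bY^*a$. With $Q:=\big(\tfrac12(a^2+b^2)\big)^{1/2}$,
\[
X=\tfrac12(aYb+bY^*a)=QZQ,\qquad Z=\tfrac12\begin{pmatrix}Q^{-1}a & Q^{-1}b\end{pmatrix}\begin{pmatrix}Y&0\\0&Y^*\end{pmatrix}\begin{pmatrix}bQ^{-1}\\ aQ^{-1}\end{pmatrix}.
\]
This is exactly the step your dilation lacked. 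Write $X=(\langle+|\otimes\1)\hat X(|+\rangle\otimes\1)$ with $|+\rangle=(|0\rangle+|1\rangle)/\sqrt2$, and absorb the resulting column with entries $a/\sqrt2,\,b/\sqrt2$ through its polar decomposition, whose modulus is $Q$.

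The estimates are then:
\begin{itemize}
\item By the $C^*$-identity, the row and column above have operator norm $\|Q^{-1}(a^2+b^2)Q^{-1}\|_\infty^{1/2}\leq\sqrt2$.
\item By the operator-space axioms, the block-diagonal middle factor has $\cS_\infty$-norm $\max\{\|Y\|,\|Y^*\|\}$. Hence $\|Z\|_{\cS_\infty[\cH;\cX]}\leq\|Y\|_{\cS_\infty[\cH;\cX]}$.
\item By the triangle inequality, $\|Q\|_{2p}^2=\|\tfrac12(a^2+b^2)\|_p\leq1$.
\end{itemize}
This is Pisier's formula with $a=b$. Working at inner index $\infty$ is essential: at a finite inner index $s$ the block-diagonal factor costs $2^{1/s}\|Y\|$ by \Cref{prelim:pisier.block.colaps}, and the averaging is no longer sharp.

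Feeding the symmetric Pisier formula into your splitting $Q=Q^{p/r}Q^{p/q}$ gives the symmetric infimum in (i). Your duality argument then gives the symmetric supremum in (ii), with the test elements restricted to self-adjoint $W$ (replace $W$ by $\tfrac12(W+W^*)$ after a phase rotation).
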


Before we prove \Cref{thm:VariationalSpOperatorspace} some remarks are in order.
\begin{remark}
    This theorem combines in some sense three results in one. One is of course the way in which it allows the factorization of $\cS_q[\cH;\cX]$ through differently indexed $\cS_p[\cH;\cX]$ for any $q,p\in[1,\infty]$, which is the result of \cite[Lemma 3.1]{Fawzi.2026} (inspired by \cite[Theorem 4.5]{Beigi.2023}) and shall serve as inspiration for multi-indexed Schatten quasi-norms. Technically this represents a generalization of the well-known cases where $\cX=\mathbb{C}$, which is \Cref{prop:SpVariationalFormulas}, or $\cX=\cS_p(\cH_2)$, which is \Cref{thm:2.pisier.variational.expressions}, see also \cite{Devetak.2006, Beigi.2023} or even $\cX=\cS_\infty[\cH_2;\cY]$ \cite{book:Pisier.1998}. However, since $\cX$ can be any operator space, and thus in particular any normed multi-indexed Schatten space, the meaning is much more general. This will be important in our treatment of multi-indexed Schatten spaces in this work. 

    The second result, which goes beyond \cite[Lemma 3.1]{Fawzi.2026}, is that we may restrict the optimization to positive semidefinite $a,b$ whose supports are equal to the left-, respectively right-marginal support of $X$. These are the smallest orthogonal projections $L_X, R_X$ s.t. $X=(L_X\otimes\1)X(R_X\otimes\1)$. In that manner $a^{-1},b^{-1}$ are well defined as the generalized Moore-Penrose inverses of $a,b$. The proof of this is essentially an application of the proof ideas of \cite[Section 3.1]{Kochanowski.2026}.

    Finally, it shows that when $X=X^*$ we may restrict the optimization to $a=b$. This is somewhat surprising since statements like this were previously really only established for $X\geq0$. In that case it has allowed for major connections between these operator-valued Schatten spaces and quantum information theory \cite{Devetak.2006, Beigi.2023, Fawzi.2026}. In the case $\cX=\mathbb{C}$ this is almost trivially known, as the optimizers can explicitly be written down, and for the case $\cX=\cS_p(\cH_2)$ see e.g. \cite{Beigi.2023, Bardet.2024}. The proof we present here follows the same proof idea as the $\cX=\cS_\infty(\cH)$ case treated in \cite[Proposition 5.2 (iii)]{Bardet.2024} which we extend to any $\cX$.
\end{remark}

\begin{proof}[Proof of \Cref{thm:VariationalSpOperatorspace}]
Since this theorem combines several results, its proof is an amalgamation of the proofs of \cite[Lemma 3.1]{Fawzi.2026} (inspired by \cite{Beigi.2023}). It can be found in Appendix \ref{app:Generalized.Variational.Expressions}.
\end{proof}

By iteratively applying this theorem we have a concatenation of factorization formulas for any multi-indexed Schatten norm defined in \cref{def:Multi.Pisier.Schatten}, as 
\begin{align} 
    \|X\|_{(\cQ;\cX)} = \prod_{i=1}^{n-1}\left( \opt_{a^{(i)},b^{(i)}}\|a^{(i)}\|^{\mp1}_{2r_i}\|b^{(i)}\|^{\mp1}_{2r_i}\right)\Bigg\|\bigg(\prod_{i=n-1}^{1} a_{[i]}^{(i)\,\pm1}\bigg)X\bigg(\prod_{i=1}^{n-1} b_{[i]}^{(i)\,\pm1}\bigg)\Bigg\|_{(q_n;\cX)} \label{equ:iterated.pisier}
\end{align}
where $a^{(i)},b^{(i)}$ act on the first $i$ subsystems and $\frac{1}{r_i}=|\frac{1}{q_i}-\frac{1}{q_{i+1}}|$ for all $i\in[n-1]$. Here each $\opt$ is either an $\inf$ ($q_i\leq q_{i+1}$, lower sign) or a $\sup$ ($q_i\geq q_{i+1}$, upper sign). We dropped writing explicitly the support conditions on $a^{(i)},b^{(i)}$.

\noindent While, by Pisier's construction and \cref{thm:VariationalSpOperatorspace}, we know that for $q_i\in[1,\infty]$ the RHS of \eqref{equ:iterated.pisier} defines a norm, \eqref{equ:iterated.pisier} as a concatenation of optimizations, is, however, still not very analytically tractable. We change this in the following manifestly multi-indexed version of Pisier's formula. 

\begin{theorem}[A multi-indexed Pisier formula]\label{thm:main.multi.pisier}
Let $\mathcal{Q}$ be a normed multi-indexed Schatten space of \cref{def:Multi.Pisier.Schatten} with indices $q_i\in[1,\infty]$ and $\mathcal{X}$ some operator space, then it holds that
\begin{align}\label{equ:main.multi.pisier}
    \|X\|_{(\mathcal{Q};\cX)} &=\inf_{\underset{X=(A\otimes\1)Y(B\otimes\1)}{A,B,Y}}\|AA^*\|_{(\mathcal{Q})}^{\frac{1}{2}}\|B^*B\|_{(\mathcal{Q})}^{\frac{1}{2}}\|Y\|_{(\infty,...,\infty;\cX)} \\ &=\inf_{\underset{X=(\Pi_{AA^*}\otimes\1)X(\Pi_{B^*B}\otimes\1)}{A,B}}\!\!\!\!\!\|AA^*\|_{(\mathcal{Q})}^{\frac{1}{2}}\|B^*B\|_{(\mathcal{Q})}^{\frac{1}{2}}\|(A^{-1}\otimes \1)X(B^{-1}\otimes\1)\|_{(\infty,...,\infty;\cX)}.
\end{align} Moreover, when $X=X^*$, then we can restrict to $A=B^*$.
\end{theorem}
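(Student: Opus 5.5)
We prove the multi-indexed Pisier formula by induction on the number $n$ of indices. The base case $n=1$ is exactly Pisier's formula \Cref{thm:PisierOriginal}, combined with the support-restricted form of \Cref{thm:VariationalSpOperatorspace} (i) for $p=q_1$, $q=\infty$, $2r=2q_1$. There $\|a\|_{2q_1}=\|aa^*\|_{q_1}^{1/2}$.

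For the induction step we write $\cQ=(q_1,\cQ')$ with $\cQ'=(q_2,\dots,q_n)$ and $\cX'=\cS_{\cQ'}[\{\cH_i\}_{i\geq2};\cX]$, so that $\|X\|_{(\cQ;\cX)}=\|X\|_{\cS_{q_1}[\cH_1;\cX']}$.

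\textbf{Upper bound.} Let $X=(A\otimes\1)Y(B\otimes\1)$ with $A,B$ acting on $\cH_1\otimes\cdots\otimes\cH_n$. We must show
\[
\|X\|_{(\cQ;\cX)}\leq\|AA^*\|_{(\cQ)}^{1/2}\|B^*B\|_{(\cQ)}^{1/2}\|Y\|_{(\infty,\dots,\infty;\cX)}.
\]
By the base case (Pisier on system $1$ with values in $\cX'$), $AA^*$ factors as $AA^*=a\,Z\,a^*$ with $a$ on $\cH_1$, $\|a\|_{2q_1}^2\|Z\|_{\cS_\infty[\cH_1;\cS_{\cQ'}]}\approx\|AA^*\|_{(\cQ)}$, and $Z\geq0$. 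The positivity comes from the self-adjoint part of \Cref{thm:VariationalSpOperatorspace}, or from polar decomposition. Writing $A=aZ^{1/2}U$ with $U$ a contraction, we do the same for $B$. This reduces the claim to an inequality at level $(\infty,\cQ')$: an element $Z^{1/2}UYV W^{1/2}$ with $Z,W\in\cS_\infty[\cH_1;\cS_{\cQ'}]$. This inequality should follow from the induction hypothesis applied fibrewise. Concretely, $\cS_\infty[\cH_1;\cdot]=M_{d_1}(\cdot)$, and the induction hypothesis holds at matrix levels because the factorization is completely isometric. The $\cS_\infty$-norm over system $1$ is the $M_{d_1}$-norm, and the Haagerup-type estimate $\|\alpha y\beta\|\leq\|\alpha\alpha^*\|^{1/2}\|y\|\|\beta^*\beta\|^{1/2}$ in $M_{d_1}$ is the relevant inequality. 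An alternative that avoids the fibre argument is to proceed directly via the iterated formula \eqref{equ:iterated.pisier} together with Hölder.

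\textbf{Lower bound.} Given an optimal chain in \eqref{equ:iterated.pisier}, with weights $a^{(i)},b^{(i)}$ on the first $i$ systems (inf case) and $X=\big(\prod a^{(i)}_{[i]}\big)Y\big(\prod b^{(i)}_{[i]}\big)$, we set $A=a^{(n-1)}\cdots a^{(1)}$ times the final $q_n$-factor from Pisier's formula on the last step. We then need
\[
\|AA^*\|_{(\cQ)}\leq\prod_i\|a^{(i)}\|_{2r_i}^2\cdot\|a^{(n)}\|^2_{2q_n}.
\]
This is again an upper bound via \eqref{equ:iterated.pisier}, since $AA^*$ admits the obvious nested factorization.

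\textbf{Sup steps.} The sup steps (where $q_i>q_{i+1}$) are the main obstacle. For these I would instead argue by duality. The upper bound above gives a norm $N(X)$ dominating $\|X\|_{(\cQ;\cX)}$. For the reverse inequality, take $\Lambda$ in the dual space $\cS_{\cQ'}$-valued with $(\cQ^\prime;\cX^*)$ indices, norming $X$, and use Hölder, $|\langle\Lambda,(A\otimes\1)Y(B\otimes\1)\rangle|\leq\dots$. Next we show the infimum functional is convex in $(AA^*,B^*B)$ jointly. For this we use the concavity of \Cref{lem.app.op.concavity} and Sion's minimax to exchange inf and sup, following \cite[Section 3.1]{Kochanowski.2026} and the proof of \cite[Theorem 3.4]{Fawzi.2026}.

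\textbf{Remaining claims.} The support-restricted second equality follows by compressing $A,B$ to the marginal supports, using a Moore–Penrose argument as in \Cref{thm:VariationalSpOperatorspace}. The self-adjoint case $A=B^*$ follows by symmetrizing: replace $(A,B)$ by the polar parts of $\frac12(AA^*+B^*B)$ and use convexity together with the Bardet-type argument already invoked in \Cref{thm:VariationalSpOperatorspace}.
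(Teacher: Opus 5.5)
\paragraph{The upper bound.} The ``fibrewise'' argument is circular. The norm of $M_{d_1}(\cS_{\cQ^\prime}[\cX])$ is the $(\infty,q_2,\dots,q_n;\cX)$-norm, which is an $n$-index space. So ``the induction hypothesis at matrix levels'' is the very statement you are proving (with first index $\infty$), not the hypothesis. Moreover, the estimate $\|\alpha y\beta\|\leq\|\alpha\alpha^*\|^{1/2}\|y\|\|\beta^*\beta\|^{1/2}$ only covers scalar $\alpha,\beta\in M_{d_1}$, whereas $Z^{1/2}$ also acts on systems $2,\dots,n$.

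The alternative you mention in passing does work for this direction. Iterate the recursive definition, with systems $1,2$ merged into $(q_2,q_2,q_3,\dots,q_n)$, so that the induction hypothesis applies at each step. At an inf step, pick near-optimal symmetric weights for $AA^*$ and $B^*B$ separately. At a sup step, use $\sup_{a,b}f(a)g(b)=\sup_af(a)\sup_bg(b)$ together with $\|a\|_{2r}^{-2}\|aAA^*a^*\|_{(q_2,q_2,\dots,q_n)}\leq\|AA^*\|_{(\cQ)}$, obtained by restricting the supremum formula to $b=a^*$. So $\|X\|_{(\cQ;\cX)}\leq N(X)$ is fine.

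\paragraph{The genuine gap.} The reverse inequality is not proved whenever some $q_i>q_{i+1}$. At such a step the recursion plus the induction hypothesis give $\|X\|_{(\cQ;\cX)}=\sup_{a,b}\inf_{C,D}F(a,b,C,D)$. You need this to be at least $\inf_{C,D}\sup_{a,b}F$, which is exactly the nontrivial half of a minimax statement.

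Your duality sketch does not supply it. Applying H\"older to $|\langle\Lambda,(A\otimes\1)Y(B\otimes\1)\rangle|$ with $\Lambda$ norming $X$ only reproduces $\|X\|_{(\cQ;\cX)}\leq N(X)$. The reverse would need $N^*(\Lambda)\geq\|\Lambda\|_{(q_1^\prime,\dots,q_n^\prime;\cX^*)}$, which is a multi-indexed supremum formula of the same difficulty.

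What is missing is an actual Sion argument with its structure in place:
\begin{enumerate}
\item Turn the product into a sum with \Cref{lem:Inf.Prod.to.Sum}.
\item Reparametrize the weights so they range over the compact set $\{a\geq0,\ \|a\|_1\leq1\}$.
\item Verify concavity and upper semicontinuity in the weights (\cite{Zhang.2020}, or \Cref{lem.app.op.concavity} if you keep peeling the first index).
\item Verify joint convexity in $(C,D)$. Use \cite{Carlen.2008} for the weighted terms. For $(C,D)\mapsto\|C^{-1/2}XD^{-1/2}\|_{(\infty,\dots,\infty;\cX)}$, transport \cite{Hiai.2013} through the complete isometric embedding $\cS_\infty[\cH;\cX]\hookrightarrow\cB(\cH\otimes\cK)$.
\end{enumerate}
After the exchange, the supremum collapses by \Cref{thm:VariationalSpOperatorspace}~(ii) with symmetric optimizers.

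\paragraph{How the paper organizes it.} The paper avoids your increasing/decreasing case split by peeling off the \emph{last} index. It applies the induction hypothesis to $\cQ_n$ with the operator space $\cS_{q_{n+1}}[\cH_{n+1};\cX]$, which is legitimate because the theorem quantifies over all $\cX$. It then writes the remaining $(\infty,\dots,\infty,q_{n+1};\cX)$-norm as $\sup_{a,b}\inf_{c,d}$ via \Cref{thm:VariationalSpOperatorspace}~(ii) followed by~(i), i.e.\ $\infty\to q_{n+1}\to\infty$. Thus each induction step needs exactly one sup--inf exchange, whatever the ordering of the indices. The leftover infimum over $A$ then collapses to $\|C\|_{(\cQ_{n+1})}$ by the induction hypothesis applied to the positive operator $C$.
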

\begin{remark}
This variational expression is the natural extension of Pisier's seminal factorization expression for the norms of his operator-valued Schatten spaces \cite[Theorem 1.5]{book:Pisier.1998}, which is recovered in the simple case of just one index $\mathcal{Q}=(q)$. See e.g. \cite[equation (3.3)]{Devetak.2006} or \cite[Theorem 4.4]{Beigi.2016}.

We note that technically this result is not directly required for the general theory of multi-indexed Schatten spaces we develop in the next subsection; however, we present it since it is a structurally nice result demonstrating the use of matrix analytic tools in Pisier's theory and gives a particularly simple proof for the following \cref{cor:Pisier.multi.tensor.splitting}.
\end{remark}

\begin{proof}[Proof of \cref{thm:main.multi.pisier}]
    The proof of this theorem can be found in \Cref{app:main.multi.pisier}. It results from multiple uses of \cref{thm:VariationalSpOperatorspace} in which the appearing inf and sup can be exchanged using Sion's minimax theorem building on joint convexity and concavity statements from \cite{Zhang.2020}.
\end{proof}

One immediate consequence of this theorem is the following splitting of multi-indexed operator-valued Schatten norms under tensor products. 

\begin{corollary}[Splitting under arbitrary tensor products]\label{cor:Pisier.multi.tensor.splitting}
Let $\mathcal{Q}$ be a normed multi-indexed Schatten space and $\cX$ an operator space. Then for $X\otimes Y\in\cS_{\cQ}[\{\cH_i\}_{i=1}^n;\cX]$ it holds that
\begin{align}
    \|X\otimes Y\|_{(\mathcal{Q};\cX)}=\|X\|_{(\cQ)}\|Y\|_{\cX}.
\end{align}
\end{corollary}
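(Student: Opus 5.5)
We prove the two inequalities separately, using \Cref{thm:main.multi.pisier} throughout. Here $X$ acts on $\otimes_{i=1}^n\cH_i$ and $Y\in\cX$. We also use the elementary fact that $\|\1\otimes Y\|_{(\infty,\dots,\infty;\cX)}=\|Y\|_{\cX}$. This holds because $\cS_\infty[\cH;\cX]=\cB(\cH)\otimes_{\min}\cX$ and the min tensor norm is a cross norm; iterating gives the multi-indexed statement.

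\emph{Upper bound.} Take any admissible factorization $X=AZB$ of $X$ alone, given by \Cref{thm:main.multi.pisier} with $\cX=\C$. The $\infty$-norm part is then $\|A^{-1}XB^{-1}\|_{(\infty,\dots,\infty)}=\|A^{-1}XB^{-1}\|_\infty$. Since $X\otimes Y=(A\otimes\1)\,\big((A^{-1}XB^{-1})\otimes Y\big)\,(B\otimes\1)$, this is an admissible factorization of $X\otimes Y$. The cross-norm property of the iterated min tensor product gives
\[
\|(A^{-1}XB^{-1})\otimes Y\|_{(\infty,\dots,\infty;\cX)}=\|A^{-1}XB^{-1}\|_\infty\|Y\|_{\cX}.
\]
Taking the infimum over $A,B$ then yields $\|X\otimes Y\|_{(\cQ;\cX)}\leq\|X\|_{(\cQ)}\|Y\|_{\cX}$.

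\emph{Lower bound.} This is the main obstacle. The infimum in \Cref{thm:main.multi.pisier} runs over factorizations of $X\otimes Y$ that need not respect the product structure, so the bound cannot be read off directly. There are two routes.

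The first is duality. Choose $\phi\in\cX^*$ of norm one with $\phi(Y)=\|Y\|_{\cX}$. The map $\id\otimes\phi:\cS_\cQ[\{\cH_i\};\cX]\to\cS_\cQ(\{\cH_i\})$ is contractive. This follows because $\phi$ is completely contractive, as every norm-one functional into $\C$ is. Complete contractions tensorize under Pisier's construction: \Cref{prelim:Pisier.cb.via.q} shows that $\id\otimes\phi$ is contractive on $\cS_{q_n}[\cH_n;\cdot]$ and again completely contractive, and iterating this over the $n$ levels gives contractivity on $\cS_\cQ$. Applying the map to $X\otimes Y$ gives
\[
\|X\|_{(\cQ)}\|Y\|_{\cX}=\|(\id\otimes\phi)(X\otimes Y)\|_{(\cQ)}\leq\|X\otimes Y\|_{(\cQ;\cX)}.
\]

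The second route uses the factorization directly. Take any factorization $X\otimes Y=(A\otimes\1)W(B\otimes\1)$ and apply $\id\otimes\phi$ to obtain $X=A\,(\id\otimes\phi)(W)\,B$. We have $\|(\id\otimes\phi)(W)\|_\infty\leq\|W\|_{(\infty,\dots,\infty;\cX)}$ because $\phi$ is completely contractive at the $\min$ level. Hence each factorization of $X\otimes Y$ induces a factorization of $X$ whose cost is at most the original cost divided by $\|Y\|_{\cX}$. Comparing with \Cref{thm:main.multi.pisier} for $\cX=\C$ gives the same inequality.

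The second route is the cleaner one, since it only needs complete contractivity on $\cS_\infty=\min$, which is standard. The support conditions in the second formula of \Cref{thm:main.multi.pisier} play no role, because we work with the first, unconstrained, formula. Combining the two bounds proves the corollary.
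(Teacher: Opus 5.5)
Your argument is correct. Your upper bound is the paper's argument, but for the lower bound you work harder than necessary: the obstacle you identify comes from using the first, unconstrained formula of \Cref{thm:main.multi.pisier}.

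The paper instead uses the second formula. There the middle factor is not free but forced to be $(A^{-1}\otimes\1)(X\otimes Y)(B^{-1}\otimes\1)=A^{-1}XB^{-1}\otimes Y$, because $A,B$ act only on $\otimes_i\cH_i$. Moreover, for $Y\neq0$ the support constraint for $X\otimes Y$ is exactly the one for $X$. Hence the infimum runs over the same set of $A,B$ as for $X$ alone, and the cross-norm property of $\cS_\infty[\,\cdot\,;\cX]$ splits off $\|Y\|_{\cX}$. Both inequalities then follow at once, in two lines.

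Your route~2 is the natural way to handle the unconstrained formula and is valid: it needs only Hahn--Banach and complete contractivity of a norm-one functional on $M_N(\cX)$. What it buys is independence from support conditions and Moore--Penrose inverses, at the price of an extra argument.

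Your route~1 is genuinely different: it gets the lower bound from functoriality of Pisier's construction, with no factorization formula. As written, though, \Cref{prelim:Pisier.cb.via.q} only gives contractivity of $\id\otimes\phi$ at a single level. To iterate over the $n$ indices you also need complete contractivity. This follows, for instance, from that lemma combined with Pisier's Fubini identity $\cS_p[\cH_1;\cS_p[\cH_2;\cX]]=\cS_p[\cH_1\otimes\cH_2;\cX]$, or from the standard fact $\|\id\otimes u\|_{\cb}\le\|u\|_{\cb}$ on $\cS_p[\cH;\,\cdot\,]$.

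Pushed one step further, the same functoriality applied to the map $\C\to\cX$, $\lambda\mapsto\lambda Y$, whose cb norm is $\|Y\|_{\cX}$, gives the upper bound as well. So route~1 would prove the corollary without \Cref{thm:main.multi.pisier} at all, whereas the paper's proof is designed to showcase that theorem.
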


This will be required in the proofs of \Cref{cor.multi.tensor.multiplicativity} and \Cref{lem:reductionLemma}.
\begin{proof}
The proof is very straightforward and goes like the single-indexed case when replacing Pisier's theorem by \Cref{thm:main.multi.pisier}.
\begin{align}
    \|X\otimes Y\|_{(\mathcal{Q};\cX)}&=\inf_{A,B}\|AA^*\|^\frac{1}{2}_{(\cQ)}\|B^*B\|^\frac{1}{2}_{(\cQ)}\|A^{-1}XB^{-1}\otimes Y\|_{(\infty,...,\infty;\cX)}\\ &=\inf_{A,B}\|AA^*\|^\frac{1}{2}_{(\cQ)}\|B^*B\|^\frac{1}{2}_{(\cQ)}\|A^{-1}XB^{-1}\|_\infty\| Y\|_{\cX} = \|X\|_{(\cQ)}\|Y\|_{\cX},
\end{align}
where we used that the $\cS_\infty[\cH;\cX]$ norm splits under tensor products \cite{book:Pisier.1998}.
\end{proof}

\subsection{General Multi-Indexed Schatten Spaces}\label{sec:GenMultiIndex}

We are finally at a point where we can introduce and study \emph{multi-indexed Schatten spaces}. This is the main mathematical object that this work introduces, analyzes, and applies within quantum information theory.

\noindent In this subsection we first give a definition for both positive and negative compatible indices and relate the two cases. Then, using the tools developed in \cite{Kochanowski.2026} and \cref{sec:Multi.Pisier}, we derive powerful factorization formulas that make these (quasi-)norms tractable. These are
\begin{itemize}
    \item \Cref{thm:Multi.Thm.Pisier.like}: Factoring through Schatten spaces,
    \item \Cref{thm:multi.quasi.relational.consistency}: Multi-indexed relational consistency.
\end{itemize}
They should be seen as the technical main results of this work, on which all subsequent results of this section and all of the next \Cref{sec:LinerMaps} build. In particular, we use them to show that these multi-indexed objects satisfy several properties one would expect. These are, among others,
\begin{itemize}
    \item \Cref{cor:partial.trace.1}: Partial trace via $1$ as last index on positive operators,
    \item \Cref{prop:MultiHölderDual}: The Hölder duality,
    \item \Cref{cor:reverse.multi.hölder}: The reverse Hölder duality,
    \item \Cref{cor.multi.tensor.multiplicativity}: Multiplicativity under tensor products,
    \item \Cref{cor:multi.block.collapse}: Collapse under block-diagonal operators,
    \item \Cref{thm:Multi.Quasi.Norms}: The triangle and quasi-triangle inequality, and
    \item \Cref{thm:Multi.Rev.Triangle}: The reverse triangle inequality.
\end{itemize}
These should be seen as natural generalizations of the ones we, with Fawzi and Rouzé, proved for the two-indexed versions in \cite{Kochanowski.2026}, and thus also of Pisier's ones \cite{book:Pisier.1998}. The proofs we present here are independent of \cite{Kochanowski.2026}, hence the structural results there can be seen as special cases of the ones in this work.

\medskip

In order to specify a multi-indexed Schatten space, we require an ordered list of, in this work, finite-dimensional Hilbert spaces $\cH_i$ and associated indices $q_i\neq0$.
\begin{definition}[Compatible indices]
    We will call a collection of indices $\{q_i\}_{i=1}^n$ with either $q_i>0$ for all $i$ or $q_i<0$ for all $i$ \emph{(mutually) compatible} if $\max_{ij}\bigl|\frac{1}{q_i}-\frac{1}{q_j}\bigr|\leq 1$.
    We will write $\cQ=(q_1,\dots,q_n)$ for such collections and denote its length by $|\cQ|=n$.

    \noindent Given two sets of compatible indices $\cQ=(q_1,\dots, q_n)$ and $\cP=(p_1,\dots, p_m)$, we say that $\cQ$ and $\cP$ are \emph{compatible} if the collection of all indices $\{q_i,p_j\}_{i\in[n], j\in[m]}$ is compatible.
\end{definition}

\begin{remark}
    Thus, compatibility comprises, on the one hand, that all indices are of the same sign and that the distances between the inverses are bounded by 1. The former is, in principle, not strictly necessary; we assume it for simplicity's sake and leave it to future work to explore mixed-sign Schatten spaces. In contrast, the latter is a direct extension of the notion of compatibility in \cite{Kochanowski.2026}, which was necessary there, at least for \cite[Theorem 1.4]{Kochanowski.2026}.
\end{remark}

With this we now define multi-indexed Schatten spaces in the following manner.
\begin{definition}[Multi-indexed Schatten spaces]\label{def:GeneralMultiIndexed}
Let $\cQ=(q_1,...,q_n)$ be an ordered collection of $|\cQ|=n$ mutually compatible indices $q_i>0$ and Hilbert spaces $\cH_i$.
Then we associate to these the (quasi-)norm $\|X\|_{(\cQ)}$ on $X\in \cB(\otimes_{i=1}^n\cH_i)$ defined iteratively via
\begin{align}\label{equ:Multi.Def.1}
    \|X\|_{(q_1,q_2,\cdots, q_n)} := \begin{cases}
        &\!\!\inf_{\underset{\Pi_a=L_X,\, \Pi_b=R_X}{a,b\geq0}} \!\!\!\|a\|_{2r}\|b\|_{2r}\|a^{-1}_1X_{1\dots n}b^{-1}_1\|_{(q_2,q_2,q_3,\cdots,q_n)} \!\!\!\qquad q_1\leq q_2,\\ &\!\!\sup_{\underset{\Pi_a=L_X,\, \Pi_b=R_X}{a,b\geq0}} \!\!\!\|a\|^{-1}_{2r}\|b\|^{-1}_{2r}\|a_1X_{1\dots n}b_1\|_{(q_2,q_2,q_3,\cdots,q_n)} \qquad q_1\geq q_2,
    \end{cases}
\end{align} where $\frac{1}{r}:=|\frac{1}{q_1}-\frac{1}{q_2}|$ and
\begin{align}\label{equ:Multi.Def.2}
    \|X\|_{(q)}:=\|X\|_q.
\end{align}
Importantly here $\cQ^\prime\equiv(q_2,q_2,q_3,\cdots,q_n)$ is taken as an $(n-1)$-indexed Schatten (quasi-)norm acting on the Hilbert spaces $(\cH_1\otimes\cH_2, \cH_3,\cdots,\cH_n)$ with indices $q_2,q_3,\dots, q_n$. 

\noindent We will be interchangeably writing
\begin{align}
    \|X\|_{(\cQ)} \equiv \|X\|_{(q_1,q_2,\dots, q_n)}.
\end{align}

\noindent The space $\cB(\otimes_{i=1}^n\cH_i)$ endowed with the quasi-norm $\|\cdot\|_{(\cQ)}$ is then the associated \emph{multi-indexed Schatten space} $\cS_{\cQ}$.

\noindent For compatible $q_i<0$ we take the same definition as above but restricted to $X>0$ and $a=b^*>0$. The associated `space' is then technically not a linear space but a cone of positive operators $\cB^+(\otimes_{i=1}^n\cH_i)$ endowed with the anti-norm $\|\cdot\|_{(\cQ)}$.
\end{definition}

\begin{remark}
Clearly, the two-indexed Schatten quasi-norms studied in \cite{Kochanowski.2026} are the $|\cQ|=2$ special cases of this definition. It also manifestly recovers Pisier's norms of \cref{def:Multi.Pisier.Schatten} for an arbitrary number of indices, since iterated application of this definition there yields exactly \eqref{equ:iterated.pisier}. Thus whenever $\cQ=(q_1,\dots,q_n)$ with $q_i\in[1,\infty]$ the functionals $X\mapsto \|X\|_{(\cQ)}$ are norms by Pisier's theory \cite{book:Pisier.1998} and satisfy in particular Hölder duality.

\noindent The restriction that all indices $q_i$ need to be mutually compatible is a direct and straightforward extension of the two-index compatibility which was sufficient and necessary there. One non-trivial aspect of this section will be to show that it indeed is also sufficient in the general multi-indexed setting.  

\noindent Note that in the iteration whenever $q_i=q_{i+1}$ then both the supremum and infimum apply and yield the same, both via $a=b=\1$. This follows via an inductive argument and the generalized Hölder inequality \eqref{equ:Schatten.gen.Hölder}.

\noindent While at the current stage it is not obvious that these definitions indeed give rise to quasi- or anti-norms, we will be proving this in \cref{sub:quasi.anti}, yet already calling them that.
\end{remark}

Concerning negative indices, which will be helpful both as a proof tool and in certain applications, such as completely bounded versions of reverse hypercontractivity \cite{Cubitt.2015, Beigi.2020, Bardet.2022}, we have the following straightforward consequence.

\begin{proposition}[Negatively indexed Schatten spaces]\label{prop:negative.multi.indexed}
Let $q_i<0$ be a compatible collection of negative indices and $X>0$, then for the multi-indexed Schatten space $\cQ=(q_1,\dots, q_n)$ it holds that
    \begin{align}
        \|X\|_{(\cQ)}= \|X^{-1}\|^{-1}_{(-\cQ)},
    \end{align}  where $(-\cQ)= (-q_1,\dots, -q_n)$ denotes the multi-indexed Schatten quasi-norm with indices $-q_i\!>\!0$. 
\end{proposition}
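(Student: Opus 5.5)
We argue by induction on the number of indices $n=|\cQ|$, using only the recursive \Cref{def:GeneralMultiIndexed} together with the single-index identity $\|X\|_p=\|X^{-1}\|^{-1}_{-p}$ from \Cref{sec:notation}. For $n=1$ the claim is exactly that identity. Assume it for all compatible negative collections of length $n-1$, and let $\cQ=(q_1,\dots,q_n)$ be compatible with all $q_i<0$ and $X>0$. Note that $-\cQ$ is compatible, since $|\frac{1}{-q_i}-\frac{1}{-q_j}|=|\frac1{q_i}-\frac1{q_j}|$. The value $r$ defined by $\frac1r=|\frac1{q_1}-\frac1{q_2}|$ is the same for $\cQ$ and $-\cQ$. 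However, the order relation flips: $q_1\le q_2$ if and only if $-q_1\ge -q_2$. So an infimum step for $\cQ$ becomes a supremum step for $-\cQ$, and vice versa.

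Consider the case $q_1\le q_2$. By definition (negative case: $X>0$, $a=b^*>0$, so full support and honest inverses),
\begin{align}
\|X\|_{(\cQ)}=\inf_{a>0}\|a\|_{2r}^2\,\|a_1^{-1}X a_1^{-*}\|_{(q_2,q_2,q_3,\dots,q_n)} .
\end{align}
Here $a^{-1}_1Xa_1^{-*}>0$, so the induction hypothesis applies to the $(n-1)$-indexed inner functional on $(\cH_1\otimes\cH_2,\cH_3,\dots)$. It gives $\|a_1^{-1}Xa_1^{-*}\|_{(q_2,\dots)}=\|a_1^{*}X^{-1}a_1\|^{-1}_{(-q_2,\dots)}$. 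Substituting $c:=a^*$, which ranges over the same set of invertible positive operators (or over all invertibles, in whatever way the convention $a=b^*$ is read), we get
\begin{align}
\|X\|_{(\cQ)}=\inf_{c}\Big(\|c\|_{2r}^{-2}\,\|c_1X^{-1}c_1^*\|_{(-q_2,-q_2,\dots,-q_n)}\Big)^{-1}=\Big(\sup_{c}\|c\|_{2r}^{-2}\|c_1X^{-1}c_1^*\|_{(-q_2,\dots)}\Big)^{-1}.
\end{align}
The last supremum is precisely the defining formula of $\|X^{-1}\|_{(-\cQ)}$ in the case $-q_1\ge-q_2$. For that we use $\|c\|_{2r}=\|c^*\|_{2r}$ and identify $\|a\|_{2r}\|b\|_{2r}$ with $b=a^*$ as $\|a\|_{2r}^2$.

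The case $q_1\ge q_2$ is symmetric: a supremum over $a$ of $\|a\|^{-2}_{2r}\|a_1Xa_1^*\|$ becomes, after the induction hypothesis and the substitution $c=a^{-*}$, the inverse of an infimum of $\|c^{-1}\|^{2}_{2r}\cdots$. This matches the infimum formula for $-\cQ$ after renaming $c\mapsto c^{-1}$; equivalently, substitute $c:=a^{-1}$ directly so that $c^{-1}=a$ appears in the norm weight.

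The only point requiring care is bookkeeping of the optimization domain. First, the map $a\mapsto a^{-*}$ (or $a\mapsto a^{*}$) must be a bijection of the admissible set. This holds because for $X>0$ the marginal supports are full, so the constraints $\Pi_a=L_X$ and $\Pi_b=R_X$ just mean invertibility. Second, reciprocals must convert $\inf$ into $\sup$ of reciprocals, which is valid since all quantities are strictly positive and finite. Third, when $q_1=q_2$ both branches coincide, and the weights in $r=\infty$ must be read consistently: the infimum and supremum then coincide, both attained at $a=\1$, as remarked after the definition.
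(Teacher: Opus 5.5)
Your induction is the same as the paper's, and the mechanics are fine: $r$ is unchanged under $q\mapsto -q$, the inf/sup cases flip, and an infimum of reciprocals becomes the reciprocal of a supremum. The gap is the sentence asserting that the final supremum ``is precisely the defining formula of $\|X^{-1}\|_{(-\cQ)}$''. The collection $-\cQ$ has positive indices. So by \Cref{def:GeneralMultiIndexed} the optimization defining $\|X^{-1}\|_{(-\cQ)}$ runs over \emph{independent} $a,b\geq0$ (here $a,b>0$, since $X^{-1}$ has full support). The negative-index definition you started from instead couples them: $a=b^*>0$, i.e.\ $a=b$. Reading $a=b^*$ as allowing non-positive invertible $a$ does not help, because the pair is still coupled. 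After inverting you therefore only obtain the \emph{symmetric restriction} of the optimization that defines $\|X^{-1}\|_{(-\cQ)}$. A restriction yields only one inequality: $\|X\|_{(\cQ)}\geq\|X^{-1}\|^{-1}_{(-\cQ)}$ when $q_1\le q_2$, and $\|X\|_{(\cQ)}\le\|X^{-1}\|^{-1}_{(-\cQ)}$ when $q_1\ge q_2$.

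What is missing is the fact that, for the positive operator $X^{-1}$ and the positive indices $-\cQ$,
\begin{align}
\sup_{a,b>0}\frac{\|a_1X^{-1}b_1\|_{(-q_2,-q_2,-q_3,\dots,-q_n)}}{\|a\|_{2r}\|b\|_{2r}}=\sup_{a>0}\frac{\|a_1X^{-1}a_1\|_{(-q_2,-q_2,-q_3,\dots,-q_n)}}{\|a\|^{2}_{2r}},
\end{align}
together with the analogous identity for the infimum with inverse weights. This is not bookkeeping. For a multi-indexed inner functional it does not follow directly from the recursive definition; a H\"older/Cauchy--Schwarz estimate $\|aYb\|_p\le\|aYa\|_p^{1/2}\|bYb\|_p^{1/2}$ only covers a single inner Schatten index. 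The paper supplies this fact via the self-adjoint (``moreover'') part of \Cref{thm:multi.quasi.relational.consistency}, applied with $\cQ=(-q_2,\dots,-q_n)$. That part rests on the symmetrization argument in the proof of \Cref{thm:Multi.Thm.Pisier.like}. There one averages $\tilde C=\frac{\tilde A+\tilde B}{2}$ and uses:
\begin{itemize}
\item joint convexity of $(\tilde A,\tilde B)\mapsto\|\tilde A^{-1/2}X\tilde B^{-1/2}\|_s^s$, respectively joint concavity of $\|\tilde A^{1/2}X\tilde B^{1/2}\|_t^t$ for $t\le1$;
\item the triangle inequality in the normed spaces $\cS,\cT$.
\end{itemize}
The paper also notes that the positive-index proof of that theorem does not use the present proposition, so invoking it is not circular. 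With this ingredient added, your argument goes through.
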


\begin{proof}
This follows quite simply via induction in $|\cQ|=n$. For Schatten spaces, $n=1$, this is obvious. For $|\cQ|=n+1$ consider $q_1\leq q_2$, then 
\begin{align}
     \|X\|_{(q_1,q_2,\cdots, q_n)} &= 
        \inf_{\underset{}{a,b>0}} \|a\|_{2r}\|b\|_{2r}\|a^{-1}_1X_{1\dots n}b^{-1}_1\|_{(q_2,q_2,q_3,\cdots,q_n)} \\ &= \inf_{\underset{}{a,b>0}}\|a\|_{2r}\|b\|_{2r}\|a_1X^{-1}_{1\dots n}b_1\|^{-1}_{(-q_2,-q_2,-q_3,\cdots,-q_n)} \\
        &= \Big(\sup_{\underset{}{a,b>0}}\|a\|^{-1}_{2r}\|b\|^{-1}_{2r}\|a_1X^{-1}_{1\dots n}b_1\|_{(-q_2,-q_2,-q_3,\cdots,-q_n)}\Big)^{-1} \\
        &= \|X^{-1}\|_{(-q_1,-q_2,\dots, -q_n)}^{-1}
\end{align} where the last equality is again the definition, since $-q_1\geq -q_2$ and $\frac{1}{r}=|\frac{1}{q_1}-\frac{1}{q_2}|=|\frac{1}{-q_1}-\frac{1}{-q_2}|$. Since the case $q_1\geq q_2$ is exactly analogous we omit writing it out. Here, by \Cref{def:GeneralMultiIndexed} for negative indices, all optimizations run over $a=b>0$. That the supremum (respectively infimum) in the definition of $\|X^{-1}\|_{(-\cQ)}$ for the positive operator $X^{-1}$ may likewise be restricted to $a=b$ follows from the self-adjoint case of \Cref{thm:multi.quasi.relational.consistency}, applied with $\cQ=(-q_2,\dots,-q_n)$, whose proof for positive indices does not use the present proposition.
\end{proof}

\begin{remark}
    As we will see, for negative indices these functionals $0<X\mapsto \|X\|_{(\cQ)}$ do not define quasi-norms but anti-norms, nor are they defined on a linear space as they are really only non-trivially defined for invertible $X>0$. 
    However, in order to keep notation and nomenclature simple, we will in abuse of notation call both the positive and negative indexed versions multi-indexed Schatten spaces.
    Where the range of indices is important we will always mention it, so it will be clear from context if we are considering only positive, or positive and negatively indexed ones. If it is not mentioned then one should assume the result to hold in both cases, on their respective spaces or cones.  
\end{remark}

While these are arguably somewhat straightforward definitions, as in \cref{sec:Multi.Pisier} they alone do not yet make these quasi-norms tractable, in the sense that they do not directly allow one to prove the properties we would expect to hold. To this end we prove two important factorization formulas for these multi-indexed Schatten spaces. These will constitute the technical main result of this work.

\subsubsection{Two Key Factorization Formulas}

\begin{theorem}\label{thm:Multi.Thm.Pisier.like}
Let $\cQ$ be a multi-indexed Schatten space with positive indices $\{q_i>0\}_{i=1}^{|\cQ|}$ and let $0<s$ and $0<t\leq 1$ be such that $s\geq q_i\geq t>0$ for all $i$, i.e., all indices in $\cQ$ are bounded from below by $t$ and from above by $s$. Then it holds that
\begin{align}
    \|X\|_{(\cQ)}&=\inf_{\underset{\Pi_{AA^*}=\Pi_{XX^*}, \Pi_{B^*B}=\Pi_{X^*X}}{A,B}}\|AA^*\|^\frac{1}{2}_{(\cS)}\|B^*B\|^\frac{1}{2}_{(\cS)}\|A^{-1}XB^{-1}\|_{s}, \label{equ:Multi.Thm.Pisier.like.inf} \\
    \|X\|_{(\cQ)}&=\sup_{\underset{\Pi_{A^*A}=\Pi_{XX^*}, \Pi_{BB^*}=\Pi_{X^*X}}{A,B}}\|A^*A\|^{-\frac{1}{2}}_{(\cT)}\|BB^*\|^{-\frac{1}{2}}_{(\cT)}\|AXB\|_{t}, \label{equ:Multi.Thm.Pisier.like.sup}
\end{align} where $\cS=(s_1,\cdots,s_n)$ is the multi-indexed Schatten quasi-norm with $\frac{1}{s_i}:=\frac{1}{q_i}-\frac{1}{s}\geq0$ and $\cT=(t_1,\cdots,t_n)$ is the multi-indexed Schatten quasi-norm with $\frac{1}{t_i}:=\frac{1}{t}-\frac{1}{q_i}\geq0$. They are normed with $\frac{1}{s_i},\frac{1}{t_j}\in[0,1]$ whenever $s$, respectively $t$, is compatible with $\cQ$.

\noindent Moreover if $X=X^*$, then one can choose $A=B^*\geq0$.
\end{theorem}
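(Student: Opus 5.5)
We argue by induction on $n=|\cQ|$. For $n=1$ both formulas are the classical variational formulas for Schatten quasi-norms (\Cref{prop:SpVariationalFormulas}): for $s\geq q$ one has $\|X\|_q=\inf\|AA^*\|_{s_1}^{1/2}\|B^*B\|_{s_1}^{1/2}\|A^{-1}XB^{-1}\|_s$ with $\frac1{s_1}=\frac1q-\frac1s$. This is the generalized Hölder inequality \eqref{equ:Schatten.gen.Hölder} together with its equality case $A=|X^*|^{\cdot}$, $B=|X|^{\cdot}$. The $t$-formula is dual, obtained by Hölder in the other direction.

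For the inductive step, write $\cQ=(q_1,q_2,\dots,q_n)$. Apply \Cref{def:GeneralMultiIndexed} once to reduce to $\cQ'=(q_2,q_2,q_3,\dots)$ on $(\cH_1\cH_2,\cH_3,\dots)$, which has length $n-1$. Then insert the induction hypothesis for $\cQ'$. The indices of $\cQ'$ lie in $[t,s]$, and the associated $\cS'$ has entries $s_2,s_3,\dots$.

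Take the inf formula with $q_1\leq q_2$. Here $\|X\|_{(\cQ)}$ is a double infimum over local weights $a,b$ on $\cH_1$ and global $A',B'$. We set $A=a_1A'$ and $B=B'b_1$, so that $A^{-1}XB^{-1}=A'^{-1}a_1^{-1}Xb_1^{-1}B'^{-1}$. It remains to show
\[
\inf_{a,A'}\|a\|_{2r}\|A'A'^*\|^{1/2}_{(\cS')}=\inf_{A=a_1A'}\|AA^*\|_{(\cS)}^{1/2},
\]
i.e.\ that $\|a_1Ma_1\|_{(s_1,s_2,\dots)}$, optimized appropriately, equals $\inf\|a\|_{2r}^2\|M\|_{(s_2,s_2,\dots)}$. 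Since $\frac1{s_1}-\frac1{s_2}=\frac1{q_1}-\frac1{q_2}=\frac1r$, this is precisely the defining recursion of $\|\cdot\|_{(\cS)}$ at its first step. Both indices $s_1\geq s_2$ correspond to an infimum in the original recursion, so the matching uses the $\inf$ branch with the same $r$. The inequality ``$\leq$'' is then just this substitution combined with the definition.

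For ``$\geq$'' we must decompose an arbitrary $A$ as $a_1A'$ with near-optimal cost. Take $AA^*=P\geq0$ and realize $\|P\|_{(\cS)}$ through its definition, i.e.\ $P=a_1Ma_1$ up to the optimizing $a$; then $A'=a_1^{-1}A$ works, using only polar decomposition. When $q_1\geq q_2$ the first step of $\cQ$ is a sup while $\cS$ still starts with $s_1\leq s_2$, which is an inf. In this case we do not split $A$. Instead we face a $\sup_a\inf_{A'}$, and we exchange the two by Sion's minimax theorem. The variables are reparametrized as $a^{2r}$ in the unit ball of $\cS_1^+$ and $A'A'^*$ in a positive cone, as in \Cref{thm:main.multi.pisier} and \cite{Kochanowski.2026}. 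The required concavity in $a$ and convexity in $P=A'A'^*$ come from \Cref{lem.app.op.concavity}, or from its matrix-analytic analogue in \Cref{app:matrix.tools}. After the exchange, the inner $\sup_a\|a\|^{-2}\|\cdots\|$ recombines into the sup-branch of $\|\cdot\|_{(\cS)}$.

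The sup formula \eqref{equ:Multi.Thm.Pisier.like.sup} is handled symmetrically with $\cT$, where $\frac1{t_1}-\frac1{t_2}=\frac1{q_2}-\frac1{q_1}$ and the roles of inf and sup swap. The constraint $t\leq1$ is needed so that the base case and the minimax convexity hold in the quasi-norm regime. Support conditions are handled by restricting to $L_X,R_X$ and using Moore–Penrose inverses as in \Cref{thm:VariationalSpOperatorspace}. For $X=X^*$, the symmetric choice $A=B^*$ follows as in \Cref{thm:VariationalSpOperatorspace}: replace $(A,B)$ by symmetrized weights via joint convexity, or average the optimizers for $X$ and $X^*$. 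The normedness claim, that $1/s_i\in[0,1]$ when $s$ is compatible, is immediate.

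The main obstacle is the minimax exchange in the mixed case, where the $\cQ$-recursion and the $\cS$-recursion have opposite optimization types. This requires proving joint concavity/convexity of $(a,P)\mapsto\|a^{1/2r}\!\cdot\!P\,a^{1/2r}\|_{(\cS')}$ for possibly non-normed $\cS'$. For that step I would first try induction on the concavity statement itself, carried along with the main induction.
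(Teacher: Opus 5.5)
\textbf{Compatible case.} When $s$ and $t$ are compatible with $\cQ$, your induction is essentially the paper's. In that case $\cS$ and $\cT$ have all indices in $[1,\infty]$, so they are Pisier norms. Then \Cref{lem.app.op.concavity} and \Cref{thm:VariationalSpOperatorspace} apply, the latter with symmetric optimizers for positive operators. The averaging step for $X=X^*$ can use the triangle inequality of $\|\cdot\|_{(\cS)}$ and $\|\cdot\|_{(\cT)}$. You leave two ingredients implicit: the product-to-sum rewriting (\Cref{lem:Inf.Prod.to.Sum}, \Cref{lem:Sup.Prod.to.Sum}), and the joint convexity of $(\tilde C,\tilde D)\mapsto\|\tilde C^{-1/2}X\tilde D^{-1/2}\|_s^s$ from \cite{Zhang.2020}.

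Your bookkeeping of the $\cS$-recursion is also reversed. Since $\frac1{s_1}-\frac1{s_2}=\frac1{q_1}-\frac1{q_2}$, one has $s_1\le s_2$ exactly when $q_1\le q_2$. So the $\cS$-recursion always has the \emph{same} type as the $\cQ$-recursion. What forces Sion's theorem when $q_1\ge q_2$ is the clash between the outer $\sup_a$ from the definition of $\|\cdot\|_{(\cQ)}$ and the inner $\inf_{A'}$ from the induction hypothesis. After the exchange, $\sup_a$ becomes the sup-branch of $\|\cdot\|_{(\cS)}$, as you correctly say.

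\textbf{The gap: incompatible $s$ or $t$.} The statement explicitly allows $s$ or $t$ that is \emph{not} compatible with $\cQ$, and this case is used later. Examples are $s=\infty$ in \Cref{rem:sup.arbitrary} and \Cref{cor:mixed.positivity.normed}, and $s=1$ with $q_i<\frac12$ in the reverse H\"older duality. In that case $\cS'$ (or $\cT'$) has indices below $1$, and the property you propose to prove by a parallel induction is false.

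Already for a single index $s'<1$, the map $P\mapsto\|a^{1/2r}Pa^{1/2r}\|_{s'}$ is superadditive on positive operators (reverse Minkowski), hence concave and not convex. So the convexity in $P$ that Sion's theorem needs is not available. The same obstruction hits your other steps:
\begin{itemize}
\item the symmetric choice $a=b$ for $P=AA^*$ in the definition of $\|P\|_{(\cS)}$ is only supplied by \Cref{thm:VariationalSpOperatorspace} when $\cS$ is normed;
\item the averaging $\frac12(\tilde A+\tilde B)$ for self-adjoint $X$ needs the triangle inequality for $\|\cdot\|_{(\cS)}$.
\end{itemize}

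\textbf{Missing idea: bootstrap from the compatible case.} Never run the minimax with quasi-normed auxiliary spaces. First prove the theorem only for $m:=\max_iq_i$ and for $\min\{1,\min_iq_i\}$. For general $s\geq m$, set $\frac1u=\frac1m-\frac1s$ and $\frac1{r_i}=\frac1{q_i}-\frac1m\in[0,1]$, with $\cR=(r_i)_i$. Then:
\begin{itemize}
\item factor $\|\tilde A^{-1/2}X\tilde B^{-1/2}\|_m$ through $\|\cdot\|_s$ by \Cref{prop:SpVariationalFormulas};
\item absorb the new weights into $C,D$;
\item recombine via
\begin{align}
    \inf_{\tilde A\geq0}\|\tilde A\|_{(\cR)}\|\tilde A^{-\frac12}CC^*\tilde A^{-\frac12}\|_u=\|CC^*\|_{(\cS)} .
\end{align}
\end{itemize}
This identity is the already-proven compatible, self-adjoint case applied to the possibly quasi-normed collection $\cS$ with upper index $u$. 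That application is legitimate because $\frac1{s_i}-\frac1u=\frac1{r_i}\in[0,1]$, so the auxiliary collection is the normed $\cR$. The $t$-formula is extended the same way from $\min\{1,\min_iq_i\}$, using the supremum version of this recombination.
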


\begin{remark}\label{rem:sup.arbitrary}
The support restriction in the supremum formula \eqref{equ:Multi.Thm.Pisier.like.sup} can be lifted: the supremum may equivalently be taken over arbitrary operators $A,B$ on $\otimes_{i=1}^n\cH_i$, i.e.
\begin{align}
    \|X\|_{(\cQ)}=\sup_{A,B}\|A^*A\|^{-\frac12}_{(\cT)}\|BB^*\|^{-\frac12}_{(\cT)}\|AXB\|_t
\end{align}
holds for all $A,B$. We justify this at the end of the proof below.
\end{remark}

Before giving a remark and the proof, we observe that as a first direct consequence of \Cref{prop:negative.multi.indexed} we get the following version for negative indices. 

\begin{corollary}\label{cor:multi.Pisier.like.negative}
   Let $\cQ$ be a multi-indexed Schatten space with negative indices $\{q_i<0\}_{i=1}^{|\cQ|}$ and let $s<0$ and $-1\leq t<0$ be such that $s\leq q_i\leq t<0$ for all $i$. Then for all $X>0$ it holds that
\begin{align}
    \|X\|_{(\cQ)}&=\inf_{\underset{}{A>0}}\|AA^*\|_{(\cT)}\|A^{-1}XA^{-1}\|_{t}  \\
    \|X\|_{(\cQ)}&=\sup_{\underset{}{A>0}}\|A^*A\|^{-1}_{(\cS)}\|AXA\|_{s}
\end{align} where $\cS=(s_1,\cdots,s_n)$ is the multi-indexed Schatten norm with $\frac{1}{s_i}:=\frac{1}{s}-\frac{1}{q_i}\geq0$ and $\cT=(t_1,\cdots,t_n)$ is the multi-indexed Schatten norm with $\frac{1}{t_i}:=\frac{1}{q_i}-\frac{1}{t}\geq0$. Again, these lie in $[0,1]$ whenever $s$, respectively $t$, is compatible with $\cQ$.
\end{corollary}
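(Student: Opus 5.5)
The plan is to derive \Cref{cor:multi.Pisier.like.negative} directly from \Cref{thm:Multi.Thm.Pisier.like} applied to the positive collection $-\cQ=(-q_1,\dots,-q_n)$, transported back via \Cref{prop:negative.multi.indexed}. Set $s'=-s>0$ and $t'=-t\in(0,1]$. The hypothesis $s\le q_i\le t<0$ becomes $s'\ge -q_i\ge t'>0$, which is exactly the setting of \Cref{thm:Multi.Thm.Pisier.like} for $-\cQ$. The weight indices there are $\frac{1}{s'_i}=\frac{1}{-q_i}-\frac{1}{-s}=\frac1s-\frac1{q_i}$ and $\frac{1}{t'_i}=\frac{1}{-t}-\frac{1}{-q_i}=\frac{1}{q_i}-\frac1t$. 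These coincide with the $\cS$ and $\cT$ of the corollary, so the weight spaces are positive, and normed under the stated compatibility, as claimed.

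Next fix $X>0$ and apply \Cref{prop:negative.multi.indexed}: $\|X\|_{(\cQ)}=\|X^{-1}\|^{-1}_{(-\cQ)}$. Since $X^{-1}>0$ is self-adjoint with full support, the \emph{moreover} part of \Cref{thm:Multi.Thm.Pisier.like} lets us restrict both formulas to $A=B^*\ge0$ with full support, i.e.\ $A>0$.

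For the infimum, write $\|X^{-1}\|_{(-\cQ)}=\sup_{A>0}\|A^2\|^{-1}_{(\cT)}\|AX^{-1}A\|_{t'}$ using the sup formula with lower index $t'$. Inverting gives
\begin{align}
\|X\|_{(\cQ)}=\inf_{A>0}\|A^2\|_{(\cT)}\|AX^{-1}A\|_{t'}^{-1}.
\end{align}
The single-index identity $\|Y\|_{-t'}=\|Y^{-1}\|^{-1}_{t'}$ from the notation section yields $\|AX^{-1}A\|^{-1}_{t'}=\|A^{-1}XA^{-1}\|_{t}$. This is the claimed infimum, with $AA^*=A^2$.

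For the supremum, use the inf formula with upper index $s'$:
\begin{align}
\|X^{-1}\|_{(-\cQ)}=\inf_{A>0}\|A^2\|_{(\cS)}\|A^{-1}X^{-1}A^{-1}\|_{s'}.
\end{align}
Inverting turns the infimum into a supremum. Then $\|A^{-1}X^{-1}A^{-1}\|_{s'}^{-1}=\|(AXA)^{-1}\|^{-1}_{-s}=\|AXA\|_s$, giving the second formula.

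The main obstacle is bookkeeping: checking that the sup formula of \Cref{thm:Multi.Thm.Pisier.like}, which is the one requiring $t'\le1$, is the one producing the $t$-formula here, and that the restriction to $A>0$ is legitimate. The latter follows from the self-adjointness clause together with the support conditions $\Pi_{A^*A}=\Pi_{X^{-1}}=\1$. The exponent change from $\tfrac12$ powers to first powers is also immediate, since $A=B^*$ makes the two weight factors equal.
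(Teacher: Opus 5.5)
Your proposal is correct and follows the paper's route. The paper presents this corollary as a direct consequence of \Cref{prop:negative.multi.indexed} applied to \Cref{thm:Multi.Thm.Pisier.like} for $-\cQ$ with symmetric optimizers, and your bookkeeping of that step checks out: the sup formula with lower index $-t\le 1$ yields the $t$-infimum, the inf formula with upper index $-s$ yields the $s$-supremum, the weight collections coincide with $\cS,\cT$, and $\|Y\|_p=\|Y^{-1}\|^{-1}_{-p}$ is used correctly.
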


\begin{remark}\label{rem:mulit.pisier.like}
    Note that when the upper indices $s$ and lower indices $t\leq1$ are chosen compatible, then $\cS, \cT$ of the statements are normed multi-indexed Schatten spaces as in \cref{def:Multi.Pisier.Schatten}. Such compatible $s,t$ always exist for any positive multi-indexed Schatten space $\cQ$. These are in particular $s=\max_iq_i$, which by definition is compatible with all other indices and likewise $t=\min\{1,\min_iq_i\}$ which is also, as $1$ is compatible with all $q_i\geq1$. The same applies to the negatively indexed case. We prove the result, however, for arbitrary $s,t$, which need not be compatible with $\cQ$. 

\noindent  In the simple case where $\cQ=(q)$ contains just one index, these formulas reduce to well-known variational formulas for Schatten-(quasi-)norms in \Cref{prop:SpVariationalFormulas}. In the two-indexed case they recover \cite[Lemma 1.1]{Kochanowski.2026} when $s=1$ and \cite[Lemma 3.4]{Kochanowski.2026} for $t=\min\{1,q,p\}$. 

\noindent  Note that a contribution of this theorem is also the fact that self-adjointness $X=X^*$, in particular in the supremum case, is sufficient for the symmetric formulas here, instead of requiring $X\geq0$ as in \cite{Devetak.2006, Beigi.2023, Kochanowski.2026}. For the two-indexed infimum formula it was already observed in \cite[Remark 4.6]{Beigi.2023} and \cite[Proof of Proposition 5.2 (iii)]{Bardet.2024} that self-adjointness $X=X^*$ is sufficient. To achieve this we apply the same novel proof idea we gave in \cite[Proof of Lemma 3.2]{Kochanowski.2026} to both the infimum and supremum 
cases here.

\noindent Note that, non-obviously, this theorem with its negative-indexed corollary will suffice in deriving all the structural properties of the general multi-indexed Schatten spaces. Hence it should be considered the technical main result of this work.
\end{remark}

\begin{proof}[Proof of \Cref{thm:Multi.Thm.Pisier.like}]
We will prove these two statements separately, each by induction over $|\cQ|$. Note that both formulas for $|\cQ|=1$ are true by \Cref{prop:SpVariationalFormulas}. We first consider the case where $s,t$ are compatible with $\cQ$.
Let's now consider the induction step for the infimum formula. Let $\cQ=(q_1,q_2,\cdots, q_n)$ with $q_i>0$, denote by $\cQ^\prime=(q_2,q_2,q_3,\cdots, q_n)$ and assume the statement is true for multi-indexed Schatten spaces of size $n-1$. Then we consider two cases separately. First, in the case where $q_1\leq q_2$ with $\frac{1}{r}=\frac{1}{q_1}-\frac{1}{q_2}$ we have
\begin{align}
    \|X\|_{(\cQ)} &=\!\inf_{\underset{\Pi_a=L_X, \Pi_b=R_X}{a,b\geq0}} \|a\|_{2r}\|b\|_{2r}\|a^{-1}_1X_{1\cdots n}b^{-1}_1\|_{(\cQ^\prime)} \\ 
    &=\!\inf_{\underset{\Pi_a=L_X, \Pi_b=R_X}{a,b\geq0}} \|a\|_{2r}\|b\|_{2r}\!\!\!\!\!\!\!\!\!\!\!\!\!\!\!\inf_{\underset{\Pi_{AA^*}=\Pi_{XX^*}, \Pi_{B^*B}=\Pi_{X^*X}}{A,B}}\!\!\!\!\!\!\!\!\!\!\!\!\!\!\!\|AA^*\|^\frac{1}{2}_{(\cS^\prime)}\|B^*B\|^\frac{1}{2}_{(\cS^\prime)}\|A^{-1}a^{-1}_1X_{1\cdots n}b^{-1}_1B^{-1}\|_{s} \\
    &=\!\inf_{\underset{\Pi_a=L_X, \Pi_b=R_X}{a,b\geq0, \|\cdot\|_1\leq 1}}\inf_{\underset{\Pi_{CC^*}=\Pi_{XX^*}, \Pi_{D^*D}=\Pi_{X^*X}}{C,D\geq0}}\!\!\!\!\!\!\!\!\!\!\!\!\!\!\|a_1^{-\frac{1}{2r}}CC^*a_1^{-\frac{1}{2r}}\|^\frac{1}{2}_{(\cS^\prime)}\|b_1^{-\frac{1}{2r}}D^*Db_1^{-\frac{1}{2r}}\|^\frac{1}{2}_{(\cS^\prime)}\|C^{-1}XD^{-1}\|_{s} \\
    &=\!\inf_{\underset{X=\Pi_CX\Pi_D}{C,D\geq0}}\inf_{\underset{\Pi_a=L_X, \Pi_b=R_X}{a,b\geq0, \|\cdot\|_1\leq 1}}\|a_1^{-\frac{1}{2r}}CC^*a_1^{-\frac{1}{2r}}\|^\frac{1}{2}_{(\cS^\prime)}\|b_1^{-\frac{1}{2r}}D^*Db_1^{-\frac{1}{2r}}\|^\frac{1}{2}_{(\cS^\prime)}\|C^{-1}XD^{-1}\|_{s} \\
    &=\!\inf_{\underset{\Pi_{CC^*}=\Pi_{XX^*}, \Pi_{D^*D}=\Pi_{X^*X}}{C,D\geq0}}\|CC^*\|^\frac{1}{2}_{(\cS)}\|D^*D\|^\frac{1}{2}_{(\cS)}\|C^{-1}XD^{-1}\|_{s}, 
\end{align} where the steps are similar to the proof of \cite[Theorem 1.3]{Kochanowski.2026}. In the second line we implicitly used that $\Pi_{AA^*}=\Pi_{YY^*}=\Pi_{XX^*}$ for $Y=a_1^{-1}Xb_1^{-1}$ and the same holds for the right support $\Pi_{B^*B}=\Pi_{X^*X}$ by assumptions on $a_1,b_1$. In the third line we defined $C:=a_1A$ and $D:=Bb_1$, which implies $\Pi_{CC^*}=\Pi_{a_1AA^*a_1}=\Pi_{XX^*}$ and $\Pi_{D^*D}=\Pi_{X^*X}$. Since $\Pi_{AA^*}=\Pi_{XX^*}\leq L_X\otimes\1=\Pi_{a_1}$ we can write $AA^*=a_1^{-1}CC^*a_1^{-1}$ and likewise for $B$. In the fourth line we used that two infima always commute and in the last line we used that $\cS^\prime=(s_2,s_2,s_3,\cdots, s_n)$ and $\cS=(s_1,s_2,\cdots,s_n)$, with $\frac{1}{s_1}=\frac{1}{q_1}-\frac{1}{s}$ and $\frac{1}{s_2}=\frac{1}{q_2}-\frac{1}{s}$, so  $\frac{1}{r}=\frac{1}{q_1}-\frac{1}{q_2} = \frac{1}{s_1}-\frac{1}{s_2}$.

\medskip

Next, the case where $q_1\geq q_2$ is somewhat more involved since it requires an application of Sion's minimax theorem. We have 
\begin{align}
    \|X\|_{(\cQ)} &= \!\sup_{\underset{\Pi_a=L_X, \Pi_b=R_X}{a,b\geq0}} \|a\|^{-1}_{2r}\|b\|^{-1}_{2r}\|a_1X_{1\cdots n}b_1\|_{(\cQ^\prime)} \\ 
    &= \!\sup_{\underset{\Pi_a=L_X, \Pi_b=R_X}{a,b\geq0}}\hspace{-0.2cm} \|a\|^{-1}_{2r}\|b\|^{-1}_{2r} \!\!\!\!\!\!\!\!\!\!\!\!\!\!\!\inf_{\underset{\Pi_{AA^*}=\Pi_{XX^*}, \Pi_{B^*B}=\Pi_{X^*X}}{A,B}}\!\!\!\!\!\!\!\!\!\!\!\!\!\!\!\!\!\!\|AA^*\|^\frac{1}{2}_{(\cS^\prime)}\|B^*B\|^\frac{1}{2}_{(\cS^\prime)}\|A_{1\dots n}^{-1}a_1X_{1\dots n}b_1B_{1\dots n}^{-1}\|_{s} \\ 
    &= \!\sup_{\underset{\Pi_a=L_X, \Pi_b=R_X}{a,b\geq0, \|\cdot\|_1\leq 1}} \inf_{\underset{\Pi_{CC^*}=\Pi_{XX^*}, \Pi_{D^*D}=\Pi_{X^*X}}{C,D}}\!\!\!\!\!\!\!\!\!\!\!\!\!\!\!\!\!\! \|a^\frac{1}{2r}CC^*a^\frac{1}{2r}\|^\frac{1}{2}_{(\cS^\prime)}\|b^\frac{1}{2r}D^*Db^\frac{1}{2r}\|^\frac{1}{2}_{(\cS^\prime)}\|C^{-1}XD^{-1}\|_{s} \\
    &= \Bigg\{\sup_{\underset{\Pi_a=L_X, \Pi_b=R_X}{a,b\geq0, \|\cdot\|_1\leq 1}} \inf_{\underset{\Pi_{\tilde{C}}=\Pi_{XX^*},\Pi_{\tilde{D}}=\Pi_{X^*X}}{\tilde{C},\tilde{D}\geq0}}\!\!\!\!\!\!\!\!\!\!{\eta}\|a_1^\frac{1}{2r}\tilde{C}a_1^\frac{1}{2r}\|_{(\cS^\prime)}+\eta\|b_1^\frac{1}{2r}\tilde{D}b_1^\frac{1}{2r}\|_{(\cS^\prime)}\\&\qquad+\frac{\eta}{\eta_2}\|\tilde{C}^{-\frac{1}{2}}X\tilde{D}^{-\frac{1}{2}}\|^{s}_{s} \Bigg\}^\frac{1}{2\eta} 
\end{align} where we set $A=aC$ and replaced $a$ by $a^\frac{1}{2r}$ and likewise for $B=Db$. By assumptions on the supports of $a_1,b_1$ it follows that $C=a_1^{-1}A$ with $\Pi_{CC^*}=\Pi_{XX^*}$ and likewise for $D$. In the last line we used \Cref{lem:Inf.Prod.to.Sum} with $\eta_2:=\frac{s}{2}$ and $\eta^{-1}:=2+\eta_2^{-1}$. Since $s\geq q_2, \, s\geq q_1$ we have  $\frac{1}{r}=\frac{1}{q_2}-\frac{1}{q_1} \leq \frac{1}{q_2}-\frac{1}{s}= \frac{1}{s_2}\leq 1$, so $1\leq s_2\leq r$. Thus by \Cref{lem.app.op.concavity} we get concavity of the first two summands in $a,b$ and convexity in $\tilde{C}, \tilde{D}$ as well as suitable continuity and by \cite[Theorem 1.1 (2)]{Zhang.2020} we get that $\|\tilde{C}^{-\frac{1}{2}}X\tilde{D}^{-\frac{1}{2}}\|^{s}_{s}$ is jointly convex and continuous in $\tilde{C}, \tilde{D}$ on $\Pi_{\tilde{C}}=\Pi_{XX^*}, \Pi_{\tilde{D}}=\Pi_{X^*X}$. Thus the whole expression is jointly concave and upper semi-continuous in $a,b$, which live in a compact set, and jointly convex in $\tilde{C}, \tilde{D}$ so Sion's minimax theorem \Cref{thm:Sion} applies and we can swap the infimum and supremum. This yields  
\begin{align}
    \|X\|_{(\cQ)} &= \Bigg\{\!\inf_{\underset{\Pi_{\tilde{C}}=\Pi_{XX^*}, \Pi_{\tilde{D}}=\Pi_{X^*X}}{\tilde{C},\tilde{D}\geq0}}\!\sup_{\underset{\Pi_a=L_X, \Pi_b=R_X}{a,b\geq0, \|\cdot\|_1\leq 1}}\!\!\eta\|a_1^\frac{1}{2r}\tilde{C}a_1^\frac{1}{2r}\|_{(\cS^\prime)}+\eta\|b_1^\frac{1}{2r}\tilde{D}b_1^\frac{1}{2r}\|_{(\cS^\prime)}\\&\qquad+\frac{\eta}{\eta_2}\|\tilde{C}^{-\frac{1}{2}}X\tilde{D}^{-\frac{1}{2}}\|^{s}_{s} \Bigg\}^\frac{1}{2\eta} \\
    &= \left\{\inf_{\underset{}{\tilde{C},\tilde{D}\geq0}}\eta\|\tilde{C}\|_{(\cS)}+\eta\|\tilde{D}\|_{(\cS)}+\frac{\eta}{\eta_2}\|\tilde{C}^{-\frac{1}{2}}X\tilde{D}^{-\frac{1}{2}}\|^{s}_{s} \right\}^\frac{1}{2\eta} \\
    &= \inf_{\tilde{C},\tilde{D}\geq0}\|\tilde{C}\|_{(\cS)}^\frac{1}{2}\|\tilde{D}\|_{(\cS)}^\frac{1}{2}\|\tilde{C}^{-\frac{1}{2}}X\tilde{D}^{-\frac{1}{2}}\|_{s},
\end{align} where $\cS=(s_1,s_2,\cdots,s_n)$ with $\frac{1}{r}=\frac{1}{q_2}-\frac{1}{q_1}=\frac{1}{s_2}-\frac{1}{s_1}$, which follows from \Cref{thm:VariationalSpOperatorspace}. 

\medskip

The supremum formulas follow similarly. In the case $q_1\geq q_2$, proceeding analogously to above we get a variational formula involving two suprema, which can readily be swapped yielding the claimed statement. The more interesting part is now the $q_1\leq q_2$ part which we spell out below and which requires the condition $t\leq 1$ to work. Write $(\cT^\prime)=(t_2,t_2,t_3,\dots, t_n)$ with $\frac{1}{t_i}=\frac{1}{t}-\frac{1}{q_i}$, $\cQ^\prime=(q_2,q_2,q_3,\cdots, q_n)$, and set $\frac{1}{r}=\frac{1}{q_1}-\frac{1}{q_2}$. For notational simplicity we drop writing the support conditions in the following. We have
\begin{align}
    \|X\|_{(\cQ)} &= \inf_{\underset{\Pi_a=L_X,\, \Pi_b=R_X}{a,b\geq0}}\|a\|_{2r}\|b\|_{2r}\|a^{-1}_1Xb_1^{-1}\|_{(\cQ^\prime)} \\
    &= \inf_{\underset{\Pi_a=L_X,\, \Pi_b=R_X}{a,b\geq0}}\sup_{A,B}\|a\|_{2r}\|b\|_{2r}\|A^*A\|^{-\frac{1}{2}}_{(\cT^\prime)}\|BB^*\|^{-\frac{1}{2}}_{(\cT^\prime)}\|A_{1\dots n}a_1^{-1}X_{1\dots n}b_1^{-1}B_{1\dots n}\|_t \\ 
    &=\inf_{\underset{\Pi_a=L_X,\, \Pi_b=R_X}{a,b\geq0, \,\|\cdot\|_1\leq1}}\sup_{\tilde{C},\tilde{D}\geq0}\|a^\frac{1}{2r}\tilde{C}a^\frac{1}{2r}\|^{-\frac{1}{2}}_{(\cT^\prime)}\|b^\frac{1}{2r}\tilde{D}b^\frac{1}{2r}\|^{-\frac{1}{2}}_{(\cT^\prime)}\|\tilde{C}^\frac{1}{2}X\tilde{D}^\frac{1}{2}\|_t \\
    &= \left\{\inf_{\underset{\Pi_a=L_X,\, \Pi_b=R_X}{a,b\geq0, \,\|\cdot\|_1\leq1}}\!\sup_{\tilde{C},\tilde{D}\geq0} \eta\|a^\frac{1}{2r}\tilde{C}a^\frac{1}{2r}\|_{(\cT^\prime)}\!+\!\eta\|b^\frac{1}{2r}\tilde{D}b^\frac{1}{2r}\|_{(\cT^\prime)} \!+\!\frac{\eta}{\eta_2} \|\tilde{C}^\frac{1}{2}X\tilde{D}^\frac{1}{2}\|^\gamma_t \right\}^{-\frac{1}{2\eta}}
\end{align} where we fixed $0<\gamma=\frac{t}{2}<t$, set $\eta_2=-\frac{\gamma}{2},$ and $\eta^{-1}=2+\eta_2^{-1}= 2-2\gamma^{-1}$. Then for any $q_2>0$  we have $\eta<0$ since $0<\gamma<t\leq 1$.
By the assumptions, similar to before we have
\begin{align}
    \frac{1}{r}=\frac{1}{q_1}-\frac{1}{q_2} \leq \frac{1}{t}-\frac{1}{q_2}=\frac{1}{t_2}\leq 1,
\end{align} so $1\leq t_2\leq r$ and  \Cref{lem.app.op.concavity} applies yielding that the first two summands are convex in $a,b$ and concave in $\tilde{C},\tilde{D}$ as well as suitably semicontinuous. The last summand is continuous and jointly concave in $\tilde{C},\tilde{D}$ as a consequence of the continuity, joint concavity, and positivity of $\|\tilde{C}^\frac{1}{2}X\tilde{D}^\frac{1}{2}\|^t_t$, which follows from \cite[Theorem 1.1 (1)]{Zhang.2020} when $t\leq 1$ and the fact that the map $j_{\frac{\gamma}{t}}:x\mapsto x^\frac{\gamma}{t}$ is monotonically increasing, continuous and concave on $[0,\infty]$. Thus their composition is also jointly concave and continuous. This allows the application of Sion's minimax theorem \Cref{thm:Sion} which yields
\begin{align}
   \|X\|_{(\cQ)} &= \left\{\sup_{\tilde{C},\tilde{D}\geq0}\inf_{\underset{\Pi_a=L_X,\, \Pi_b=R_X}{a,b\geq0, \,\|\cdot\|_1\leq1}} \!\eta\|a^\frac{1}{2r}\tilde{C}a^\frac{1}{2r}\|_{(\cT^\prime)}\!+\!\eta\|b^\frac{1}{2r}\tilde{D}b^\frac{1}{2r}\|_{(\cT^\prime)} \!+\!\frac{\eta}{\eta_2} \|\tilde{C}^\frac{1}{2}X\tilde{D}^\frac{1}{2}\|^\gamma_t \right\}^{-\frac{1}{2\eta}} \\
    &= \left\{\sup_{\tilde{C},\tilde{D}\geq0}\eta\|\tilde{C}\|_{(\cT)}+\eta\|\tilde{D}\|_{(\cT)} +\frac{\eta}{\eta_2} \|\tilde{C}^\frac{1}{2}X\tilde{D}^\frac{1}{2}\|^\gamma_t \right\}^{-\frac{1}{2\eta}} \\
   &=\sup_{\tilde{C},\tilde{D}\geq0} \|\tilde{C}\|^{-\frac{1}{2}}_{(\cT)}\|\tilde{D}\|^{-\frac{1}{2}}_{(\cT)}\|\tilde{C}^\frac{1}{2}X\tilde{D}^\frac{1}{2}\|_t,
\end{align} where we first used that
\begin{align}
    \inf_{\underset{\Pi_a=L_X,}{a\geq0, \,\|a\|_1\leq1}} \eta\|a_1^\frac{1}{2r}\tilde{C}a_1^\frac{1}{2r}\|_{(\cT^\prime)} &= \eta\left(\sup_{\underset{\Pi_a=L_X,}{a\geq0, \,\|a\|_1\leq1}} \|a_1^\frac{1}{2r}\tilde{C}a_1^\frac{1}{2r}\|_{(\cT^\prime)}\right) = \eta\|\tilde{C}\|_{(\cT)}\,,
\end{align} since $\frac{1}{r}=\frac{1}{q_1}-\frac{1}{q_2}=\frac{1}{t_2}-\frac{1}{t_1}$ ($\frac{1}{t_i}=\frac{1}{t}-\frac{1}{q_i}$ and thus $t_1\geq t_2$). Here $\cT=(t_1,t_2,\dots,t_n)$ and $\cT^\prime=(t_2,t_2,t_3,\dots,t_n)$. Lastly we used \Cref{lem:Sup.Prod.to.Sum} in reverse again. 

\medskip

\noindent Let us now prove the simplification in the case of self-adjoint $X=X^*$. 
For the infimum case, first recall that $\|\cdot\|_{(\cS)}$ defines a norm, since it coincides with Pisier's operator-valued Schatten norm as all the indices $s_i\geq1$. 
Now we proceed as in \cite[Proof of Lemma 3.2]{Kochanowski.2026} by first observing that due to a simple scaling argument we may restrict the optimization to $A,B$ s.t. $\|AA^*\|_{(\cS)}=\|B^*B\|_{(\cS)}\leq 1$. This uses the positive homogeneity of $\cdot \mapsto \|\cdot\|_{(\cS)}$. Now, given admissible $A,B$, set $\tilde{A}:=AA^*, \tilde{B}:=B^*B$. Due to unitary invariance of the Schatten-$s$-quasi-norm we thus have
\begin{align}
     \|X\|^s_{(\cQ)}&=\inf_{\underset{\Pi_{AA^*}=\Pi_{XX^*}= \Pi_{B^*B}}{A,B}}\|AA^*\|^\frac{s}{2}_{(\cS)}\|B^*B\|^\frac{s}{2}_{(\cS)}\|A^{-1}XB^{-1}\|^s_{s} \\&=\inf_{\underset{\Pi_{\tilde{A}}=\Pi_{XX^*}=\Pi_{\tilde{B}}}{\tilde{A},\tilde{B}, \, \|\tilde{A}\|_{(\cS)}=\|\tilde{B}\|_{(\cS)}\leq 1}} \underbrace{\|\tilde{A}^{-\frac{1}{2}}X\tilde{B}^{-\frac{1}{2}}\|^s_s}_{=:G(\tilde{A},\tilde{B})},
\end{align} where $G(\tilde{A},\tilde{B})$ is jointly convex in $\tilde{A},\tilde{B}\geq0$ by \cite[Theorem 1.1 (2)]{Zhang.2020} and $G(\tilde{A},\tilde{B})= G(\tilde{B},\tilde{A})$ by self-adjointness of $X=X^*$. Let $\tilde{C}:=\frac{\tilde{A}+\tilde{B}}{2}$, then $\Pi_{\tilde{C}}\geq \Pi_{\tilde{A}}, \Pi_{\tilde{B}}$. Then on the one hand using the triangle inequality we get $\|\tilde{C}\|_{(\cS)}\leq \frac{1}{2}( \|\tilde{A}\|_{(\cS)}+\|\tilde{B}\|_{(\cS)})\leq 1$.
On the other, by symmetry and joint convexity of $G$ we get
\begin{align}
    G(\tilde{C},\tilde{C}) = G\left(\frac{\tilde{A}+\tilde{B}}{2},\frac{\tilde{B}+\tilde{A}}{2}\right) \leq \frac{1}{2}\left(G(\tilde{A},\tilde{B})+G(\tilde{B},\tilde{A})\right)= G(\tilde{A},\tilde{B}).
\end{align} Combining everything yields 
\begin{align}
\inf_{\underset{\Pi_{\tilde{C}}=\Pi_X}{\tilde{C}, \, \|\tilde{C}\|_{(\cS)}\leq 1}} G(\tilde{C},\tilde{C})\leq 
   \inf_{\underset{\Pi_{\tilde{A}}=\Pi_X=\Pi_{\tilde{B}}}{\tilde A,\tilde B, \, \|\tilde{A}\|_{(\cS)}=\|\tilde{B}\|_{(\cS)}\leq 1}} G(\tilde{A},\tilde{B})
\end{align} which proves the claimed statement after taking the $s$-root.

\medskip

\noindent For the supremum case we need a novel argument compared to \cite{Devetak.2006}, which explicitly requires $X\geq0$. We will treat it similarly to above and proceed  by first observing that due to a simple scaling argument we may restrict the optimization to $A,B$ s.t. $\|A^*A\|_{(\cT)}=\|BB^*\|_{(\cT)}\leq 1$. This uses the positive homogeneity of $\cdot \mapsto \|\cdot\|_{(\cT)}$. Now, given admissible $A,B$, set $\tilde{A}:=A^*A, \tilde{B}:=BB^*$. Due to unitary invariance of the Schatten-$t$-quasi-norm we thus have
\begin{align}
     \|X\|^t_{(\cQ)}&=\sup_{\underset{\Pi_{A^*A}=\Pi_{XX^*}= \Pi_{BB^*}}{A,B}}\|A^*A\|^{-\frac{t}{2}}_{(\cT)}\|BB^*\|^{-\frac{t}{2}}_{(\cT)}\|AXB\|^t_{t} \\&=\sup_{\underset{\Pi_{\tilde{A}}=\Pi_{XX^*}=\Pi_{\tilde{B}}}{\tilde{A},\tilde{B}, \, \|\tilde{A}\|_{(\cT)}=\|\tilde{B}\|_{(\cT)}\leq 1}} \underbrace{\|\tilde{A}^{\frac{1}{2}}X\tilde{B}^{\frac{1}{2}}\|^t_t}_{=:H(\tilde{A},\tilde{B})},
\end{align} where $H(\tilde{A},\tilde{B})$ is jointly concave in $\tilde{A},\tilde{B}\geq0$ by \cite[Theorem 1.1 (1)]{Zhang.2020} since $t\leq 1$ and $H(\tilde{A},\tilde{B})= H(\tilde{B},\tilde{A})$ by self-adjointness of $X=X^*$ as above. Let $\tilde{C}:=\frac{\tilde{A}+\tilde{B}}{2}$, then $\Pi_{\tilde{C}}\geq \Pi_{\tilde{A}}, \Pi_{\tilde{B}}$ and
just as before, using the triangle inequality we get $\|\tilde{C}\|_{(\cT)}\leq \frac{1}{2}( \|\tilde{A}\|_{(\cT)}+\|\tilde{B}\|_{(\cT)})\leq 1$.
On the other, by symmetry and joint concavity of $H$ we get
\begin{align}
    H(\tilde{C},\tilde{C}) = H\left(\frac{\tilde{A}+\tilde{B}}{2},\frac{\tilde{B}+\tilde{A}}{2}\right) \geq \frac{1}{2}\left(H(\tilde{A},\tilde{B})+H(\tilde{B},\tilde{A})\right)= H(\tilde{A},\tilde{B}).
\end{align}
Combining everything yields 
\begin{align}
\sup_{\underset{\Pi_{\tilde{C}}=\Pi_X}{\tilde{C}, \, \|\tilde{C}\|_{(\cT)}\leq 1}} H(\tilde{C},\tilde{C})\geq 
   \sup_{\underset{\Pi_{\tilde{A}}=\Pi_X=\Pi_{\tilde{B}}}{\tilde A,\tilde B, \, \|\tilde{A}\|_{(\cT)}=\|\tilde{B}\|_{(\cT)}\leq 1}} H(\tilde{A},\tilde{B})
\end{align} which proves the claimed statement after taking the $t$-root.

\medskip

Having proven \eqref{equ:Multi.Thm.Pisier.like.inf} for $s$ compatible with $\cQ=(q_1,\dots, q_n)$ we now extend it to arbitrary $q_i\leq s$. Let $r:=\max_i\{q_i\}\leq s$, then $r$ is compatible with $\cQ$ and by the above proof we have for $\cR=(r_1,\dots, r_n)$ with $\frac{1}{r_i}=\frac{1}{q_i}-\frac{1}{r}\leq 1$ that
\begin{align}
    \|X\|_{(\cQ)}&=\inf_{\underset{\Pi_{\tilde A}=\Pi_{XX^*}, \Pi_{\tilde B}=\Pi_{X^*X}}{\tilde A,\tilde B\geq0}}\|\tilde A\|^\frac{1}{2}_{(\cR)}\|\tilde B\|^\frac{1}{2}_{(\cR)}\|\tilde A^{-\frac12}X\tilde B^{-\frac12}\|_{r} 
    \\&=\inf_{\underset{\Pi_{\tilde A}=\Pi_{XX^*}, \Pi_{\tilde B}=\Pi_{X^*X}}{\tilde A,\tilde B\geq0}}\|\tilde A\|^\frac{1}{2}_{(\cR)}\|\tilde B\|^\frac{1}{2}_{(\cR)}\inf_{\underset{\Pi_a={\Pi_{XX^*}}, \Pi_{b}=\Pi_{X^*X}}{a,b\geq0}}\|a\|_{2u}\|b\|_{2u}\|a^{-1}\tilde A^{-\frac12}X\tilde B^{-\frac12}b^{-1}\|_{s} \\
    &= \inf_{\underset{\Pi_{\tilde{A}}=\Pi_{CC^*}=\Pi_{XX^*},\,\Pi_{\tilde{B}}=\Pi_{DD^*}=\Pi_{X^*X}}{\tilde A,\tilde B\geq0,\, C,D}}\!\!\!\!\!\!\!\!\!\!\!\!\!\!\!\|\tilde A\|^\frac{1}{2}_{(\cR)}\|\tilde B\|^\frac{1}{2}_{(\cR)}\|\tilde{A}^{-\frac12}CC^*\tilde{A}^{-\frac12}\|^\frac{1}{2}_u\|\tilde{B}^{-\frac12}D^*D\tilde{B}^{-\frac12}\|^\frac{1}{2}_u\|C^{-1}XD^{-1}\|_s \\
    &= \inf_{\underset{\Pi_{CC^*}=\Pi_{XX^*}, \, \Pi_{D^*D}=\Pi_{X^*X}}{C,D}}\|CC^*\|^\frac{1}{2}_{(\cS)}\|D^*D\|^\frac{1}{2}_{(\cS)}\|C^{-1}XD^{-1}\|_s 
\end{align} where in the second line we used \cref{prop:SpVariationalFormulas} with $\frac{1}{u}=\frac{1}{r}-\frac{1}{s}$. In the third we set $C^{-1}=a^{-1}\tilde{A}^{-\frac12}$ and $D^{-1}=\tilde{B}^{-\frac12}b^{-1}$ which implies, respectively, $aa^*=\tilde{A}^{-\frac12}CC^*\tilde{A}^{-\frac12}$ and $b^*b=\tilde{B}^{-\frac12}D^*D\tilde{B}^{-\frac12}$ and $\Pi_{CC^*}=\Pi_{XX^*}$, $\Pi_{D^*D}=\Pi_{X^*X}$. The last line follows by another use of the previously proven version for self-adjoint $CC^*, D^*D$ and compatible indices in the form of
\begin{align}
   \inf_{\underset{\Pi_{\tilde{A}}=\Pi_{CC^*}}{\tilde{A}\geq0}} \|\tilde A\|^\frac{1}{2}_{(\cR)}\|\tilde{A}^{-\frac12}CC^*\tilde{A}^{-\frac12}\|^\frac{1}{2}_u = \|CC^*\|^\frac{1}{2}_{(\cS)},
\end{align} where we used that $\cS=(s_1,\dots, s_n)$ with $\frac{1}{s_i}=\frac{1}{q_i}-\frac{1}{s}$ is compatible with $u$ since $[0,1]\ni \frac{1}{r_i}=\frac{1}{q_i}-\frac{1}{r} = \frac{1}{s_i}-\frac{1}{u}$.

The version of \eqref{equ:Multi.Thm.Pisier.like.sup} for incompatible $t\leq q_i,\, t\leq 1$ follows similarly. Let $r:=\min\{1,\min_i\{q_i\}\}\geq t$, then $r$ is compatible with $\cQ$ and by the above proof we have for $\cR=(r_1,\dots, r_n)$ with $\frac{1}{r_i}=\frac{1}{r}-\frac{1}{q_i}\leq 1$ that
\begin{align}
    \|X\|_{(\cQ)}&=\sup_{\underset{\Pi_{\tilde A}=\Pi_{XX^*}, \Pi_{\tilde B}=\Pi_{X^*X}}{\tilde A,\tilde B\geq0}}\|\tilde A\|^{-\frac{1}{2}}_{(\cR)}\|\tilde B\|^{-\frac{1}{2}}_{(\cR)}\|\tilde A^{\frac12}X\tilde B^{\frac12}\|_{r} \\
    &= \sup_{\underset{\Pi_{\tilde{A}}=\Pi_{C^*C}=\Pi_{XX^*},\,\Pi_{\tilde{B}}=\Pi_{DD^*}=\Pi_{X^*X}}{\tilde A,\tilde B\geq0,\, C,D}}\!\!\!\!\!\!\!\!\!\!\!\!\!\!\!\|\tilde A\|^{-\frac{1}{2}}_{(\cR)}\|\tilde B\|^{-\frac{1}{2}}_{(\cR)}\|\tilde{A}^{-\frac12}C^*C\tilde{A}^{-\frac12}\|^{-\frac{1}{2}}_u\|\tilde{B}^{-\frac12}DD^*\tilde{B}^{-\frac12}\|^{-\frac{1}{2}}_u\|CXD\|_t \\
    &= \sup_{\underset{\Pi_{C^*C}=\Pi_{XX^*}, \, \Pi_{DD^*}=\Pi_{X^*X}}{C,D}}\|C^*C\|^{-\frac{1}{2}}_{(\cT)}\|DD^*\|^{-\frac{1}{2}}_{(\cT)}\|CXD\|_t 
\end{align} where in the second line we used \cref{prop:SpVariationalFormulas} with $\frac{1}{u}=\frac{1}{t}-\frac{1}{r}$. In the third we set $C=a\tilde{A}^{\frac12}$ and $D=\tilde{B}^{\frac12}b$ which implies, respectively, $a^*a=\tilde{A}^{-\frac12}C^*C\tilde{A}^{-\frac12}$ and $bb^*=\tilde{B}^{-\frac12}DD^*\tilde{B}^{-\frac12}$ and $\Pi_{C^*C}=\Pi_{XX^*}$, $\Pi_{DD^*}=\Pi_{X^*X}$. The last line follows by another use of the previously proven version for self-adjoint $C^*C, DD^*$ and compatible indices in the form of
\begin{align}
   \sup_{\underset{\Pi_{\tilde{A}}=\Pi_{C^*C}}{\tilde{A}\geq0}} \|\tilde A\|^{-\frac{1}{2}}_{(\cR)}\|\tilde{A}^{-\frac12}C^*C\tilde{A}^{-\frac12}\|^{-\frac{1}{2}}_u = \bigg(\inf_{\underset{\Pi_{\tilde{A}}=\Pi_{C^*C}}{\tilde{A}\geq0}} \|\tilde A\|_{(\cR)}\|\tilde{A}^{-\frac12}C^*C\tilde{A}^{-\frac12}\|_u\bigg)^{-\frac{1}{2}} = \|C^*C\|^{-\frac{1}{2}}_{(\cT)},
\end{align} where we used that $\cT=(t_1,\dots, t_n)$ with $\frac{1}{t_i}=\frac{1}{t}-\frac{1}{q_i}$ is compatible with $u$ since $[0,1]\ni \frac{1}{r_i}=\frac{1}{r}-\frac{1}{q_i} = \frac{1}{t_i}-\frac{1}{u}$.

The symmetrization argument clearly also applies in these cases, since for $X=X^*$ we get $\tilde{A}=\tilde{B}$ in both cases and by \cref{prop:SpVariationalFormulas} $a=b^*$.

\medskip

It remains to justify \Cref{rem:sup.arbitrary}, i.e.\ that
\begin{align}\label{equ:sup.arbitrary}
    \|AXB\|_t\leq\|A^*A\|^{\frac12}_{(\cT)}\|BB^*\|^{\frac12}_{(\cT)}\|X\|_{(\cQ)}
\end{align}
holds for all $X$ and all operators $A,B$. We first note that for every multi-indexed Schatten space $\cP$ with positive indices and every $P\geq0$ we have $\limsup_{\delta\downarrow0}\|P+\delta\1\|_{(\cP)}\leq\|P\|_{(\cP)}$. Indeed, let $s=\max_ip_i$, so that the associated collection $\cS$ is normed. For $\varepsilon>0$, the infimum formula \eqref{equ:Multi.Thm.Pisier.like.inf} with symmetric optimizers gives some $\tilde A\geq0$ with $\Pi_{\tilde A}=\Pi_P$ and $\|\tilde A\|_{(\cS)}\|\tilde A^{-\frac12}P\tilde A^{-\frac12}\|_s\leq\|P\|_{(\cP)}+\varepsilon$. The full-rank operators $\tilde A_\delta:=\tilde A+\sqrt\delta(\1-\Pi_P)$ are admissible for $P+\delta\1$, and
\begin{align}
    \|P+\delta\1\|_{(\cP)}&\leq\|\tilde A_\delta\|_{(\cS)}\big\|\tilde A^{-\frac12}P\tilde A^{-\frac12}+\delta\tilde A^{-1}+\sqrt\delta(\1-\Pi_P)\big\|_s\\&\xrightarrow{\delta\to0}\|\tilde A\|_{(\cS)}\|\tilde A^{-\frac12}P\tilde A^{-\frac12}\|_s\leq\|P\|_{(\cP)}+\varepsilon.
\end{align}

\noindent\emph{Positive operators.} Let $P\geq0$ and let $A$ be arbitrary, and set $Z:=A^*A$. Since $APA^*$ and $P^{\frac12}ZP^{\frac12}$ have the same non-zero spectrum, and $\|\cdot\|_t$ is monotone on positive operators, we get for $\varepsilon,\delta>0$
\begin{align}
    \|APA^*\|_t&=\|P^{\frac12}ZP^{\frac12}\|_t\leq\|(Z+\varepsilon\1)^{\frac12}P(Z+\varepsilon\1)^{\frac12}\|_t\\&\leq\|(Z+\varepsilon\1)^{\frac12}(P+\delta\1)(Z+\varepsilon\1)^{\frac12}\|_t\leq\|Z+\varepsilon\1\|_{(\cT)}\|P+\delta\1\|_{(\cQ)}.
\end{align}
The last step is \eqref{equ:Multi.Thm.Pisier.like.sup} for the full-rank operator $P+\delta\1$, with the admissible full-rank choice $A=B=(Z+\varepsilon\1)^{\frac12}$. Letting $\delta,\varepsilon\to0$ and using the upper semicontinuity above for $\cP=\cQ$ and $\cP=\cT$ yields
\begin{align}\label{equ:sup.arbitrary.pos}
    \|APA^*\|_t\leq\|A^*A\|_{(\cT)}\|P\|_{(\cQ)}\qquad\text{for all }P\geq0\text{ and all }A.
\end{align}

\noindent\emph{General operators.} Let $X$ be arbitrary. By the infimum formula \eqref{equ:Multi.Thm.Pisier.like.inf} with $s=\infty$ gives $\cS=\cQ$, for every $\varepsilon>0$ there are $C,D$ with $X=CYD$, $Y=C^{-1}XD^{-1}$, and $\|CC^*\|^{\frac12}_{(\cQ)}\|D^*D\|^{\frac12}_{(\cQ)}\|Y\|_\infty\leq\|X\|_{(\cQ)}+\varepsilon$. The generalized Hölder inequality with $\frac1t=\frac1{2t}+\frac1\infty+\frac1{2t}$ and \eqref{equ:sup.arbitrary.pos}, applied to $P=CC^*$ with $A$ and to $P=D^*D$ with $B^*$, give
\begin{align}
    \|AXB\|_t\leq\|AC\|_{2t}\|Y\|_\infty\|DB\|_{2t}&=\|ACC^*A^*\|^{\frac12}_t\|B^*D^*DB\|^{\frac12}_t\|Y\|_\infty\\&\leq\|A^*A\|^{\frac12}_{(\cT)}\|BB^*\|^{\frac12}_{(\cT)}\big(\|X\|_{(\cQ)}+\varepsilon\big),
\end{align}
and $\varepsilon\to0$ proves \eqref{equ:sup.arbitrary}.
\end{proof}

Another structural factorization result we can derive as a consequence of the previous \Cref{thm:Multi.Thm.Pisier.like} is the following, which should be regarded as a multi-indexed extension of \cite[Theorem 1.3]{Kochanowski.2026} and as a quasi-norm extension of \Cref{thm:VariationalSpOperatorspace}.

\begin{theorem}[Multi-indexed relational consistency]\label{thm:multi.quasi.relational.consistency}
    Let $\cQ$ be a multi-indexed Schatten space and $q\leq p$ compatible indices which are compatible with $\cQ$, then with $\frac{1}{r}=\frac{1}{q}-\frac{1}{p}$ it holds that
    \begin{align}
        \|X\|_{(q,\cQ)} &= \inf_{\underset{\Pi_a=L_X,\, \Pi_b=R_X}{a,b\geq0}}\|a\|_{2r}\|b\|_{2r}\|a_1^{-1}Xb_1^{-1}\|_{(p,\cQ)}, \\
         \|X\|_{(p,\cQ)} &= \sup_{\underset{\Pi_a=L_X,\, \Pi_b=R_X}{a,b\geq0}}\|a\|^{-1}_{2r}\|b\|^{-1}_{2r}\|a_1Xb_1\|_{(q,\cQ)},
    \end{align} 
    where we suppressed explicitly writing $\1$-operators and $a^{\mp1}_1, b^{\mp1}_1$ act only on the first subsystem. When $q,p,\cQ$ are negative indices we assume $X>0$.  \newline 
    Moreover, when $X=X^*$ one may take $a=b^*$.
\end{theorem}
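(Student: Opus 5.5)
The plan is to reduce the relational consistency to the factorization formula of \Cref{thm:Multi.Thm.Pisier.like}, mirroring the infimum manipulation used in its induction step. Fix $s\geq\max\{p,\max_i q_i\}$ compatible with all indices, e.g.\ $s=\max\{p,q_i\}$. Apply \eqref{equ:Multi.Thm.Pisier.like.inf} to the inner quasi-norm $\|a_1^{-1}Xb_1^{-1}\|_{(p,\cQ)}$. This gives an infimum over $A,B$ of $\|AA^*\|^{1/2}_{(s_p,\cS)}\|B^*B\|^{1/2}_{(s_p,\cS)}\|A^{-1}a_1^{-1}Xb_1^{-1}B^{-1}\|_s$, with $\frac1{s_p}=\frac1p-\frac1s$ and $\cS$ as in the theorem.

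Next substitute $C=a_1A$ and $D=Bb_1$. The support conditions $\Pi_a=L_X$, $\Pi_b=R_X$ make this a bijection onto admissible $C,D$ with $\Pi_{CC^*}=\Pi_{XX^*}$. The expression becomes
\[
\inf_{C,D}\Big(\inf_a \|a\|_{2r}\|a_1^{-1}CC^*a_1^{-1}\|^{1/2}_{(s_p,\cS)}\Big)\Big(\inf_b\|b\|_{2r}\|b_1^{-1}D^*Db_1^{-1}\|^{1/2}_{(s_p,\cS)}\Big)\|C^{-1}XD^{-1}\|_s ,
\]
after commuting the infima. The inner infimum over $a$ (after rescaling $a\mapsto a^{1/2}$, so it is $\|a\|_r$ paired with $a^{-1/2}$) is exactly the recursive definition \Cref{def:GeneralMultiIndexed}, or, in the normed regime, \Cref{thm:VariationalSpOperatorspace}(i) with operator space $\cX=\cS_{\cS}$, applied to the positive operator $CC^*$. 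Since $\frac1{s_q}-\frac1{s_p}=\frac1q-\frac1p=\frac1r$, this yields $\|CC^*\|_{(s_q,\cS)}^{1/2}$. Applying \eqref{equ:Multi.Thm.Pisier.like.inf} again for the collection $(q,\cQ)$ gives $\|X\|_{(q,\cQ)}$. Because $(s_q,\cS)$ and $(s_p,\cS)$ are normed, the two-index step is legitimate via \Cref{thm:VariationalSpOperatorspace}, and the support restrictions there match ours.

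For the supremum formula I would do the same with \eqref{equ:Multi.Thm.Pisier.like.sup}, taking $t=\min\{1,q,q_i\}$. Write $\|a_1Xb_1\|_{(q,\cQ)}$ as a supremum over $A,B$ of $\|A^*A\|^{-1/2}_{(t_q,\cT)}\cdots\|Aa_1Xb_1B\|_t$, and set $C=Aa_1$. The resulting inner supremum over $a$ of $\|a\|_{2r}^{-1}\|a_1^{-1}C^*Ca_1^{-1}\|^{-1/2}_{(t_q,\cT)}$ is the reciprocal of an infimum. Since $\frac1{t_p}-\frac1{t_q}=\frac1q-\frac1p=\frac1r$, that infimum equals $\|C^*C\|_{(t_p,\cT)}$ by \Cref{thm:VariationalSpOperatorspace}(i), now with $t_p\leq t_q$ and both normed. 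Two suprema commute, so this closes the argument.

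The main obstacle is the bookkeeping of supports. First, $a_1^{-1}$ can only be absorbed when $\Pi_{CC^*}\leq L_X\otimes\1=\Pi_{a_1}$. Second, when optimizing over $a$ for fixed $C$, the support condition must be $L_{CC^*}=L_X$, and this holds since $\Pi_{CC^*}=\Pi_{XX^*}$. In the supremum case there is a further issue: $C=Aa_1$ ranges over operators with right support inside $L_X$, while \eqref{equ:Multi.Thm.Pisier.like.sup} requires $\Pi_{C^*C}=\Pi_{XX^*}$. I would handle this with \Cref{rem:sup.arbitrary}, which lifts support restrictions in the supremum, together with restricting the $a$-supremum to the marginal support.

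Two cases remain. For self-adjoint $X$, I would run the symmetrization argument from the proof of \Cref{thm:Multi.Thm.Pisier.like}, or choose $C=D^*$ there and observe that the induced $a,b$ coincide. For negative indices, \Cref{prop:negative.multi.indexed} converts inf to sup. The paper notes this proposition relies on the present self-adjoint case only for positive indices, so the argument is not circular.
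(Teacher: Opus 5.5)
Your proposal is correct and is essentially the paper's proof read in the opposite direction. Both arguments expand the inner quasi-norm via \Cref{thm:Multi.Thm.Pisier.like} with a compatible bound $s$ (resp.\ $t\leq 1$) and substitute $C=a_1A$ (resp.\ $C=Aa_1$). Both then evaluate the resulting $a$-optimization by \Cref{thm:VariationalSpOperatorspace} on the normed pair $(s_q,\cS),(s_p,\cS)$ (resp.\ $(t_p,\cT),(t_q,\cT)$), handle the self-adjoint case via $C=D^*$, and handle the negative case via \Cref{prop:negative.multi.indexed}.

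One correction: this inner step is \Cref{thm:VariationalSpOperatorspace} with $\cX=\cS_{\cS}$, not the recursive \Cref{def:GeneralMultiIndexed}, which merges the first system into the second. Also, your appeal to \Cref{rem:sup.arbitrary} for the support mismatch created by $C=Aa_1$ in the supremum case makes explicit a point the paper leaves implicit.
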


\begin{remark}
In the single-indexed special case $(\cQ)=(q)$ this recovers \cite[Theorem 1.3]{Kochanowski.2026} and when $(\cQ)$ is some normed multi-indexed Schatten space and $q,p\in[1,\infty]$ it reproduces \Cref{thm:VariationalSpOperatorspace}. Essentially it tells us that as far as factorization formulas go, quasi-norms satisfy the same structural properties as Pisier's operator-valued Schatten norms.
\end{remark}

\begin{proof}
   We will prove this as a consequence of both \Cref{thm:Multi.Thm.Pisier.like} and the factorization formulas for single-indexed operator-valued Schatten spaces. Take $\cQ,q,p$ from the theorem. Without loss of generality we may assume all indices $q,p,\cQ$ to be positive. In the negative case the result follows directly via \Cref{prop:negative.multi.indexed}. Let $\frac{1}{r}=\frac{1}{q}-\frac{1}{p}$, $\frac{1}{r_q}=\frac{1}{q}-\frac{1}{s}$, and $\frac{1}{r_p}=\frac{1}{p}-\frac{1}{s}$ for some suitable compatible $s\geq q,p,\cQ$. Since $p,q$ are compatible with $\cQ$ it follows that $s$ can be chosen compatible with $\cQ$ as well and hence $\cR=(r_i)_i$ with $\frac{1}{r_i}=\frac{1}{q_i}-\frac{1}{s}\in[0,1]$ is normed. Then by assumption $1\leq r_q\leq r_p\leq \infty$ and by \Cref{thm:VariationalSpOperatorspace}
   \begin{align}
      \|AA^*\|_{(r_q,\cR)} = \inf_{\underset{\Pi_a=L_{AA^*}=R_{AA^*}}{a\geq0} } \|a\|^2_{2r} \|a_1^{-1}AA^*a^{-1}_1\|_{(r_p,\cR)}.
   \end{align}
   This, together with \Cref{thm:Multi.Thm.Pisier.like} implies 
   \begin{align}
       \|X\|_{(q,\cQ)} &= \inf_{A,B}\|AA^*\|^\frac{1}{2}_{(r_q,\cR)}\|B^*B\|^\frac{1}{2}_{(r_q,\cR)} \|A^{-1}XB^{-1}\|_s \\
       &=\inf_{A,B}\inf_{a\geq0} \|a\|_{2r} \|a_1^{-1}AA^*a^{-1}_1\|^\frac{1}{2}_{(r_p,\cR)}\inf_{b\geq0} \|b\|_{2r} \|b_1^{-1}B^*Bb^{-1}_1\|^\frac{1}{2}_{(r_p,\cR)} \|A^{-1}XB^{-1}\|_s \\
       &= \inf_{a,b\geq0}\|a\|_{2r}\|b\|_{2r}\inf_{C,D}\|CC^*\|^\frac{1}{2}_{(r_p,\cR)}\|D^*D\|^\frac{1}{2}_{(r_p,\cR)}\|C^{-1}a_1^{-1}Xb_1^{-1}D^{-1}\|_{s} \\
       &= \inf_{a,b\geq0}\|a\|_{2r}\|b\|_{2r}\|a_1^{-1}Xb_1^{-1}\|_{(p,\cQ)},
   \end{align} which proves the first equality. Here for notational simplicity we dropped writing the support conditions explicitly and in the second to last line defined $a_1^{-1}A=C$.
   For the other equality take a suitable $0<t\leq 1$ with $t\leq q,p,\cQ,1$ and define $\frac{1}{r^\prime_q}=\frac{1}{t}-\frac{1}{q}$ and $\frac{1}{r^\prime_p}=\frac{1}{t}-\frac{1}{p}$, then $r^\prime_p\leq r^\prime_q$ and as above, by \Cref{thm:VariationalSpOperatorspace} with $\cR^\prime=(r^\prime_i)_i$ with $\frac{1}{r^\prime_i}=\frac{1}{t}-\frac{1}{q_i}$ we have
    \begin{align}
      \|A^*A\|^{-1}_{(r^\prime_p,\cR^\prime)} = \left(\inf_{\underset{\Pi_a={L_{A^*A}}}{a\geq0}}\!\!\! \|a\|^{2}_{2r} \|a^{-1}_1A^*Aa^{-1}_1\|_{(r^\prime_q,\cR^\prime)}\right)^{\!-1}\!\!\! = \!\!\!\sup_{\underset{\Pi_a={L_{A^*A}}}{a\geq0}}\!\!\!\|a\|^{-2}_{2r}\|a^{-1}_1A^*Aa^{-1}_1\|^{-1}_{(r^\prime_q,\cR^\prime)},
   \end{align} which together with the supremum formula of \Cref{thm:Multi.Thm.Pisier.like} implies
   \begin{align}
       \|X\|_{(p,\cQ)} &= \sup_{A,B}\|A^*A\|^{-\frac{1}{2}}_{(r^\prime_p,\cR^\prime)}\|BB^*\|^{-\frac{1}{2}}_{(r^\prime_p,\cR^\prime)} \|AXB\|_t  \\ &= \sup_{A,B}\sup_{a\geq0}\|a\|^{-1}_{2r}\|a^{-1}_1A^*Aa^{-1}_1\|^{-\frac{1}{2}}_{(r^\prime_q,\cR^\prime)}\sup_{b\geq0}\|b\|^{-1}_{2r}\|b^{-1}_1BB^*b^{-1}_1\|^{-\frac{1}{2}}_{(r^\prime_q,\cR^\prime)} \|AXB\|_t \\
       &= \sup_{a,b\geq0} \|a\|^{-1}_{2r}\|b\|^{-1}_{2r} \sup_{C,D}\|C^*C\|^{-\frac{1}{2}}_{(r^\prime_q,\cR^\prime)}\|DD^*\|^{-\frac{1}{2}}_{(r^\prime_q,\cR^\prime)}\|Ca_1Xb_1D\|_t \\
       &=\sup_{a,b\geq0} \|a\|^{-1}_{2r}\|b\|^{-1}_{2r} \|a_1Xb_1\|_{(q,\cQ)},
   \end{align} which proves the other equality. Here we defined $C=Aa_1^{-1}, D=b_1^{-1}B$ in the second to last line.

   \noindent When $X=X^*$, then in the above derivation we may take $A=B^*$ by \Cref{thm:Multi.Thm.Pisier.like}, yielding the claim in that case. 
\end{proof}

Having established two foundational variational formulas we consider now some of their consequences in terms of properties of multi-indexed Schatten spaces.

\subsubsection{Some Structural Properties of Multi-Indexed Quasi-Norms}\label{sub:structural}

The first one is the generalization of \cite[Corollary 3.2]{Kochanowski.2026} to the multi-indexed setting.

\begin{corollary}[Partial traces of multi-indexed norms]\label{cor:partial.trace.1}
Let $\cQ=(q_1,\dots, q_n)$ be some multi-indexed Schatten space compatible with index $1$ and let $X\geq0$, then
\begin{align}
    \|X\|_{(\cQ,1)}= \|\tr_{n+1}[X]\|_{(\cQ)}.
\end{align}
\end{corollary}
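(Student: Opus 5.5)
We argue by induction on $n=|\cQ|$, peeling off the first index with \Cref{def:GeneralMultiIndexed}. The last index $1$ stays attached to the innermost system throughout, and the operators that are optimized over never act on it.

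\emph{Base case $n=1$.} Here the claim is $\|X\|_{(q,1)}=\|\tr_2[X]\|_q$ for $X\geq0$ and $q$ compatible with $1$, i.e.\ $q\geq\frac12$. This is exactly the two-indexed statement \cite[Corollary 3.2]{Kochanowski.2026}, since \Cref{def:GeneralMultiIndexed} recovers the two-indexed quasi-norms. For a self-contained argument we would instead use the definition with $a=b\geq0$ on $\cH_1$:
\begin{align}
\|X\|_{(q,1)}=\opt_{a}\|a\|_{2r}^{\mp2}\|a^{\pm1}_1Xa^{\pm1}_1\|_1,\qquad \|a^{\pm1}_1Xa^{\pm1}_1\|_1=\Tr\big[a^{\pm1}\tr_2[X]a^{\pm1}\big],
\end{align}
where the equality holds because the operator is positive. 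The right-hand side is then precisely the single-indexed variational formula of \Cref{prop:SpVariationalFormulas} for $\|\tr_2[X]\|_q$, with $\frac1r=|\frac1q-1|$.

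\emph{Induction step.} Let $\cQ=(q_1,\dots,q_n)$ and $X\geq0$ on $\cH_1\otimes\cdots\otimes\cH_{n+1}$. By the definition,
\begin{align}
\|X\|_{(q_1,\dots,q_n,1)}=\opt_{a,b}\|a\|^{\mp1}_{2r}\|b\|^{\mp1}_{2r}\,\|a_1^{\pm1}Xb_1^{\pm1}\|_{(q_2,q_2,q_3,\dots,q_n,1)} .
\end{align}
Since $X=X^*$, the self-adjoint clause of \Cref{thm:multi.quasi.relational.consistency} lets us take $a=b^*$. We apply it with the first pair $(q_1,q_2)$ and the tail $(q_3,\dots,q_n,1)$; all indices are mutually compatible by assumption. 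Because $a,b\geq0$, this gives $a=b$, so the inner operator $a_1^{\pm1}Xa_1^{\pm1}$ is again positive.

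We then apply the induction hypothesis to the $(n-1)$-indexed space $(q_2,q_3,\dots,q_n)$ on $(\cH_1\otimes\cH_2,\cH_3,\dots,\cH_n)$ with trailing index $1$. Since $a_1$ does not act on $\cH_{n+1}$, we have $\tr_{n+1}[a_1^{\pm1}Xa_1^{\pm1}]=a_1^{\pm1}\tr_{n+1}[X]a_1^{\pm1}$, and therefore
\begin{align}
\|X\|_{(\cQ,1)}=\opt_{a}\|a\|^{\mp2}_{2r}\,\|a_1^{\pm1}\tr_{n+1}[X]a_1^{\pm1}\|_{(q_2,q_2,\dots,q_n)} .
\end{align}
By the definition, again with the symmetric restriction $a=b$ from \Cref{thm:multi.quasi.relational.consistency} applied now to the positive operator $\tr_{n+1}[X]$, this equals $\|\tr_{n+1}[X]\|_{(\cQ)}$.

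\emph{Main obstacle: matching the support constraints.} The optimizations carry the conditions $\Pi_a=L_X$ on one side and $\Pi_a=L_{\tr_{n+1}X}$ on the other, so we must check that these marginal supports on $\cH_1$ agree. For $X\geq0$ we have $L_X=R_X=\Pi_{\tr_{2\cdots n+1}[X]}$. Indeed, $(P\otimes\1)X=X$ for a projection $P$ if and only if $(P^\perp\otimes\1)X(P^\perp\otimes\1)=0$, which is equivalent to $\Tr[P^\perp\tr_{2\cdots n+1}X]=0$. The same criterion applies verbatim to $\tr_{n+1}[X]$, which has the same marginal on $\cH_1$. Hence the admissible sets of $a$ coincide.

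We also need the symmetric reduction $a=b$ to hold with these support conditions. This is part of \Cref{thm:multi.quasi.relational.consistency} as stated, so the chain of equalities goes through, completing the induction.
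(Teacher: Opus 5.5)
Your proposal is correct and matches the paper's proof. Both argue by induction on $|\cQ|$ with base case \cite[Corollary 3.2]{Kochanowski.2026}, peel off the first index via the symmetric ($a=b$) form of \Cref{thm:multi.quasi.relational.consistency}, pull $a_1^{\pm1}$ through $\tr_{n+1}$, and apply the induction hypothesis to $(q_2,\dots,q_n)$ on $\cH_1\otimes\cH_2,\cH_3,\dots,\cH_n$. Your explicit check that $L_X=L_{\tr_{n+1}[X]}$ for $X\geq0$ fills in the support conditions the paper suppresses; the only slip is that the tail in the relational-consistency step should read $(q_2,q_3,\dots,q_n,1)$ rather than $(q_3,\dots,q_n,1)$.
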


\begin{proof}
The proof builds on \Cref{thm:multi.quasi.relational.consistency} and induction over $|\cQ|$. For $|\cQ|=1$ the statement is true by \cite[Corollary 3.2]{Kochanowski.2026}. For $\cQ=(q_1,\dots,q_n)$ set $\cQ^\prime=(q_2,\dots, q_n)$ and observe that by twofold application of \Cref{thm:multi.quasi.relational.consistency} we have
\begin{align}
    \|X\|_{(\cQ,1)} &= \|X\|_{(q_1,\cQ^\prime,1)} = \opt_{a\geq0}\|a\|^{\mp2}_{2r}\|a_1^{\pm1}Xa_1^{\pm1}\|_{(q_2,\cQ^\prime,1)} \\ &\overset{(1)}{=} \opt_{a\geq0}\|a\|^{\mp2}_{2r}\|a_1^{\pm1}\tr_{n+1}[X]a_1^{\pm1}\|_{(q_2,\cQ^\prime)} \\
    &= \|\tr_{n+1}[X]\|_{(q_1,\cQ^\prime)} = \|\tr_{n+1}[X]\|_{(\cQ)}, 
\end{align} where for notational simplicity we omitted writing the support conditions. $(1)$ follows from the induction hypothesis applied to the multi-indexed Schatten space $(q_2,q_3,\dots, q_n)$ of size $n-1$ acting on $\cH_1\otimes\cH_2, \cH_3,\dots, \cH_n$.
\end{proof}

Another important property is Hölder duality for the multi-indexed Schatten norms, i.e.\ for indices in $[1,\infty]$. We get this effectively for free since in the normed regime \Cref{def:GeneralMultiIndexed} coincides with certain operator-valued Schatten norms of Pisier, via \eqref{equ:iterated.pisier}.

\begin{proposition}[Hölder duality \cite{book:Pisier.1998, Beigi.2023}]\label{prop:MultiHölderDual}
      Let $\cQ=(q_1,\dots,q_n)$ be a multi-indexed Schatten space s.t. $1\leq q_i$. Let $q_i^\prime$ be their Hölder dual indices and $\cQ^\prime=(q_1^\prime,\dots,q_n^\prime)$. Then for any $X$ it holds that
    \begin{align}
        \|X\|_{(\cQ)}=\sup_{\underset{}{Y}}|\Tr[Y^*X]|\,\|Y\|^{-1}_{(\cQ^\prime)}
    \end{align}
Note that in the above statement the Hölder dual indices also parametrize a norm as $q_i^\prime\geq 1$.
\end{proposition}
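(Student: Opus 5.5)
The plan is to reduce the statement to Pisier's operator-space duality \eqref{eq:duality-op-space}, iterated along the nesting, after identifying $\|\cdot\|_{(\cQ)}$ with the norm of the iterated operator-valued Schatten space of \Cref{def:Multi.Pisier.Schatten}. That identification is already available. For $q_i\in[1,\infty]$, iterating \Cref{def:GeneralMultiIndexed} gives exactly the concatenated formula \eqref{equ:iterated.pisier}. By \Cref{thm:VariationalSpOperatorspace}, applied with $\cX$ the inner operator-valued Schatten space at each stage, that formula equals $\|X\|_{\cS_{q_1}[\cH_1;\cS_{q_2}[\cH_2;\dots\cS_{q_n}(\cH_n)\dots]]}$. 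The same applies to $\cQ^\prime$, since every $q_i^\prime\geq1$. Moreover, any collection of indices in $[1,\infty]$ is automatically compatible, so $\cQ^\prime$ is an admissible multi-indexed Schatten space.

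Next I prove, by induction on $n$, that the dual Banach space of $\cS_{\cQ}$ is $\cS_{\cQ^\prime}$ isometrically, with pairing $\langle Y,X\rangle=\Tr[Y^*X]$ (or its bilinear version $\Tr[YX]$, adjusted by taking adjoints). For $n=1$ this is ordinary Schatten duality. For the induction step, set $\cX=\cS_{(q_2,\dots,q_n)}$. Pisier's relation \eqref{eq:duality-op-space} gives $(\cS_{q_1}[\cH_1;\cX])^*=\cS_{q_1^\prime}[\cH_1;\cX^*]$ completely isometrically. The induction hypothesis gives $\cX^*\cong\cS_{(q_2^\prime,\dots,q_n^\prime)}$, and this must hold completely isometrically, not merely isometrically, because the outer $\cS_{q_1^\prime}[\cH_1;\cdot]$ construction depends on the operator space structure of its argument. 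I therefore state the induction hypothesis at the level of operator spaces: the operator space dual of $\cS_{(q_2,\dots,q_n)}$ is $\cS_{(q_2^\prime,\dots,q_n^\prime)}$, with the trace pairing.

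This step is supplied by Pisier's theory \cite{book:Pisier.1998}, see also \cite{Beigi.2023}. There, duality of $\cS_p[\cH;\cX]$ is established together with the fact that the trace pairing implements it. The $\cX=\C$ case is the statement that the operator space dual of $\cS_p$ (with its natural operator space structure) is $\cS_{p^\prime}$, again via the trace pairing. Combining these gives the complete isometry needed for the induction.

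With the duality in hand, the formula follows from Hahn--Banach in finite dimensions:
\begin{align}
\|X\|_{(\cQ)}=\sup_{\|\varphi\|\leq1}|\varphi(X)|=\sup_{Y}|\Tr[Y^*X]|\,\|Y\|^{-1}_{(\cQ^\prime)}.
\end{align}
Here every functional $\varphi$ is represented as $\varphi(X)=\Tr[Y^*X]$, using that $(Y\mapsto Y^*)$ intertwines the bilinear and sesquilinear pairings isometrically. That intertwining holds because $\|Y^*\|_{(\cQ^\prime)}=\|Y\|_{(\cQ^\prime)}$, which in turn follows from the adjoint-symmetry of the defining formula, swapping $a$ and $b$.

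The main obstacle is the bookkeeping that the identification of the dual is compatible with the specific pairing $\Tr[Y^*X]$ at every level of the nesting, in particular that the operator space structures match. I would handle this by relying on Pisier's statement that for finite-dimensional $\cH$ the duality is given by the (partial) trace pairing extended by the $\cX$–$\cX^*$ pairing. Iterating that pairing collapses it to the full trace.
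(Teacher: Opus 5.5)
Your proposal is correct and follows the same route as the paper: for indices in $[1,\infty]$ you identify $\|\cdot\|_{(\cQ)}$ with Pisier's iterated operator-valued Schatten norm via \eqref{equ:iterated.pisier}, and then invoke Pisier's duality. The paper just cites \cite[Lemma 1.8]{book:Pisier.1998} and \cite[Proposition 4.3 (ii)]{Beigi.2023} for that last step, whereas you unpack it into an explicit induction over the nesting using the completely isometric duality \eqref{eq:duality-op-space} at each level, which is the right bookkeeping.
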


\begin{proof}
    This follows directly from the fact that for the indices $q_i\in[1,\infty]$ the norm $\|X\|_{(\cQ)}$ is an operator-valued Schatten norm of Pisier's by \eqref{equ:iterated.pisier}. For these norms this duality holds by \cite[Lemma 1.8]{book:Pisier.1998}, see also \cite[Proposition 4.3 (ii)]{Beigi.2023}. 
\end{proof}

The next and non-trivial property is the reverse Hölder duality which complements the above regular Hölder duality statement.
Analogously to the two-indexed setting, the reverse Hölder duality will be a direct consequence of \eqref{equ:Multi.Thm.Pisier.like.inf} of \Cref{thm:Multi.Thm.Pisier.like}.

\begin{corollary}[Reverse Hölder duality]\label{cor:reverse.multi.hölder}
    Let $\cQ=(q_1,\dots,q_n)$ be a multi-indexed Schatten space s.t. $q_i\leq 1$. Let $q_i^\prime$ be their Hölder dual indices, and $-\cQ^\prime=(-q_1^\prime,\dots,-q_n^\prime)$. Then for any positive semidefinite $X\geq0$ ($X>0$ when $q_i<0$) it holds that
    \begin{align}
        \|X\|_{(\cQ)}=\inf_{\underset{}{Y>0}}\Tr[Y^*X]\,\|Y^{-1}\|_{(-\cQ^\prime)}.
    \end{align}
In particular, the reverse Hölder inequality
    \begin{align}\label{equ:reverse.multi.hölder.negative.ineq}
        \Tr[YX]\geq \|X\|_{(\cQ)}\,\|Y\|_{(\cQ^\prime)}
    \end{align}
    holds for all $X\geq0, Y>0$ ($X,Y>0$ for $q_i<0$).
    
Note that in the above statement the Hölder dual indices are $q_i^\prime<0$ for $0< q_i\leq 1$ and vice versa.
\end{corollary}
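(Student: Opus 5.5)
The plan is to reduce the statement, via the infimum formula \eqref{equ:Multi.Thm.Pisier.like.inf} of \Cref{thm:Multi.Thm.Pisier.like} with $s=1$, to ordinary Hölder duality for the normed space $\cS$, and then repackage that duality as a reverse Hölder expression. Treat first $0<q_i\leq1$ and $X\geq0$. Since $q_i\leq1=s$, take $s=1$, so $\frac1{s_i}=\frac1{q_i}-1=-\frac1{q_i'}$, i.e.\ $s_i=-q_i'\in[1,\infty]$. This holds because $q_i'<0$ and $\frac1{q_i}-1\in[0,\infty)$. Compatibility of $1$ with $\cQ$ is not needed, since the theorem also covers incompatible $s$. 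By the self-adjoint case of \Cref{thm:Multi.Thm.Pisier.like} we may take $A=B^*\geq0$, which gives
\begin{align}
\|X\|_{(\cQ)}=\inf_{\tilde A\geq0,\ \Pi_{\tilde A}=\Pi_X}\|\tilde A\|_{(-\cQ')}\,\Tr[\tilde A^{-1/2}X\tilde A^{-1/2}]=\inf_{\tilde A}\|\tilde A\|_{(-\cQ')}\Tr[\tilde A^{-1}X].
\end{align}
Here we use $X\geq0$ to write the trace norm as a trace.

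Next set $Y=\tilde A^{-1}$. To pass to $Y>0$, note the two directions separately. First, for full-rank $\tilde A$ the expression is exactly $\Tr[YX]\|Y^{-1}\|_{(-\cQ')}$. Second, for $\tilde A$ supported on $\Pi_X$, perturb to $\tilde A+\delta(\1-\Pi_X)$. The trace term $\Tr[(\tilde A+\delta(\1-\Pi_X))^{-1}X]$ is unchanged, because $X$ lives on $\Pi_X$. The norm term converges to $\|\tilde A\|_{(-\cQ')}$ by continuity of the norm $\|\cdot\|_{(-\cQ')}$, which is Pisier's norm. Conversely, for arbitrary $Y>0$ the inequality $\Tr[YX]\|Y^{-1}\|_{(-\cQ')}\geq\|X\|_{(\cQ)}$ will follow from the reverse Hölder inequality, proved independently below. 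Together these yield the infimum formula, provided that inequality holds for all $Y>0$.

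For the inequality, I would use Hölder duality (\Cref{prop:MultiHölderDual}) for the normed collection $-\cQ'$. Its dual indices are $(-q_i')'$, and for these $\frac{1}{(-q_i')'}=1+\frac1{q_i'}=2-\frac1{q_i}$, so they need not equal $\cQ$. A cleaner route is to apply \eqref{equ:sup.arbitrary}, i.e.\ \Cref{rem:sup.arbitrary}, to positive operators. Alternatively, plug $\tilde A=Y^{-1}$ directly into the infimum formula of \Cref{thm:Multi.Thm.Pisier.like}. That formula is an infimum over \emph{admissible} $\tilde A$ with the support condition, so it must be extended to arbitrary full-rank $\tilde A$. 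I would show that the infimum over full-rank $\tilde A$ is not smaller. For full-rank $\tilde A$ the function $\tilde A\mapsto\|\tilde A\|_{(\cS)}\Tr[\tilde A^{-1}X]$ still dominates $\|X\|_{(\cQ)}$. This should follow by compressing: $\Pi_X\tilde A^{-1}\Pi_X\geq(\Pi_X\tilde A\Pi_X)^{-1}$ on the support, which is the operator-convexity (Schur complement) inequality. In addition, $\|\Pi_X\tilde A\Pi_X\|_{(\cS)}\leq\|\tilde A\|_{(\cS)}$, since multiplying by contractions on the full system is contractive in Pisier's norms.

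This support or compression step is the main obstacle. The contractivity of $\tilde A\mapsto\Pi\tilde A\Pi$ in $\|\cdot\|_{(\cS)}$ holds for a projection acting on all systems jointly only because $\|P\tilde AP\|\leq\|\tilde A\|$ for $P\geq0$ is needed rather than a local-weight statement. I would derive it from the infimum formula of \Cref{thm:Multi.Thm.Pisier.like} with $s=\infty$ applied to $\cS$, combined with $\|PYP\|_\infty\leq\|Y\|_\infty$. The second display then follows by substituting $Y$ and using $\|Y^{-1}\|_{(-\cQ')}=\|Y\|_{(\cQ')}^{-1}$ from \Cref{prop:negative.multi.indexed}. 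For negative $q_i$ and $X>0$, I would reduce to the positive case via \Cref{prop:negative.multi.indexed} and \Cref{cor:multi.Pisier.like.negative}. There the dual indices $q_i'\in(0,1)$ are positive, and the same computation applies with $t=-1$ and the roles of inf/sup handled by inversion.
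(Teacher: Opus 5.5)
Your first half is the paper's argument: the infimum formula with $s=1$ and symmetric optimizers, then $Y=\tilde A^{-1}$, then the perturbation $\tilde A+\delta(\1-\Pi_X)$, which gives $\inf_{Y>0}\Tr[YX]\|Y^{-1}\|_{(-\cQ')}\le\|X\|_{(\cQ)}$. The gap is in the converse direction. There you must show $\|X\|_{(\cQ)}\le\Tr[YX]\|Y^{-1}\|_{(-\cQ')}$ for an arbitrary $Y>0$ when $X$ is rank-deficient, so that $\tilde A=Y^{-1}$ is not admissible.

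Your compression route needs $\|\Pi_X\tilde A\Pi_X\|_{(\cS)}\le\|\tilde A\|_{(\cS)}$ for a projection acting on all systems jointly, and this is false. Multi-indexed norms, Pisier's included, are only bimodule-contractive for operators on the outermost factor. Concretely, take $\cQ=(1,\tfrac12)$, so that $\cS=-\cQ'=(\infty,1)$ and $\|\tilde A\|_{(\infty,1)}=\|\tr_2\tilde A\|_\infty$ for $\tilde A\geq0$. Let $\psi=(|00\rangle+|11\rangle)/\sqrt2$, $\chi=\sqrt{0.7}\,|00\rangle+\sqrt{0.3}\,|11\rangle$, $X=\chi\chi^*$ and $\tilde A=\psi\psi^*+\delta\1$. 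Then $\|\tilde A\|_{(\infty,1)}\to\frac12$, while $\|\Pi_X\tilde A\Pi_X\|_{(\infty,1)}\to0.7\,|\langle\chi|\psi\rangle|^2\approx0.67$. The paper points out exactly this obstruction in the proof of \Cref{thm:Multi.Rev.Triangle}.

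Your proposed derivation via the $s=\infty$ formula is also circular, since it again requires bounding $\|\Pi CC^*\Pi\|_{(\cS)}$ by $\|CC^*\|_{(\cS)}$. The route through \Cref{rem:sup.arbitrary} points the wrong way. It bounds $\|AXB\|_t$ from above by $\|X\|_{(\cQ)}$, which only yields lower bounds on $\|X\|_{(\cQ)}$.

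The fix is to perturb $X$ instead of $\tilde A$. Since $X+\epsilon\1$ has full support, every full-rank $\tilde A=Y^{-1}$ is admissible for it. This gives $\|X+\epsilon\1\|_{(\cQ)}\le\Tr[Y(X+\epsilon\1)]\,\|Y^{-1}\|_{(-\cQ')}$, and you let $\epsilon\to0$ using continuity of the $\kappa$-normable quasi-norm from \Cref{thm:Multi.Quasi.Norms}.

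There is also a smaller error: $s_i=-q_i'\geq1$ holds only when $q_i\geq\frac12$. For smaller $q_i$, the collection $-\cQ'$ is a genuine quasi-norm rather than Pisier's norm. Your continuity claims must therefore rest on \Cref{thm:Multi.Quasi.Norms}.

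In the negative case, $t=-1$ is a valid upper index only if all $q_i\le-1$; take $t=\max\{-1,\max_iq_i\}$. Reducing to the positive case by applying it to $\cQ'\subset(0,1)$ yields only the inequality $\Tr[YX]\geq\|X\|_{(\cQ)}\|Y\|_{(\cQ')}$, not that the infimum over $Y$ is attained. For attainment you need the paper's computation:
\begin{enumerate}
\item the infimum formula of \Cref{cor:multi.Pisier.like.negative};
\item Schatten reverse Hölder for $\|\cdot\|_t$;
\item exchange of the two infima via $Y=A^{-1}ZA^{-1}$;
\item the supremum formula of \Cref{thm:Multi.Thm.Pisier.like} for $\cQ'$ with lower index $t'$, using $\frac1{t_i}=\frac1{q_i}-\frac1t=\frac1{t'}-\frac1{q_i'}$.
\end{enumerate}
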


\begin{remark}
This is the multi-indexed and negative-index generalization of \cite[Proposition 1.1]{Kochanowski.2026}.
\end{remark}

\begin{proof}
Consider first the case in which $0<q_i\leq 1$.
Then the statement is a direct consequence of the infimum formula \eqref{equ:Multi.Thm.Pisier.like.inf} of \Cref{thm:Multi.Thm.Pisier.like} for positive operators $X\geq0$ with $s=1$, which need not be compatible with $\cQ$. Explicitly with $\cQ^\prime=(q_1^\prime,\dots, q_n^\prime)$ s.t. $\frac{1}{q_i}+\frac{1}{q_i^\prime}=1$ 
\begin{align}
    \|X\|_{(\cQ)}
    &\overset{}{=}\inf_{\underset{\Pi_{\tilde{A}}=\Pi_X}{\tilde{A}\geq0}}\|\tilde{A}\|_{(-\cQ^\prime)}\|\tilde{A}^{-\frac{1}{2}}X\tilde{A}^{-\frac{1}{2}}\|_1 \\&= \inf_{\underset{\Pi_{\tilde{A}}=\Pi_X}{\tilde{A}\geq0}}\|\tilde{A}\|_{(-\cQ^\prime)}\Tr[\tilde{A}^{-1}X] \\&\overset{}{=} \inf_{\underset{\Pi_{Y^{-1}}=\Pi_X}{Y^{-1}\geq0}}\|Y^{-1}\|_{(-\cQ^\prime)}\Tr[YX].
\end{align} 
interpreting $Y=(Y^{-1})^{-1}=\tilde{A}^{-1}$ as the generalized Moore-Penrose inverse of $\tilde{A}$.

It remains to pass from $Y^{-1}\geq0$ with $\Pi_{Y^{-1}}=\Pi_X$ to arbitrary $Y^{-1}>0$ and so $Y>0$. On the one hand, for $Y_\lambda^{-1}:=Y^{-1}+\lambda^{-1}(\1-\Pi_X)$ $\Pi_{Y^{-1}}=\Pi_X$ satisfies $Y_\lambda=Y+\lambda(\1-\Pi_X)>0$, so $\Tr[Y_\lambda X]=\Tr[YX]$ and $Y_\lambda^{-1}=Y^{-1}+\lambda^{-1}(\1-\Pi_X)\to Y^{-1}$ as $\lambda\to\infty$. Then,  by continuity of the quasi-norm $\|\cdot\|_{(-\cQ^\prime)}$ which will be implies by the independently proven \Cref{thm:Multi.Quasi.Norms}, the infimum over $Y>0$ is not larger. On the other hand, for any $Y>0$ and $\epsilon>0$ the operator $X+\epsilon\1$ has full support, so that the above formula yields $\|X+\epsilon\1\|_{(\cQ)}\leq\Tr[Y(X+\epsilon\1)]\|Y^{-1}\|_{(-\cQ^\prime)}$. Letting $\epsilon\to0$ and using the continuity of $\|\cdot\|_{(\cQ)}$, again by \Cref{thm:Multi.Quasi.Norms}, gives $\|X\|_{(\cQ)}\leq\Tr[YX]\|Y^{-1}\|_{(-\cQ^\prime)}$. This proves the claimed formula, and \eqref{equ:reverse.multi.hölder.negative.ineq} follows with $\|Y^{-1}\|_{(-\cQ^\prime)}=\|Y\|^{-1}_{(\cQ^\prime)}$ by \Cref{prop:negative.multi.indexed} for $Y>0$.

Now the case $q_i<0$ goes similarly, building on \Cref{cor:multi.Pisier.like.negative} and the reverse Hölder duality for Schatten spaces \eqref{equ:reverse.Hölder.Schatten}. Let $X>0$ and take $\cQ=(q_1,\dots, q_n)$ with $q_i<0$ mutually compatible. Set $t:=\max\big\{-1,\max_i q_i\big\}\in[-1,0)$, then $q_i\leq t$ for all $i$, and $t$ is compatible with every $q_i$. Define the multi-indexed Schatten space $\cT=(t_1,\dots,t_n)$ by
    \begin{align}
        \frac{1}{t_i}:=\frac{1}{q_i}-\frac{1}{t}\in[0,1],
    \end{align}
    and let $t^\prime=\frac{t}{t-1}\in(0,\frac{1}{2}]$ be the Hölder dual of $t$.
By \Cref{cor:multi.Pisier.like.negative} we have
\begin{align}
\|X\|_{(\cQ)}&=\inf_{A>0}\|A^2\|_{(\cT)}\,\|A^{-1}XA^{-1}\|_{t}  \\ &=\inf_{A>0}\|A^2\|_{(\cT)}\,\inf_{Z>0}\Tr[ZA^{-1}XA^{-1}]\|Z\|^{-1}_{t^\prime} \\
&= \inf_{Z>0}\inf_{A>0}\|A^2\|_{(\cT)}\|Z\|_{t^\prime}^{-1}\Tr[A^{-1}ZA^{-1}X] \\
&= \inf_{Y>0}\big(\sup_{A>0}\|A^2\|^{-1}_{(\cT)}\|AYA\|_{t^\prime}\Big)^{-1}\Tr[YX] \\&=\inf_{Y>0}\|Y\|_{(\cQ^\prime)}^{-1}\Tr[YX] =\inf_{Y>0}\Tr[Y^*X]\,\|Y^{-1}\|_{(-\cQ^\prime)}
    \end{align}
where the second equality is \eqref{equ:reverse.Hölder.Schatten} and in the fourth we introduced $Y=A^{-1}ZA^{-1} \Leftrightarrow Z=AYA$. The last line follows via \Cref{thm:Multi.Thm.Pisier.like} since $\frac{1}{t_i}=\frac{1}{q_i}-\frac{1}{t}=\frac{1}{t^\prime}-\frac{1}{q_i^\prime}$.
\end{proof}

Another important property of multi-indexed Schatten quasi-norms is the multiplicativity under tensor products along arbitrary cuts. 

\begin{corollary}[Multiplicativity under tensor products]\label{cor.multi.tensor.multiplicativity}
Let $\cQ, \cP$ be compatible multi-indexed Schatten spaces in the sense that $(\cQ,\cP)$ indexes a multi-indexed Schatten space. Let $X\in\cS_{\cQ}, Y\in\cS_{\cP}$, then it holds that 
\begin{align}
    \|X\otimes Y\|_{(\cQ,\cP)}= \|X\|_{(\cQ)} \|Y\|_{(\cP)},
\end{align} i.e. the multi-indexed Schatten quasi-norms split along arbitrary tensor products.
\end{corollary}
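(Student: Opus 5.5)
We prove the two inequalities $\leq$ and $\geq$ separately, using the two factorization formulas of \Cref{thm:Multi.Thm.Pisier.like} in the positive case and \Cref{prop:negative.multi.indexed} to transfer to negative indices; recall that for negative indices $X,Y>0$, and $X\otimes Y>0$ with $(X\otimes Y)^{-1}=X^{-1}\otimes Y^{-1}$. So assume all indices positive.

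\emph{Upper bound.} Take $s=\max\{\max_i q_i,\max_j p_j\}$, which is compatible with $(\cQ,\cP)$. Then the collection $\cS$ associated to $(\cQ,\cP)$ splits as $(\cS_\cQ,\cS_\cP)$, with $\frac{1}{s_i}=\frac1{q_i}-\frac1s$ and similarly for $\cP$, and it is normed. For $\varepsilon>0$ pick near-optimal $A_X,B_X$ for $X$ in \eqref{equ:Multi.Thm.Pisier.like.inf} with respect to $\cS_\cQ$, and $A_Y,B_Y$ for $Y$ with respect to $\cS_\cP$. Use $A=A_X\otimes A_Y$ and $B=B_X\otimes B_Y$; their supports match those of $X\otimes Y$. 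The Schatten-$s$ factor splits, $\|A^{-1}(X\otimes Y)B^{-1}\|_s=\|A_X^{-1}XB_X^{-1}\|_s\|A_Y^{-1}YB_Y^{-1}\|_s$. For the normed factor, $\|A_XA_X^*\otimes A_YA_Y^*\|_{(\cS_\cQ,\cS_\cP)}=\|A_XA_X^*\|_{(\cS_\cQ)}\|A_YA_Y^*\|_{(\cS_\cP)}$. This is exactly \Cref{cor:Pisier.multi.tensor.splitting} with the operator space $\cX=\cS_{\cS_\cP}$, once one identifies $\|\cdot\|_{(\cS_\cQ,\cS_\cP)}$ with the iterated Pisier norm $\|\cdot\|_{(\cS_\cQ;\cS_{\cS_\cP})}$ via \eqref{equ:iterated.pisier}. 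Infimum $\geq$ product follows, so $\|X\otimes Y\|\leq\|X\|\|Y\|$.

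\emph{Lower bound.} Apply the supremum formula \eqref{equ:Multi.Thm.Pisier.like.sup} with $t=\min\{1,\min_iq_i,\min_jp_j\}$, which is compatible with $(\cQ,\cP)$, so $\cT=(\cT_\cQ,\cT_\cP)$ is normed. Again use product test operators $A=A_X\otimes A_Y$ and $B=B_X\otimes B_Y$. The Schatten-$t$ quasi-norm is multiplicative, and the $\cT$ norms split by the same Pisier corollary, so the supremum is at least $\|X\|_{(\cQ)}\|Y\|_{(\cP)}$. Here \Cref{rem:sup.arbitrary} makes support conditions irrelevant.

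\emph{Main obstacle.} The delicate point is the identification of the normed multi-indexed norm $(\cS_\cQ,\cS_\cP)$ with a Pisier space $\cS_{\cS_\cQ}[\,\cdot\,;\cX]$ over the operator space $\cX=\cS_{\cS_\cP}$, so that \Cref{cor:Pisier.multi.tensor.splitting} applies. This should follow because \Cref{def:Multi.Pisier.Schatten} is nested from the outside in: the innermost block $\cS_{p_1}[\dots\cS_{p_m}]$ is itself an operator space, and the recursion \eqref{equ:iterated.pisier} reproduces \Cref{def:GeneralMultiIndexed}. If some $s_i$ or $t_i$ equals $\infty$, one checks that the corollary still covers it. Otherwise only positivity and supports need minor care.
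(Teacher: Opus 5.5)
Your proposal is correct and follows the paper's proof essentially step for step. The upper bound uses the infimum formula of \Cref{thm:Multi.Thm.Pisier.like} with $s=\max_{i,j}\{q_i,p_j\}$, product test operators, and \Cref{cor:Pisier.multi.tensor.splitting} with $\cX$ the normed space built from $\cP$. The lower bound uses the supremum formula with a compatible $t\leq 1$, negative indices reduce via \Cref{prop:negative.multi.indexed}, and the identification you flag as the main obstacle is immediate from the nested form of \Cref{def:Multi.Pisier.Schatten} (the paper uses it without comment).
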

\begin{remark}
This is a classic and very much expected property of these quasi-norms. The important part of this result is that the multiplicativity holds for any tensor product cuts compatible with the index splitting. A particularly simple consequence of this is that for some multi-indexed space $(\cQ)=(q_1,\dots, q_n)$ it holds that
\begin{align}
    \|\bigotimes_{i=1}^nX_i\|_{(\cQ)}= \prod_{i=1}^n\|X_i\|_{q_i}.
\end{align}
\end{remark}

\begin{proof}
Using \Cref{thm:Multi.Thm.Pisier.like} we will lift the operator space theoretic result to this general multi-indexed one, proving the two inequalities hidden in \Cref{cor.multi.tensor.multiplicativity} using the infimum, respectively supremum formulas of \Cref{thm:Multi.Thm.Pisier.like}.

\noindent Without loss of generality assume all involved indices to be positive. Else the result follows from that case and \Cref{prop:negative.multi.indexed}.
Let $X\in\cS_{\cQ}, Y\in\cS_{\cP}$, with $\cQ=(q_1,\dots,q_n)$ and $\cP=(p_1,\dots,p_m)$, then by \Cref{thm:Multi.Thm.Pisier.like} we have, on the one hand, for the compatible $s:=\max_{ij} \{q_i,p_j\}$ and with $\cR_Q,\cR_P$ the collections given by $\frac{1}{q_i}-\frac{1}{s}$, respectively $\frac{1}{p_j}-\frac{1}{s}$, as in \Cref{thm:Multi.Thm.Pisier.like}, that
\begin{align}
    \|X\otimes Y\|_{(\cQ,\cP)} &= \inf_{A,B}\|AA^*\|^\frac{1}{2}_{(\cR_Q,\cR_P)}\|B^*B\|^\frac{1}{2}_{(\cR_Q,\cR_P)}\|A^{-1}(X\otimes Y)B^{-1}\|_{s} \\
    &\leq \inf_{A_Q,A_P,B_Q,B_P}\|A_QA_Q^*\otimes A_PA_P^*\|^\frac{1}{2}_{(\cR_Q,\cR_P)}\|B_Q^*B_Q\otimes B_P^*B_P\|^\frac{1}{2}_{(\cR_Q,\cR_P)}\\&\qquad\times\|A_Q^{-1}XB_Q^{-1}\otimes A_P^{-1}YB_P^{-1}\|_s \\
    &= \inf_{A_Q,A_P,B_Q,B_P}\|A_QA_Q^*\|^\frac{1}{2}_{(\cR_Q)}\|A_PA_P^*\|^\frac{1}{2}_{(\cR_P)}\|B_Q^*B_Q\|^\frac{1}{2}_{(\cR_Q)}\|B_P^*B_P\|^\frac{1}{2}_{(\cR_P)}\\&\qquad\times\|A_Q^{-1}XB_Q^{-1}\|_s\|A_P^{-1}YB_P^{-1}\|_s \\
    &= \|X\|_{(\cQ)} \|Y\|_{(\cP)}.
\end{align}
where we used the multiplicativity of the normed multi-indexed Schatten spaces under tensor products of \Cref{cor:Pisier.multi.tensor.splitting} with $\cX=\cR_P$, which applies since $\cR_Q,\cR_P$ are normed as $s$ was constructed to be compatible with $\cQ,\cP$. 

The exact same argument using the supremum formulas of \Cref{thm:Multi.Thm.Pisier.like} with some compatible $0<t\leq 1$ yields the other inequality, establishing the statement of the corollary.
\end{proof}

The next property, which holds for multi-indexed Schatten spaces $\cS_q[\cS_{\cQ}]$, is that they collapse under block-diagonal operators, e.g.\ classical-quantum operators, to $\ell_q[\cS_{\cQ}]$. 

\begin{corollary}[Collapse on block-diagonal operators]\label{cor:multi.block.collapse}
Let $q\notin\{-\infty,0,\infty\}$ and $\cQ$ a multi-indexed Schatten space compatible with $q$. Let $X_i\in\cS_{\cQ}$, which we assume to be strictly positive when the indices are negative. Then it holds that
\begin{align}
    \|\oplus_iX_i\|_{(q,\cQ)}= \left(\sum_i\|X_i\|^q_{(\cQ)}\right)^\frac{1}{q},
\end{align} which extends naturally for $q=\infty$ to
\begin{align}
     \|\oplus_iX_i\|_{(\infty,\cQ)}= \max_i\|X_i\|_{(\cQ)}
\end{align} and for $q=-\infty$ to
\begin{align}
    \|\oplus_iX_i\|_{(-\infty,\cQ)}= \min_i\|X_i\|_{(\cQ)}
\end{align}
\end{corollary}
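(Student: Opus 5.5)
We plan to reduce the statement to the two-indexed collapse \cite[Theorem 1.4]{Kochanowski.2026}, or rather to the single-index variational formulas, by using the relational consistency result \Cref{thm:multi.quasi.relational.consistency} on the first (classical) system, together with the collapse of Pisier's normed spaces (\Cref{prelim:pisier.block.colaps}). Without loss of generality all indices are positive; the negative case follows from \Cref{prop:negative.multi.indexed}, since $(\oplus_iX_i)^{-1}=\oplus_iX_i^{-1}$ and $\big(\sum_i\|X_i^{-1}\|^{-q}_{(-\cQ)}\big)^{-1/q}$ rewrites as $\big(\sum_i\|X_i\|^{q}_{(\cQ)}\big)^{1/q}$. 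The cases $q=\pm\infty$ will follow either by the same argument or by taking limits at the end.

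\emph{Step 1 (normalize the first index).} Fix $p$ compatible with $q$ and $\cQ$; the natural choice is $p$ equal to the first index $q_1$ of $\cQ$. The key point is that for $p=q_1$ the collection $(q_1,\cQ)$ merges the classical system with $\cH_1$ without any weight. By \Cref{def:GeneralMultiIndexed} (or the remark that equal adjacent indices give $a=b=\1$), the norm $\|\oplus_iX_i\|_{(q_1,\cQ)}$ equals the $\cQ$-norm of $\oplus_iX_i$ viewed on $(\C^N\otimes\cH_1)\otimes\cdots$. We therefore first prove the special case $q=q_1$, namely
\begin{align}
\Big\|\sum_i|i\rangle\langle i|\otimes X_i\Big\|_{(q_1,\cQ)}=\Big(\sum_i\|X_i\|_{(\cQ)}^{q_1}\Big)^{1/q_1}.
\end{align}
We do this by induction on $|\cQ|$ using the recursive definition. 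At the last level it is the plain Schatten identity $\|\oplus X_i\|_{q}=(\sum\|X_i\|_q^q)^{1/q}$. In the inductive step, the optimal weights $a,b$ on $\C^N\otimes\cH_1$ can be taken block diagonal, $a=\oplus_i a_i$. This follows by pinching: the Schatten $2r$-norms of the weights do not increase under pinching, and neither does the inner norm on the inner block structure. The latter is where the inductive hypothesis is needed, and it requires a pinching/monotonicity argument. One then optimizes the scalar weights $\|a_i\|_{2r}$ via the scalar Hölder identity $\inf\sum\lambda_i\ldots$, which is the commutative $\ell$-version of the variational formula.

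\emph{Step 2 (general $q$).} Apply \Cref{thm:multi.quasi.relational.consistency} with the pair $(q,q_1)$, using the infimum form if $q\le q_1$ and the supremum form otherwise, with $\frac1r=|\frac1q-\frac1{q_1}|$. We get
\begin{align}
\|\oplus X_i\|_{(q,\cQ)}=\opt_{a,b\ge0}\|a\|_{2r}^{\mp1}\|b\|_{2r}^{\mp1}\|a^{\pm1}(\oplus X_i)b^{\pm1}\|_{(q_1,\cQ)},
\end{align}
with $a,b$ acting on $\C^N$. Restrict to diagonal $a=\mathrm{diag}(\alpha_i)$ and $b=\mathrm{diag}(\beta_i)$. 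Step 1 turns the objective into $\|\alpha\|_{\ell_{2r}}^{\mp1}\|\beta\|_{\ell_{2r}}^{\mp1}\big(\sum_i(\alpha_i\beta_i)^{\pm q_1}\|X_i\|^{q_1}_{(\cQ)}\big)^{1/q_1}$. By the classical Hölder variational identity this optimizes exactly to $(\sum_i\|X_i\|^q_{(\cQ)})^{1/q}$. It remains to justify that diagonal weights suffice. This holds because conjugating by diagonal unitaries leaves $\oplus X_i$ invariant, which gives invariance of the functional, and averaging over this group (pinching) improves both factors. For the Schatten factor this is standard. For the inner $(q_1,\cQ)$ factor, we would use the $s,t$ factorizations of \Cref{thm:Multi.Thm.Pisier.like} to reduce to convexity/concavity statements of the type already used there.

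The main obstacle is this pinching/diagonalization step, i.e.\ showing that the optimization may be restricted to block-diagonal weights. It is not automatic for quasi-norms, since they lack convexity. We would handle it by combining Sion-type minimax exchanges with the joint convexity and concavity from \cite{Zhang.2020} and \Cref{lem.app.op.concavity}, in the form used in the proof of \Cref{thm:Multi.Thm.Pisier.like}. A fallback is to give direct two-sided bounds. Block-diagonal weights yield one inequality immediately. For the reverse inequality one compresses any $A,B$ in the factorization formulas to their diagonal blocks $A_i=(\langle i|\otimes\1)A(|i\rangle\otimes\1)$ and uses \Cref{cor.multi.tensor.multiplicativity}-type splittings.
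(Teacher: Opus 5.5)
Your proposal has a genuine gap exactly at the step you flag, and both Step~1 and Step~2 depend on it. Nowhere do you show that the weights may be restricted to block-diagonal ones. Restricting to $a=\oplus_i a_i$, $b=\oplus_i b_i$ only gives one inequality for free: ``$\le$'' for an infimum and ``$\ge$'' for a supremum.

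The pinching justification you sketch does not work as stated. The objective $\|a\|_{2r}\|b\|_{2r}\|a^{-1}(\oplus_iX_i)b^{-1}\|_{(\cdot)}$ depends on $a^{-1}$, and the pinching of $a$ is not the pinching of $a^{-1}$. Moreover, in this parametrization there is no known joint convexity or concavity in $(a,b)$, and the inner functional is in general only a quasi-norm. So averaging over diagonal unitaries gives you nothing. The convexity facts used in the paper are only available after passing to a single Schatten factorization with normed weights.

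Your fallback of compressing $A,B$ to their diagonal blocks also fails. The identity $\oplus_iX_i=AYB$ gives $X_i=\sum_{j,k}A_{ij}Y_{jk}B_{ki}$, not $X_i=A_{ii}Y_{ii}B_{ii}$.

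Finally, the relational-consistency route is structurally one-sided. Anchored at $p=q_1$, it yields ``$\le$'' when $q\le q_1$ and ``$\ge$'' when $q\ge q_1$. The other direction would need the collapse at a second anchor on the opposite side of $q$, which Step~1 does not supply. Step~1's own induction through the one-sided recursive definition runs into the same problem.

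The missing idea is to anchor both directions at a single Schatten space, which is what \Cref{thm:Multi.Thm.Pisier.like} provides. Then no diagonalization is needed.
\begin{itemize}
\item \emph{Upper bound.} Take a compatible $s\ge q,q_i$. The infimum formula writes $\|\oplus_iX_i\|_{(q,\cQ)}$ via $\|AA^*\|^{1/2}_{(r,\cR)}\|B^*B\|^{1/2}_{(r,\cR)}\|A^{-1}(\oplus_iX_i)B^{-1}\|_s$, where $(r,\cR)$ is normed. Restrict to $A=\oplus_iA_i$ and $B=\oplus_iB_i$. The weights collapse by Pisier's normed block collapse (\Cref{prelim:pisier.block.colaps}), and the $s$-factor collapses trivially. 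Pick near-optimal $(A_i,B_i)$ for each $X_i$ and optimize only their scalar sizes. The commutative $\ell_q$ factorization (\Cref{prop:SpVariationalFormulas} on diagonal operators, with $\frac1q=\frac1r+\frac1s$) then gives ``$\le$''.
\item \emph{Lower bound.} Use the supremum formula with a compatible $t\le\min\{1,q,\min_iq_i\}$, restricted to block-diagonal $A,B$. The same steps give ``$\ge$''.
\end{itemize}
Both inequalities come purely from restricting an optimization, so neither pinching nor a minimax exchange is required.

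A minor point: in your negative-index reduction the exponent has the wrong sign. The correct identity is $\|\oplus_iX_i\|_{(q,\cQ)}=\big(\sum_i\|X_i^{-1}\|^{-q}_{(-\cQ)}\big)^{1/q}$.
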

\begin{remark}
This is the natural and expected multi-indexed extension of \cite[Theorem 1.4]{Kochanowski.2026} and the quasi- and anti-norm extension of \cite[Corollary 1.3]{book:Pisier.1998}.
In the study of quantum information this result comes up in the study of classical-quantum states and conditioning on classical side information. For the particular case where $(\cQ)=(1,\alpha)$ with $\alpha\in[\frac{1}{2},\infty]$ it is proven in \cite[Proposition 5.4 (5.34)]{book:Tomamichel.2016}.
\end{remark}

\begin{proof}
The idea of this proof is to lift the operator-space-theoretic result \cite[Corollary 1.3]{book:Pisier.1998} to the multi-indexed Schatten setting via use of \Cref{thm:Multi.Thm.Pisier.like}. Explicitly we prove the two inequalities hidden in the statement of \Cref{cor:multi.block.collapse} using the two formulas of \Cref{thm:Multi.Thm.Pisier.like}, in each case lower bounding the supremum, respectively upper bounding the infimum by block-diagonal optimizers. Together these will show that block-diagonal optimizers are optimal for block-diagonal $X=\oplus_iX_i$ leading to the above statement. This proof is thus on the surface different from its two-indexed special case (\cite[Theorem 1.4]{Kochanowski.2026}) as it does not directly make use of some joint concavity, respectively convexity properties.

\noindent Without loss of generality we can again assume all involved indices to be positive. Else the result follows with \Cref{prop:negative.multi.indexed}.
Let $\cQ=(q_1,\dots, q_n)$ and take first some compatible $0<s$ s.t. $s\geq q_i$ $\forall i$. Then by \Cref{thm:Multi.Thm.Pisier.like} with $\cR=(r,r_1,\dots,r_n)$, where $\frac{1}{r_i}=\frac{1}{q_i}-\frac{1}{s}\leq 1$ for $i\in\{\emptyset,1,\dots,n\}$, we have, on the one hand, that
\begin{align}
    \|\oplus_iX_i\|_{(q,\cQ)} 
    &= \inf_{A,B}\|AA^*\|^{\frac{1}{2}}_{(r,\cR)}\|B^*B\|^{\frac{1}{2}}_{(r,\cR)}\|A^{-1}(\oplus_iX_i)B^{-1}\|_{s} \\
    &\leq \inf_{\oplus_iA_i, \oplus_iB_i}\|\oplus_iA_iA_i^*\|^{\frac{1}{2}}_{(r,\cR)}\|\oplus_iB_i^*B_i\|^{\frac{1}{2}}_{(r,\cR)}\|\oplus_iA_i^{-1}X_iB_i^{-1}\|_{s} \\ &= \inf_{A_i,B_i} \big\|\, \|A_iA_i^*\|_{(\cR)} \,\big\|^\frac{1}{2}_{\ell_{r}}\big\|\, \|B^*_iB_i\|_{(\cR)} \,\big\|^\frac{1}{2}_{\ell_{r}}\big\|\, \|A_i^{-1}X_iB_i^{-1}\|_s \,\big\|_{\ell_s} \\ 
    &\leq\inf_{\|A_iA_i^*\|_{(\cR)},\|B^*_iB_i\|_{(\cR)}} \big\|\, (\|A_iA_i^*\|^\frac{1}{2}_{(\cR)})_i \,\big\|_{\ell_{2r}}\big\|\, (\|B^*_iB_i\|^\frac{1}{2}_{(\cR)})_i \,\big\|_{\ell_{2r}}\\ &\qquad\times\big\|\,( \|A_iA_i^*\|_{(\cR)}^{-\frac{1}{2}}\|B^*_iB_i\|_{(\cR)}^{-\frac{1}{2}}\|X_i\|_{(\cQ)}(1+\epsilon))_i \,\big\|_{\ell_s} \\
    &= \big\|\, (1+\epsilon)\|X_i\|_{(\cQ)}\, \big\|_{\ell_q} = (1+\epsilon)\big\|\, \|X_i\|_{(\cQ)}\, \big\|_{\ell_q}.
\end{align} where in the second line we lower bounded the infimum over all $A,B$ by block-diagonal ones and in the third used the known block collapse due to Pisier's \cite[Corollary 1.3]{book:Pisier.1998}. In the fourth line we used the achievability part of \Cref{thm:Multi.Thm.Pisier.like} guaranteeing us, for any fixed $\epsilon>0$, the existence of operators $A_i,B_i$ s.t. 
\begin{align}
    (1+\epsilon)\|X_i\|_{(\cQ)}\geq \|A_iA_i^*\|^\frac{1}{2}_{(\cR)}\|B^*_iB_i\|^\frac{1}{2}_{(\cR)}\|A_i^{-1}X_iB_i^{-1}\|_s.
\end{align} The remaining degree of freedom we optimize over is their `size', since the above achievability is scale invariant, i.e.\ if it holds for some pair $(A_i,B_i)$, then it also does for $(\alpha A_i,\beta B_i)$ for any $\alpha,\beta\neq0$.
The last equality now follows from \Cref{prop:SpVariationalFormulas} applied to diagonal operators, i.e.\ the classical factorization of $\ell_q$-(quasi-)norms. Since this holds for any $\epsilon>0$ we have proven the ``$\leq$'' inequality of \Cref{cor:multi.block.collapse}. To prove the other inequality we repeat the same argument, but now using the supremum formula from \Cref{thm:Multi.Thm.Pisier.like} instead of the infimum. 
To this end take some compatible $0<t\leq 1$ s.t. $t\leq q_i$ $\forall i$ and observe that by \Cref{thm:Multi.Thm.Pisier.like} we have, analogously to above,
\begin{align}
    \|\oplus_iX_i\|_{(q,\cQ)} 
    &= \sup_{A,B}\|A^*A\|^{-\frac{1}{2}}_{(r,\cR)}\|BB^*\|^{-\frac{1}{2}}_{(r,\cR)}\|A(\oplus_iX_i)B\|_{t} \\
    &\geq \sup_{\oplus_iA_i, \oplus_iB_i}\|\oplus_iA^*_iA_i\|^{-\frac{1}{2}}_{(r,\cR)}\|\oplus_iB_iB^*_i\|^{-\frac{1}{2}}_{(r,\cR)}\|\oplus_iA_iX_iB_i\|_{t} 
    \\ &= \sup_{A_i,B_i} \big\|\, \|A^*_iA_i\|_{(\cR)} \,\big\|^{-\frac{1}{2}}_{\ell_{r}}\big\|\, \|B_iB^*_i\|_{(\cR)} \,\big\|^{-\frac{1}{2}}_{\ell_{r}}\big\|\, \|A_iX_iB_i\|_t \,\big\|_{\ell_t} \\ 
    &\geq\sup_{\|A^*_iA_i\|_{(\cR)},\|B_iB^*_i\|_{(\cR)}} \big\|\, (\|A^*_iA_i\|^{\frac{1}{2}}_{(\cR)})_i \,\big\|^{-1}_{\ell_{2r}}\big\|\, (\|B_iB^*_i\|^\frac{1}{2}_{(\cR)})_i \,\big\|^{-1}_{\ell_{2r}} \\ &\qquad\times\big\|\,( \|A^*_iA_i\|_{(\cR)}^{\frac{1}{2}}\|B_iB^*_i\|_{(\cR)}^{\frac{1}{2}}\|X_i\|_{(\cQ)}(1-\epsilon))_i \,\big\|_{\ell_t} \\
    &= \big\|\, (1-\epsilon)\|X_i\|_{(\cQ)}\, \big\|_{\ell_q} = (1-\epsilon)\big\|\, \|X_i\|_{(\cQ)}\, \big\|_{\ell_q}.
\end{align} Since this works for any $\epsilon>0$ this concludes the proof. 
\end{proof}

\subsubsection{Quasi- and Anti-Normability: The Triangle and Reverse Triangle Inequality}\label{sub:quasi.anti}

In this subsection we finally prove that the multi-indexed quasi-norms, as we called them, are actually quasi-norms, i.e.\ satisfy the modified triangle inequality. We also prove the regular triangle inequality when all indices are at least $1$, and the reverse triangle (Minkowski) inequality when all indices are at most $1$. The latter turns the negatively indexed ones into anti-norms. We begin with the former two.

\begin{theorem}[Quasi-norm and triangle inequality]\label{thm:Multi.Quasi.Norms}
Let $\cQ=(q_1,\dots,q_n)$ be a multi-indexed Schatten space with $q_i>0$. Then $\cQ$ is a quasi-normed space which is $\kappa$-normable, where $\kappa:=\min\{1,\min_i\{q_i\}\}$.
\end{theorem}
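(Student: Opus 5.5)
The plan is to use the infimum factorization formula \eqref{equ:Multi.Thm.Pisier.like.inf} of \Cref{thm:Multi.Thm.Pisier.like} with the compatible choice $s:=\max_i q_i$. Then $\cS=(s_1,\dots,s_n)$ with $\frac1{s_i}=\frac1{q_i}-\frac1s\in[0,1]$ is a normed multi-indexed Schatten space, hence a genuine norm by \eqref{equ:iterated.pisier} and Pisier's theory. Two properties are immediate from the formula.
\begin{itemize}
\item \emph{Positive homogeneity} holds since $\|A^{-1}(\lambda X)B^{-1}\|_s=|\lambda|\,\|A^{-1}XB^{-1}\|_s$.
\item \emph{Definiteness} holds since $\|X\|_{(\cQ)}=0$ would force $\|A^{-1}XB^{-1}\|_s\to0$ while $\|AA^*\|_{(\cS)},\|B^*B\|_{(\cS)}$ stay bounded. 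This is impossible by the reverse direction \eqref{equ:sup.arbitrary}, which gives a lower bound $\|X\|_{(\cQ)}\ge c\|X\|_t>0$ after choosing $A=B=\1$ in the supremum formula.
\end{itemize}
The whole difficulty is therefore the $\kappa$-triangle inequality $\|X+Y\|^\kappa_{(\cQ)}\le\|X\|^\kappa_{(\cQ)}+\|Y\|^\kappa_{(\cQ)}$.

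For this I would mimic the classical proof that a Hölder-factorized quasi-norm is $\min\{1,s\}$-normable, choosing the factorization so that the exponents line up with $\kappa$. First rewrite the formula in the sum form of \Cref{lem:Inf.Prod.to.Sum}, or equivalently use scaling to normalize, so that
\[
\|X\|_{(\cQ)}=\inf\big\{\|Z\|_s:\ X=A Z B,\ \|AA^*\|_{(\cS)}\le1,\ \|B^*B\|_{(\cS)}\le1\big\}.
\]
Given near-optimal factorizations $X=A_1Z_1B_1$ and $Y=A_2Z_2B_2$, form the row $R=(\lambda_1A_1,\lambda_2A_2)$, the block operator $Z=\mathrm{diag}(\lambda_1^{-2}Z_1,\lambda_2^{-2}Z_2)$, and the column $C=(\lambda_1B_1;\lambda_2B_2)$. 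Then $X+Y=RZC$, and the choice of weights $\lambda_j>0$ is what carries the $\kappa$-power.

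To handle the auxiliary two-dimensional block index, I would either embed it as an extra first system carrying an index compatible with $\cS$, or avoid it altogether by noting directly that
\[
RR^*=\lambda_1^2A_1A_1^*+\lambda_2^2A_2A_2^*,
\]
so that the triangle inequality of the norm $\|\cdot\|_{(\cS)}$ gives $\|RR^*\|_{(\cS)}\le\lambda_1^2+\lambda_2^2$, and similarly for $C^*C$. The middle factor is not block diagonal in the original space, so I would work with the Schatten-$s$ quasi-norm of the block diagonal $Z$ on $\C^2\otimes\cH$. That quantity equals $(\lambda_1^{-2s}\|Z_1\|_s^s+\lambda_2^{-2s}\|Z_2\|_s^s)^{1/s}$.

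The formula still has to be applied with $R,C$ that are not square, mapping from $\C^2\otimes\cH$ to $\cH$. I would justify this via polar decomposition, $R=(RR^*)^{1/2}V$ with $V$ a partial isometry, using $\|VZV'\|_s\le\|Z\|_s$. Optimizing over $\lambda_j$ is then the commutative Hölder step. Setting $\lambda_j^2\propto\|Z_j\|_s^{\kappa}$ gives
\[
\|X+Y\|_{(\cQ)}\le(\lambda_1^2+\lambda_2^2)\,\big\|(\lambda_j^{-2}\|Z_j\|_s)_j\big\|_{\ell_s},
\]
which I expect to reduce to $\big(\sum_j\|Z_j\|_s^{\kappa}\big)^{1/\kappa}$ using $\kappa\le\min\{1,s\}$ and the $\ell$-monotonicity $\|\cdot\|_{\ell_s}\le\|\cdot\|_{\ell_\kappa}$.

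\textbf{Main obstacle.} I expect the difficulty to be that this naive bookkeeping yields $\min\{1,s\}$-normability, with $s=\max q_i$, rather than the claimed $\kappa=\min\{1,\min_i q_i\}$. To recover the correct exponent I would instead use the supremum formula with $t=\kappa=\min\{1,\min_iq_i\}$, where $\cT$ is normed. The hope is that the inner $\|\cdot\|_t$ is $t$-normable. If the $\lambda$-weighting cannot be made to work there, I would fall back on induction along \Cref{def:GeneralMultiIndexed} together with \Cref{thm:multi.quasi.relational.consistency}. Using the block trick, each infimum step combines a $\kappa$-normable inner quasi-norm with a Hölder weight $\|a\|_{2r}$, and preserves $\min\{\kappa,\cdot\}$-normability. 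The supremum steps would be handled by the known two-indexed argument of \cite{Kochanowski.2026}.

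Continuity, which is needed elsewhere, then follows from $\kappa$-normability together with equivalence to the Schatten norm on finite dimensions.
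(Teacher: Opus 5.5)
\textbf{There is a genuine gap in your main route.} The infimum-based argument does not give the claimed exponent, and your ``Main obstacle'' paragraph understates the loss.

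Write $\mu_j=\lambda_j^2$ and $z_j=\|Z_j\|_s$. Your bookkeeping gives $\|X+Y\|_{(\cQ)}\le\|\mu\|_{\ell_1}\,\|(z_j/\mu_j)_j\|_{\ell_s}$. By the commutative Hölder inequality, the infimum over $\mu$ of the right-hand side is exactly $\|z\|_{\ell_\rho}$ with $\frac1\rho=1+\frac1s$. So this argument yields at best $\rho$-normability with $\rho=\frac{s}{1+s}$.
\begin{itemize}
\item This $\rho$ is strictly below $\min\{1,s\}$, not equal to it.
\item By compatibility, $\frac{1}{\min_i q_i}\le1+\frac1s$, so $\rho$ never exceeds $\kappa$. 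It equals $\kappa$ only when the compatibility condition is saturated.
\item In particular, it does not even give the triangle inequality when all $q_i\ge1$ and $\max_iq_i<\infty$.
\end{itemize}
Your claimed reduction with $\lambda_j^2\propto z_j^\kappa$ to $(\sum_jz_j^\kappa)^{1/\kappa}$ already fails for $z_1=z_2$ whenever $\kappa>\rho$.

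The loss is structural. Since $X+Y$ is not block diagonal, the triangle inequality of $\|\cdot\|_{(\cS)}$ combines the outer weights in $\ell_1$, while the middle factor only collapses to $\ell_s$. Hölder-type factorizations alone cannot produce the $q$-normability of $\|\cdot\|_q$ for $q<1$; that has to be imported from the Schatten quasi-norm itself.

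Your inductive fallback has the same defect, because each infimum step again adds the weights via the triangle inequality of $\|\cdot\|_r$. For example, for $\cQ=(1,2)$ it only gives $\frac23$-normability of what is in fact a norm.

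\textbf{The fix is the route you mention only as a ``hope'', and it needs no weighting at all.} Take $t=\kappa$ in \eqref{equ:Multi.Thm.Pisier.like.sup}. This is admissible since $\kappa\le1$ and $\kappa\le q_i$, and $\cT$ is then normed. By \Cref{rem:sup.arbitrary}, i.e.\ \eqref{equ:sup.arbitrary}, which you already invoke for definiteness, the supremum may be taken over arbitrary $A,B$. Hence it runs over the same set for $X$, $Y$ and $X+Y$.

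For each fixed pair $(A,B)$, the $\kappa$-normability of $\|\cdot\|_\kappa$ gives $\|A(X+Y)B\|_\kappa^\kappa\le\|AXB\|_\kappa^\kappa+\|AYB\|_\kappa^\kappa$. Multiply by $\|A^*A\|_{(\cT)}^{-\kappa/2}\|BB^*\|_{(\cT)}^{-\kappa/2}$ and use that a supremum of a sum is at most the sum of the suprema. This yields $\|X+Y\|^\kappa_{(\cQ)}\le\|X\|^\kappa_{(\cQ)}+\|Y\|^\kappa_{(\cQ)}$, which is exactly the paper's proof.

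The point you must make explicit is the common domain of the supremum. With the support constraints $\Pi_{A^*A}=\Pi_{XX^*}$, $\Pi_{BB^*}=\Pi_{X^*X}$ as stated in \Cref{thm:Multi.Thm.Pisier.like}, the admissible sets for $X$, $Y$ and $X+Y$ differ, and the subadditivity step would not go through.

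Your homogeneity and definiteness arguments are fine.
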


\begin{remark}
This is the natural generalization of \cite[Theorem 1.2]{Kochanowski.2026} to the multi-indexed setting.
On the one hand, when all indices are at least $1$, i.e.\ $\kappa=1$, it says that the functionals $\|\cdot\|_{(\cQ)}$ are norms and hence satisfy the triangle inequality. This is more or less directly a consequence of the fact that these functionals are by construction equal to multi-indexed Pisier norms, which are norms to begin with.\footnote{While it might appear that the proof we give of this is independent of Pisier's operator space construction, this is technically not quite true, since it relies on \Cref{thm:Multi.Thm.Pisier.like}. So technically implicitly it still requires through \Cref{lem.app.op.concavity} that operator spaces embed completely isometrically into some $\cB(\cK)$, which on its own is sufficient to prove the triangle inequality.} 

The more interesting part is probably when some indices are below one but still positive, i.e. $\kappa<1$, then the claim of the theorem is that
\begin{align}
    \|X+Y\|^\kappa_{(\cQ)}\leq  \|X\|^\kappa_{(\cQ)}+ \|Y\|^\kappa_{(\cQ)}.
\end{align}
It also implies (sequential) continuity of the map $X\mapsto \|X\|^\kappa_{(\cQ)}$ and thus also of $X\mapsto \|X\|^\kappa_{(\cQ)}$. This is because
\begin{align}
    \left|\|X+\epsilon Y\|^\kappa_{(\cQ)}-\|X\|^\kappa_{(\cQ)} \right| \leq \epsilon^\kappa\|Y\|^\kappa_{(\cQ)}\rightarrow  0 \quad \text{ for } \epsilon\to 0
.\end{align}
\end{remark}

\begin{proof}
This follows analogously to its two-indexed counterpart \cite[Theorem 1.2]{Kochanowski.2026}. Positive homogeneity and positive definiteness are easily checked from the definition, so it remains to show the $\kappa$-normability. Let $\kappa:=\min\{1,\min_i\{q_i\}\}$, then by \Cref{thm:Multi.Thm.Pisier.like} with $t=\kappa$, where by \Cref{rem:sup.arbitrary} the supremum may be taken over arbitrary $A,B$ and is hence the same for $X$, $Y$ and $X+Y$, and the $\kappa$-normability of the $\kappa$-Schatten quasi-norm it follows that
\begin{align}
    \|X+Y\|^\kappa_{(\cQ)}&=\sup_{A,B}\|A^*A\|^{-\frac{\kappa}{2}}_{(\cR)}\|BB^*\|^{-\frac{\kappa}{2}}_{(\cR)}\|A(X+Y)B\|^\kappa_\kappa \\ &\leq\sup_{A,B}\|A^*A\|^{-\frac{\kappa}{2}}_{(\cR)}\|BB^*\|^{-\frac{\kappa}{2}}_{(\cR)}(\|AXB\|^\kappa_\kappa+\|AYB\|^\kappa_\kappa) \\
    &\leq \sup_{A,B}\|A^*A\|^{-\frac{\kappa}{2}}_{(\cR)}\|BB^*\|^{-\frac{\kappa}{2}}_{(\cR)}\|AXB\|^\kappa_\kappa + \sup_{A,B}\|A^*A\|^{-\frac{\kappa}{2}}_{(\cR)}\|BB^*\|^{-\frac{\kappa}{2}}_{(\cR)}\|AYB\|^\kappa_\kappa \\
    &= \|X\|_{(\cQ)}^\kappa+\|Y\|_{(\cQ)}^\kappa,
\end{align} where the second to last line follows from the subadditivity of the supremum. 
\end{proof}

Finally, let us consider the reverse Minkowski (triangle) inequality in the multi-indexed setting.

\begin{theorem}[Reverse Minkowski (Triangle) inequality]\label{thm:Multi.Rev.Triangle}
Let $\cQ=(q_1,\dots,q_n)$ be a multi-indexed Schatten space s.t. $q_i\leq 1\, \forall i$. Then $\cQ$ satisfies the reverse Minkowski (reverse triangle) inequality. That is for $X,Y\geq0$ ($X,Y>0$ if the indices involved are negative):
\begin{align}
    \|X+Y\|_{(\cQ)} \geq \|X\|_{(\cQ)}+\|Y\|_{(\cQ)}.
\end{align}
This means that $\|\cdot\|_{(\cQ)}$ are so-called anti-norms.
\end{theorem}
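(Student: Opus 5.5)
The plan mirrors the proof of \Cref{thm:Multi.Quasi.Norms}, with the supremum replaced by an infimum. The natural tool is the reverse Hölder duality of \Cref{cor:reverse.multi.hölder}. For $X\geq0$ (resp.\ $X>0$ in the negative case) and all $q_i\leq1$ it gives
\begin{align}
    \|X\|_{(\cQ)}=\inf_{Y>0}\Tr[YX]\,\|Y^{-1}\|_{(-\cQ^\prime)}.
\end{align}
This expresses $\|\cdot\|_{(\cQ)}$ on the positive cone as an infimum of functionals $X\mapsto \Tr[YX]\,\|Y^{-1}\|_{(-\cQ^\prime)}$. Each of these is linear in $X$, and crucially the index set $\{Y>0\}$ does not depend on $X$.

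The main step is then a one-liner. For $X,Y\geq0$ I would write
\begin{align}
    \|X+Y\|_{(\cQ)}&=\inf_{Z>0}\big(\Tr[ZX]+\Tr[ZY]\big)\|Z^{-1}\|_{(-\cQ^\prime)}\\
    &\geq \inf_{Z>0}\Tr[ZX]\|Z^{-1}\|_{(-\cQ^\prime)}+\inf_{Z>0}\Tr[ZY]\|Z^{-1}\|_{(-\cQ^\prime)}=\|X\|_{(\cQ)}+\|Y\|_{(\cQ)},
\end{align}
using superadditivity of the infimum. Positivity of $X,Y$, and hence $X+Y\geq0$, is exactly what makes \Cref{cor:reverse.multi.hölder} applicable to all three operators. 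The same computation covers the negative case, since \Cref{cor:reverse.multi.hölder} was stated for $q_i<0$ and $X>0$ as well.

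The main obstacle is making sure the duality formula is used in a form where the optimization runs over an $X$-independent set. The infimum formula \eqref{equ:Multi.Thm.Pisier.like.inf} by itself carries the support constraint $\Pi_{\tilde A}=\Pi_X$, and the supports of $X$, $Y$ and $X+Y$ differ. This is why I would go through \Cref{cor:reverse.multi.hölder}, whose proof already removed the support restriction to arbitrary $Y>0$.

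That removal relied on continuity of $\|\cdot\|_{(\cQ)}$ from \Cref{thm:Multi.Quasi.Norms}, which is independent of the present statement, so there is no circularity. If one prefers to avoid it, one can apply the argument to the full-rank operators $X+\epsilon\1$ and $Y+\epsilon\1$. For these the support conditions are vacuous, and one lets $\epsilon\to0$ using the continuity from \Cref{thm:Multi.Quasi.Norms}. For positive indices this still needs the $2\epsilon\1$ correction on the left-hand side; for negative indices no limit is needed since everything is already strictly positive.
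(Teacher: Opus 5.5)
Your proof is correct, but it uses a different key lemma from the paper's.

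\textbf{The paper's route.} The paper works directly with the symmetric infimum formula of \Cref{thm:Multi.Thm.Pisier.like} (resp.\ \Cref{cor:multi.Pisier.like.negative}). It factors through a single Schatten anti-norm $\|\cdot\|_q$ with $q_i\le q\le1$ and applies the single-index reverse Minkowski inequality \cite[Lemma 6]{Cubitt.2015} inside the infimum. To make the three infima range over the same set, it first restricts to $X,Y>0$. It then passes to $X,Y\geq0$ by adding $\varepsilon\1$ and using continuity from \Cref{thm:Multi.Quasi.Norms}.

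\textbf{Your route.} You invoke \Cref{cor:reverse.multi.hölder}. For positive indices this is exactly the $s=1$ instance of that same infimum formula, with the support constraint already removed. The Schatten anti-norm becomes the trace, so the reverse Minkowski step reduces to linearity of $X\mapsto\Tr[ZX]$. The $\varepsilon$-regularization is done once, inside the corollary's proof. That proof depends only on \Cref{thm:Multi.Quasi.Norms}, so, as you say, there is no circularity.

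\textbf{Comparison.} Your route exhibits $\|\cdot\|_{(\cQ)}$ on the positive cone as an infimum of linear functionals over an $X$-independent index set. Superadditivity then follows in one line, and together with homogeneity it gives concavity. No external single-index inequality is needed at this step. The paper's route does not depend on the duality corollary, and it is the exact mirror of the proof of \Cref{thm:Multi.Quasi.Norms}, with sup/inf and the quasi-triangle/reverse-triangle inequalities interchanged.

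For negative indices your approach does not eliminate the Schatten-level input; it only moves it. The proof of \Cref{cor:reverse.multi.hölder} in that case goes through \Cref{cor:multi.Pisier.like.negative} and the single-index reverse H\"older duality for $\|\cdot\|_t$ with $t\in[-1,0)$. That input is of the same nature as the reverse Minkowski inequality the paper cites.
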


\begin{remark}
This is the multi-indexed generalization of the reverse Minkowski (triangle) inequality for Schatten quasi-norms to indices $q\leq 1$ (and negative explicitly allowed), see e.g. \cite{Cubitt.2015}.
\end{remark}

\begin{proof}
We first consider $X,Y>0$. By the assumption on $\cQ$ there exists some $q\leq 1$ compatible with $\cQ$ and s.t.\ $q_i\leq q$, where for negative indices we choose $q\in[-1,0)$ as in the proof of \Cref{cor:reverse.multi.hölder}. Since $X$, $Y$ and $X+Y$ are positive definite, they all have full support, and the formulas for self-adjoint operators of \Cref{thm:Multi.Thm.Pisier.like}, respectively \Cref{cor:multi.Pisier.like.negative} for negative indices, express their quasi-norms as infima over the same set of positive definite $\tilde A$. Together with the reverse triangle inequality for the $q$-Schatten quasi-norm \cite[Lemma 6]{Cubitt.2015} we obtain with $(\cR)=(r_1,\dots, r_n)$, where $\frac{1}{r_i}=\frac{1}{q_i}-\frac{1}{q}$,
\begin{align}
    \|X+Y\|_{(\cQ)} &= \inf_{\tilde{A}>0}\|\tilde{A}\|_{(\cR)}\|\tilde{A}^{-\frac{1}{2}}(X+Y)\tilde{A}^{-\frac{1}{2}}\|_q \\
    &\geq \inf_{\tilde{A}>0}\|\tilde{A}\|_{(\cR)}\big(\|\tilde{A}^{-\frac{1}{2}}X\tilde{A}^{-\frac{1}{2}}\|_q+\|\tilde{A}^{-\frac{1}{2}}Y\tilde{A}^{-\frac{1}{2}}\|_q\big) \\
    &\geq \inf_{\tilde{A}>0}\|\tilde{A}\|_{(\cR)}\|\tilde{A}^{-\frac{1}{2}}X\tilde{A}^{-\frac{1}{2}}\|_q+ \inf_{\tilde{A}>0}\|\tilde{A}\|_{(\cR)}\|\tilde{A}^{-\frac{1}{2}}Y\tilde{A}^{-\frac{1}{2}}\|_q =\|X\|_{(\cQ)}+\|Y\|_{(\cQ)}.
\end{align}
For negative indices $X,Y>0$ is assumed, which completes the proof in that case. For positive indices and $X,Y\geq0$ we apply this to $X+\varepsilon\1$ and $Y+\varepsilon\1$ and let $\varepsilon\to0$, using that $\|\cdot\|_{(\cQ)}$ is a $\kappa$-normable quasi-norm by \Cref{thm:Multi.Quasi.Norms} and hence continuous, $\big|\|Z+E\|^\kappa_{(\cQ)}-\|Z\|^\kappa_{(\cQ)}\big|\leq\|E\|^\kappa_{(\cQ)}$. (A compression argument $\tilde A\mapsto\Pi_X\tilde A\Pi_X$ cannot be used here: the bimodule property only holds for operators acting on the outermost system, and a global compression can increase $\|\cdot\|_{(\cR)}$, e.g.\ for $\cR=(\infty,1)$.)
\end{proof}

\section{Linear Maps Between Multi-Indexed Schatten Spaces}
\label{sec:LinerMaps}

In this section we consider the induced (quasi-)norms and co-(quasi-)norms on the space of linear maps between the previously developed multi-indexed Schatten spaces. We will also refer to them as \emph{induced} or \emph{mixed quasi-norms} even if they may not actually be quasi- but anti-norms. 
We will derive properties of such induced (co)-quasi-norms and important general tools, which we will use in the following section to prove various  multiplicativity results and chain rules in the context of quantum information theory. Prominent examples of such mixed quasi-norms include completely bounded (quasi-)norms. These are of interest from a structural mathematical as well as a quantum information point of view \cite{book:Paulsen.2003, book:Watrous.2018, Li.2026, Kochanowski.2026, Beigi.2016, Gupta.2015, Johnston.09, Devetak.2006}. 
From a mathematical point of view these induced norms are natural quantities to consider, and a rich structure emerges, in particular w.r.t.\ multiplicativity 
under suitable compositions and tensor products of different maps, given certain ranges of indices. 

From a physics or quantum information point of view a different interest in these quantities emerges, in particular because, as we will see in \cref{sec:ApplictoQI}, certain (quasi-)norms are directly related to entropic quantities, see also \cite{Li.2026, Kochanowski.2026, Devetak.2006}. The resulting multiplicativity statements or (in)equalities of this section can thus be interpreted as \emph{entropy additivity} statements or \emph{chain rules}, governing how quantum channels can generate or manipulate properties such as entanglement with surrounding systems. Such additivity results of entropic quantities under composition of linear maps are powerful tools in the study of quantum key distribution protocols \cite{Fawzi.2026}, entanglement certification, but also dissipative dynamics in an environment-agnostic manner \cite{Beigi.2016, Bardet.2022}. 

In \Cref{Lin.1} we consider the induced \emph{completely bounded} (co-)(quasi-)norms of linear maps between multi-indexed Schatten spaces of \Cref{sec:MultiIndex}. These completely bounded (co-)norms should be interpreted as quantities which represent ancilla-assisted versions of their non-completely-bounded counterparts, because they factor in possible entanglement with arbitrary `environments'. Then, to derive various multiplicativity statements, we extend two seminal tools to the multi-indexed and quasi-normed setting. First, in \Cref{Lin.2}, we consider the $\SWAP$ channel, which exchanges two adjacent systems, and prove its complete contractiveness, respectively expansiveness, depending on the order of the exchanged indices, see \Cref{thm:QuasiNormMinkowskiInequality}. This inequality is also known as a non-commutative Minkowski inequality and we prove a (quasi-normed) multi-indexed version generalizing the previously established normed two-indexed version \cite[Corollary 1.10]{book:Pisier.1998}, see also \cite{Devetak.2006}.
Secondly, in \Cref{sec:3.removing.indices}, we prove two general chain-rule type equalities which, informally, allow us to remove identity channels from the innermost systems of mixed norms under suitable conditions on the involved indices. The first one is \Cref{lem:reductionLemma}, which generalizes the normed statement \cite[Lemma 5]{Devetak.2006} and \cite[Theorem 4.1]{Fawzi.2026} to completely positive maps between arbitrary normed multi-indexed Schatten spaces. Such a statement was instrumental in proving the multiplicativity results of \cite{Devetak.2006, Fawzi.2026, Kochanowski.2026} and many more. The second one is \Cref{thm:id.right.quasi}, which is effectively a quasi-normed analogue of the former for co-quasi-norms with indices in $[\frac{1}{2},1]$. We present the most general multi-indexed equality of this type and later apply it to multiplicativity statements involving indices in that range. 
As a consequence of the former, the mixed multi-indexed Schatten (quasi-)norms of completely positive maps are attained on positive inputs.

\medskip

\textbf{Notation:} \newline
In this and the following sections we will be using a slightly adapted notation with the aim to make the associations between indices and subsystems more transparent. Whereas in the previous section the association between indices and systems was positional, here we explicitly label each (sub)-system and index with an uppercase Latin letter $A,B,R,S,Q,P,\dots$ and the order is determined by the order of indices in the (quasi)-norm. That is for an operator $X_{QP}$ where $Q,P$ are system labels to which we associate the compatible indices $q,p$, respectively, we will write
\begin{align}
    \|X\|_{(q,p)}\equiv \|X\|_{(Q:q,P:p)} = \|X_{QP}\|_{(Q:q,P:p)} = \|X_{PQ}\|_{(Q:q,P:p)}.
\end{align} We will, wherever possible, avoid the last notation, where the order of the indices on the operator and on the (quasi-)norm is not matching. Similarly, when $Q=Q_1Q_2\dots Q_n$ is a collection of $n$ systems, on which the multi-indexed Schatten space $\cQ= (q_1,q_2,\dots, q_n)$ is defined, then we will write
\begin{align}
    \|X\|_{(q_1,q_2,\dots, q_n)} \equiv \|X\|_{(Q:\cQ)} \equiv \|X_Q\|_{(Q:\cQ)}.
\end{align}

\subsection{Completely Bounded Quasi-Norms and Co-Quasi-Norms}\label{Lin.1}

Completely bounded norms \cite{book:Paulsen.2003} find many applications in quantum information theory, giving rise to tensor-stable and thus environment-agnostic quantities. Perhaps most famously, they appear in the form of the diamond norm \cite[Chapter 3]{book:Watrous.2018} which is the completely bounded $1\to 1$ norm and serves as an ancilla-assisted optimal one-shot distinguishability between quantum channels when applied to a difference of channels. When the input and output spaces are (multi-indexed) Schatten spaces, applications include channel entropy uncertainty relations \cite{Gao.2018.1}, completely bounded hypercontractivity \cite{Bardet.2022, Bardet.2024,Beigi.2016} and complete modified logarithmic Sobolev constants \cite{Gao.2022, Bardet.2024}, completely bounded minimal output entropies \cite{Devetak.2006}, and minimal output (conditional) entropies under arbitrary environments \cite{Fawzi.2026, Li.2026}.
In this section we extend the notion of completely bounded norms to quasi-normed spaces and consider completely bounded quasi-norms and co-quasi-norms used, among others, in the multiplicativity results of \Cref{sec:Mixed.Multiplicativity.Statements} in analogy to \cite[Section 4]{Devetak.2006}.

We will make use of the following standard definitions: given any two multi-indexed Schatten spaces $\mathcal{Q},\mathcal{P}$ with associated quasi- or anti-norms labeled as $\|\cdot \|_{\cQ}$ and $\|\cdot \|_{\cP}$, and a bounded map $\Phi:Q\to P$:
\begin{align}
\|\Phi\|_{(Q:\cQ)\to (P:\cP)}:=\sup_{X\in{\cS_{\cQ}}}\frac{\|\Phi(X)\|_{(P:\cP)}}{\|X\|_{(Q:\cQ)}}\quad \text{ and }\quad
\|\Phi\|^+_{(Q:\cQ)\to (P:\cP)}:=\sup_{\underset{X>0}{X\in{\cS_{\cQ}}}}\frac{\|\Phi(X)\|_{(P:\cP)}}{\|X\|_{(Q:\cQ)}}\,~\,.
\end{align} 
Note that it is not difficult to see that these functionals are quasi- or anti-norms on the cone of all bounded (CP) linear maps. In fact they are $\kappa=\min\{1,\min_ip_i\}$-normable, when $(\cP)=(p_1,\dots,p_n)$ with $p_i>0$.

Similarly, we define the co-(quasi-)norms of CP maps $\Phi$ as 
\begin{align}
\|\Phi\|^+_{\co,(Q:\cQ)\to (P:\cP)}:=\inf_{\underset{X>0}{X\in\mathcal{S}_{\cQ}}}\frac{\|\Phi(X)\|_{(P:\cP)}}{\|X\|_{(Q:\cQ)}}\,~\,
\end{align} and refer to them analogously as `mixed' co-quasi-norms. These are generally not quasi-norms and will generically be trivial when the optimization is not restricted to positive operators. 

\noindent It is a standard fact that these mixed quasi-norms are sub-multiplicative under composition of linear maps and the mixed co-quasi-norms are super-multiplicative under composition of completely positive maps.
\begin{proposition}\label{prop:standard.sub.and.sup.mult}
Let $\Phi:Q\to P, \Psi: P\to R$ be linear, then
\begin{align}
    \|\Psi\circ\Phi\|_{(Q:\cQ)\to (R:\cR)} \leq \|\Phi\|_{(Q:\cQ)\to (P:\cP)}\|\Psi\|_{(P:\cP)\to (R:\cR)}.
\end{align} Moreover when $\Phi: Q\to P, \Psi: P\to R$ are completely positive linear maps, then
\begin{align}
    \|\Psi\circ\Phi\|^{(+)}_{(Q:\cQ)\to (R:\cR)} &\leq \|\Phi\|^{(+)}_{(Q:\cQ)\to (P:\cP)}\|\Psi\|^{(+)}_{(P:\cP)\to (R:\cR)} \\  \|\Psi\circ\Phi\|^+_{\co,(Q:\cQ)\to (R:\cR)} &\geq \|\Phi\|^+_{\co,(Q:\cQ)\to (P:\cP)}\|\Psi\|^+_{\co,(P:\cP)\to (R:\cR)}
\end{align}
\end{proposition}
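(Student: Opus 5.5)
The plan is to derive each inequality directly from the definitions of the induced (co-)quasi-norms, by writing $\Psi\circ\Phi$ applied to an input as $\Psi$ applied to the intermediate operator $\Phi(X)$ and inserting the intermediate ratio. For the first statement, fix $X\in\cS_{\cQ}$ with $\Phi(X)\neq0$. Then
\begin{align}
    \frac{\|\Psi(\Phi(X))\|_{(R:\cR)}}{\|X\|_{(Q:\cQ)}}=\frac{\|\Psi(\Phi(X))\|_{(R:\cR)}}{\|\Phi(X)\|_{(P:\cP)}}\cdot\frac{\|\Phi(X)\|_{(P:\cP)}}{\|X\|_{(Q:\cQ)}}\leq \|\Psi\|_{(P:\cP)\to(R:\cR)}\|\Phi\|_{(Q:\cQ)\to(P:\cP)}.
\end{align}
The case $\Phi(X)=0$ is trivial, since then $\Psi(\Phi(X))=0$ by linearity. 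Taking the supremum over $X$ gives the claim. The only facts used are positive definiteness and positive homogeneity of $\|\cdot\|_{(P:\cP)}$, which are available from \Cref{thm:Multi.Quasi.Norms} for positive indices and by definition for anti-norms.

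For the positive versions with CP maps, the same factorization works provided the intermediate operator is an admissible input for $\Psi$. Here the only real point arises, which is mild: if $X>0$, then $\Phi(X)$ is positive semidefinite but need not be positive definite. I would handle this as follows.
\begin{itemize}
    \item For positive indices on $P$, I would note that the supremum defining $\|\Psi\|^+$ over $X>0$ equals the supremum over $X\geq0$. Given $Y\geq0$, one compares with $Y+\varepsilon\1$, using continuity of $\|\cdot\|_{(P:\cP)}$ from \Cref{thm:Multi.Quasi.Norms} and continuity of the finite-dimensional linear map $\Psi$ composed with the quasi-norm on $R$.
    \item For negative indices, where only definite operators are allowed, the issue is whether CP maps send $X>0$ to a definite operator. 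If not, I would either restrict to maps for which this holds, implicit in the definition of the functional on the cone, or use the convention that the norm is evaluated on definite operators and perturb by $\varepsilon\1$, passing to the limit.
\end{itemize}
In either case the ratio $\|\Psi(\Phi(X))\|/\|\Phi(X)\|$ is bounded by $\|\Psi\|^{(+)}$, and taking the supremum yields the submultiplicativity.

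For the co-quasi-norm, I would use the identical factorization with the inequality reversed. For $X>0$,
\begin{align}
    \frac{\|\Psi(\Phi(X))\|}{\|X\|}=\frac{\|\Psi(\Phi(X))\|}{\|\Phi(X)\|}\cdot\frac{\|\Phi(X)\|}{\|X\|}\geq \|\Psi\|^+_{\co}\,\|\Phi\|^+_{\co}.
\end{align}
This holds because $\Phi(X)$ is a positive (again, after the same $\varepsilon$-perturbation argument, definite) input for $\Psi$. If $\Phi(X)=0$, then $\|\Phi\|^+_{\co}=0$ and the bound is trivial. Taking the infimum over $X$ concludes the proof.

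The main, though minor, obstacle is thus only the admissibility of $\Phi(X)$ as a strictly positive input. It is resolved by the continuity and perturbation argument above.
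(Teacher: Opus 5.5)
Your proposal is correct and is essentially the paper's proof. The paper likewise inserts the intermediate factor $\|\Phi(X)\|_{(P:\cP)}$ into the ratio and takes the supremum; it writes out only the first inequality and calls the $^{+}$ and co-versions analogous. Your $\varepsilon\1$-continuity argument for positive indices supplies a boundary detail the paper leaves implicit.

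For negative indices your first option is the right one, and no perturbation is needed. The $\varepsilon\1$-perturbation would not help there anyway, since negative-index anti-norms vanish in the limit at singular operators. For a CP map, $X\geq\lambda_{\min}(X)\1$ gives $\Phi(X)\geq\lambda_{\min}(X)\Phi(\1)$. Hence $\Phi(X)>0$ for all $X>0$ exactly when $\Phi(\1)>0$, which is precisely what well-definedness of $\|\Phi\|^{+}_{(Q:\cQ)\to(P:\cP)}$ already requires.
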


\begin{proof}
We prove the first case here, as the others follow analogously. 
\begin{align}
    \|\Psi\circ\Phi\|_{(Q:\cQ)\to (R:\cR)} &= \sup_X \|\Psi(\underbrace{\Phi(X)}_{=:Y})\|_{(R:\cR)}\|X\|_{(Q:\cQ)}^{-1} \\&=\sup_X\|\Psi(Y)\|_{(R:\cR)}\|Y\|^{-1}_{(P:\cP)}\|\Phi(X)\|_{(P:\cP)}\|X\|_{(Q:\cQ)}^{-1} \\ &\leq \|\Psi\|_{(P:\cP)\to(R:\cR)}\sup_X\|\Phi(X)\|_{(P:\cP)}\|X\|_{(Q:\cQ)}^{-1} \\ &= \|\Phi\|_{(Q:\cQ)\to (P:\cP)}\|\Psi\|_{(P:\cP)\to (R:\cR)}.
\end{align}
\end{proof}

\subsubsection{Completely Bounded Quasi-Norms of Linear Maps}

Inspired by the completely bounded norm between operator spaces and its expression in terms of operator-valued Schatten norms \cite{book:Paulsen.2003, book:Pisier.1998}, we define the concept of complete boundedness between (quasi-normed) multi-indexed Schatten spaces in the following.

\begin{definition}[Completely bounded quasi-norm]\label{def:cb.quasi.norm}
Let $\cQ,\cP$ be multi-indexed Schatten spaces on systems $Q,P$ such that $t\neq0$ is compatible with both of them and $\Phi:Q\to P$ be a linear map, then we define
\begin{align}
    &\|\Phi\|_{\cb,(Q:\cQ)\to (P:\cP)}:=\sup_E\|\id_E\otimes\Phi\|_{(E:t,Q:\cQ)\to (E:t,P:\cP)}\\
    &\|\Phi\|^+_{\cb,(Q:\cQ)\to (P:\cP)}:=\sup_E\|\id_E\otimes\Phi\|^+_{(E:t,Q:\cQ)\to (E:t,P:\cP)}\,,
\end{align}
where the supremum is over any finite-dimensional reference system $E$. \newline
When $t<0$ and $\cQ,\cP$ are negatively indexed we will only work with the $^+$ version of these induced quantities. 
\end{definition}

\begin{remark}
 When $\cQ,\cP$ are normed operator-valued Schatten spaces with indices $\in[1,\infty]$ this reduces to the usual completely bounded norm \cite[Lemma 1.7]{book:Pisier.1998}. The above, though, is well defined for any multi-indexed Schatten spaces, not necessarily normed ones. Note also that we do not require $\cQ$ and $\cP$ themselves to be compatible with each other, just that both are compatible with the same $t\neq0$. This works as long as $\big|\frac{1}{q_i}-\frac{1}{p_j}\big|\leq 2$ for all $q_i\in\cQ, p_j\in\cP$.   
\end{remark}

A priori it is not yet clear that this above defined completely bounded quasi-norm is indeed well defined, as it may depend on the choice of $t$. The fact that it is well defined is due to the following \Cref{lem:cb.norm.via.t}. In the mentioned normed setting, this is a well-known statement going back to Pisier \cite[Lemma 1.7]{book:Pisier.1998}.

\begin{lemma}\label{lem:cb.norm.via.t}
Let $\cQ,\cP$ be multi-indexed Schatten spaces on systems $Q,P$ and $\Phi:Q\to P$ be a linear map, then its completely bounded quasi-norm defined in \Cref{def:cb.quasi.norm} is independent of the index $t$ put on the reference system. In particular for any $q,p\neq0$ which are compatible and compatible with both $\cQ,\cP$ it holds that
\begin{align}
    \|\Phi\|^{(+)}_{\cb,(Q:\cQ)\to (P:\cP)}=\sup_E\|\id_E\otimes\Phi\|^{(+)}_{(E:q,Q:\cQ)\to (E:q,P:\cP)}=\sup_E\|\id_E\otimes\Phi\|^{(+)}_{(E:p,Q:\cQ)\to (E:p,P:\cP)}.
\end{align}
Moreover let $\cE$ be a multi-indexed Schatten space on system $E$ which is compatible with both $\cQ$ and $\cP$, then it holds that
\begin{align}
    \|\Phi\|^{(+)}_{\cb,(Q:\cQ)\to (P:\cP)}=\sup_E\|\id_{E}\otimes\Phi\|^{(+)}_{(E:\cE,Q:\cQ)\to (E:\cE,P:\cP)}.
\end{align}
Here we enforce the $^+$ when $q,p<0$ and else prove both versions of the statement.
\end{lemma}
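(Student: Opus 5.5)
The plan is to reduce everything to a single inequality, which is then symmetric in $q$ and $p$: for compatible $q,p$ (both compatible with $\cQ,\cP$) and every finite-dimensional $E$,
\begin{align}
\|\id_E\otimes\Phi\|^{(+)}_{(E:p,Q:\cQ)\to(E:p,P:\cP)}\leq\sup_{F}\|\id_F\otimes\Phi\|^{(+)}_{(F:q,Q:\cQ)\to(F:q,P:\cP)}.
\end{align}
Exchanging the roles of $q$ and $p$ gives equality of the two suprema, hence independence of $t$. The multi-indexed statement with $\cE$ then follows by induction on $|\cE|$. The first index of $\cE$ is handled by the same argument with the inner collection enlarged from $(\cQ)$ to $(e_2,\dots,e_k,\cQ)$. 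Alternatively one uses $(E:\cE,\cdot)$ versus $(E_1E_2\cdots:t,\cdot)$, treating $E$ as a single system, and moves indices one at a time with relational consistency.

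\textbf{Case $q\leq p$, comparing $(E:p)$ with $(E:q)$.} Set $\frac1r=\frac1q-\frac1p$. By \Cref{thm:multi.quasi.relational.consistency}, for the output,
\begin{align}
\|(\id_E\otimes\Phi)(X)\|_{(E:p,P:\cP)}=\sup_{a,b}\|a\|_{2r}^{-1}\|b\|_{2r}^{-1}\|(a\otimes\1)(\id_E\otimes\Phi)(X)(b\otimes\1)\|_{(E:q,P:\cP)}.
\end{align}
Since $a,b$ act on $E$ only, they commute with $\id_E\otimes\Phi$, so the right-hand side is bounded by
\begin{align}
\sup_{a,b}\|a\|_{2r}^{-1}\|b\|_{2r}^{-1}\,\|\id_E\otimes\Phi\|_{(E:q)\to(E:q)}\,\|a_1Xb_1\|_{(E:q,Q:\cQ)}.
\end{align}
Applying \Cref{thm:multi.quasi.relational.consistency} on the input side, $\sup_{a,b}\|a\|_{2r}^{-1}\|b\|_{2r}^{-1}\|a_1Xb_1\|_{(E:q,Q:\cQ)}=\|X\|_{(E:p,Q:\cQ)}$, which yields the $p$-norm bounded by the $q$-norm already at the same $E$.

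\textbf{The reverse direction, $(E:q)$ versus $(E:p)$.} Here we use the infimum formula. Write $\|X\|_{(E:q,Q:\cQ)}\approx\|a\|_{2r}\|b\|_{2r}\|a_1^{-1}Xb_1^{-1}\|_{(E:p,Q:\cQ)}$ for near-optimal $a,b$. Then bound the output by
\begin{align}
\|(\id\otimes\Phi)(X)\|_{(E:q,P:\cP)}\leq\|a\|_{2r}\|b\|_{2r}\|(\id\otimes\Phi)(a_1^{-1}Xb_1^{-1})\|_{(E:p,P:\cP)},
\end{align}
using the trivial direction of the infimum with $a_1(\id\otimes\Phi)(Y)b_1=(\id\otimes\Phi)(a_1Yb_1)$. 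The support conditions in the infimum formula only need $\Pi_a\geq L$, and this is handled by the same perturbation $a+\delta(\1-\Pi_a)$ used in \Cref{rem:sup.arbitrary}. This gives the bound at the same $E$ again.

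In the positive ($^+$) version, when $X\geq0$ we choose $a=b$ by the self-adjoint clause of \Cref{thm:multi.quasi.relational.consistency}. Then $a_1Xa_1\geq0$, and it is strictly positive after perturbation, so positivity is preserved. For negative indices we work throughout with $X>0$ and $a=b>0$, as in the definition. The main obstacle I expect is exactly this bookkeeping of supports and positivity in the $^+$ and negative cases; the operator-norm comparison itself is just the bimodule property of the outer system.
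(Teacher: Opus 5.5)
Your proposal is correct and follows the paper's proof essentially step for step. The paper also proves the stronger fixed-$E$ equality: it applies the infimum formula of \Cref{thm:multi.quasi.relational.consistency} to the input and the supremum formula to the output, pushes the weights $a_E,b_E$ through $\id_E\otimes\Phi$, takes $a=b$ for positive inputs, and obtains the $\cE$-version by repeatedly changing the index of $E_1$ to that of $E_2$ and merging the two systems. The differences are cosmetic: you bound inside the supremum instead of using $\epsilon$-optimizers, and you state explicitly the relaxation $\Pi_a\geq L$ that the paper uses tacitly. One caveat, shared with the paper: choosing $a=b$ in the output supremum for the ${}^{+}$ version requires $(\id_E\otimes\Phi)(X)$ to be self-adjoint, which holds for the Hermiticity-preserving (e.g.\ CP) maps to which the lemma is actually applied.
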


\begin{proof}
Take for now $0<q\leq p$ to be compatible indices compatible with both $\cQ,\cP$. 
This equivalence is a simple consequence of the factorization formulas from \Cref{thm:multi.quasi.relational.consistency}. We show the equality by separately proving the following slightly stronger equality
\begin{align}
    \|\id_E\otimes\Phi\|_{(E:q,Q:\cQ)\to (E:q,P:\cP)} = \|\id_E\otimes\Phi\|_{(E:p,Q:\cQ)\to (E:p,P:\cP)},
\end{align} which upon taking a supremum over the system $E$ yields the claim of the lemma. 

On the one hand we have, for any input $X_{EQ}$ and any $\epsilon>0$, that there exists approximately optimal $\hat{a}_E,\hat{b}_E,\hat{Z}_{EQ}$ s.t. $X_{EQ}=\hat{a}_E\hat{Z}_{EQ}\hat{b}_E$ and
\begin{align}
    \|X\|_{(E:q,Q:\cQ)}\geq \|\hat{a}\|_{2r}\|\hat{b}\|_{2r}\|\hat{Z}\|_{(E:p,Q:\cQ)}-\epsilon=\|\hat{Z}\|_{(E:p,Q:\cQ)}-\epsilon,
\end{align}
as we may assume $\|\hat{a}\|_{2r}=\|\hat{b}\|_{2r}=1$. At the same time
\begin{align}
    \|(\id_E\otimes\Phi)(X_{EQ})\|_{(E:q,P:\cP)}\leq \|a\|_{2r}\|b\|_{2r}\|(\id_E\otimes\Phi)(Z_{EQ})\|_{(E:p,P:\cP)},
\end{align} for any $a,b,Z$ with $X_{EQ}=a_EZ_{EQ}b_E$, since this yields $(\id_E\otimes\Phi)(X_{EQ})=a_E(\id_E\otimes\Phi)(Z_{EQ})b_E$. Together these yield
\begin{align}
    \frac{\|(\id_E\otimes\Phi)(X)\|_{(E:q,P:\cP)}}{\|X\|_{(E:q,Q:\cQ)}}\leq \frac{\|\hat{a}\|_{2r}\|\hat{b}\|_{2r}\|(\id_E\otimes\Phi)(\hat{Z})\|_{(E:p,P:\cP)}}{\|\hat{a}\|_{2r}\|\hat{b}\|_{2r}\|\hat{Z}\|_{(E:p,Q:\cQ)}-\epsilon}= \frac{\|(\id_E\otimes\Phi)(\hat{Z})\|_{(E:p,P:\cP)}}{\|\hat{Z}\|_{(E:p,Q:\cQ)}-\epsilon}.
\end{align} Since this works for any $\epsilon>0$ it follows that
\begin{align}
    \frac{\|(\id_E\otimes\Phi)(X)\|_{(E:q,P:\cP)}}{\|X\|_{(E:q,Q:\cQ)}}\leq \sup_{Z_{EQ}}\frac{\|(\id_E\otimes\Phi)(Z)\|_{(E:p,P:\cP)}}{\|Z\|_{(E:p,Q:\cQ)}}
\end{align}
Hence we have shown that
\begin{align}
    \|\id_E\otimes\Phi\|_{(E:q,Q:\cQ)\to (E:q,P:\cP)} \leq \|\id_E\otimes\Phi\|_{(E:p,Q:\cQ)\to (E:p,P:\cP)}.
\end{align}

On the other hand we similarly get that for any $\epsilon>0$ there exists $\hat{a},\hat{b}$ such that 
\begin{align}
   \|(\id_E\otimes\Phi)(X_{EQ})\|_{(E:p,P:\cP)}\leq \|\hat{a}\|^{-1}_{2r}\|\hat{b}\|^{-1}_{2r}\|(\id_E\otimes\Phi)(\hat{a}_EX_{EQ}\hat{b}_E)\|_{(E:q,P:\cP)}+\epsilon, 
\end{align} where again we may assume $\|\hat{a}\|^{-1}_{2r}=\|\hat{b}\|^{-1}_{2r}=1$. Analogously to before, it also holds that
\begin{align}
    \|X_{EQ}\|_{(E:p,Q:\cQ)}\geq \|\hat{a}\|^{-1}_{2r}\|\hat{b}\|^{-1}_{2r}\|\hat{a}_EX_{EQ}\hat{b}_E\|_{(E:q,Q:\cQ)}
\end{align} and hence
\begin{align}
  \frac{\|(\id_E\otimes\Phi)(X)\|_{(E:p,P:\cP)}}{\|X\|_{(E:p,Q:\cQ)}}&\leq \frac{\|\hat{a}\|^{-1}_{2r}\|\hat{b}\|^{-1}_{2r}\|(\id_E\otimes\Phi)(\hat{a}_EX_{EQ}\hat{b}_E)\|_{(E:q,P:\cP)}+\epsilon}{\|\hat{a}\|^{-1}_{2r}\|\hat{b}\|^{-1}_{2r}\|\hat{a}_EX_{EQ}\hat{b}_E\|_{(E:q,Q:\cQ)}} \\&= \frac{\|(\id\otimes\Phi)(\hat{a}_EX_{EQ}\hat{b}_E)\|_{(E:q,P:\cP)}+\epsilon}{\|\hat{a}_EX_{EQ}\hat{b}_E\|_{(E:q,Q:\cQ)}}.
\end{align} Since this works for any $\epsilon>0$ this implies that
\begin{align}
  \frac{\|(\id_E\otimes\Phi)(X)\|_{(E:p,P:\cP)}}{\|X\|_{(E:p,Q:\cQ)}}  \leq\sup_{Z_{EQ}}\frac{\|(\id\otimes\Phi)(Z)\|_{(E:q,P:\cP)}}{\|Z\|_{(E:q,Q:\cQ)}},
\end{align} which implies
\begin{align}
    \|\id_E\otimes\Phi\|_{(E:p,Q:\cQ)\to (E:p,P:\cP)} \leq \|\id_E\otimes\Phi\|_{(E:q,Q:\cQ)\to (E:q,P:\cP)}.
\end{align}
Taking the supremum over all ancilla systems $E$ completes the proof. For the positive case the exact same proof applies, since by \Cref{thm:multi.quasi.relational.consistency} if $X\geq0$ we may take $a=b^*\geq0$ and hence $Z\geq0$ as well. This positive case then equally works for positive and negative indices $q,p<0$ as \cref{thm:multi.quasi.relational.consistency} does.

\medskip

The second formula, where the completely bounding system carries a multi-indexed Schatten space $\cE=(e_1,\dots,e_{n})$, follows by iterative application of the just proved result. On the one hand, we have
\begin{align}
   \|\id_{E}\otimes\Phi&\|^{(+)}_{(E:\cE,Q:\cQ)\to (E:\cE,P:\cP)} \\&= \|\id_{E_1}\otimes\id_{E_2\dots E_n}\otimes\Phi\|^{(+)}_{(E_1:e_1,E_2:e_2,\dots,E_n:e_n,Q:\cQ)\to (E_1:e_1,E_2:e_2,\dots,E_n:e_n,P:\cP)} \\ &=\|\id_{E_1E_2}\otimes\id_{E_3\dots E_n}\otimes\Phi\|^{(+)}_{(E_1E_2:e_2,E_3:e_3,\dots,E_n:e_n,Q:\cQ)\to (E_1E_2:e_2,E_3:e_3,\dots,E_n:e_n,P:\cP)} \\ &=\cdots =\|\id_{E_1E_2\dots E_n}\otimes\Phi\|^{(+)}_{(E:e_n,Q:\cQ)\to (E:e_n,P:\cP)}.
\end{align}
Taking the supremum over $E=E_1\dots E_n$ then yields the claim.
\end{proof}

\subsubsection{Completely Bounded Co-Quasi-Norms of CP Maps}

Instead of considering the usual quasi-norm of a linear map $\Phi:Q\to P$ between two quasi-normed spaces, which are given as maximizations, one may be interested in analogous quantities involving a minimization. This is sometimes called the \textit{minimum modulus} or \textit{co}-(quasi-)norm \cite{Mbekhta.1996, Amelin.1973}. Next, we define the \textit{co-quasi-norm} and \textit{completely bounded co-quasi-norm} of a CP map. This quantity is much less well behaved than the former since it is often trivially $0$, which is why we will be focusing on the version restricted to positive inputs and CP maps. 
Despite this, the co-(quasi-)norm will also play an important role in quantum information theory applications, see \cite{Li.2026, Kochanowski.2026} and \Cref{sec:ApplictoQI}. 
 
\begin{definition}\label{def:cb.co.quasi.norm}
Let $\cQ,\cP$ be multi-indexed Schatten spaces on systems $Q,P$ such that $t\neq0$ is compatible with both of them and $\Phi:Q\to P$ be a linear map, then we define
\begin{align}
    \|\Phi\|^+_{\cb,\co,(Q:\cQ)\to (P:\cP)}:=\inf_E\|\id_E\otimes\Phi\|^+_{\co,(E:t,Q:\cQ)\to (E:t,P:\cP)}\,,
\end{align}
where the infimum is over any finite-dimensional reference system $E$.
\end{definition}

\begin{remark}
    A special instance of this completely bounded co-quasi-norm had been, concurrently to \cite{Kochanowski.2026}, investigated in \cite{Li.2026} where $\cQ=(q), \cP=(p)$ were single-index Schatten spaces with $q=1, p=\alpha\in[\frac{1}{2},1]$. In that peculiar case they define their $\|\cdot\|^+_{\cb,\co,1\to\alpha}$, which they call ``completely bounded quasi-norm'' by a minimization over input states normalized in $\|\cdot\|_{(\alpha,1)}$-quasi-norm, which thanks to \Cref{cor:partial.trace.1} can be expressed in closed form with a partial trace. They then arrive at the same $\alpha\geq\frac12$ condition via an operator convexity requirement.
    
\noindent This \Cref{def:cb.co.quasi.norm} is the multi-indexed extension of the completely bounded co-quasi-norm of \cite{Kochanowski.2026} and hence also \cite{Li.2026}.
\end{remark}

\noindent In exactly the same way as for the (completely bounded) quasi-norm, we can show that for the co-quasi-norm the reference system index $t$ can be chosen arbitrarily, as long as it is compatible with both $\cQ,\cP$. In particular, this means that the above quantity is well defined.

\begin{lemma}\label{lem:cb.co.norm.via.t}
Let $\cQ,\cP$ be multi-indexed Schatten spaces on systems $Q,P$ and $\Phi:Q\to P$ be a linear map, then its completely bounded co-quasi-norm defined in \Cref{def:cb.co.quasi.norm} is independent of the index $t$ put on the reference system. In particular for any $q,p\neq0$ which are compatible (and thus of the same sign) and compatible with both $\cQ,\cP$ it holds that
\begin{align}
    \|\Phi\|^+_{\cb,\co,(Q:\cQ)\to (P:\cP)}=\inf_E\|\id_E\otimes\Phi\|^+_{\co,(E:q,Q:\cQ)\to (E:q,P:\cP)}=\inf_E\|\id_E\otimes\Phi\|^+_{\co,(E:p,Q:\cQ)\to (E:p,P:\cP)}.
\end{align}
Moreover let $\cE$ be a multi-indexed Schatten space on systems $E$ which is compatible with both $\cQ$ and $\cP$, then it holds that
\begin{align}
    \|\Phi\|^+_{\cb,\co,(Q:\cQ)\to (P:\cP)}=\inf_E\|\id_{E}\otimes\Phi\|^+_{\co,(E:\cE,Q:\cQ)\to (E:\cE,P:\cP)}.
\end{align}
\end{lemma}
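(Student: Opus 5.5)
The plan is to mirror the proof of \Cref{lem:cb.norm.via.t}, with suprema replaced by infima and all inputs restricted to positive operators. Throughout I use \Cref{thm:multi.quasi.relational.consistency} in its self-adjoint form, so that the optimizers can be taken symmetric, $a=b\geq0$ (or $a=b>0$ for negative indices).

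Take compatible $q\leq p$ of the same sign, compatible with $\cQ$ and $\cP$, and set $\frac1r=\frac1q-\frac1p$. It suffices to prove, for each fixed $E$,
\begin{align}
\|\id_E\otimes\Phi\|^+_{\co,(E:q,Q:\cQ)\to(E:q,P:\cP)}=\|\id_E\otimes\Phi\|^+_{\co,(E:p,Q:\cQ)\to(E:p,P:\cP)},
\end{align}
and then take $\inf_E$.

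\emph{The bound $q$-co $\geq$ $p$-co.} Fix $X>0$ on $EQ$. Since $\Phi$ is CP, $(\id_E\otimes\Phi)(X)\geq0$.
\begin{itemize}
\item Apply the supremum formula of \Cref{thm:multi.quasi.relational.consistency} to $\|(\id_E\otimes\Phi)(X)\|_{(E:p,P:\cP)}$.
\item For $\epsilon>0$, pick $\hat a\geq0$ with $\|\hat a\|_{2r}=1$ such that $\|(\id\otimes\Phi)(X)\|_{(E:p,P:\cP)}\leq\|(\id\otimes\Phi)(\hat aX\hat a)\|_{(E:q,P:\cP)}+\epsilon$.
\item The same formula on the input side gives $\|X\|_{(E:p,Q:\cQ)}\geq\|\hat aX\hat a\|_{(E:q,Q:\cQ)}$.
\end{itemize}
Because $\hat a$ acts only on $E$, we have $(\id\otimes\Phi)(\hat a X\hat a)=\hat a(\id\otimes\Phi)(X)\hat a$. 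Hence the ratio at $X$ in the $p$-picture is at most the ratio at $Z=\hat aX\hat a\geq0$ in the $q$-picture, up to $\epsilon$. This seems to go the wrong way for an infimum, so I would swap the roles of the two formulas, as follows.

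Start from the $q$-ratio at some $X$. Use the infimum formula on the \emph{input}: choose $\hat a$ with $\|\hat a\|_{2r}=1$ and $\|X\|_{(E:q,Q:\cQ)}\geq\|\hat a^{-1}X\hat a^{-1}\|_{(E:p,Q:\cQ)}-\epsilon$. Set $Z=\hat a^{-1}X\hat a^{-1}\geq0$. The infimum formula on the \emph{output} then gives $\|(\id\otimes\Phi)(X)\|_{(E:q,P:\cP)}\leq\|(\id\otimes\Phi)(Z)\|_{(E:p,P:\cP)}$. Together these show that the $q$-ratio at $X$ is at most the $p$-ratio at $Z$ plus $\epsilon$. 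That is again an upper bound on a ratio, the wrong direction for an infimum.

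\emph{Fixing the direction.} The correct pairing uses the formula on the \emph{output} to produce the approximate optimizer, and the formula on the input to give the inequality.
\begin{itemize}
\item \emph{Showing $p$-co $\leq$ $q$-co.} Given $X$, take $\hat a$ nearly optimal in the infimum formula for the output $\|(\id\otimes\Phi)(X)\|_{(E:q,P:\cP)}$. With $Z=\hat a^{-1}X\hat a^{-1}$ this gives $\|(\id\otimes\Phi)(Z)\|_{(E:p,P:\cP)}\leq\|(\id\otimes\Phi)(X)\|_{(E:q,P:\cP)}+\epsilon$. The input inequality $\|Z\|_{(E:p,Q:\cQ)}\geq\|X\|_{(E:q,Q:\cQ)}$ holds because $X=\hat aZ\hat a$ with $\|\hat a\|_{2r}=1$. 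So the $p$-ratio at $Z$ is at most the $q$-ratio at $X$ plus $\epsilon$.
\item \emph{Showing $q$-co $\leq$ $p$-co.} Given $X$, take $\hat a$ nearly optimal in the supremum formula for the output $\|(\id\otimes\Phi)(X)\|_{(E:p,P:\cP)}$. With $Z=\hat aX\hat a$, the input bound $\|Z\|_{(E:q,Q:\cQ)}\leq\|X\|_{(E:p,Q:\cQ)}$ follows from the same supremum formula.
\end{itemize}
These two bullets give the two inequalities between the infima.

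\emph{Main obstacle: positive definiteness.} The definition of the co-quasi-norm takes the infimum over $X>0$, but $Z$ may fail to be positive definite: $\hat a$ is only guaranteed to have support $L_X$, and compressing to supports can cause trouble.
\begin{itemize}
\item For $X>0$ the marginal support is full, so $\hat a>0$ and $Z>0$.
\item The output need not be definite. I would handle this by perturbing, $X+\delta\1$, and using the continuity from \Cref{thm:Multi.Quasi.Norms} (positive indices), or by passing to negative indices via \Cref{prop:negative.multi.indexed}.
\end{itemize}

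\emph{Multi-indexed reference system.} The statement with $\cE$ then follows exactly as in \Cref{lem:cb.norm.via.t}. One iteratively merges $E_1$ into $E_2$, and so on, each time applying the single-index equality just proved with the inner system $(E_{i+1}\cdots E_n,Q)$ carrying the remaining indices. Finally one takes $\inf_E$.
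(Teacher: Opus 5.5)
\textbf{Gap in the second bullet.} Your first bullet under \emph{Fixing the direction} is correct. The second is not: it is exactly the construction you rejected in your first attempt.

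Write $R_s(X):=\|(\id_E\otimes\Phi)(X)\|_{(E:s,P:\cP)}/\|X\|_{(E:s,Q:\cQ)}$ for $s\in\{q,p\}$. To get $\inf R_q\le\inf R_p$ you need, for each $X>0$, some $Z$ with $R_q(Z)\le R_p(X)+o(1)$. That requires an upper bound on the numerator and a lower bound on the denominator of $R_q(Z)$.

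Your choice is $\hat a$ nearly optimal in the supremum formula for the \emph{output}, with $Z=\hat aX\hat a$. This gives $\|\hat a(\id_E\otimes\Phi)(X)\hat a\|_{(E:q,P:\cP)}\ge\|(\id_E\otimes\Phi)(X)\|_{(E:p,P:\cP)}-\epsilon$ and $\|Z\|_{(E:q,Q:\cQ)}\le\|X\|_{(E:p,Q:\cQ)}$. That is a lower bound on the numerator and an upper bound on the denominator, so you only get $R_q(Z)\gtrsim R_p(X)$. This is the inequality needed for the suprema in \Cref{lem:cb.norm.via.t}, and it says nothing about infima. As written, $\|\id_E\otimes\Phi\|^+_{\co,(E:q,Q:\cQ)\to(E:q,P:\cP)}\le\|\id_E\otimes\Phi\|^+_{\co,(E:p,Q:\cQ)\to(E:p,P:\cP)}$ is unproved.

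\textbf{Repair.} Flip this pairing too. Passing from suprema to infima swaps \emph{both} pairings of \Cref{lem:cb.norm.via.t}, not only the one for the infimum formula.
\begin{itemize}
\item Take $\hat a\geq0$ with $\|\hat a\|_{2r}=1$, nearly optimal for the \emph{input}: $\|\hat aX\hat a\|_{(E:q,Q:\cQ)}\ge\|X\|_{(E:p,Q:\cQ)}-\epsilon$.
\item The output bound $\|\hat a(\id_E\otimes\Phi)(X)\hat a\|_{(E:q,P:\cP)}\le\|(\id_E\otimes\Phi)(X)\|_{(E:p,P:\cP)}$ holds for every admissible unit $\hat a$.
\item Hence $R_q(\hat aX\hat a)\le\|(\id_E\otimes\Phi)(X)\|_{(E:p,P:\cP)}/(\|X\|_{(E:p,Q:\cQ)}-\epsilon)$.
\end{itemize}

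\textbf{Admissibility.} Assume $\Phi\neq0$ (for $\Phi=0$ both sides vanish). If $X\ge\lambda\1$ with $\lambda>0$, then $(\id_E\otimes\Phi)(X)\ge\lambda\,\1_E\otimes\Phi(\1_Q)$ has full $E$-marginal support. It is this, not only the input's support as your positive-definiteness paragraph suggests, that makes the optimizer taken from the output in your first bullet invertible. That is why $X=\hat aZ\hat a$ holds exactly and $Z>0$.

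With these two changes your argument coincides with the paper's. The paper simply mirrors \Cref{lem:cb.norm.via.t} with infima and then merges the $\cE$-systems iteratively, as you describe.
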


\begin{proof}
Exactly as in the proof of \Cref{lem:cb.norm.via.t}, when taking infima instead of suprema, one shows that
\begin{align}
    \|\id_E\otimes\Phi\|^+_{\co,(E:q,\cQ)\to (E:q,\cP)}=\|\id_E\otimes\Phi\|^+_{\co,(E:p,\cQ)\to (E:p,\cP)},
\end{align} which upon taking the infimum over $E$ yields the first claim of the lemma. The second one follows analogously to \Cref{lem:cb.norm.via.t}.
\end{proof}

\subsection{Swapping Systems: The Non-Commutative Minkowski Inequality}\label{Lin.2}

Having a notion of completely bounded quasi-norm and co-quasi-norm, we may extend a seminal result concerning the swapping of adjacent systems to the (quasi-normed) multi-indexed setting. In particular, we show that the swap map, which swaps two adjacent systems, is a complete contraction in the sense of \Cref{thm:QuasiNormMinkowskiInequality} below.

\begin{theorem}[The non-commutative Minkowski inequality]\label{thm:QuasiNormMinkowskiInequality}
Let $\cR$ be a multi-indexed Schatten space on systems $R$ and $q\leq p$ compatible indices, respectively, on systems $Q,P$, which are compatible with $\cR$. Then it holds that
\begin{align}
    \|\SWAP_{Q\leftrightarrow P}\otimes\id_{R}\|_{\cb,(Q:q,P:p,R:\cR)\to (P:p,Q:q,R:\cR)}\leq 1
\end{align} and 
\begin{align}
    \|\SWAP_{Q\leftrightarrow P}\otimes\id_{R}\|^+_{\cb,\co,(P:p,Q:q,R:\cR)\to (Q:q,P:p,R:\cR)}\geq 1.
\end{align}
\end{theorem}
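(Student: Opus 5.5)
The plan is to first reduce the statement to a single inequality for operators, and then prove that inequality by induction. This reduction also settles the second claim. For any reference system $E$ and any $Y\geq0$, apply the first claim to $X:=\SWAP_{Q\leftrightarrow P}(Y)$. This gives $\|\SWAP_{Q\leftrightarrow P}(Y)\|_{(E,Q:q,P:p,R:\cR)}\geq\|Y\|_{(E,P:p,Q:q,R:\cR)}$, which is exactly the co-quasi-norm bound $\geq1$ after taking the infimum over $Y>0$ and over $E$. So it suffices to show, for every finite-dimensional $E$ and every $X$ (with $X>0$ for negative indices), that
\begin{align}
    \|X_{EQPR}\|_{(E:e,Q:q,P:p,R:\cR)}\geq\|X_{EPQR}\|_{(E:e,P:p,Q:q,R:\cR)}
\end{align}
for one admissible choice of the reference index $e$. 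By \Cref{lem:cb.norm.via.t} we are free to choose $e$. By \Cref{prop:negative.multi.indexed}, the negative case follows from the positive one with the roles of $q$ and $p$ reversed, so we assume all indices are positive.

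\emph{Normed case.} When $q,p$ and $\cR$ lie in $[1,\infty]$, view $\cX:=\cS_{\cR}(\cH_R)$ as an operator space via \Cref{def:Multi.Pisier.Schatten}. The inequality is then Pisier's non-commutative Minkowski inequality \cite[Corollary 1.10]{book:Pisier.1998} in its operator-valued form, $\|\cdot\|_{\cS_q[\cS_p[\cX]]}\geq\|\SWAP(\cdot)\|_{\cS_p[\cS_q[\cX]]}$ for $q\leq p$. The outer system $E$ is absorbed by Pisier's cb-characterization, \Cref{prelim:Pisier.cb.via.q}. This handles inner indices $\cR$ in the normed range at no extra cost, because $\cX$ is an arbitrary operator space.

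\emph{Quasi-normed case.} Take $e=q$. Then $E$ and $Q$ carry equal indices and merge, so the left-hand side equals $\|X\|_{(EQ:q,P:p,R:\cR)}$. By \Cref{thm:multi.quasi.relational.consistency}, with $\frac1r=\frac1q-\frac1p$, this is
\begin{align}
\inf_{a,b\text{ on }EQ}\|a\|_{2r}\|b\|_{2r}\,\|a^{-1}Xb^{-1}\|_{(EQP:p,R:\cR)}.
\end{align}
On the right-hand side, the $E$-to-$P$ step is again an infimum, over $c,d$ on $E$, of $\|c\|_{2r}\|d\|_{2r}\|c^{-1}Xd^{-1}\|_{(EP:p,Q:q,R:\cR)}$. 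The subsequent $EP$-to-$Q$ step is a supremum over weights on $EP$.

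The idea is to fix near-optimal $a,b$ from the left-hand side and set $Z:=a^{-1}Xb^{-1}$, so that $\|Z\|_{(EQP:p,R)}$ is controlled. It then remains to establish the two-index inequality
\begin{align}
\|aZb\|_{(E:q,P:p,Q:q,R)}\leq\|a\|_{2r}\|b\|_{2r}\|Z\|_{(EQP:p,R)}
\end{align}
for $a,b$ acting on $EQ$. We expand the supremum over $EP$-weights $\alpha,\beta$ and use Hölder-type control of $\|\alpha\,aZb\,\beta\|_{(EPQ:q,R)}$. For this we invoke \Cref{thm:Multi.Thm.Pisier.like} with the choice $s=\max\{p,\cR\}$, which turns every quasi-norm into a weighted $\|\cdot\|_s$ with weights measured in \emph{normed} multi-indexed spaces. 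Once all weight norms are normed, the required rearrangement of $a$ (on $EQ$) past the $P$-weights should become an instance of the normed case above, applied to the weight operators.

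The main obstacle is this last step: getting the direction of the normed inequality right for the weights. A naive application of the infimum formula, or of the supremum formula, requires the opposite inequality on the weights, since the dual indices $s_i$ or $t_i$ reverse the order of $q$ and $p$. I would therefore first try to mix the two formulas. The idea is to use the infimum formula on the left-hand side and the supremum formula on the right-hand side, with a common base index, and to exchange the resulting inf and sup via Sion's minimax theorem and the concavity statement of \Cref{lem.app.op.concavity}, as in the proof of \Cref{thm:Multi.Thm.Pisier.like}. If that fails, the fallback is an induction on $|\cR|$. There the innermost system of $\cR$ is peeled off by \Cref{thm:multi.quasi.relational.consistency} applied from the outside, and the two-indexed quasi-normed Minkowski inequality of \cite{Kochanowski.2026} serves as the base case.
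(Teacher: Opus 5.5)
Your preliminary reductions are sound. The co-norm bound follows from the operator inequality applied to $X=\SWAP_{Q\leftrightarrow P}(Y)$. The negative case follows from the positive one via \Cref{prop:negative.multi.indexed} with the roles of $q$ and $p$ exchanged. \Cref{lem:cb.norm.via.t} lets you fix the reference index. In the normed case, the paper gets the $\cX$-valued inequality from the complete contractivity in \cite[Corollary 1.10]{book:Pisier.1998} combined with \Cref{lem:reductionLemma}, which applies because the swap is CP. You should argue the same way rather than rely on an operator-valued form of Pisier's corollary.

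The genuine gap is that the quasi-normed case is never proved. Your intermediate inequality $\|aZb\|_{(E:q,P:p,Q:q,R:\cR)}\leq\|a\|_{2r}\|b\|_{2r}\|Z\|_{(EQP:p,R:\cR)}$ has $a,b$ acting on the non-adjacent systems $E$ and $Q$. It is itself a swap-type statement: it follows from the theorem together with \Cref{thm:multi.quasi.relational.consistency}. None of the factorization formulas lets you move weights across the intervening system $P$. You leave this inequality open, and neither fallback is worked out. In particular, \Cref{thm:multi.quasi.relational.consistency} peels off the outermost index, not the innermost one, so your induction on $|\cR|$ has no identifiable inductive step.

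More importantly, the obstacle that sent you down this path comes from a sign error. In the infimum formula of \Cref{thm:Multi.Thm.Pisier.like} the weight indices are $\frac{1}{s_i}=\frac{1}{q_i}-\frac{1}{m}$, which is order-preserving. With $\frac{1}{r_q}=\frac1q-\frac1m$ and $\frac1{r_p}=\frac1p-\frac1m$, $q\leq p$ gives $r_q\leq r_p$.

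So let $m$ be the largest index in $(\cE,q,p,\cR)$. It is compatible with all of them, so every weight space is normed. Apply the infimum formula directly to the whole operator $X$ on $EQPR$. The weights $AA^*$ and $B^*B$ are then measured in $(E:\cS_{\cE},Q:r_q,P:r_p,R:\cS_{\cR})$. Here $\cS_{\cE},\cS_{\cR}$ are the normed collections $\frac1{s_i}=\frac1{e_i}-\frac1m$ and $\frac1{s_i}=\frac1{r_i}-\frac1m$.

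In that normed space the Minkowski inequality gives $\|AA^*\|_{(E:\cS_{\cE},Q:r_q,P:r_p,R:\cS_{\cR})}\geq\|\SWAP(AA^*)\|_{(E:\cS_{\cE},P:r_p,Q:r_q,R:\cS_{\cR})}$, and likewise for $B^*B$. Meanwhile $\|A^{-1}XB^{-1}\|_m=\|\SWAP(A)^{-1}\SWAP(X)\SWAP(B)^{-1}\|_m$ by unitary invariance. Taking the infimum yields $\|X\|_{(E:\cE,Q:q,P:p,R:\cR)}\geq\|\SWAP(X)\|_{(E:\cE,P:p,Q:q,R:\cR)}$ in one line. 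This is exactly the paper's proof.

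The supremum formula would also work. There $t_q\geq t_p$, so the order is indeed reversed. But the weights enter with exponent $-\frac12$, so what is needed is again the correct direction of the normed inequality. No minimax exchange and no induction on $|\cR|$ are required.
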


\begin{remark}
    This is one of the most important tools in this line of work. It is central to the seminal multiplicativity statements of \cite{Devetak.2006}, among others, and will feature in essentially all proofs in the rest of this section. It is the most general statement of this form one could expect to hold and the versions of \cite{book:Pisier.1998, Devetak.2006}, used in the subsequent \cite{Fawzi.2026, Beigi.2023}, are special cases of this formula restricted to indices $q,p\geq 1$ and without the inner systems $R$. In \cite{Kochanowski.2026} these had been extended to compatible quasi-normed indices $0<q \leq p$, but still without the inner indices $\cR$.
    
\noindent The statement is equivalent to and should be read as
    \begin{align}
         \|X_{EPQR}\|_{(E:\cE,P:p,Q:q,R:\cR)} \leq  \|X_{EQPR}\|_{(E:\cE,Q:q,P:p,R:\cR)}
    \end{align} where $\cE,\cR$ are arbitrary multi-indexed Schatten spaces which are compatible with $q,p$ and each other.
\end{remark}

Before we give a proof of \Cref{thm:QuasiNormMinkowskiInequality} we state a direct consequence.
\begin{corollary}\label{cor:equal.adjacient.indices}
Swapping adjacent equal indices does not change the quasi-norm. Let $\cE,\cR$ be multi-indexed Schatten spaces compatible with $p$ and each other, then
\begin{align}
    \|X_{EPQR}\|_{(E:\cE,P:p,Q:p,R:\cR)} = \|X_{EQPR}\|_{(E:\cE,Q:p,P:p,R:\cR)}.
\end{align}
\end{corollary}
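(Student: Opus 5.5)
The plan is to apply \Cref{thm:QuasiNormMinkowskiInequality} twice, once in each direction, taking $q=p$ both times. With $q=p$ the pair $(p,p)$ is trivially compatible, and by assumption it is compatible with $\cE$ and $\cR$. So the hypothesis $q\leq p$ holds in both orderings of the two systems.

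\textbf{First direction.} Use the reformulation given in the remark after \Cref{thm:QuasiNormMinkowskiInequality}: for systems $Q,P$ carrying indices $q\leq p$, and arbitrary compatible outer and inner spaces $\cE,\cR$,
\begin{align}
\|X_{EPQR}\|_{(E:\cE,P:p,Q:q,R:\cR)}\leq\|X_{EQPR}\|_{(E:\cE,Q:q,P:p,R:\cR)}.
\end{align}
Setting $q=p$ gives $\|X_{EPQR}\|_{(E:\cE,P:p,Q:p,R:\cR)}\leq\|X_{EQPR}\|_{(E:\cE,Q:p,P:p,R:\cR)}$.

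\textbf{Second direction.} Since $q=p$, the roles of the two systems are symmetric, so we may relabel and take $P$ as the system carrying the ``smaller'' index. Applying the same inequality to $\SWAP_{Q\leftrightarrow P}(X)$ gives the reverse bound. Combining the two inequalities yields equality.

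\textbf{Deriving the remark's form from the theorem.} The one point needing care is that the theorem is stated as a cb-norm bound. The remark's reformulation with an outer multi-indexed $\cE$ follows from the theorem together with \Cref{lem:cb.norm.via.t}, which shows that the index, or multi-index $\cE$, on the reference system is irrelevant. Explicitly, the cb-norm bound $\leq 1$ implies
\begin{align}
\|(\id_E\otimes\SWAP\otimes\id_R)(Y)\|_{(E:\cE,P:p,Q:q,R:\cR)}\leq\|Y\|_{(E:\cE,Q:q,P:p,R:\cR)}
\end{align}
for every $Y$, and we apply this with $Y=X_{EQPR}$.

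For negative indices the same argument works, using the $^+$ version with $X>0$. There is no real obstacle here beyond checking that compatibility is preserved in both orderings, which holds because the index set is the same.
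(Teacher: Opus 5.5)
Your proposal is correct and is essentially the paper's proof: two applications of the non-commutative Minkowski inequality (\Cref{thm:QuasiNormMinkowskiInequality}) with $q=p$, one in each ordering, sandwiching the two quantities between each other. Your extra step, which extracts the operator-level inequality with an outer multi-indexed $\cE$ from the cb-norm bound via \Cref{lem:cb.norm.via.t}, is a valid justification of the reformulation the paper uses directly.
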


\begin{proof}[Proof of \Cref{cor:equal.adjacient.indices}]
While this is also a consequence of the local unitary invariance and the fact that by definition equal adjacent indices are viewed as just one index on the larger system, the proof via two applications of the non-commutative Minkowski inequality is very elegant:
\begin{align}
  \|X_{EPQR}\|_{(E:\cE,P:p,Q:p,R:\cR)} \leq  \|X_{EQPR}\|_{(E:\cE,Q:p,P:p,R:\cR)} \leq \|X_{EPQR}\|_{(E:\cE,P:p,Q:p,R:\cR)}.
\end{align}
\end{proof}

\begin{proof}[Proof of \Cref{thm:QuasiNormMinkowskiInequality}]
While there are several ways to prove this, to save us some time, we will first prove a multi-indexed operator space version of this statement, which we will then lift directly to the result claimed in the theorem by means of \Cref{thm:Multi.Thm.Pisier.like}.
To this end we first see that by \Cref{lem:reductionLemma} (also in \cite[Theorem 4.1]{Fawzi.2026}, which will be independently proven relying only on \Cref{thm:main.multi.pisier} and \Cref{cor.multi.tensor.multiplicativity}) and the complete contractiveness of the SWAP by \cite[Corollary 1.10]{book:Pisier.1998} we have for any $1\leq q\leq p\leq\infty$ and $\cX$ any operator space, that
\begin{align}\label{equ:op.swap.reduction}
\|\id_{E}\otimes\SWAP_{Q\leftrightarrow P}\otimes\id_{\cX}\|&_{(E:\cE,Q:q,P:p;\cX)\to (E:\cE,P:p,Q:q;\cX)} \\&\leq
    \|\SWAP_{Q\leftrightarrow P}\otimes\id_{\cX}\|_{\cb,(Q:q,P:p;\cX)\to (P:p,Q:q;\cX)} \\&= \|\SWAP\|_{\cb,(Q:q,P:p)\to (P:p,Q:q)} \\&\leq 1.
\end{align}
Now let $m$ be a compatible upper bound to all indices in $(\cE,q,p,\cR)$, then by \Cref{thm:Multi.Thm.Pisier.like} we have
\begin{align}
    \|X\|_{(E:\cE,Q:q,P:p,R:\cR)} &=\inf_{A,B}\|AA^*\|^\frac{1}{2}_{(E:\cS_{\cE},Q:r_q,P:r_p,R:\cS_{\cR})}\|B^*B\|^\frac{1}{2}_{(E:\cS_{\cE},Q:r_q,P:r_p,R:\cS_{\cR})}\|A^{-1}XB^{-1}\|_m \\
    &\geq \inf_{A,B}\|AA^*\|^\frac{1}{2}_{(E:\cS_{\cE},P:r_p,Q:r_q,R:\cS_{\cR})}\|B^*B\|^\frac{1}{2}_{(E:\cS_{\cE},P:r_p,Q:r_q,R:\cS_{\cR})}\|A^{-1}XB^{-1}\|_m \\
    &= \|X\|_{(E:\cE,P:p,Q:q,R:\cR)}, 
\end{align} where the inequality follows from above since $q\leq p$ implies $r_q\leq r_p$, as $\frac{1}{r_q}=\frac{1}{q}-\frac{1}{m}$ and $\frac{1}{r_p}=\frac{1}{p}-\frac{1}{m}$, as well as from the unitary invariance of the Schatten-$m$-quasi-norm. Here we denote by $\cS_{\cR}$ the multi-indexed Schatten space with indices $\frac{1}{s_i}=\frac{1}{r_i}-\frac{1}{m}$, when $\cR=(r_1,\dots,r_n)$ and likewise for $\cS_{\cE}$. Due to the compatibility assumption we have $s_i\in[1,\infty]$, hence $\cS_{\cR}, \cS_{\cE}$ are normed (operator-valued) multi-indexed Schatten spaces, and \eqref{equ:op.swap.reduction} applies with $\cX=\cS_{\cR}$. This works both for positive $0<q\leq p$ and, using \Cref{cor:multi.Pisier.like.negative} with $m\in[-1,0)$, for negative $q\leq p<0$ and $X>0$.
\end{proof}

\subsection{Removing Inner Indices: Two Important Chain Rules}\label{sec:3.removing.indices}

In this subsection we consider mixed norms and co-quasi-norms for maps of the form $\Phi\otimes\id$, where the $\id$ acts on the innermost (or `rightmost') system with unchanging index. We show when this extra identity `to the right' does not change the value of the global mixed quasi-norm, respectively co-quasi-norm, and discuss some implications of this fact. Equalities of this type are central to proving multiplicativity statements involving mixed norms, and essentially all multiplicativity and additivity results of \Cref{sec:ApplictoQI} build on one of the following two results. 

The first instance of an equality of this type was \cite[Lemma 5]{Devetak.2006}, which the authors dubbed `an important Lemma', and which led to multiplicativity of $\cb,q\to p$ norms, finding applications in \cite{Devetak.2006, Beigi.2016, Bardet.2022, Beigi.2023} among others.
That was the statement that for a CP map $\Phi:Q\to P$ and indices $q,p,r$ we have
\begin{align}
    \|\Phi\otimes\id_R\|_{(Q:q,R:r)\to (P:p,R:r)} = \|\Phi\|_{(Q:q)\to (P:p)}.
\end{align}

In \cite[Theorem 4.3]{Kochanowski.2026} we extended the above equality to any compatible $q,p\in[\frac{1}{2},\infty]$ and the maximal range $r\in[1,\infty]$ and in \cite[Theorem 4.1]{Fawzi.2026} we extended the above equality to two-indexed input and output norms.

\subsubsection{The Normed Version}
In the following we present a multi-indexed generalization of these statements.

\begin{theorem}\label{lem:reductionLemma}
Let $\cQ, \cP$ be normed multi-indexed Schatten spaces on systems $Q,P$ compatible with $\infty$ and $\cX$ an operator space. Then for $\Phi:Q\to P$ a CP map it holds that
    \begin{align}\label{equ:id.right.reduction}
        \|\Phi\otimes\id_{\mathcal{X}}\|_{(Q:\cQ;\mathcal{X})\to(P:\cP;\mathcal{X})} = \|\Phi\|^+_{(Q:\cQ)\to (P:\cP)}\,,
    \end{align} 
    where the superscript $^+$ denotes optimization over positive semidefinite operators. Here $\id_{\cX}:\cX\to \cX$ is the identity map.
\end{theorem}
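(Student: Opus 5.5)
The lower bound is immediate. Restricting the supremum on the left of \eqref{equ:id.right.reduction} to product inputs $X\otimes Y$ with $X\geq0$ and $Y\in\cX$ gives, by \Cref{cor:Pisier.multi.tensor.splitting} applied on both sides,
\begin{align}
\frac{\|\Phi(X)\otimes Y\|_{(P:\cP;\cX)}}{\|X\otimes Y\|_{(Q:\cQ;\cX)}}=\frac{\|\Phi(X)\|_{(\cP)}}{\|X\|_{(\cQ)}}.
\end{align}
Hence the left side dominates $\|\Phi\|^+_{(Q:\cQ)\to(P:\cP)}$.

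For the upper bound I would follow the strategy of \cite[Lemma 5]{Devetak.2006} and \cite[Theorem 4.1]{Fawzi.2026}, with Pisier's formula replaced by its multi-indexed version \Cref{thm:main.multi.pisier}. Fix $Z$ with $\|Z\|_{(\cQ;\cX)}<1$. By \Cref{thm:main.multi.pisier} we can write $Z=(A\otimes\1)Y(B\otimes\1)$ with $\|AA^*\|_{(\cQ)}\leq1$, $\|B^*B\|_{(\cQ)}\leq 1$ and $\|Y\|_{(\infty,\dots,\infty;\cX)}\leq1$. Let $\Phi(\cdot)=\sum_k K_k\cdot K_k^*$ be a Kraus decomposition. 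Then
\begin{align}
(\Phi\otimes\id_\cX)(Z)=\sum_k (K_kA\otimes\1)\,Y\,(B K_k^*\otimes\1)=(\tilde A\otimes\1)(\1_k\otimes Y)(\tilde B\otimes\1),
\end{align}
where $\tilde A=\sum_k\langle k|\otimes K_kA$ is a row operator and $\tilde B=\sum_k|k\rangle\otimes BK_k^*$ is a column operator, both going through an auxiliary index space $\C^{N}$. The factor $\1_N\otimes Y$ lives in $M_N$ of the $(\infty,\dots,\infty;\cX)$-space and has norm at most $\|Y\|\leq1$, since that space is $\cB_{\mathrm{cpt}}\otimes_{\min}\cX$ and $M_N\otimes_{\min}$ is isometric on diagonal amplifications.

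The remaining step is to convert this into a factorization on $P$ of the form required by \Cref{thm:main.multi.pisier}. We have $\tilde A\tilde A^*=\sum_kK_kAA^*K_k^*=\Phi(AA^*)$ and $\tilde B^*\tilde B=\Phi(B^*B)$. Take polar decompositions $\tilde A=\Phi(AA^*)^{1/2}V$ and $\tilde B=W\Phi(B^*B)^{1/2}$ with partial isometries $V,W$. Then
\begin{align}
(\Phi\otimes\id_\cX)(Z)=(\Phi(AA^*)^{1/2}\otimes\1)\,\big[(V\otimes\1)(\1_N\otimes Y)(W\otimes\1)\big]\,(\Phi(B^*B)^{1/2}\otimes\1).
\end{align}
The bracketed middle factor has $(\infty,\dots,\infty;\cX)$-norm at most $1$, because contractions act contractively on both sides there (the Ruan axioms for $M_n(\cX)$ and its iterates). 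Applying \Cref{thm:main.multi.pisier} on $P$ then gives
\begin{align}
\|(\Phi\otimes\id)(Z)\|_{(\cP;\cX)}\leq\|\Phi(AA^*)\|^{1/2}_{(\cP)}\|\Phi(B^*B)\|^{1/2}_{(\cP)}\leq\|\Phi\|^+\,\|AA^*\|^{1/2}_{(\cQ)}\|B^*B\|^{1/2}_{(\cQ)}\leq\|\Phi\|^+.
\end{align}
The middle inequality uses positivity of $AA^*$ and $B^*B$. Letting $\|Z\|\uparrow1$ and using homogeneity yields the upper bound.

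I expect the main obstacle to be the middle factor. The factorization in \Cref{thm:main.multi.pisier} uses $A$ acting on all of $Q=Q_1\cdots Q_n$ and requires the middle operator to be $\cX$-valued on $P$ with the iterated $\infty$-norm. Treating the rectangular $V,W$, which pass through $\C^N\otimes\cH_Q$, requires justifying that $\|(V\otimes\1)M(W\otimes\1)\|_{(\infty,\dots,\infty;\cX)}\leq\|V\|\|W\|\|M\|$ across different underlying Hilbert spaces. This holds because $(\infty,\dots,\infty;\cX)$ on $\cH_1\otimes\cdots\otimes\cH_n$ coincides with $\cB_{\mathrm{cpt}}(\otimes_i\cH_i)\otimes_{\min}\cX$, and this identification should itself be checked from the iterated definition. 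A second point is the support conditions: the second form of \Cref{thm:main.multi.pisier} needs exact supports, but since the first form, an infimum over arbitrary factorizations, suffices, they can be avoided.
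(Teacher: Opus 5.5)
Your proposal is correct and follows essentially the paper's proof. The lower bound comes from product inputs via \Cref{cor:Pisier.multi.tensor.splitting}, and the upper bound combines the Kraus-decomposition argument of \cite[Lemma 5]{Devetak.2006} with the infimum form of \Cref{thm:main.multi.pisier}, giving $\|\Phi(AA^*)\|^{1/2}_{(\cP)}\|\Phi(B^*B)\|^{1/2}_{(\cP)}$. The only difference is technical: the paper pads the Kraus row/column operators with zeros on an auxiliary system $N$ and removes $|1\rangle\langle1|_N$ again via multiplicativity, whereas you absorb the partial isometries from the polar decompositions of $\tilde A,\tilde B$ into the middle $(\infty,\dots,\infty;\cX)$-factor. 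Your version gives a genuinely square factorization on $P$, so it handles the mismatch between $\cH_Q$ and $\cH_P$ somewhat more cleanly.
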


\begin{remark}
This is a generalization of \cite[Lemma 5]{Devetak.2006}, as well as \cite[Theorem 4.1]{Fawzi.2026}, to arbitrary multi-indexed Schatten spaces $\cQ,\cP$ from the one-, respectively two-indexed cases discussed there. The proof of this statement is a natural extension of the proof idea of the former, building manifestly on our multi-indexed extension of Pisier's formula \Cref{thm:main.multi.pisier}. 

\noindent For readers uncomfortable with operator spaces, $\cX$ can be replaced here by any normed multi-indexed Schatten space with indices in $[1,\infty]$.

\noindent A very natural question one might have is whether the parameter range presented in this theorem is optimal, and indeed this is true. The requirements we have currently are s.t. $(Q:\cQ;\mathcal{X}), (P:\cP;\mathcal{X})$ are well defined and so cannot be weakened. The assumption on $\cX$ to be normed is indeed optimal, as in \cite[Counterexample 4.1]{Kochanowski.2026} it is shown via a simple counterexample that for $\cX=\cS_t$ with $t<1$, \eqref{equ:id.right.reduction} no longer holds. When replacing $\cX$ by some normed multi-indexed Schatten space $\cT$, however, one would expect the most general, in terms of indices, result to be for $\cQ,\cP$ which are each compatible with $\cT$. 
For the two-indexed version of this more general result see \cite[Theorem 4.3]{Kochanowski.2026}.

\noindent In the context of quantum information theory this result can be interpreted as an entropic chain rule, see e.g.\ \cite[Corollaries 4.2, 4.3, and Remark 4.4]{Kochanowski.2026}. We will discuss this and its consequences in terms of (conditional) entropies later in \Cref{sec:ApplictoQI} and shall use it to prove several chain rules and multiplicativity statements, in particular about $\alpha>1$ minimal and maximal output entropies and conditional entropies.
\end{remark}

An important consequence of the proof is that completely positive maps achieve their maximal mixed multi-indexed Schatten norms on positive operators.
\begin{corollary}\label{cor:mixed.positivity.normed}
    Let $\Phi$ be a CP map between two multi-indexed Schatten spaces $\cQ,\cP$ with $q_i,p_j>0$, then
     \begin{align}
        \|\Phi\|_{(Q:\cQ)\to (P:\cP)}=\|\Phi\|^+_{(Q:\cQ)\to (P:\cP)}.
    \end{align}
\end{corollary}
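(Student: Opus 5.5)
The inequality $\|\Phi\|^+\le\|\Phi\|$ is trivial, so the whole task is to show $\|\Phi(X)\|_{(P:\cP)}\le \|\Phi\|^+_{(Q:\cQ)\to(P:\cP)}\|X\|_{(Q:\cQ)}$ for arbitrary, non-positive $X$. My plan is to use the infimum formula \eqref{equ:Multi.Thm.Pisier.like.inf} of \Cref{thm:Multi.Thm.Pisier.like} for the input together with the supremum formula \eqref{equ:Multi.Thm.Pisier.like.sup}, in the relaxed form of \Cref{rem:sup.arbitrary}, for the output. The idea is to reduce everything to positive inputs of the form $AA^*$ and $B^*B$, as in the proof of \cite[Lemma 5]{Devetak.2006}.

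\textbf{Input factorization.} Take $s=\infty$ in \eqref{equ:Multi.Thm.Pisier.like.inf}, so that $\cS=\cQ$. For any $\varepsilon>0$ this gives $X=AYB$ with $\|Y\|_\infty\le 1$ and $\|AA^*\|^{1/2}_{(\cQ)}\|B^*B\|^{1/2}_{(\cQ)}\le\|X\|_{(\cQ)}+\varepsilon$. Let $\Phi(Z)=\sum_iK_iZK_i^*$ be a Kraus decomposition. Then
\begin{align}
\Phi(X)=\mathbf{A}\,(\1\otimes Y)\,\mathbf{B},\qquad \mathbf{A}:=\big(K_1A,\dots,K_NA\big)\ \text{(row)},\quad \mathbf{B}:=\big(BK_1^*,\dots,BK_N^*\big)^{T}\ \text{(column)},
\end{align}
and moreover $\mathbf{A}\mathbf{A}^*=\Phi(AA^*)$ and $\mathbf{B}^*\mathbf{B}=\Phi(B^*B)$. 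Everything can be embedded into square operators on a dilated space $\C^N\otimes\cH_P$ to avoid rectangular issues.

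\textbf{Output bound.} Let $t=\min\{1,\min_jp_j\}$ and let $\cT$ be the associated normed collection. By \eqref{equ:Multi.Thm.Pisier.like.sup} with arbitrary $C,D$, we have $\|\Phi(X)\|_{(\cP)}=\sup_{C,D}\|C^*C\|^{-1/2}_{(\cT)}\|DD^*\|^{-1/2}_{(\cT)}\|C\Phi(X)D\|_t$. Apply Hölder with $\frac1t=\frac1{2t}+\frac1\infty+\frac1{2t}$:
\begin{align}
\|C\mathbf{A}(\1\otimes Y)\mathbf{B}D\|_t\le\|C\mathbf{A}\mathbf{A}^*C^*\|_t^{1/2}\|D^*\mathbf{B}^*\mathbf{B}D\|_t^{1/2}=\|C\Phi(AA^*)C^*\|_t^{1/2}\|D^*\Phi(B^*B)D\|_t^{1/2}.
\end{align}
Next use inequality \eqref{equ:sup.arbitrary.pos} from the proof of \Cref{rem:sup.arbitrary}, namely $\|CPC^*\|_t\le\|C^*C\|_{(\cT)}\|P\|_{(\cP)}$ for $P\ge0$, applied to $P=\Phi(AA^*)$ and $P=\Phi(B^*B)$. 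This gives
\begin{align}
\|\Phi(X)\|_{(\cP)}\le\|\Phi(AA^*)\|_{(\cP)}^{1/2}\|\Phi(B^*B)\|_{(\cP)}^{1/2}\le\|\Phi\|^+\,\|AA^*\|^{1/2}_{(\cQ)}\|B^*B\|^{1/2}_{(\cQ)}\le\|\Phi\|^+(\|X\|_{(\cQ)}+\varepsilon).
\end{align}

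\textbf{Main obstacle.} The key step is the Cauchy--Schwarz-type bound $\|\Phi(X)\|_{(\cP)}\le\|\Phi(AA^*)\|^{1/2}_{(\cP)}\|\Phi(B^*B)\|^{1/2}_{(\cP)}$ when $\Phi(X)$ factors through a contraction; this is what replaces the operator-space argument of \cite{Devetak.2006}. I expect the technical difficulty to sit there, specifically in justifying the Hölder step when $\mathbf{A},\mathbf{B}$ are rectangular. I would handle this by padding, working on $\C^N\otimes\cH$ with zero blocks, which leaves the Schatten-$t$ quantities unchanged.

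Two further cleanups are needed. First, \Cref{thm:Multi.Thm.Pisier.like} formally asks for $s\ge q_i$, so $s=\infty$ must be allowed; if it is not, I would take $s=\max_iq_i$ and absorb $\|A^{-1}XB^{-1}\|_s$ via the single-indexed factorization \Cref{prop:SpVariationalFormulas} into the $A,B$. Second, positive semidefinite inputs must be compared with the $X>0$ used in the definition of $\|\cdot\|^+$. This follows from the continuity guaranteed by \Cref{thm:Multi.Quasi.Norms}.
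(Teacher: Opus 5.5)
Your proof is correct, and its skeleton is the paper's: the infimum formula \eqref{equ:Multi.Thm.Pisier.like.inf} with $s=\infty$ on the input, a Kraus decomposition writing $\Phi(X)=\mathbf{A}(\1\otimes Y)\mathbf{B}$, and the identities $\mathbf{A}\mathbf{A}^*=\Phi(AA^*)$, $\mathbf{B}^*\mathbf{B}=\Phi(B^*B)$. The difference is in how you justify the key bound $\|\mathbf{A}(\1\otimes Y)\mathbf{B}\|_{(\cP)}\le\|\Phi(AA^*)\|^{1/2}_{(\cP)}\|\Phi(B^*B)\|^{1/2}_{(\cP)}\|Y\|_\infty$.

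The paper pads the block row and column into operators on a dilated system $(N:r,P:\cP)$. It then invokes \eqref{equ:Multi.Thm.Pisier.like.inf} ``for arbitrary factors'' there, and removes the dilation with \Cref{cor.multi.tensor.multiplicativity}. However, \eqref{equ:Multi.Thm.Pisier.like.inf} is only stated for factors whose supports match those of the operator. The obvious fix by compressing to the right support is not available in general (cf.\ the parenthetical remark in the proof of \Cref{thm:Multi.Rev.Triangle}). So that step tacitly needs an additional perturbation-and-continuity argument.

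Your route is different. You use the supremum formula, generalized H\"older with $\frac1t=\frac1{2t}+\frac1\infty+\frac1{2t}$, and \eqref{equ:sup.arbitrary.pos} for arbitrary $C$. This is exactly the ``general operators'' step from the justification of \Cref{rem:sup.arbitrary}. It holds verbatim for arbitrary, non-invertible, rectangular factors, so you need no dilation, no auxiliary index $r$, and no support bookkeeping. You do not even need the relaxed supremum of \Cref{rem:sup.arbitrary}: your estimate holds for every pair $C,D$, so it already bounds the support-restricted supremum, which equals $\|\Phi(X)\|_{(\cP)}$.

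Three smaller points about your write-up:
\begin{itemize}
\item The ``main obstacle'' you single out is not one. H\"older for rectangular matrices follows directly from singular values, so no padding is needed.
\item Your worry about $s=\infty$ is unnecessary. \Cref{thm:Multi.Thm.Pisier.like} covers arbitrary $s\ge q_i$, giving $\cS=\cQ$, and the paper itself uses exactly this choice.
\item Your passage from $AA^*\ge0$ to strictly positive inputs, via $AA^*+\delta\1$ and the continuity from \Cref{thm:Multi.Quasi.Norms}, is correct. It handles a point the paper leaves implicit.
\end{itemize}
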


\begin{remark}
    This generalizes the corresponding results for mixed Schatten norms ($q\to p$, $q,p\geq1$) \cite{Watrous.2004} and completely bounded mixed Schatten norm $(\cb,q\to p)$, $q,p\geq1$ \cite{Devetak.2006}, as well as $\cb,q\to (r,s)$ norms, $q,r,s\geq1$ \cite[Lemma A.2]{Fawzi.2026}. It is thus the most general statement of this kind, since for negatively indexed input spaces the quantity is trivially $0$ without the positivity restriction.
\end{remark}

\begin{proof}[Proof of \Cref{lem:reductionLemma}]
First of all notice that $\|\Phi\otimes\id_{\mathcal{X}}\|_{(Q:\cQ;\mathcal{X})\to(P:\cP;\mathcal{X})} \geq \|\Phi\|^+_{(Q:\cQ)\to (P:\cP)}$, since we can just restrict the supremum on the left-hand side to product operators. In fact, consider $X_{Q}\otimes Y_{\mathcal{X}}$, then by multiplicativity of the operator-valued Schatten norms \Cref{cor:Pisier.multi.tensor.splitting}, we obtain both
\begin{align} 
\|(\Phi\otimes\id_{\mathcal{X}})(X_{Q}\otimes Y_{\mathcal{X}})\|_{(P:\cP;\mathcal{X})} &= \|\Phi(X_{Q})\otimes Y_{\mathcal{X}}\|_{(P:\cP;\mathcal{X})}= \|\Phi(X_{Q})\|_{(P:\cP)}\cdot \|Y_{\mathcal{X}}\|_{\mathcal{X}}, \\ 
\|X_{Q}\otimes Y_\mathcal{X}\|_{(Q:\cQ;\mathcal{X})} &= \|X_Q\|_{(Q:\cQ)}\cdot\|Y_\mathcal{X}\|_{\mathcal{X}}.
\end{align}

Since $\cX$ is an operator space there exists $\cK$ s.t. $\cX\subset\cB(\cK)$. Denote by $\1_\mathcal{K}$ the identity element of $\mathcal{B}(\cK)$. Then $\1_{\mathcal{K}}$ restricted to the linear subspace $\cX$ is $\1_{\cX}$.
To prove the non-trivial side of the inequality, inspired by \cite[Proof of Lemma 5]{Devetak.2006}, we consider a Kraus representation of $\Phi$: $\Phi(\rho)=\sum^\nu_{i=1}K_i \rho K_i^*$. Then for a given $\rho\in\cS_{\cQ}[\otimes_i^{|\cQ|}\cH_i;\cX]$, there exist for any $\epsilon>0$, by our extension of Pisier's formula in \Cref{thm:main.multi.pisier} for this norm, operators $A,B,Y$, s.t. $\rho=(A_{Q}\otimes\1_\mathcal{X})Y_{Q\cX}(B_{Q}\otimes \1_\mathcal{X})$ and $ \|\rho\|_{(Q:\cQ;\mathcal{X})}\geq \|AA^*\|^{\frac{1}{2}}_{(Q:\cQ)}\|B^*B\|^{\frac{1}{2}}_{(Q:\cQ)}\|Y\|_{(Q:\infty;\mathcal{X})}-\epsilon$.
We have \begin{align}
    (\Phi\otimes\id_\mathcal{X})(\rho)& = \sum_{i=1}^\nu(K_i\otimes\1_\mathcal{X})\rho(K_i^*\otimes\1_\mathcal{X})=\sum_{i=1}^\nu(K_iA\otimes\1_\mathcal{X})Y(BK_i^*\otimes\1_\mathcal{X}) \\&= V_A(\1_{\mathbb{C}^\nu}\otimes Y)V^*_B,
\end{align} where $V_A= (K_1A\otimes\1_\mathcal{K},K_2A\otimes\1_\mathcal{K},...,K_\nu A\otimes\1_\mathcal{K})$ is a block-row vector with blocks $K_iA\otimes \1_\mathcal{K}$, and $V_B^*$ a block-column vector with blocks $BK_i^*\otimes\1_\mathcal{K}$, where we denote with $\nu$ the number of blocks and with $N$ the system in which these blocks live and on which $\1_{\mathbb{C}^\nu}$ acts.  
These operators $V_A, V_B$ can be embedded into the space $\mathcal{B}(\mathbb{C}^\nu\otimes\bigotimes_{i=1}^n\mathcal{H}_{Q_i}\otimes\mathcal{K},\mathbb{C}^\nu\otimes\bigotimes_{i=1}^m\mathcal{H}_{P_i}\otimes\mathcal{K})$ by padding suitably with rows of $0$ operators. For $V^*_A, V^*_B$ similarly into $\mathcal{B}(\mathbb{C}^\nu\otimes\bigotimes_{i=1}^m\mathcal{H}_{P_i}\otimes\mathcal{K},\mathbb{C}^\nu\otimes\bigotimes_{i=1}^n\mathcal{H}_{Q_i}\otimes\mathcal{K})$, by padding with columns of $0$ operators. Call these extended operators, respectively, $V_A^\prime, V_B^\prime, V_A^{\prime*} , V_B^{\prime*}$. We get $V^\prime_A=\sum^{\nu}_{i,j=1}\delta_{i1}|i\rangle\langle j|_N\otimes K_jA\otimes\1_\mathcal{K} \in \mathcal{B}(\mathbb{C}^\nu\otimes\bigotimes_{i=1}^n\mathcal{H}_{Q_i}\otimes\mathcal{K},\mathbb{C}^\nu\otimes\bigotimes_{i=1}^m\mathcal{H}_{P_i}\otimes\mathcal{K})$. 
Hence it holds that $V^\prime_AV_A^{\prime *}=|1\rangle\langle1|_N\otimes V_AV_A^*$, similarly for $B$ and $V^\prime_A(\1_N\otimes Y)V_B^{\prime *}=|1\rangle\langle 1|_N\otimes (\Phi\otimes\id_{\mathcal{X}})(\rho)$. 
Now using \Cref{thm:main.multi.pisier} 
we get, for any $r\geq1$,
\begin{align}
    \|(\Phi\otimes\id_{\mathcal{X}})(\rho_{Q;\cX})\|_{(P:\cP;\mathcal{X})} &= \||1\rangle\langle 1|_N\otimes(\Phi\otimes\id_\mathcal{X})(\rho_{Q;\cX})\|_{(N:r,P:\cP;\mathcal{X})} \\ 
    &= \|V^\prime_A(\1_N\otimes Y)V_B^{\prime *}\|_{(N:r,P:\cP;\mathcal{X})}  
    \\ &\leq \|V^\prime_AV_A^{\prime *}\|^{\frac{1}{2}}_{(N:r,P:\cP)}\|V^\prime_BV_B^{\prime *}\|^{\frac{1}{2}}_{(N:r,P:\cP)}\|\1_N\otimes Y\|_{(N:\infty,Q:\infty;\mathcal{X})} \\ &= \|V_AV_A^{*}\|^{\frac{1}{2}}_{(P:\cP)}\|V_BV_B^{*}\|^{\frac{1}{2}}_{(P:\cP)}\|\1_N\otimes Y\|_{(N:\infty,Q:\infty;\mathcal{X})} \\ &= \|\sum_iK_iAA^*K_i^*\|^{\frac{1}{2}}_{(P:\cP)}\|\sum_iK_iB^*BK_i^*\|^{\frac{1}{2}}_{(P:\cP)}\|Y\|_{(Q:\infty;\mathcal{X})} \\&= \|\Phi(AA^*)\|^{\frac{1}{2}}_{(P:\cP)}\|\Phi(B^*B)\|^{\frac{1}{2}}_{(P:\cP)} \|Y\|_{(Q:\infty;\mathcal{X})} \\&\leq  \|\Phi\|^+_{(Q:\cQ)\to(P:\cP)}\|AA^*\|^{\frac{1}{2}}_{(Q:\cQ)}\|B^*B\|^{\frac{1}{2}}_{(Q:\cQ)}\|Y\|_{(Q:\infty;\mathcal{X})}
    \\&\leq \|\Phi\|^+_{(Q:\cQ)\to(P:\cP)}(\|\rho\|_{(Q:\cQ;\mathcal{X})}+\epsilon).
\end{align} In the first, third, and fourth equality, we used the multiplicativity of the operator-valued Schatten norms~\Cref{cor:Pisier.multi.tensor.splitting}.
In the second line, we combined systems of equal indices. Since $\epsilon$ was arbitrary the claim follows. 
\end{proof} 

\begin{proof}[Proof of \cref{cor:mixed.positivity.normed}]
Clearly $\|\Phi\|_{(\cQ)\to (\cP)}\geq \|\Phi\|^+_{(\cQ)\to (\cP)}$. For the alternate inequality we repeat more or less the same argument as above, with the difference that we are not confined to the normed regime.
Let $\rho\in\cS_{\cQ}$, then by \cref{thm:Multi.Thm.Pisier.like} with $s=\infty$ we have for any $\epsilon>0$ that $\|\rho\|_{(Q:\cQ)}\geq \|AA^*\|^{\frac{1}{2}}_{(Q:\cQ)}\|B^*B\|^{\frac{1}{2}}_{(Q:\cQ)}\|Y\|_{(Q:\infty)}-\epsilon$ for some suitable $A,B,Y$.
We take the same construction for
\begin{align}
    \Phi(\rho)= \sum_{i=1}^\nu K_i\rho K_i^* = V_A(\1_{\mathbb{C}^\nu}\otimes Y)V_B^*
\end{align} with $V_A= (K_1A,K_2A,...,K_\nu A)$ is a block-row vector with blocks $K_iA$, and $V_B^*$ a block-column vector with blocks $BK_i^*$, and $V_A^\prime, V_B^{\prime\,*}$ their matrix embeddings. Repeating more or less the above argument, but now for some $r>0$ compatible with $\cP$, we have
\begin{align}
    \|\Phi(\rho)\|_{(\cP)} &= \|\Phi(\rho_{Q})\|_{(N:r,P:\cP)} = \|V^\prime_A(\1_N\otimes Y)V_B^{\prime *}\|_{(N:r,P:\cP)}\\ &\leq \|V^\prime_AV_A^{\prime *}\|^{\frac{1}{2}}_{(N:r,P:\cP)}\|V^\prime_BV_B^{\prime *}\|^{\frac{1}{2}}_{(N:r,P:\cP)}\|\1_N\otimes Y\|_{(N:\infty,Q:\infty)} \\ &= \|V_AV_A^{*}\|^{\frac{1}{2}}_{(P:\cP)}\|V_BV_B^{*}\|^{\frac{1}{2}}_{(P:\cP)}\|Y\|_{(Q:\infty)} \\ &= \|\sum_iK_iAA^*K_i^*\|^{\frac{1}{2}}_{(P:\cP)}\|\sum_iK_iB^*BK_i^*\|^{\frac{1}{2}}_{(P:\cP)}\|Y\|_{(Q:\infty)} \\&= \|\Phi(AA^*)\|^{\frac{1}{2}}_{(P:\cP)}\|\Phi(B^*B)\|^{\frac{1}{2}}_{(P:\cP)} \|Y\|_{(Q:\infty)} \\&\leq  \|\Phi\|^+_{(Q:\cQ)\to(P:\cP)}\|AA^*\|^{\frac{1}{2}}_{(Q:\cQ)}\|B^*B\|^{\frac{1}{2}}_{(Q:\cQ)}\|Y\|_{(Q:\infty)} \\&\leq \|\Phi\|^+_{(Q:\cQ)\to(P:\cP)}(\|\rho\|_{(Q:\cQ)}+\epsilon),
\end{align}
where the first inequality, for the arbitrary factors $V_A^\prime$, $V_B^{\prime*}$, follows from \eqref{equ:Multi.Thm.Pisier.like.inf} of \Cref{thm:Multi.Thm.Pisier.like}. Since this works for any $\epsilon>0$ the claim follows.   
\end{proof}

\subsubsection{The Co-Quasi-Normed Version}

Having considered the normed setting, let us now consider the analogous question for the quasi-normed setting. As demonstrated in \cite[Counterexample 4.1]{Kochanowski.2026} appending inner identities with indices $<1$ does generically increase the mixed norm. The natural question is hence whether it can also decrease the norm, that is, we consider the same statement for the co-quasi-norm.

In this sense, the following is the natural quasi-normed analogue of the normed statement.

\begin{theorem}\label{thm:id.right.quasi}
Let $\cQ,\cP,\cT$ be multi-indexed Schatten spaces on systems $Q,P,T$ with indices $q_i,p_i,t_i\in[\frac{1}{2},1]$,  respectively, and $\Phi:Q\to P$ a CP map. Then it holds that
\begin{align}
    \|\Phi\otimes\id_T\|^+_{\co,(Q:\cQ,T:\cT)\to (P:\cP,T:\cT)}=\|\Phi\|^+_{\co, (Q:\cQ)\to (P:\cP)}.
\end{align}
\end{theorem}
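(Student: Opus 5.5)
The inequality ``$\leq$'' is the easy direction. We restrict the infimum on the left-hand side to product inputs $X_Q\otimes Y_T$ with $X,Y\geq0$. By \Cref{cor.multi.tensor.multiplicativity} both norms split: $\|\Phi(X)\otimes Y\|_{(P:\cP,T:\cT)}=\|\Phi(X)\|_{(P:\cP)}\|Y\|_{(T:\cT)}$, and similarly on the input side. The ratio therefore equals $\|\Phi(X)\|_{(P:\cP)}/\|X\|_{(Q:\cQ)}$, and taking the infimum over $X$ gives the claim. All indices lie in $[\frac12,1]$, so $(\cQ,\cT)$ and $(\cP,\cT)$ are compatible and this step is legitimate.

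For ``$\geq$'' the introduction suggests going through reverse Hölder duality, \Cref{cor:reverse.multi.hölder}, and then the normed identity-removal \Cref{lem:reductionLemma}. Since all indices are $\leq1$, for positive $Z$ on $PT$ we have
\begin{align}
\|(\Phi\otimes\id_T)(Z)\|_{(P:\cP,T:\cT)}=\inf_{W>0}\Tr[W(\Phi\otimes\id_T)(Z)]\,\|W^{-1}\|_{(P:-\cP',T:-\cT')}.
\end{align}
For indices in $[\frac12,1]$ the duals satisfy $-p_i'\in[1,\infty]$ and $-t_i'\in[1,\infty]$, so $(-\cP',-\cT')$ is a normed multi-indexed Schatten space. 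We rewrite the trace as $\Tr[(\Phi^*\otimes\id_T)(W)Z]$ and apply the reverse Hölder inequality \eqref{equ:reverse.multi.hölder.negative.ineq} on the input side. This gives
\begin{align}
\Tr[(\Phi^*\otimes\id_T)(W)Z]\geq\|Z\|_{(Q:\cQ,T:\cT)}\,\|(\Phi^*\otimes\id_T)(W)\|_{(Q:\cQ',T:\cT')},
\end{align}
where $\cQ',\cT'$ are negative. We then convert the negative norm via \Cref{prop:negative.multi.indexed} into $\|((\Phi^*\otimes\id_T)(W))^{-1}\|^{-1}_{(-\cQ',-\cT')}$. This reduces the problem to a lower bound of the form
\begin{align}
\|W^{-1}\|^{-1}_{(-\cP',-\cT')}\cdot(\cdots)\ \geq\ \|\Phi\|^+_{\co,(Q:\cQ)\to(P:\cP)}
\end{align}
on the quantity $\|(\Phi^*\otimes\id_T)(W)\|_{(\cQ',\cT')}/\|W\|_{(\cP',\cT')}$, uniformly over $W>0$.

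The key step, and the main obstacle, is to show that the infimum over $W>0$ of this ratio of negative norms does not decrease when $\id_T$ is appended. Equivalently, after inversion it is a supremum of normed quantities of the type handled by \Cref{lem:reductionLemma} for the adjoint map. The map $W\mapsto((\Phi^*\otimes\id)(W))^{-1}$ is not linear, so \Cref{lem:reductionLemma} does not apply directly. I would first dualize once more, using that the single-system quantity also satisfies
\begin{align}
\|\Phi\|^+_{\co,(Q:\cQ)\to(P:\cP)}=\inf_{X\geq0,\,W>0}\frac{\Tr[\Phi^*(W)X]\,\|W^{-1}\|_{(-\cP')}}{\|X\|_{(Q:\cQ)}}.
\end{align}
The minimization over $X$ for fixed $W$ gives $\|\Phi^*(W)\|_{(\cQ')}$. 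I would then try to control $\|(\Phi^*\otimes\id_T)(W)\|_{(\cQ',\cT')}$ via the factorization formula \Cref{cor:multi.Pisier.like.negative} with an upper index $-1$. This writes it as a supremum over $A>0$ of $\|A^2\|^{-1}_{(\cS)}\|A(\Phi^*\otimes\id)(W)A\|_{-1}$, with $\cS$ normed. Moving $A$ through Kraus operators as in the proof of \Cref{lem:reductionLemma} (block-row vectors $V_A$) should produce a factor $\|\Phi(\cdot)\|_{(\cP)}$ of a positive operator on $Q$ alone, together with a term that depends only on $W$ and the $T$ system and collapses by \Cref{cor.multi.tensor.multiplicativity}. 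If this Kraus-embedding step goes through with the correct direction of the inequalities, the infimum over $W$ and $Z$ yields ``$\geq$''. Verifying that step, with the inequalities oriented correctly, is where I expect most of the work to be.
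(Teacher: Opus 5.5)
Your ``$\leq$'' direction is correct, and so is your reduction. Reverse H\"older duality together with \Cref{prop:negative.multi.indexed} turns both sides into co-quasi-norms of the adjoint, so everything hinges on
\begin{align}
\|\Phi^*\otimes\id_T\|^+_{\co,(P:\cP^\prime,T:\cT^\prime)\to(Q:\cQ^\prime,T:\cT^\prime)}\geq\|\Phi^*\|^+_{\co,(P:\cP^\prime)\to(Q:\cQ^\prime)},
\end{align}
with all dual indices in $[-\infty,-1]$. This inequality is the entire content of the theorem, and your proposal stops exactly there. The route you sketch for it also does not work as stated. In \Cref{cor:multi.Pisier.like.negative} an \emph{upper} index such as $-1$ yields the \emph{infimum} formula, which only upper-bounds $\|(\Phi^*\otimes\id_T)(W)\|_{(\cQ^\prime,\cT^\prime)}$. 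The supremum formula instead requires a lower index $s\leq\min_i q_i^\prime$, so $s=-1$ is admissible only if all dual indices equal $-1$. More importantly, you factorize the \emph{output}. A weight acting on $QT$ cannot be pulled back through $\Phi^*$. The Kraus block-vector trick therefore cannot produce a factor $\|\Phi^*(\cdot)\|_{(\cQ^\prime)}$ evaluated on a positive operator on $P$, and that factor is what is needed to extract $\|\Phi^*\|^+_{\co}$.

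Two ingredients are missing. First, you must factorize the \emph{input} of the adjoint problem with a weight on $P$ only. Second, you need the operator Jensen inequality, which handles the non-linearity of inversion that you correctly identified as the obstacle.

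Concretely, let $X_{PT}>0$ (your $W$). \Cref{thm:main.multi.pisier} with \Cref{prop:negative.multi.indexed} gives
\begin{align}
\|X\|_{(P:\cP^\prime,T:\cT^\prime)}=\sup_{A>0}\|A\|^{-1}_{(P:-\cP^\prime)}\|A_P^{1/2}XA_P^{1/2}\|_{(P:-\infty,T:\cT^\prime)}.
\end{align}
Take an $\epsilon$-optimal $A$ and set $Y:=A_P^{1/2}XA_P^{1/2}$. Let $K_1,\dots,K_\nu$ be Kraus operators of $\Phi^*$ and form the block row $W_A=(K_1A^{-1/2},\dots,K_\nu A^{-1/2})$. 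Then $Z:=(\Phi^*\otimes\id_T)(X)=(W_A\otimes\1_T)(\1_N\otimes Y)(W_A^*\otimes\1_T)$ and $W_AW_A^*=\Phi^*(A^{-1})$; if this is singular the claim is trivial.

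Now put $S:=(W_AW_A^*)^{-1/2}$, so that $V:=SW_A$ satisfies $VV^*=\1_Q$. Operator convexity of $x\mapsto x^{-1}$ then gives $(SZS)^{-1}\leq(V\otimes\1_T)(\1_N\otimes Y^{-1})(V^*\otimes\1_T)$. Monotonicity of the normed $(\infty,-\cT^\prime)$-norm and $\|VV^*\|_\infty=1$ yield $\|SZS\|^{-1}_{(Q:-\infty,T:\cT^\prime)}\leq\|Y\|^{-1}_{(P:-\infty,T:\cT^\prime)}$.

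The infimum form of \Cref{thm:main.multi.pisier}, applied to $Z^{-1}$ with weight $S^2$, gives the complementary bound $\|SZS\|^{-1}_{(Q:-\infty,T:\cT^\prime)}\geq\|W_AW_A^*\|_{(Q:\cQ^\prime)}\|Z\|^{-1}_{(Q:\cQ^\prime,T:\cT^\prime)}$. Combining the two bounds and using $\|A^{-1}\|_{(P:\cP^\prime)}=\|A\|^{-1}_{(P:-\cP^\prime)}$,
\begin{align}
\|Z\|_{(Q:\cQ^\prime,T:\cT^\prime)}\geq\|\Phi^*(A^{-1})\|_{(Q:\cQ^\prime)}\|Y\|_{(P:-\infty,T:\cT^\prime)}\geq\|\Phi^*\|^+_{\co,(P:\cP^\prime)\to(Q:\cQ^\prime)}\big(\|X\|_{(P:\cP^\prime,T:\cT^\prime)}-\epsilon\big).
\end{align}
Without the Jensen step there is no way to relate the inverse of the Kraus-compressed operator back to $Y^{-1}$, so your sketch does not close the argument.
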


\begin{remark}
    This is the natural `co' analogue of \Cref{lem:reductionLemma} where, as compared to before, the multi-indexed Schatten spaces involved are all s.t. their indices are in $[\frac{1}{2},1]$. This is the region where the reverse Hölder dual indices are in $[-\infty,-1]$ which via \Cref{prop:negative.multi.indexed} relates back to the normed regime. The proof makes manifest use of this fact. 
    In principle, one could, however, imagine that this statement extends to arbitrary multi-indexed Schatten spaces $\cQ,\cP,\cT$ with indices in $(0,1]$.
    Analogously to \Cref{lem:reductionLemma}, which requires that the inner indices are all $\geq 1$, this statement explicitly requires that the inner indices $\cT$ are $\leq 1$, since otherwise there is a rather simple (geometric) counterexample. 

\noindent    Unlike the previous normed statement \Cref{lem:reductionLemma} this theorem does not build on or generalize existing results in the literature.
\end{remark}

\begin{corollary}\label{cor:cb.id.right.quasi}
    As a direct consequence it follows for $\alpha\in[\frac{1}{2},1]$ and $\Phi:Q\to AB$ that, both
\begin{align}
    &\|\Phi\otimes\id_T\|^+_{\co,(Q:1,T:\alpha)\to (A:1,B:\alpha,T:\alpha)}=\|\Phi\|^+_{\co,(Q:1)\to (A:1,B:\alpha)}, \\
    &\|\Phi\otimes\id_T\|^+_{\cb,\co,(Q:1,T:\alpha)\to (A:1,B:\alpha,T:\alpha)}=\|\Phi\|^+_{\cb,\co,(Q:1)\to (A:1,B:\alpha)}.
\end{align} which are, respectively, equivalent to
\begin{align}
     \inf_{\rho_{TQ}}&\left\{\tilde{H}_\alpha^\uparrow(BT|A)_{(\Phi\otimes\id_T)(\rho_{QT})}-\tilde{H}^\uparrow_\alpha(T|Q)_{\rho_{QT}}\right\}=\inf_{\rho_Q}\tilde{H}^\uparrow_\alpha(B|A)_{\Phi(\rho_Q)}, \\
      \inf_{E\,,\rho_{EQT}}&\left\{\tilde{H}_\alpha^\uparrow(BT|AE)_{(\id_E\otimes\Phi\otimes\id_T)(\rho_{EQT})}-\tilde{H}^\uparrow_\alpha(T|QE)_{\rho_{EQT}}\right\}=\inf_{E\,,\rho_{EQ}}\tilde{H}^\uparrow_\alpha(B|AE)_{(\id_E\otimes\Phi)(\rho_{EQ})}.
\end{align}
\end{corollary}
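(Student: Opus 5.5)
The plan is to read off the first pair of equalities as special cases of \Cref{thm:id.right.quasi}, and then translate them into entropies using the norm representation \eqref{equ:Intro.opt.cond.entropy}.

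\textbf{Non-cb equality.} Take $\cQ=(Q:1)$, $\cP=(A:1,B:\alpha)$ and $\cT=(T:\alpha)$. For $\alpha\in[\frac12,1]$ all of these indices lie in $[\frac12,1]$. The map $\Phi:Q\to AB$ is CP, so \Cref{thm:id.right.quasi} applies verbatim and gives the first equality.

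\textbf{cb equality.} I would use \Cref{lem:cb.co.norm.via.t} to put the index $t=1$ on the reference system $E$; this is allowed because $1$ is compatible with every index in $[\frac12,1]$. Then
\begin{align}
\|\Phi\otimes\id_T\|^+_{\cb,\co,(Q:1,T:\alpha)\to(A:1,B:\alpha,T:\alpha)}=\inf_E\|\id_E\otimes\Phi\otimes\id_T\|^+_{\co,(E:1,Q:1,T:\alpha)\to(E:1,A:1,B:\alpha,T:\alpha)} .
\end{align}
For each fixed $E$, the map $\id_E\otimes\Phi$ is CP from $(E:1,Q:1)$ to $(E:1,A:1,B:\alpha)$, and all these indices lie in $[\frac12,1]$. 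So \Cref{thm:id.right.quasi} removes $\id_T$ at fixed $E$. Taking $\inf_E$ and applying \Cref{lem:cb.co.norm.via.t} once more gives the cb equality. Adjacent equal indices such as $(E:1,Q:1)$ are treated as a single system, as in \Cref{def:GeneralMultiIndexed}, so no extra argument is needed there.

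\textbf{Translation into entropies.} First reduce the positive-definite inputs in the co-norm to density operators. By homogeneity one may normalize; by continuity (\Cref{thm:Multi.Quasi.Norms}) one may pass from $X>0$ to $X\geq0$, and hence to states. In the non-cb case we have $\|\rho_Q\|_{(Q:1)}=1$. The remaining norms are rewritten with \eqref{equ:Intro.opt.cond.entropy}, which holds for $\alpha\geq\frac12$ and where the outer index $1$ marks the conditioning system:
\begin{align}
\|\rho_{QT}\|_{(Q:1,T:\alpha)}&=\exp\Big(\tfrac{1-\alpha}{\alpha}\tilde H^\uparrow_\alpha(T|Q)\Big),\\
\|(\Phi\otimes\id)(\rho)\|_{(A:1,B:\alpha,T:\alpha)}&=\exp\Big(\tfrac{1-\alpha}{\alpha}\tilde H^\uparrow_\alpha(BT|A)\Big),
\end{align}
where in the second line $BT$ is merged into one system with index $\alpha$. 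Taking $\frac{\alpha}{1-\alpha}\log$ of the ratios, which for $\alpha<1$ is a strictly increasing map, turns the infima of ratios into infima of entropy differences. This yields both displayed entropic identities; in the cb case $E$ simply joins the conditioning systems $A$ and $Q$.

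\textbf{Main obstacle.} I expect the delicate point to be bookkeeping rather than analysis. It has two parts. First, the system order in the norms must match the conditioning convention of \eqref{equ:Intro.opt.cond.entropy}, namely $\rho_{BA}$ against $(1,\alpha)$. Second, the endpoint $\alpha=1$ must be interpreted: there the prefactor $\frac{\alpha}{1-\alpha}$ diverges, so I would either read it as a limit or restrict the entropic statement to $\alpha<1$. The norm identities themselves hold at $\alpha=1$, since all indices are then $1$ and \Cref{cor:partial.trace.1} reduces everything to traces.
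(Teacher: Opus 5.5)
Your proposal is correct and follows the paper's own route. The paper likewise gets both norm identities by applying \Cref{thm:id.right.quasi} to $\Phi$ and to $\id_E\otimes\Phi$, with reference index $1$ justified by \Cref{lem:cb.co.norm.via.t}, and translates them into entropies through the norm--entropy dictionary (\Cref{prop:Extremal:Entropies.Mixed.Norms}). Your added care over passing from $X>0$ to density operators, and over reading the entropic form only for $\alpha<1$ (or as a limit at $\alpha=1$), fills in bookkeeping the paper leaves implicit.
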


Let us now prove \Cref{thm:id.right.quasi}; \Cref{cor:cb.id.right.quasi} then follows by applying it to $\Phi$ and to $\id_E\otimes\Phi$ with reference index $1$ (\Cref{lem:cb.co.norm.via.t}), together with \Cref{prop:Extremal:Entropies.Mixed.Norms}.
The idea of the proof of this statement is to use reverse Hölder duality \Cref{cor:reverse.multi.hölder} to relate to a regime which is amenable to the same proof idea as \Cref{lem:reductionLemma}. Technically it is somewhat more involved, however.
\begin{proof}[Proof of \Cref{thm:id.right.quasi}]
    The achievability part of this equality, i.e.\ the ``$\leq$'' inequality, is the easy direction, which follows from plugging in product states, since by \Cref{cor.multi.tensor.multiplicativity} we have
    \begin{align}
        \|(\Phi\otimes\id_T)(X\otimes Y)\|_{(P:\cP,T:\cT)}= \|\Phi(X)\|_{(P:\cP)}\|Y\|_{(T:\cT)}.
    \end{align}
    
\noindent    The central tool in this proof is the reverse Hölder duality \Cref{cor:reverse.multi.hölder} which maps the problem via \Cref{prop:negative.multi.indexed} to the one of showing
    \begin{align}
        \|\Phi^*\otimes\id_T\|^+_{\co,(P:\cP^\prime,T:\cT^\prime)\to (Q:\cQ^\prime,T:\cT^\prime)}=\|\Phi^*\|^+_{\co,(P:\cP^\prime)\to (Q:\cQ^\prime)},
    \end{align} where the Hölder dual indices $q_i^\prime,p_i^\prime,t_i^\prime$ of $q_i,p_i,t_i$ lie in $[-\infty,-1]$. Explicitly this is
\begin{align}
     \|\Phi\otimes\id_T\|^+&_{\co,(Q:\cQ,T:\cT)\to (P:\cP,T:\cT)} \\&= \inf_{X>0} \|(\Phi\otimes\id_T)(X)\|_{(P:\cP,T:\cT)} \|X\|^{-1}_{(Q:\cQ,T:\cT)} \\&=\inf_{X>0}\inf_{Y>0}\Tr[Y(\Phi\otimes\id_T)(X)]\,\|X\|^{-1}_{(Q:\cQ,T:\cT)}\|Y\|^{-1}_{(P:\cP^\prime,T:\cT^\prime)} \\&=\inf_{Y>0}\inf_{X>0}\Tr[(\Phi^*\otimes\id_T)(Y)X]\,\|X^{-1}\|_{(Q:-\cQ,T:-\cT)}\|Y\|^{-1}_{(P:\cP^\prime,T:\cT^\prime)} \\&=\inf_{Y>0}\|(\Phi^*\otimes\id_T)(Y)\|_{(Q:\cQ^\prime,T:\cT^\prime)}\|Y\|^{-1}_{(P:\cP^\prime,T:\cT^\prime)} \\&=\|\Phi^*\otimes\id_T\|^+_{\co,(P:\cP^\prime,T:\cT^\prime)\to (Q:\cQ^\prime,T:\cT^\prime)},
\end{align} which of course also holds without the $\id_T$.
    
\noindent  We will prove this statement now in a manner similar to the previous \Cref{lem:reductionLemma}. 
    To prove the remaining $\geq$ bound first take some $X_{PT}>0$ and observe that by \Cref{thm:main.multi.pisier} and \Cref{prop:negative.multi.indexed} we have
    \begin{align}
        \|X\|_{(P:\cP^\prime,T:\cT^\prime)} &= \|X^{-1}\|^{-1}_{(P:-\cP^\prime,T:-\cT^\prime)} \\
        &=\left(\inf_{A>0}\|A\|_{(-\cP^\prime)}\|A_P^{-\frac{1}{2}}X_{PT}^{-1}A_P^{-\frac{1}{2}}\|_{(P:\infty,T:-\cT^\prime)}\right)^{-1} \\
        &= \sup_{A>0}\|A\|^{-1}_{(-\cP^\prime)}\|A_P^\frac{1}{2}X_{PT}A_P^\frac{1}{2}\|_{(P:-\infty,T:\cT^\prime)}.
    \end{align}
    Thus for any fixed $\epsilon>0$ there exists $A_{P}>0$ s.t.
\begin{align}
    \|X\|_{(P:\cP^\prime,T:\cT^\prime)}&\leq \|A\|^{-1}_{(P:-\cP^\prime)}\|A^\frac{1}{2}_{P}X_{PT}A^\frac{1}{2}_{P}\|_{(P:-\infty,\,T:\cT^\prime)}+\epsilon \\&\equiv \|A\|^{-1}_{(P:-\cP^\prime)}\|Y_{PT}\|_{(P:-\infty,\,T:\cT^\prime)}+\epsilon,
\end{align} where hence $X_{PT}=A^{-\frac{1}{2}}_{P}Y_{PT}A^{-\frac{1}{2}}_{P}$. 

Now similarly to in the previous proof, let $\{K_i\}^\nu_{i=1}$ be the Kraus operators of the map $\Phi^*$, then
\begin{align}
    (\Phi^*\otimes\id_T)(X_{PT})&=\sum^\nu_{i=1}(K_iA^{-\frac{1}{2}}\otimes\1_T)Y_{PT}(A^{-\frac{1}{2}}K_i^*\otimes\1_T) \\&= (W_{A}\otimes\1_T)(\1_N\otimes Y_{PT})(W_A^*\otimes\1_T)
\end{align} where $W_A:=(K_1A^{-\frac{1}{2}}\ K_2A^{-\frac{1}{2}}\ \cdots\ K_\nu A^{-\frac{1}{2}}): NP\to Q$ is a block-row vector with operator-valued entries $K_iA^{-\frac{1}{2}}:P\to Q$ and $W_A^*$ is its adjoint block-column vector. Then it is easy to check that $W_AW_A^*=\Phi^*(A^{-1})$ which we may assume to be strictly positive. Otherwise, if $W_AW_A^*=\Phi^*(A^{-1})\geq0$ has a non-trivial kernel, then $\|\Phi^*(A^{-1})\|_{\cQ^\prime}=0$ and thus $\|\Phi^*\|^+_{\co,(P:\cP^\prime)\to (Q:\cQ^\prime)}=0$ and the inequality is trivially true.
Further, for $Y_{PT}$ and this $W_A:NP\to Q$ we will now show that
\begin{align}
    \|Y_{PT}\|_{(P:-\infty,T:\cT^\prime)} &\leq \|W_AW_A^*\|^{-1}_{(Q:\cQ^\prime)} \|(W_A\otimes \1_T)(\1_N\otimes Y_{PT})(W_A^*\otimes\1_T)\|_{(Q:\cQ^\prime,T:\cT^\prime)} \\ &\equiv \|W_AW_A^*\|^{-1}_{(Q:\cQ^\prime)} \|Z_{QT}\|_{(Q:\cQ^\prime,T:\cT^\prime)}.
\end{align} holds, where we set $Z_{QT}:=(W_A\otimes \1_T)(\1_N\otimes Y_{PT})(W_A^*\otimes\1_T)$. To see this set $S:=(W_AW_A^*)^{-\frac{1}{2}}>0$ and define $V:=SW_A$, then $VV^*=\1_Q$ and hence we can apply the operator Jensen's inequality \cite{Choi.1974, HansenPedersen.2003} to get
\begin{align}
    (SZS)^{-1} = [(V\otimes\1_T)(\1_N\otimes Y_{PT})(V^*\otimes\1_T)]^{-1} \leq (V\otimes\1_T)(\1_N\otimes Y_{PT}^{-1})(V^*\otimes\1_T).
\end{align}
Thus it follows that
\begin{align}
    \|SZS\|^{-1}_{(Q:-\infty,T:\cT^\prime)} &= \|[(V\otimes\1)(\1_N\otimes Y_{PT})(V^*\otimes\1_T)]^{-1}\|_{(Q:\infty,T:-\cT^\prime)} \\&\leq \|(V\otimes\1_T)(\1_N\otimes Y_{PT}^{-1})(V^*\otimes\1_T)\|_{(Q:\infty,T:-\cT^\prime)} \\&\leq \|VV^*\|_{(Q:\infty)}\|\1_N\otimes Y_{PT}^{-1}\|_{(N:\infty,P:\infty,T:-\cT^\prime)} \\ &= \| Y_{PT}^{-1}\|_{(P:\infty,T:-\cT^\prime)} =  \|Y_{PT}\|^{-1}_{(P:-\infty,T:\cT^\prime)}.
\end{align}
Finally, to relate the left-hand side to the desired quantity, we can use the multi-indexed version of Pisier's theorem \Cref{thm:main.multi.pisier} in the following manner:
\begin{align}
    \|(SZS)\|^{-1}_{(Q:-\infty,T:\cT^\prime)} &= \|(S\otimes\1_T)^{-1}Z_{QT}^{-1}(S\otimes\1_T)^{-1}\|_{(Q:\infty,T:-\cT^\prime)} \\ &\geq \|SS^*\|_{(Q:-\cQ^\prime)}^{-1}\|Z_{QT}^{-1}\|_{(Q:-\cQ^\prime,T:-\cT^\prime)} \\ &= \|W_AW_A^*\|_{(Q:\cQ^\prime)}\|(W_A\otimes\1_T)(\1_N\otimes Y_{PT})(W_A^*\otimes\1_T)\|^{-1}_{(Q:\cQ^\prime,T:\cT^\prime)}.
\end{align}
Combining these last two inequalities yields the claimed inequality
\begin{align}
  \|Y_{PT}\|_{(P:-\infty,T:\cT^\prime)}  
  \leq \|W_AW_A^*\|^{-1}_{(Q:\cQ^\prime)}\|(W_A\otimes\1_T)(\1_N\otimes Y_{PT})(W_A^*\otimes\1_T)\|_{(Q:\cQ^\prime,T:\cT^\prime)}
\end{align}
Now we can use these to finish the proof of the theorem, analogously to the proof of \Cref{lem:reductionLemma}. We have
\begin{align}
    \|(\Phi^*\otimes\id_T)(X_{PT})\|_{(Q:\cQ^\prime,T:\cT^\prime)} &= \|(W_A\otimes\1_T)(\1_N\otimes Y_{PT})(W_A^*\otimes\1_T)\|_{(Q:\cQ^\prime,T:\cT^\prime)} \\&\geq \|W_AW_A^*\|_{(Q:\cQ^\prime)} \|Y_{PT}\|_{(P:-\infty,T:\cT^\prime)} \\&= \|\Phi^*(A^{-1})\|_{(Q:\cQ^\prime)}\|Y_{PT}\|_{(P:-\infty,T:\cT^\prime)} \\ &\geq \|\Phi^*\|^+_{\co,(P:\cP^\prime)\to (Q:\cQ^\prime)}\|A^{-1}\|_{(P:\cP^\prime)}\|Y_{PT}\|_{(P:-\infty,T:\cT^\prime)} \\ &= \|\Phi^*\|^+_{\co,(P:\cP^\prime)\to (Q:\cQ^\prime)}\|A\|^{-1}_{(P:-\cP^\prime)}\|Y_{PT}\|_{(P:-\infty,T:\cT^\prime)} \\
    &\geq \|\Phi^*\|^+_{\co,(P:\cP^\prime)\to (Q:\cQ^\prime)} (\|X_{PT}\|_{(P:\cP^\prime,\, T:\cT^\prime)}-\epsilon).
\end{align} Since this works for any $\epsilon>0$, by taking the infimum over $X_{PT}$ this proves the claim of the theorem.
\end{proof}
Just like the previous statement, \Cref{thm:id.right.quasi} can be interpreted as an entropic chain rule, see \Cref{cor:cb.id.right.quasi}.

\subsection{Some Multiplicativity Statements of Mixed Norms}\label{sec:Mixed.Multiplicativity.Statements}

Having established the non-commutative Minkowski inequality and the two chain rules \Cref{lem:reductionLemma} and \Cref{thm:id.right.quasi} we now apply them to derive multiplicativity statements and chain rules of mixed (co)-(quasi-)norms under tensor products or suitable composition of linear maps. To this end we first require a helpful operator-level multiplicativity.

\subsubsection{A Helpful Operator Multiplicativity}

As a consequence of \Cref{thm:QuasiNormMinkowskiInequality} and \Cref{thm:multi.quasi.relational.consistency} we first derive the following multiplicativity result, which will be helpful in achievability proofs of \Cref{sec:Mult.cb.reverse.hyper} and \Cref{sec:ApplictoQI}. 

\begin{lemma}[An achievability lemma]\label{lem:basic.4.index.multiplicativity}
    Let $X_{Q_1P_1}, Y_{Q_2P_2}$ be operators and $q,p\neq0$ compatible indices, then it holds that
    \begin{align}
        \|X_{Q_1P_1}\otimes Y_{Q_2P_2}\|_{(Q_1:q,Q_2:q,P_1:p,P_2:p)} = \|X\|_{(Q_1:q,P_1:p)}\|Y\|_{(Q_2:q,P_2:p)}.
    \end{align}
    When $q,p<0$ we assume $X,Y>0$.
\end{lemma}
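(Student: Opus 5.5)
The plan is to prove the two inequalities separately: one from the non-commutative Minkowski inequality \Cref{thm:QuasiNormMinkowskiInequality} combined with \Cref{cor.multi.tensor.multiplicativity}, the other from the factorization of \Cref{thm:multi.quasi.relational.consistency} restricted to product optimizers. The case $q,p<0$ with $X,Y>0$ is reduced to positive indices at the end via \Cref{prop:negative.multi.indexed}. This works because $(X\otimes Y)^{-1}=X^{-1}\otimes Y^{-1}$ and both sides invert consistently. So assume $q,p>0$ and write $Z=X_{Q_1P_1}\otimes Y_{Q_2P_2}$.

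\emph{Case $q\leq p$.} By \Cref{cor.multi.tensor.multiplicativity}, applied along the ordered cut $(Q_1P_1)\,|\,(Q_2P_2)$, we have
\begin{align}
\|Z\|_{(Q_1:q,P_1:p,Q_2:q,P_2:p)}=\|X\|_{(Q_1:q,P_1:p)}\|Y\|_{(Q_2:q,P_2:p)}.
\end{align}
\Cref{thm:QuasiNormMinkowskiInequality} is then applied with $E=Q_1$ carrying $q$, with the pair $P_1:p$, $Q_2:q$ being swapped, and with inner system $R=P_2$ carrying $p$. It says that placing the smaller index $q$ before $p$ can only increase the quasi-norm. Hence $\|Z\|_{(Q_1:q,Q_2:q,P_1:p,P_2:p)}\geq\|X\|\,\|Y\|$.

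For the reverse inequality, merge the equal adjacent indices. This gives the two-indexed quantity $\|Z\|_{(Q_1Q_2:q,\,P_1P_2:p)}$, and we use the infimum formula of \Cref{thm:multi.quasi.relational.consistency} with $\cQ=(p)$ and $\frac1r=\frac1q-\frac1p$. We restrict to product weights $a=a_1\otimes a_2$ and $b=b_1\otimes b_2$, where $a_i,b_i$ are admissible for $X$, respectively $Y$. This is legitimate because the marginal supports factorize: $L_Z=L_X\otimes L_Y$ and $R_Z=R_X\otimes R_Y$. Multiplicativity of $\|\cdot\|_{2r}$ and $\|\cdot\|_p$ under tensor products then factorizes the objective. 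Taking infima separately gives $\|Z\|\leq\|X\|_{(q,p)}\|Y\|_{(q,p)}$.

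\emph{Case $q\geq p$.} The roles reverse. \Cref{thm:QuasiNormMinkowskiInequality}, now with $p\leq q$, shows that moving $Q_2:q$ in front of $P_1:p$ does not increase the quasi-norm. Together with \Cref{cor.multi.tensor.multiplicativity} this gives $\|Z\|_{(q,q,p,p)}\leq\|X\|\,\|Y\|$. The supremum formula of \Cref{thm:multi.quasi.relational.consistency}, restricted to product weights in the same way, gives the matching lower bound $\geq$.

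The step needing most care is getting the direction of the Minkowski inequality right in each case, and checking that its hypotheses hold. Here $E=Q_1$ and $R=P_2$ carry indices from $\{q,p\}$, so compatibility is automatic. One must also confirm that product weights satisfy the support conditions in \Cref{thm:multi.quasi.relational.consistency}, which follows from the support factorization noted above. Everything else is routine tensor multiplicativity of single Schatten quasi-norms.
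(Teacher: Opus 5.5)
Your proposal is correct and follows the paper's proof essentially step for step. In each of the cases $q\le p$ and $q\ge p$, one inequality comes from \Cref{thm:QuasiNormMinkowskiInequality} combined with \Cref{cor.multi.tensor.multiplicativity} along the cut $(Q_1P_1)\,|\,(Q_2P_2)$, and the other from the infimum, respectively supremum, formula of \Cref{thm:multi.quasi.relational.consistency} restricted to product weights (with exactly the support factorization $L_{X\otimes Y}=L_X\otimes L_Y$, $R_{X\otimes Y}=R_X\otimes R_Y$ you check), while the negative case is reduced via \Cref{prop:negative.multi.indexed}.
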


\begin{remark}
    For $q=p$ this is obvious by the unitary invariance of the Schatten quasi-norms. 
    A well-known special case of this statement is the $q=1,p=\alpha\in[\frac{1}{2},\infty]$ case, which corresponds to the additivity of the optimized sandwiched Rényi conditional entropy $\tilde{H}_\alpha^\uparrow$ under tensor-product states \cite[Corollary 5.9]{book:Tomamichel.2016}. We will be using this lemma in similar ways, in particular to prove the additivity of the sandwiched Rényi umlaut information under tensor-product states, \Cref{prop.add.sand.umlaut}, and several achievability directions of mixed-norm multiplicativity statements in the remainder of this work.
\end{remark}

\begin{proof}
    Without loss of generality we may assume $q,p>0$ else the result follows with \cref{prop:negative.multi.indexed}.
    Assume $q\leq p$, then on the one hand via \Cref{thm:QuasiNormMinkowskiInequality} and \Cref{cor.multi.tensor.multiplicativity} we have
    \begin{align}
         \|X_{Q_1P_1}\otimes Y_{Q_2P_2}\|_{(Q_1:q,Q_2:q,P_1:p,P_2:p)} &\geq \|X_{Q_1P_1}\otimes Y_{Q_2P_2}\|_{(Q_1:q,P_1:p,Q_2:q,P_2:p)}  \\&= \|X\|_{(Q_1:q,P_1:p)}\cdot\|Y\|_{(Q_2:q,P_2:p)}.
    \end{align}
    On the other hand, via \Cref{thm:multi.quasi.relational.consistency} and \Cref{cor.multi.tensor.multiplicativity} we have
    \begin{align}
        \|X_{Q_1P_1}\otimes Y_{Q_2P_2}\|_{(Q_1:q,Q_2:q,P_1:p,P_2:p)} &=\inf_{A,B\geq0}\|A\|_{2r}\|B\|_{2r}\|A_{Q_1Q_2}^{-1}(X_{Q_1P_1}\otimes Y_{Q_2P_2})B^{-1}_{Q_1Q_2}\|_p \\
        &\leq \inf_{A^{}, A^{\prime}\geq0}\|A^{}\|_{2r}\|A^{\prime}\|_{2r}\inf_{B^{}, B^{\prime}\geq0}\|B^{}\|_{2r}\|B^{\prime}\|_{2r}\\ &\quad\times\|(A^{\,-1}_{Q_1}\otimes A_{Q_2}^{\prime\,-1})(X_{Q_1P_1}\otimes Y_{Q_2P_2})(B^{\,-1}_{Q_1}\otimes B_{Q_2}^{\prime\,-1})\|_p \\
        &=\inf_{A^{}, A^{\prime}\geq0}\|A^{}\|_{2r}\|A^{\prime}\|_{2r}\inf_{B^{}, B^{\prime}\geq0}\|B^{}\|_{2r}\|B^{\prime}\|_{2r}\\ &\quad\times\|A^{\,-1}_{Q_1}X_{Q_1P_1}B^{\,-1}_{Q_1}\otimes  A_{Q_2}^{\prime\,-1}Y_{Q_2P_2}B_{Q_2}^{\prime\,-1}\|_p \\
        &= \|X_{Q_1P_1}\|_{(Q_1:q,P_1:p)}\|Y_{Q_2P_2}\|_{(Q_2:q,P_2:p)},
    \end{align} where $\frac{1}{r}=|\frac{1}{q}-\frac{1}{p}|$ and in the last line we used the unitary invariance and multiplicativity under tensor products of the Schatten-$p$-quasi-norms. We dropped writing out the support conditions in the first line $\Pi_A=L_{X\otimes Y}, \Pi_B=R_{X\otimes Y}$. They imply in the second line that $\Pi_A=L_X, \Pi_{A^\prime}=L_Y$ and  $\Pi_B=R_X, \Pi_{B^\prime}=R_Y$.
    The other case goes similarly. Assume $q\geq p$, then on the one hand via \Cref{thm:QuasiNormMinkowskiInequality} and \Cref{cor.multi.tensor.multiplicativity} we have
    \begin{align}
         \|X_{Q_1P_1}\otimes Y_{Q_2P_2}\|_{(Q_1:q,Q_2:q,P_1:p,P_2:p)} &\leq \|X_{Q_1P_1}\otimes Y_{Q_2P_2}\|_{(Q_1:q,P_1:p,Q_2:q,P_2:p)}  \\&= \|X\|_{(Q_1:q,P_1:p)}\cdot\|Y\|_{(Q_2:q,P_2:p)}.
    \end{align}
    On the other hand, via \Cref{thm:multi.quasi.relational.consistency} and \Cref{cor.multi.tensor.multiplicativity} we have
    \begin{align}
        \|X_{Q_1P_1}\otimes Y_{Q_2P_2}\|_{(Q_1:q,Q_2:q,P_1:p,P_2:p)} \!&=\!\sup_{A,B\geq0}\!\|A\|^{-1}_{2r}\|B\|^{-1}_{2r}\|A_{Q_1Q_2}^{}(X_{Q_1P_1}\otimes Y_{Q_2P_2})B^{}_{Q_1Q_2}\|_p \\
        &\geq\! \sup_{A^{}, A^{\prime}\geq0}\|A^{}\|^{-1}_{2r}\|A^{\prime}\|^{-1}_{2r}\sup_{B^{}, B^{\prime}\geq0}\|B^{}\|^{-1}_{2r}\|B^{\prime}\|^{-1}_{2r}\\ &\quad\times\|(A^{}_{Q_1}\otimes A_{Q_2}^{\prime})(X_{Q_1P_1}\otimes Y_{Q_2P_2})(B^{}_{Q_1}\otimes B_{Q_2}^{\prime})\|_p \\
        &=\!\sup_{A^{}, A^{\prime}\geq0}\|A^{}\|^{-1}_{2r}\|A^{\prime}\|^{-1}_{2r}\sup_{B^{}, B^{\prime}\geq0}\|B^{}\|^{-1}_{2r}\|B^{\prime}\|^{-1}_{2r}\\ &\quad\times\|A^{}_{Q_1}X_{Q_1P_1}B^{}_{Q_1}\otimes  A_{Q_2}^{\prime}Y_{Q_2P_2}B_{Q_2}^{\prime}\|_p \\
        &=\! \|X_{Q_1P_1}\|_{(Q_1:q,P_1:p)}\|Y_{Q_2P_2}\|_{(Q_2:q,P_2:p)}.
    \end{align}
\end{proof}

As a direct corollary we obtain the additivity under tensor-product states of the sandwiched umlaut information introduced in \cite{Kochanowski.2026}.
\begin{corollary}[Claim of \cite{Kochanowski.2026}, Proposition 4.3]\label{prop.add.sand.umlaut}
Let $\alpha\in(0,1)$ and $\rho_{A_1B_1}, \sigma_{A_2B_2}$ be quantum states, then 
\begin{align}
    \tilde{U}_\alpha(A_1A_2;B_1B_2)_{\rho\otimes\sigma}= \tilde{U}_\alpha(A_1;B_1)_{\rho} + \tilde{U}_\alpha(A_2;B_2)_{\sigma}.
\end{align}
\end{corollary}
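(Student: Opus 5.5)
The plan is to reduce the claim to \Cref{lem:basic.4.index.multiplicativity}. The first step is to recall from \cite{Kochanowski.2026} the closed-form expression of the sandwiched umlaut information as a logarithm of a two-indexed Schatten quasi- or anti-norm. In that expression the optimization over the product state $\rho_A\otimes\sigma_B$ in the first argument of the divergence is carried out explicitly. Schematically,
\begin{align}
    \tilde U_\alpha(A;B)_\rho = c_\alpha \log \big\| F_\alpha(\rho_{AB}) \big\|_{(B:q_\alpha,A:p_\alpha)},
\end{align}
where $c_\alpha$ is a constant depending only on $\alpha$, and $q_\alpha,p_\alpha$ are indices (possibly negative) determined by $\alpha\in(0,1)$. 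The operator $F_\alpha(\rho_{AB})$ is built from $\rho_{AB}$ and its marginal $\rho_A$ by functional calculus and products, for instance a sandwich of a power of $\rho_{AB}$ by powers of $\rho_A$. I would check from the cited formula that $q_\alpha,p_\alpha$ are of the same sign and compatible for every $\alpha\in(0,1)$. I would also check that $F_\alpha(\rho)$ is positive, and strictly positive after the usual support restriction in the negative-index case, so that \Cref{def:GeneralMultiIndexed} applies.

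The second step is a purely algebraic factorization. For the product state $\rho_{A_1B_1}\otimes\sigma_{A_2B_2}$ the $A_1A_2$-marginal is $\rho_{A_1}\otimes\sigma_{A_2}$. Powers of tensor products of commuting-free positive factors split, $(X\otimes Y)^\beta = X^\beta\otimes Y^\beta$, and products of tensor products factor. Hence
\begin{align}
    F_\alpha(\rho\otimes\sigma) = F_\alpha(\rho)_{A_1B_1}\otimes F_\alpha(\sigma)_{A_2B_2}.
\end{align}
This relies on the convention, used throughout the paper, of restricting inverses and negative powers to supports.

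The third step applies \Cref{lem:basic.4.index.multiplicativity} with $Q_i=B_i$, $P_i=A_i$, $q=q_\alpha$ and $p=p_\alpha$. The bipartite system $B_1B_2$ carries the single index $q_\alpha$ and $A_1A_2$ carries $p_\alpha$, and by definition of merged equal indices this is exactly the four-indexed space $(B_1:q_\alpha,B_2:q_\alpha,A_1:p_\alpha,A_2:p_\alpha)$. The lemma then gives
\begin{align}
    \|F_\alpha(\rho)\otimes F_\alpha(\sigma)\|_{(B_1B_2:q_\alpha,A_1A_2:p_\alpha)}=\|F_\alpha(\rho)\|_{(B_1:q_\alpha,A_1:p_\alpha)}\,\|F_\alpha(\sigma)\|_{(B_2:q_\alpha,A_2:p_\alpha)}.
\end{align}
The negative case needs strict positivity, which the lemma reduces to the positive case via \Cref{prop:negative.multi.indexed}. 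Taking $c_\alpha\log$ turns this product into the claimed sum.

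I expect the main obstacle to be the first step: pinning down the exact closed form from \cite{Kochanowski.2026}, in particular the order in which the two systems appear in the norm and the values of $q_\alpha,p_\alpha$, and confirming that they are compatible on all of $(0,1)$. If that closed form instead involves a $\tr$ or an index $1$ on an outer system, I would first rewrite it with \Cref{cor:partial.trace.1} into a pure two-indexed norm before applying the lemma. Support issues when $\rho_{AB}$ is not full rank would be handled by restricting to supports, or by perturbing $\rho\mapsto\rho+\varepsilon\1$ and using the continuity from \Cref{thm:Multi.Quasi.Norms}.
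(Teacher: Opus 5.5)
Your plan matches the paper's proof. The paper cites \cite[Lemma 4.2]{Kochanowski.2026} for $\tilde U_\alpha(A;B)_\rho=-\beta\log\|\rho_A^{1/2}\rho_{BA}^{1/\beta}\rho_A^{1/2}\|_{(B:\beta,A:\alpha)}$ with $\beta=\frac{\alpha}{1-\alpha}$, uses that this operator factorizes over $\rho\otimes\sigma$, and applies \Cref{lem:basic.4.index.multiplicativity} with $Q_i=B_i$, $P_i=A_i$. Since $\beta,\alpha>0$ and $|\frac1\beta-\frac1\alpha|=1$, the indices are positive and compatible, and the lemma holds for arbitrary operators, so your negative-index, support and perturbation precautions are not needed.
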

\begin{proof}
By \cite[Lemma 4.2]{Kochanowski.2026} we have $\tilde{U}_\alpha(A;B)_\rho=-\beta\log\|\rho_A^\frac{1}{2}\rho_{BA}^{\frac{1}{\beta}}\rho_A^\frac{1}{2}\|_{(B:\beta,A:\alpha)}$ where $\beta=\frac{\alpha}{1-\alpha}$, so the claimed statement reduces to showing
\begin{align}
   \|\rho_{A_1}^\frac{1}{2}\rho_{B_1A_1}^{\frac{1}{\beta}}\rho_{A_1}^\frac{1}{2}\otimes \sigma_{A_2}^\frac{1}{2}\sigma_{B_2A_2}^{\frac{1}{\beta}}\sigma_{A_2}^\frac{1}{2}\|_{(B_1:\beta, B_2:\beta,A_1:\alpha,A_2:\alpha)} = \|\rho_{A_1}^\frac{1}{2}\rho_{B_1A_1}^{\frac{1}{\beta}}\rho_{A_1}^\frac{1}{2}\|_{(B_1:\beta,A_1:\alpha)}\|\sigma_{A_2}^\frac{1}{2}\sigma_{B_2A_2}^{\frac{1}{\beta}}\sigma_{A_2}^\frac{1}{2}\|_{(B_2:\beta,A_2:\alpha)}.
\end{align}
This, however, is nothing but \cref{lem:basic.4.index.multiplicativity}.
\end{proof}

\subsubsection{Multiplicativity of Completely Bounded (Reverse) Hypercontractivity}\label{sec:Mult.cb.reverse.hyper}

In \cite[Theorem 1.6]{Kochanowski.2026} we established super-multiplicativity of the completely bounded reverse hypercontractive co-quasi-norm. Here we will extend this result also to negative indices and prove the achievability using the above \cref{lem:basic.4.index.multiplicativity}. We will also extend the completely bounded hypercontractivity of \cite[Theorem 11]{Devetak.2006} to arbitrary indices and give a correct proof of its achievability.

These mixed quasi-norms and moreover the notions of quantum (complete) (reverse) hypercontractivity \cite{Bonami.1970, Beckner.1975, NELSON.1973, Mossel.2013, Olkiewicz.1999, King.2014, Temme.2014} find uses in the study of Markovian dynamics of open quantum systems \cite{Beigi.2016, Bardet.2022, Cubitt.2015, Beigi.2020, Kastoryano.2013, Gross.1975.1} and extremal output Rényi entropies. See also \cite{Watrous.2004, Delgosha.2014}. 

Let us start with the completely bounded mixed Schatten norm $\|\cdot\|_{\cb,q\to p}$ of linear maps. When $1\leq q\leq p$ then a linear map $\Phi$ is called \emph{completely hypercontractive} \cite{Beigi.2016} whenever $\|\Phi\|_{\cb,q\to p}\leq1$.
In the seminal \cite[Theorem 11]{Devetak.2006} it was shown that exactly these ($1\leq q\leq p$) hypercontractive mixed norms are multiplicative under tensor products of CP maps. This has found several applications in the study of dissipative quantum Markov semigroups \cite{Beigi.2016, Bardet.2022}.

Using our multi-indexed Schatten quasi-norms we extend this result from $1\leq q\leq p$ to any pair of compatible non-zero indices $q\leq p$.
\begin{lemma}[Theorem 4.2 of \cite{Kochanowski.2026} and Theorem 11 of \cite{Devetak.2006} extended]\label{thm:mixed.up.cb.additivity}
Let $\Phi:Q_1\to P_1,\Psi:Q_2\to P_2$ be (not necessarily CP) linear maps, then for compatible $q\leq p$ it holds that
\begin{align}\label{eq:multiplicativitynorms}
    \|\Phi\otimes\Psi\|_{\cb,(Q_1Q_2:q)\to (P_1P_2:p)} = \|\Phi\|_{\cb,(Q_1:q)\to (P_1:p)}\cdot \|\Psi\|_{\cb,(Q_2:q)\to (P_2:p)}.
\end{align}
When $q\leq p<0$ we assume $\Phi,\Psi$ to be CP.
\end{lemma}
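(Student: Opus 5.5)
We prove the two inequalities separately, following the strategy of \cite[Theorem 11]{Devetak.2006}: the upper bound comes from factoring $\Phi\otimes\Psi=(\Phi\otimes\id)\circ(\id\otimes\Psi)$ and swapping systems, and the achievability comes from product inputs via \Cref{lem:basic.4.index.multiplicativity}.

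\emph{Upper bound ($\leq$).} Fix a reference system $E$. Using \Cref{lem:cb.norm.via.t}, we may place on $E$ whichever compatible index or multi-index is convenient. Write
\begin{align}
\id_E\otimes\Phi\otimes\Psi=(\id_E\otimes\Phi\otimes\id_{P_2})\circ(\id_E\otimes\id_{Q_1}\otimes\Psi),
\end{align}
and pass through the intermediate space $(E:q,Q_1:q,P_2:p)$. For the second map, regard $EQ_1$ as a single reference system with index $q$. Its norm from $(E Q_1:q,Q_2:q)$ to $(EQ_1:q,P_2:p)$ is then at most $\|\Psi\|_{\cb,q\to p}$.

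The first map needs a reordering. Apply $\SWAP_{Q_1\leftrightarrow P_2}$ to move to $(E:q,P_2:p,Q_1:q)$. Since $q\leq p$, \Cref{thm:QuasiNormMinkowskiInequality} applies with outer system $E$ and empty $\cR$. It shows this swap is contractive, because it moves the larger index $p$ outward.

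Next, put the reference index $p$ on $E$ via \Cref{lem:cb.norm.via.t}. Equal adjacent indices merge, so $EP_2$ becomes a single reference system with index $p$. The map $\id_{EP_2}\otimes\Phi$, from $(EP_2:p,Q_1:q)$ to $(EP_2:p,P_1:p)$, is then bounded by $\|\Phi\|_{\cb,q\to p}$. Finally, the target $(E:p,P_2:p,P_1:p)$ coincides with $(E:p,P_1P_2:p)$ up to unitary reordering by \Cref{cor:equal.adjacient.indices}.

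Chaining these steps with \Cref{prop:standard.sub.and.sup.mult} gives the upper bound. The delicate point is that the index on $E$ changes from $q$ to $p$ in the middle of the chain. To handle this, I would work with $E$ carrying $q$ at the input and $p$ at the output, and justify the change by the single-$E$ equality established inside the proof of \Cref{lem:cb.norm.via.t}. That equality is stated for $\id_E\otimes\Phi$ with a common index on both sides; here $E$ must first absorb $P_2$. I expect this bookkeeping, making all index changes on $E$ legitimate simultaneously, to be the main obstacle. The fallback is to apply \Cref{lem:cb.norm.via.t} to $\id_E\otimes\Psi$ and $\id_{EP_2}\otimes\Phi$ separately, choosing the reference index at each stage: $q$ for $\Psi$ and $p$ for $\Phi$. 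One then bridges the two stages by the swap and the relational-consistency factorization \Cref{thm:multi.quasi.relational.consistency}, which relates the $q$-index on $E$ to the $p$-index.

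\emph{Lower bound ($\geq$).} Take near-optimal inputs $X_{E_1Q_1}$ for $\Phi$ and $Y_{E_2Q_2}$ for $\Psi$, both measured with reference index $q$ on input and $p$ on output; this is allowed by \Cref{lem:cb.norm.via.t}. Then feed $X\otimes Y$ with reference system $E=E_1E_2$. We need the norms of this product input and its output to split. One option is to apply \Cref{lem:basic.4.index.multiplicativity}, after reordering by \Cref{cor:equal.adjacient.indices}, to $\|X\otimes Y\|_{(E_1E_2Q_1Q_2:q)}$. It is simpler to choose the reference index equal to $q$ on the input and $p$ on the output, so both are single Schatten quasi-norms and split trivially. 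Doing so requires the mixed-index version of $\cb$ norms, which I would justify as above. Otherwise, \Cref{lem:basic.4.index.multiplicativity} applied with indices $(q,q,p,p)$ ordered as $(E_1,E_2,P_1,P_2)$ gives the output splitting.

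The negative-index case, with $q\leq p<0$ and CP maps, follows identically using the $^+$ versions throughout.
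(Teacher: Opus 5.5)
Your strategy is the paper's own: factor $\Phi\otimes\Psi$, swap with \Cref{thm:QuasiNormMinkowskiInequality}, bound each factor by \Cref{lem:cb.norm.via.t}, and get achievability from product inputs and \Cref{lem:basic.4.index.multiplicativity}. You run it in mirrored order. However, your upper bound has a genuine gap at the step ``put the reference index $p$ on $E$''. Submultiplicativity (\Cref{prop:standard.sub.and.sup.mult}) needs the same norm on the intermediate space in both factors. The cb norm also needs the same index on $E$ at input and output. A quantity with $E:q$ at the input and $E:p$ at the output is a different object. For example, take $\Phi=\id$ on $\C^d$ and $q<p$. Since $\|X\|_p\leq\|X\|_q$, you get $\sup_E\|\id_E\otimes\id\|_{(EQ:q)\to(EQ:p)}=1$. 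But the maximally entangled $\Omega$ gives $\|\Omega\|_{(E:q,Q:p)}/\|\Omega\|_{q}=d^{\frac1q-\frac1p}>1$, so $\|\id\|_{\cb,q\to p}>1$. The single-$E$ equality in the proof of \Cref{lem:cb.norm.via.t} keeps a common index on both sides, so it does not license this switch. Your fallback cannot repair it either: at some point you must pass from $(E:p,\dots)$ back to $(E:q,\dots)$ on the output, and that identity is expansive (e.g.\ $\1_E\otimes Y$ picks up a factor $d_E^{\frac1q-\frac1p}$).

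The missing observation is that no index change is needed; you already noted that $E$ may carry a multi-index. Keep $E:q$ throughout. After the swap, $\id_{EP_2}\otimes\Phi$ maps $(E:q,P_2:p,Q_1:q)$ to $(E:q,P_2:p,P_1:p)$. So the reference system $EP_2$ carries the multi-index $(q,p)$, which is compatible with $q$ and $p$. The ``Moreover'' part of \Cref{lem:cb.norm.via.t} then bounds this factor by $\|\Phi\|_{\cb,(Q_1:q)\to(P_1:p)}$ directly. Finally, \Cref{cor:equal.adjacient.indices} turns $(P_2:p,P_1:p)$ into $(P_1P_2:p)$. This is exactly the paper's argument with roles mirrored. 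The paper puts $p$ on $E$ and applies $\Phi$ first through $(E:p,P_1:p,Q_2:q)$. It then swaps on that output side to get $\id_{EQ_2}\otimes\Phi$ with reference multi-index $(p,q)$, and bounds the $\Psi$-factor with the merged reference $EP_1:p$. For the lower bound, your ``otherwise'' route is precisely the paper's and is correct: index $q$ on $E_1E_2$ on both sides, the Schatten-$q$ input norm of $X\otimes Y$ splitting trivially, and the output splitting by \Cref{lem:basic.4.index.multiplicativity} on $(E_1:q,E_2:q,P_1:p,P_2:p)$. Drop the ``simpler'' variant with $p$ on the output reference, for the reason above.
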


\begin{remark}
    This result holds explicitly for linear, not necessarily CP, maps and any compatible $q\leq p$, i.e.\ both $0<q\leq p$ and $q\leq p<0$, and thus extends \cite[Theorem 11]{Devetak.2006}, which proved this result for CP maps and $1\leq q\leq p$. When $q\leq p<0$ the maps need to be CP for the quantity to be well defined. 
\end{remark}

\begin{proof}[Proof of \Cref{thm:mixed.up.cb.additivity}]
We first prove the upper bound using \Cref{thm:QuasiNormMinkowskiInequality}, as in \cite[Theorem 11]{Devetak.2006}. Fix some size of the system $E$, then by straightforward submultiplicativity of the ($\cb$) quasi-norm under compositions we get
\begin{align}
    \|\id_E&\otimes\Phi\otimes\Psi\|_{(E:p,Q_1:q,Q_2:q)\to (E:p,P_1:p,P_2:p)}  \\ &\leq \|\id_E\otimes\Phi\otimes\id_{Q_2}\|_{(E:p,Q_1:q,Q_2:q)\to (E:p,P_1:p,Q_2:q)}\\&\qquad\times\|\id_E\otimes\id_{P_1}\otimes\Psi\|_{(E:p,P_1:p,Q_2:q)\to (E:p,P_1:p,P_2:p)} \\ 
    &\overset{(1)}{\leq} \|\id_E\otimes\id_{Q_2}\otimes\Phi\|_{(E:p,Q_2:q,Q_1:q)\to (E:p,Q_2:q,P_1:p)}\\&\qquad\times\|\id_E\otimes\id_{P_1}\otimes\Psi\|_{(E:p,P_1:p,Q_2:q)\to (E:p,P_1:p,P_2:p)} \\ 
    &\overset{(2)}{\leq} \|\Phi\|_{\cb,(Q_1:q)\to (P_1:p)}\|\Psi\|_{\cb,(Q_2:q)\to (P_2:p)}.
\end{align} Taking now $\sup_E$ yields the statement of the lemma thanks to \Cref{lem:cb.norm.via.t}.
Here in $(1)$ we used \Cref{thm:QuasiNormMinkowskiInequality} to swap $P_1$ and $Q_2$ as $p\geq q$ and in $(2)$ \Cref{lem:cb.norm.via.t}. This works equally when $0<q\leq p$ or $q\leq p<0$ as in the latter case we have also
\begin{align}
    \|Y\|_{(E:p,P_1:p,Q_2:q)}= \|Y^{-1}\|^{-1}_{(E:-p,P_1:-p,Q_2:-q)} \leq \|Y^{-1}\|^{-1}_{(E:-p,Q_2:-q,P_1:-p)} = \|Y\|_{(E:p,Q_2:q,P_1:p)}
\end{align}
via \Cref{thm:QuasiNormMinkowskiInequality} with $-q\geq -p$ and the order reversing property of $x\mapsto x^{-1}$.

For the achievability we show that sequences of product input states achieve the above bound. TThis is the presumably the proof idea of \cite{Devetak.2006}, made complete by \cref{lem:basic.4.index.multiplicativity}. Let $E=E_1E_2$ and take operators $X_{E_1Q_1}, Y_{E_2,Q_2}$ with $\|X\|_{(E_1Q_1:q)}=\|Y\|_{(E_2Q_2:q)}=1$. When $q,p<0$ then take $X,Y>0$. Then due to \Cref{lem:basic.4.index.multiplicativity} we have that for
\begin{align}
    (\id_{E_1}\otimes\id_{E_2}\otimes\Phi\otimes\Psi)(X_{E_1Q_1}\otimes Y_{E_2Q_2})\equiv Z_{E_1P_1}^{(1)}\otimes Z^{(2)}_{E_2P_2}
\end{align} it holds that
\begin{align}
    \|Z_{E_1P_1}^{(1)}\otimes Z^{(2)}_{E_2P_2}\|_{(E_1:q,E_2:q,P_1:p,P_2:p)} = \|Z^{(1)}_{E_1P_1}\|_{(E_1:q,P_1:p)}\|Z^{(2)}_{E_2P_2}\|_{(E_2:q,P_2:p)}.
\end{align} Thus we get
\begin{align}
    \|\Phi\otimes\Psi\|_{\cb,(Q_1Q_2:q)\to(P_1P_2:p)} &\geq \|(\id_{E_1}\otimes\id_{E_2}\otimes\Phi\otimes\Psi)(X_{E_1Q_1}\otimes Y_{E_2Q_2})\|_{(E_1E_2:q,P_1P_2:p)} \\
    &=\|Z^{(1)}_{E_1P_1}\|_{(E_1:q,P_1:p)}\|Z^{(2)}_{E_2P_2}\|_{(E_2:q,P_2:p)} \\
    &= \|(\id_{E_1}\otimes\Phi)(X_{E_1Q_1})\|_{(E_1:q,P_1:p)}\|(\id_{E_2}\otimes\Psi)(Y_{E_2Q_2})\|_{(E_2:q,P_2:p)}.
\end{align} Taking the supremum over $E_1,E_2$ and $X_{E_1Q_1}, Y_{E_2Q_2}$ completes the achievability proof.
\end{proof}

In a similar vein we also consider the reverse hypercontractive norms $\|\cdot\|^+_{\co,q\to p}$ for $q\geq p$. For $1\geq q\geq p$ these also find applications in the study of dissipation of open quantum systems, among others \cite{Cubitt.2015, Beigi.2020}. 

Analogously to the hypercontractive case above, the completely bounded reverse hypercontractive co-quasi-norm was shown to be, as expected, super-multiplicative under tensor products of CP maps in \cite[Theorem 1.6]{Kochanowski.2026}. While this super-multiplicativity is the operationally interesting inequality, here we prove the matching achievability. 

\begin{theorem}[Multiplicativity of complete reverse hypercontractivity, Theorem 1.6 of \cite{Kochanowski.2026} extended ]\label{thm:cb.reverse.hypercontractive.bound}
Let $q\geq p>0$ or $0>q\geq p$ be compatible indices and $\Phi:Q_1\to P_1,\,\Psi:Q_2\to P_2$ CP maps, then it holds that
\begin{align}\label{eqcocornmulti}
    \|\Phi\otimes\Psi\|^+_{\cb,\co,(Q_1Q_2:q)\to (P_1P_2:p)}= \|\Phi\|^+_{\cb,\co,(Q_1:q)\to (P_1:p)}\cdot \|\Psi\|^+_{\cb,\co, (Q_2:q)\to (P_2:p)}.
\end{align}
\end{theorem}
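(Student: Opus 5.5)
We mirror the proof of \Cref{thm:mixed.up.cb.additivity}, with all inequalities reversed and suprema replaced by infima, and prove the two inequalities separately.

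\emph{Super-multiplicativity} ($\geq$). Fix a reference system $E$ and take the reference index to be $p$; by \Cref{lem:cb.co.norm.via.t} this is allowed since $p$ is compatible with $q$ and $p$. Write $\id_E\otimes\Phi\otimes\Psi$ as the composition of $\id_E\otimes\Phi\otimes\id_{Q_2}$, which maps $(E:p,Q_1:q,Q_2:q)\to(E:p,P_1:p,Q_2:q)$, followed by $\id_{EP_1}\otimes\Psi$, which maps $(E:p,P_1:p,Q_2:q)\to(E:p,P_1:p,P_2:p)$. Super-multiplicativity of co-quasi-norms under compositions of CP maps (\Cref{prop:standard.sub.and.sup.mult}) bounds the left-hand side from below by the product of the two factors. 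The second factor is at least $\|\Psi\|^+_{\cb,\co,(Q_2:q)\to(P_2:p)}$, since the system $EP_1$ carries the single index $p$ and we may invoke \Cref{lem:cb.co.norm.via.t}.

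For the first factor, the inner system $Q_2$ must be moved to the outside. Since $q\geq p$, \Cref{thm:QuasiNormMinkowskiInequality} (in the form of its remark, with swapped roles) gives
\begin{align}
\|Y\|_{(E:p,P_1:p,Q_2:q)}\geq\|Y\|_{(E:p,Q_2:q,P_1:p)}.
\end{align}
The input norm is unchanged, because $Q_1$ and $Q_2$ both carry $q$ and \Cref{cor:equal.adjacient.indices} applies. Together these bound the first factor from below by the co-quasi-norm of $\id_{EQ_2}\otimes\Phi$ from $(E:p,Q_2:q,Q_1:q)$ to $(E:p,Q_2:q,P_1:p)$, and \Cref{lem:cb.co.norm.via.t} with the multi-indexed reference space $(p,q)$ bounds this by $\|\Phi\|^+_{\cb,\co}$. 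Taking the infimum over $E$ completes this direction.

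For negative indices the same argument goes through using \Cref{prop:negative.multi.indexed}: passing to inverses and the indices $-q\leq -p$ produces the same inequality direction, because $x\mapsto x^{-1}$ reverses order, exactly as in the proof of \Cref{thm:mixed.up.cb.additivity}. The main delicacy is checking that the swap direction and the positivity of intermediate outputs (CP maps preserve $X>0$ up to support issues) are right; in the negative case we need $\Phi(X)>0$, and otherwise the co-norm is $0$ and the inequality is trivial.

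\emph{Sub-multiplicativity} ($\leq$, achievability). Use product inputs. Take $X_{E_1Q_1}$, $Y_{E_2Q_2}>0$, set $E=E_1E_2$, and put the reference index $q$ on $E$. By \Cref{lem:basic.4.index.multiplicativity}, applied to the outputs with indices $(E_1:q,E_2:q,P_1:p,P_2:p)$, the output norm factorizes. The input norm factorizes by \Cref{cor.multi.tensor.multiplicativity}, since $E_1Q_1E_2Q_2$ all carry index $q$. Hence
\begin{align}
\|\Phi\otimes\Psi\|^+_{\cb,\co}\leq\frac{\|(\id\otimes\Phi)(X)\|_{(q,p)}}{\|X\|_q}\cdot\frac{\|(\id\otimes\Psi)(Y)\|_{(q,p)}}{\|Y\|_q},
\end{align}
and taking the infimum over $E_1,E_2,X,Y$ yields the claim.
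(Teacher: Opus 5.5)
Your proposal is correct and follows the paper's proof step for step. Super-multiplicativity goes through the composition $(\id_E\otimes\Phi\otimes\id_{Q_2})$ followed by $(\id_{EP_1}\otimes\Psi)$, the non-commutative Minkowski inequality moves $Q_2$ past $P_1$ (using $q\geq p$), and \Cref{lem:cb.co.norm.via.t} bounds both factors; achievability comes from product inputs and \Cref{lem:basic.4.index.multiplicativity}. The only cosmetic difference is that you put the reference index $p$ rather than $q$ on $E$, so $EP_1$ merges into a single $p$-indexed system for the $\Psi$ factor, whereas the paper uses the multi-indexed reference $(E:q,P_1:p)$ there.
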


\begin{remark}
    The physically interesting `$\geq$' inequality of this statement is \cite[Theorem 1.6]{Kochanowski.2026}. The achievability, which was lacking there, will, as in the hypercontractive setting, be a consequence of \cref{lem:basic.4.index.multiplicativity}. For the operational details of this quantity we refer to \cite{Kochanowski.2026}.
\end{remark}

\begin{proof}[Proof of \Cref{thm:cb.reverse.hypercontractive.bound}]
Let $q\geq p$ be non-zero compatible indices, then
  \begin{align}
      \|&\id_E\otimes\Phi\otimes\Psi\|^+_{\co,(E:q,Q_1:q,Q_2:q)\to (E:q,P_1:p,P_2:p)}\\
      &\geq \|\id_E\otimes\Phi\otimes\id_{Q_2}\|^+_{\co,(E:q,Q_1:q,Q_2:q)\to (E:q,P_1:p,Q_2:q)}\\&\qquad\times \|\id_E\otimes\id_{P_1}\otimes\Psi\|^+_{\co,(E:q,P_1:p,Q_2:q)\to (E:q,P_1:p,P_2:p)} \\
      &\overset{(1)}{\geq} \|\id_E\otimes\id_{Q_2}\otimes\Phi\|^+_{\co,(E:q,Q_2:q,Q_1:q)\to (E:q,Q_2:q,P_1:p)}\cdot \|\Psi\|^+_{\cb,\co,(Q_2:q)\to (P_2:p)}
      \\ &\overset{(2)}{\geq} \|\Phi\|^+_{\cb,\co,(Q_1:q)\to (P_1:p)}\cdot \|\Psi\|^+_{\cb,\co,(Q_2:q)\to (P_2:p)},
  \end{align} 
  where $(1)$ follows from \Cref{thm:QuasiNormMinkowskiInequality} and \Cref{lem:cb.co.norm.via.t} for $\Psi$ on the combined system $(E:q,P_1:p)$, and in $(2)$ for $\Phi$ we used \Cref{lem:cb.co.norm.via.t} on the combined system $EQ_2$. Note that both of these work equally for positive or negative compatible indices.
  
The achievability follows again by means of \Cref{lem:basic.4.index.multiplicativity}. Let $X_{E_1Q_1}, Y_{E_2Q_2}\geq0$ be s.t. $\|X\|_{(E_1Q_1:q)}=\|Y\|_{(E_2Q_2:q)}=1$. Then
\begin{align}
    \|\Phi\otimes\Psi\|^+_{\cb,\co,(Q_1Q_2:q)\to (P_1P_2:p)} \!&\leq\! \|(\id_{E_1}\!\otimes\id_{E_2}\!\otimes\Phi\otimes\Psi)(X_{E_1Q_1}\otimes Y_{E_2Q_2})\|_{(E_1:q,E_2:q,P_1:p,P_2:p)} \\ &=\! \|(\id_{E_1}\!\otimes\Phi)(X_{E_1Q_1})\|_{(E_1:q,P_1:p)}\|(\id_{E_2}\!\otimes\Psi)(Y_{E_2Q_2})\|_{(E_2:q,P_2:p)}.
\end{align} Minimizing the right-hand side over $E_1,E_2$ and $X_{E_1Q_1}, Y_{E_2Q_2}$ yields the achievability claim.
\end{proof}

\section{Additivities and Chain Rules for Rényi Entropic Quantities}\label{sec:ApplictoQI}

Additivity questions for extremal output entropies have a long history. Additivity of the minimal output von Neumann entropy under tensor products of quantum channels was shown to be equivalent to additivity of the Holevo capacity, and hence to a single-letter formula for the classical capacity \cite{shor2004equivalence, Holevo.1998, SchumacherWestmoreland.1997}. Its Rényi analogue is the multiplicativity conjecture for maximal output $p$-norms \cite{AmosovHolevoWerner.2000}, and both were eventually disproved: first for large $p$ \cite{Werner.2002} and then for all $p>1$ \cite{Hayden.2008}, see also the explicit, non-random constructions \cite{Grudka.2010, Szczygielski.2024, Derksen.2026}, and for $p$ close to $0$ \cite{Cubitt.2008}, then for the von Neumann entropy \cite{Hastings.2009}; more recently for $p\in[0,\frac14)\cup(\frac34,1]$ \cite{Leung.2026} and, finally, simultaneously for all orders $p\in[0,\infty]$ \cite{Zhen.2026}. Additivity can be restored by allowing the input to be entangled with a reference system and conditioning on it. This is exactly what the completely bounded minimal output Rényi entropy is, and it was shown to be additive for all $p\geq1$ in \cite{Devetak.2006} and for all $p\geq\frac12$ in \cite{Kochanowski.2026, Li.2026}. In \Cref{sec:Additivity.Extremal.Output.Entropies} we establish analogous statements for conditional entropies in which the conditioning system $B$ is itself an output of the channel, possibly held by an adversary, and show that the maximal output conditional entropy is additive even without complete boundedness.

Chain rules are the second pillar, next to additivity, of how entropies behave under composition. For the von Neumann entropy the chain rule $H(A_1A_2|B)_\omega=H(A_1|B)_{\tr_{A_2}[\omega]}+H(A_2|A_1B)_\omega$ holds with equality, and it is what allows one to decompose the uncertainty of a many-round process into contributions of the individual rounds. For Rényi entropies no such identity is available. Chain rules then only hold as inequalities, typically at the price of changing the Rényi orders of the individual terms \cite{Dupuis.2015}, and analogous statements for the smooth min- and max-entropies hold up to correction terms \cite{Vitanov.2013}. Nevertheless, such chain rules are at the heart of modern security proofs in quantum cryptography. The entropy accumulation theorem \cite{Dupuis.2020} combines a Rényi chain rule with a continuity argument to bound the smooth min- or max-entropy produced by a sequential process whose rounds satisfy a Markov condition. This made device-independent protocols for key distribution \cite{Arnon.2018} and entanglement certification \cite{Arnon-Friedman.2019} amenable to finite-size analysis against general attacks. Its generalization \cite{Metger.2024} replaces the Markov condition by a non-signaling condition between rounds, which covers a wider range of protocols \cite{Metger.2023}. More recently, working with Rényi entropies directly, rather than passing through smooth entropies, has led to markedly better finite-size bounds \cite{ArqandHahnTan.2025, Arqand.2025, Himbeeck.2025, Fawzi.2026}. This has in turn raised the question which chain rules hold for which conditional Rényi entropy and under which structural assumptions \cite{Marwah.2025, Wooltorton.2026}. In particular, natural strengthenings in the device-independent setting are known to fail \cite[Section 4]{Wooltorton.2026}. The chain rules we derive below in \cref{sec:DI.ent.cert} concern the optimized conditional entropy $\tilde H^\uparrow_\alpha$ with $\alpha\in[\frac12,1)$, the regime relevant for upper bounds on conditional entropies; at the endpoint $\alpha=\frac12$ it coincides with the max-entropy, which links it to entanglement-related quantities. They hold for compositions of arbitrary, possibly signaling, channels without any Markov or non-signaling assumption. In this sense they complement the chain rules for the non-optimized conditional Rényi entropies underlying entropy accumulation \cite{Dupuis.2020, Metger.2024}, which require Markov or non-signaling conditions on the rounds. Operationally, conditional entropies of order $\alpha<1$ enter through \emph{upper} bounds: the max-entropy at $\alpha=\frac12$ governs the cost of information reconciliation \cite{Renner.2006}, its negativity certifies, and lower-bounds, distillable entanglement \cite{Arnon-Friedman.2019}, and sandwiched Rényi divergences of order $\alpha\in[\frac12,1)$ are strong converse exponents \cite{Li.2022}. Moreover, by the duality of conditional Rényi entropies, statements of order $\alpha<1$ are equivalent to statements of order $\beta=\frac{\alpha}{2\alpha-1}>1$ about a system purifying the process, see \Cref{rem:duality}.

Before we start discussing our results, let us recap the connection between sandwiched Rényi entropic quantities and two- and multi-indexed Schatten quasi-norms.
\subsection{Preliminaries: Entropic quantities via quasi-norms}\label{sec:Entopic.Norms}
Let $\cD(\cH)$ denote the set of quantum states on $\cH$, i.e.\ $\rho\geq0$ with $\Tr[\rho]=1$. For a state on labeled systems we write $\rho_{AB}$ and $\rho_A=\tr_B[\rho_{AB}]$. The Umegaki relative entropy is $D(\rho\|\sigma)=\Tr[\rho(\log\rho-\log\sigma)]$ if $\Pi_\rho\leq\Pi_\sigma$ and $+\infty$ otherwise. For $\alpha\in[\frac12,1)\cup(1,\infty]$ the \emph{sandwiched Rényi relative entropy} \cite{Lennert.2013, Wilde.2014} is
\begin{align}
    \tilde D_\alpha(\rho\|\sigma):=\frac{\alpha}{\alpha-1}\log\big\|\sigma^{\frac{1-\alpha}{2\alpha}}\rho\,\sigma^{\frac{1-\alpha}{2\alpha}}\big\|_\alpha
\end{align}
whenever $\Pi_\rho\leq\Pi_\sigma$, or $\alpha<1$ and $\Pi_\rho\Pi_\sigma\neq0$, and $+\infty$ otherwise. It induces the Rényi entropy 
\begin{align}
    H_\alpha(A)_\rho=-\tilde D_\alpha(\rho_A\|\1_A)=\frac{\alpha}{1-\alpha}\log\|\rho_A\|_\alpha
\end{align} and the \emph{optimized sandwiched Rényi conditional entropy}
\begin{align}
    \tilde H^\uparrow_\alpha(A|B)_\rho:=-\inf_{\sigma_B\in\cD(\cH_B)}\tilde D_\alpha(\rho_{AB}\|\1_A\otimes\sigma_B).
\end{align}
For $\alpha\to1$ these converge to the von Neumann (conditional) entropies \cite{book:Tomamichel.2016} and the endpoints $\alpha=\frac12$ and $\alpha=\infty$ are, respectively, the \emph{max-} and \emph{min-entropies} \cite{Koenig.2009}
\begin{align}
   \tilde H^\uparrow_{\frac{1}{2}}(A|B)_\rho = H_{\max}(A|B)_\rho, \\
   \tilde H^\uparrow_{\infty}(A|B)_\rho = H_{\min}(A|B)_\rho. 
\end{align}
This conditional entropy is both classically \cite{Arimoto.1977} and quantumly \cite{Beigi.2023, Kochanowski.2026} given as a two-indexed quasi-norm via
\begin{align}
    \tilde H_\alpha^\uparrow(A|B)_\rho &= \frac{\alpha}{1-\alpha}\log\|\rho_{BA}\|_{(B:1,A:\alpha)}, \\
    H(A|B)_\rho &= \lim_{\alpha\to 1}\frac{\alpha}{1-\alpha}\log\|\rho_{BA}\|_{(B:1,A:\alpha)} = -\frac{\mathrm{d}}{\mathrm{d}\alpha}\bigg|_{\alpha=1}\log\|\rho_{BA}\|^\alpha_{(B:1,A:\alpha)}.
\end{align}
For $\alpha\geq1$ this was effectively observed in \cite{Devetak.2006}, before the sandwiched relative entropy had been introduced. For $\alpha\in[\frac{1}{2},1)$ it is \cite[Corollary 4.1]{Kochanowski.2026}. Both follow from \Cref{thm:multi.quasi.relational.consistency} with $q=1$, $\cQ=(\alpha)$ and $a=b\geq0$, where the optimization over $a$ becomes the optimization over $\sigma_B$ in the definition of $\tilde H^\uparrow_\alpha$. This identification has been used to give simple proofs of basic properties of this conditional entropy \cite{Beigi.2023, Kochanowski.2026}, decoupling results \cite{Cheng.2024}, and chain rules \cite{Fawzi.2026}. By \Cref{cor:partial.trace.1} the Rényi entropy of a marginal is for any $\alpha\in[\frac{1}{2},1)\cup(1,\infty]$ 
\begin{align}
    H_\alpha(A)_{\tr_B[\rho_{AB}]}=\frac{\alpha}{1-\alpha}\log\|\rho_{AB}\|_{(A:\alpha,B:1)},
\end{align} which for pure states is the Rényi entanglement entropy \cite{Wang.2016}. Likewise, the optimized conditional Rényi entropy of the marginal $\tr_C[\rho_{ABC}]$ can be written, for any $\alpha\in[\frac{1}{2},1)\cup(1,\infty]$, as
\begin{align}
     \tilde H^\uparrow_\alpha(A|B)_{\tr_C[\rho_{ABC}]}=\frac{\alpha}{1-\alpha}\log\|\rho_{BAC}\|_{(B:1,A:\alpha, C:1)}.
\end{align}

These correspondences between quasi-norms and entropic quantities extend straightforwardly to minimal and maximal output entropic quantities. The following equalities are easy to check.

\begin{proposition}\label{prop:Extremal:Entropies.Mixed.Norms}
In the following let $\Phi:Q\to AB$ be a quantum channel, then it holds that 
\begin{itemize}
    \item Minimal output conditional Rényi-entropy: 
    \begin{align}
        \inf_{\rho_Q} \tilde{H}^\uparrow_\alpha(A|B)_{\Phi_{Q\to AB}(\rho)} = \frac{\alpha}{1-\alpha}\log\begin{cases}
            &\|\Phi\|^+_{1\to (B:1,A:\alpha)} \qquad \text{ for } \alpha\in(1,\infty], \\ 
            &\|\Phi\|^+_{\co,1\to (B:1,A:\alpha)} \quad \text{ for }  \alpha\in[\frac{1}{2},1).
        \end{cases}
    \end{align}
    \item Maximal output conditional Rényi-entropy:
    \begin{align}
        \sup_{\rho_Q} \tilde{H}^\uparrow_\alpha(A|B)_{\Phi_{Q\to AB}(\rho)} = \frac{\alpha}{1-\alpha}\log\begin{cases}
            &\|\Phi\|^+_{\co,1\to (B:1,A:\alpha)} \quad \text{ for } \alpha\in(1,\infty], \\
            &\|\Phi\|^+_{1\to (B:1,A:\alpha)} \qquad \text{ for } \alpha\in[\frac{1}{2},1).
        \end{cases}
    \end{align}
    \item Replacing these mixed (co)-quasi-norms with their completely bounded counterparts corresponds to an additional optimization over an environment $E$ and states $\rho_{EQ}$. That is
    \begin{align}
        \opt_{E,\rho_{EQ}}\tilde{H}_\alpha^\uparrow(A|BE)_{(\id_E\otimes\Phi)(\rho_{EQ})} = \frac{\alpha}{1-\alpha}\log\begin{cases}
            &\|\Phi\|^+_{\cb,1\to (B:1,A:\alpha)}, \\
            &\|\Phi\|^+_{\cb,\co, 1\to (B:1,A:\alpha)}.
        \end{cases} 
    \end{align} where we have the upper case when $\opt=\sup,\, \alpha\in[\frac{1}{2},1)$ and $\opt=\inf,\, \alpha\in(1,\infty]$ and the lower case when $\opt=\inf,\, \alpha\in[\frac{1}{2},1)$ and $\opt=\sup,\, \alpha\in(1,\infty]$.
\end{itemize}
\end{proposition}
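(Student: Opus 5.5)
The plan is to reduce every item to the quasi-norm identity $\tilde H^\uparrow_\alpha(A|B)_\omega=\frac{\alpha}{1-\alpha}\log\|\omega_{BA}\|_{(B:1,A:\alpha)}$, valid for all $\alpha\in[\frac12,1)\cup(1,\infty]$ and recalled above. It follows from \Cref{thm:multi.quasi.relational.consistency} with $q=1$, $\cQ=(\alpha)$ and $a=b\geq0$. After that, only monotonicity and homogeneity bookkeeping is needed. For a channel $\Phi$ and a state $\rho_Q$ we have $\|\rho_Q\|_1=1$. Hence
\begin{align}
\|\Phi(\rho_Q)\|_{(B:1,A:\alpha)}=\frac{\|\Phi(\rho_Q)\|_{(B:1,A:\alpha)}}{\|\rho_Q\|_1},
\end{align}
and by positive homogeneity of both quasi-norms, optimizing this ratio over all $X\geq0$ is the same as optimizing over normalized states. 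Extending from $X\geq0$ to $X>0$, or conversely, uses the continuity of $\|\cdot\|_{(B:1,A:\alpha)}$ from \Cref{thm:Multi.Quasi.Norms} together with approximation by $X+\varepsilon\1$. This shows that $\sup_{\rho_Q}\|\Phi(\rho_Q)\|_{(B:1,A:\alpha)}=\|\Phi\|^+_{1\to(B:1,A:\alpha)}$ and $\inf_{\rho_Q}\|\Phi(\rho_Q)\|_{(B:1,A:\alpha)}=\|\Phi\|^+_{\co,1\to(B:1,A:\alpha)}$.

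Next I would track the sign of the prefactor. Since $x\mapsto\log x$ is increasing, $x\mapsto\frac{\alpha}{1-\alpha}\log x$ is increasing for $\alpha<1$ and decreasing for $\alpha>1$. For $\alpha\in(1,\infty]$, the infimum of the entropy over $\rho_Q$ therefore corresponds to the supremum of the norm, i.e.\ to $\|\Phi\|^+$, and the supremum of the entropy corresponds to $\|\Phi\|^+_{\co}$. For $\alpha\in[\frac12,1)$ the roles swap. This yields the first two bullet points. At $\alpha=\infty$ the prefactor is read as its limit $-1$, so the identity is $H_{\min}=-\log\|\cdot\|_{(1,\infty)}$, and the same argument applies.

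For the completely bounded bullet, I would use \Cref{lem:cb.norm.via.t} and \Cref{lem:cb.co.norm.via.t} with reference index $t=1$. This index is compatible with $1$ and with $(1,\alpha)$ whenever $\alpha\geq\frac12$. It gives
\begin{align}
\|\Phi\|^+_{\cb,1\to(B:1,A:\alpha)}=\sup_E\|\id_E\otimes\Phi\|^+_{(E:1,Q:1)\to(E:1,B:1,A:\alpha)},
\end{align}
and analogously for the co-version with $\inf_E$. The input index $(E:1,Q:1)$ is just the trace norm on $EQ$. The output $(E:1,B:1,A:\alpha)$ merges adjacent equal indices into $(EB:1,A:\alpha)$; a reordering $EB$ versus $BE$ is harmless by local unitary invariance, or by \Cref{cor:equal.adjacient.indices}. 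So the ratio is again $\exp\big(\frac{1-\alpha}{\alpha}\tilde H^\uparrow_\alpha(A|BE)\big)$ evaluated on normalized $\rho_{EQ}$. Taking $\sup_E$ or $\inf_E$ and applying the same monotonicity case analysis gives the stated pairing of $\opt$ and $\alpha$ range.

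The only step that is not mechanical is ensuring that the positive restriction and the normalization are consistent, and that the identification of $\tilde H^\uparrow_\alpha$ holds for every $\alpha$ in the stated ranges. The latter is exactly where compatibility $|1-\frac1\alpha|\leq1$, i.e.\ $\alpha\geq\frac12$, enters. I expect no real obstacle beyond this.
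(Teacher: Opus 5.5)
Your proposal is correct and is essentially the argument the paper has in mind when it calls these equalities ``easy to check''. It combines the quasi-norm representation $\tilde H^\uparrow_\alpha(A|B)_\omega=\frac{\alpha}{1-\alpha}\log\|\omega_{BA}\|_{(B:1,A:\alpha)}$ with homogeneity (normalized states versus the ratio over $X>0$), with the sign of $\frac{\alpha}{1-\alpha}$ (which pairs $\inf/\sup$ with $\|\cdot\|^+$ versus $\|\cdot\|^+_{\co}$), and with \Cref{lem:cb.norm.via.t} and \Cref{lem:cb.co.norm.via.t} at reference index $1$ merged into the conditioning system for the completely bounded case; your extra care about $X>0$ versus $X\geq0$ via continuity and about $\alpha=\infty$ fills in details the paper leaves implicit.
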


\noindent Operationally, these quantities describe different scenarios. In the completely bounded minimal output conditional entropy $\inf_{E,\rho_{EQ}}\tilde H^\uparrow_\alpha(A|BE)_{(\id_E\otimes\Phi_{Q\to AB})(\rho_{EQ})}$ the input is prepared jointly with an arbitrary reference system $E$, which is then added to the conditioning system. It is thus the worst-case uncertainty about $A$ of an adversary who holds the output $B$ together with a quantum memory entangled with the input, as in the analysis of quantum key distribution with time-varying channels \cite{Fawzi.2026}. The maximal output conditional entropy $\sup_{\rho_Q}\tilde H^\uparrow_\alpha(A|B)_{\Phi_{A\to AB}(\rho_Q)}$, on the other hand, is the largest uncertainty about $A$ that a holder of $B$ can be left with. Since separable states have non-negative conditional entropy, a negative upper bound on it certifies that every output of the channel is entangled across $A$ and $B$ \cite{Arnon-Friedman.2019}. Additivity of either quantity means that, in these scenarios, using channels in parallel with arbitrarily entangled inputs gives no advantage over using them one at a time.

\noindent With this dictionary in mind we proceed to derive our extremal entropy additivities and chain rules. 

\subsection{Additivities of Extremal Output Conditional Entropies}\label{sec:Additivity.Extremal.Output.Entropies}

In the following we prove additivity of the completely bounded minimal and maximal output sandwiched Rényi conditional entropies under tensor products of quantum channels for all $\alpha\in[\frac{1}{2},1)\cup(1,\infty]$. In the mixed-norm picture, each of these statements becomes a multiplicativity statement of a $(\cb),1\to(1,p)$ mixed norm or a $(\cb),\co,1\to(1,p)$ mixed co-quasi-norm, which we will prove as short applications of the tools of the previous \Cref{sec:LinerMaps}. The unconditional output entropies, for which the results are known \cite{Devetak.2006, Kochanowski.2026, Li.2026},
are contained as the special case of trivial conditioning systems.

Let us start with the completely bounded minimal output case.

\subsubsection{Completely Bounded Minimal Output Conditional Entropies are Additive}

\begin{theorem}[Additivity of cb min conditional entropy]\label{thm:cb.min.cond.entropy.add}
   Let $p\in[\frac{1}{2},1)\cup(1,\infty]$ then for two quantum channels $\Phi:Q_1\to A_1B_1$, $\Psi:Q_2\to A_2B_2$ it holds that
\begin{align}
\inf_{E,\rho_{EQ_1Q_2}}\tilde{H}^\uparrow_p(B_1B_2|A_1A_2E)_{(\id_E\otimes\Phi\otimes\Psi)(\rho_{EQ_1Q_2})} &= \inf_{E,\rho_{EQ_1}}\tilde{H}^\uparrow_p(B_1|A_1E)_{(\id_E\otimes\Phi)(\rho_{EQ_1})}\\&\qquad+\inf_{E,\rho_{EQ_2}}\tilde{H}^\uparrow_p(B_2|A_2E)_{(\id_E\otimes\Psi)(\rho_{EQ_2})}.
\end{align} 
\end{theorem}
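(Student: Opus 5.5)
By \Cref{prop:Extremal:Entropies.Mixed.Norms}, the statement is equivalent to a multiplicativity statement for mixed norms. Here $B_i$ is the conditioned system and $A_iE$ the conditioning one, so I relabel accordingly. For $p>1$ it reads
\begin{align}
\|\Phi\otimes\Psi\|^+_{\cb,1\to(A_1A_2:1,B_1B_2:p)}=\|\Phi\|^+_{\cb,1\to(A_1:1,B_1:p)}\|\Psi\|^+_{\cb,1\to(A_2:1,B_2:p)}.
\end{align}
For $p\in[\frac12,1)$ the same identity holds with $\cb,\co$ co-quasi-norms. The prefactor $\frac{p}{1-p}$ flips sign between the two regimes, but in each regime the (co-)norm is paired with the matching $\inf$. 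I will treat both regimes in parallel, following the proofs of \Cref{thm:mixed.up.cb.additivity} and \Cref{thm:cb.reverse.hypercontractive.bound}.

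\emph{Achievability.} Choose product inputs $\rho_{E_1Q_1}\otimes\rho_{E_2Q_2}$ with $E=E_1E_2$, and put the reference index $1$ on $E$, which \Cref{lem:cb.norm.via.t} (resp.\ \Cref{lem:cb.co.norm.via.t}) permits. The output norm is then
\begin{align}
\|Z^{(1)}_{E_1A_1B_1}\otimes Z^{(2)}_{E_2A_2B_2}\|_{(E_1E_2A_1A_2:1,B_1B_2:p)}.
\end{align}
Since index $1$ is shared on $E_1A_1$ and $E_2A_2$, \Cref{lem:basic.4.index.multiplicativity} with $q=1$ splits this into the product of the single-channel norms. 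Note that system reordering within equal indices is free by \Cref{cor:equal.adjacient.indices}. The input norm $\|\cdot\|_1$ is multiplicative. Optimizing over $E_i$ and $\rho_{E_iQ_i}$ then gives the inequality ``$\geq$'' for the $\cb$-norm and ``$\leq$'' for the $\cb,\co$-co-norm.

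\emph{Converse.} Fix $E$ and factor $\id_E\otimes\Phi\otimes\Psi=(\id_{EA_1B_1}\otimes\Psi)\circ(\id_E\otimes\Phi\otimes\id_{Q_2})$. Then apply \Cref{prop:standard.sub.and.sup.mult}, which gives sub-multiplicativity in the norm case and super-multiplicativity in the co-norm case. Take the intermediate space to be $(E:1,A_1:1,Q_2:1,B_1:p)$, i.e.\ with $Q_2$ placed before $B_1$. The first factor, $\id_E\otimes\Phi\otimes\id_{Q_2}$ from $(EQ_1Q_2:1)$, is the map $\id_{EQ_2}\otimes\Phi$ with merged reference $EQ_2$ carrying index $1$. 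It is therefore bounded by $\|\Phi\|^+_{\cb}$, resp.\ $\|\Phi\|^+_{\cb,\co}$, by \Cref{lem:cb.norm.via.t}, resp.\ \Cref{lem:cb.co.norm.via.t}.

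For the second factor I want to regard $(E:1,A_1:1)$ as the reference system and apply \Cref{lem:cb.norm.via.t} with the multi-indexed reference space $\cE=(1,p)$ on $EA_1\,B_1$. This requires $B_1$ to sit to the left of $Q_2$, so the map must be rewritten on $(EA_1:1,B_1:p,Q_2:1)\to(EA_1:1,B_1:p,A_2:1,B_2:p)$. Moving $Q_2$ past $B_1$ is a swap of adjacent indices $1$ and $p$ inside an outer context, which is exactly what \Cref{thm:QuasiNormMinkowskiInequality} controls. The output also needs reordering, from $(EA_1:1,B_1:p,A_2:1,B_2:p)$ to $(EA_1A_2:1,B_1B_2:p)$, i.e.\ moving $A_2$ left past $B_1$.

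\textbf{Main obstacle.} The step I expect to be hardest is checking that all these swaps go in the favorable direction. For $p>1$, moving an index $1$ leftward past a larger index $p$ is a norm contraction on the output side, which is what we need. On the input side, however, we need the reverse, and this is where the choice of intermediate ordering and of $\cE$ must be fine-tuned. I would try first to place $Q_2$ outermost in the intermediate space, i.e.\ $(Q_2:1,E:1,A_1:1,B_1:p)$. Its index $1$ equals the outer indices, so it is free to move by \Cref{cor:equal.adjacient.indices}. This way only output-side swaps of $A_2$ past $B_1$ remain, and these are contractive for $p\geq1$. For $p\in[\frac12,1)$ the direction reverses, and the co-norm statement of \Cref{thm:QuasiNormMinkowskiInequality} supplies the needed expansiveness. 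All indices lie in $[\frac12,\infty]$ and are compatible with $1$, so every tool applies. Taking $\inf_E$ or $\sup_E$ at the end finishes the proof.
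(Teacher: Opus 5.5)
\textbf{There is a genuine gap in the converse.} Your achievability half is fine and matches the paper: product inputs, index $1$ on the reference, and \Cref{lem:basic.4.index.multiplicativity} with $q=1$. Your bound on the first factor $\id_{EQ_2}\otimes\Phi$ is also correct. The second factor fails, because you have the direction of \Cref{thm:QuasiNormMinkowskiInequality} inverted. That theorem says the ordering with the \emph{smaller} index outside has the \emph{larger} (quasi-)norm. For $p>1$ this gives $\|Z\|_{(EA_1:1,A_2:1,B_1:p,B_2:p)}\geq\|Z\|_{(EA_1:1,B_1:p,A_2:1,B_2:p)}$; classically, for $p=\infty$, this is $\sum_a\max_b\geq\max_b\sum_a$. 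So moving $A_2$ leftward past $B_1$ is an expansion, not a contraction. The target norm $(EA_1A_2:1,B_1B_2:p)$ therefore dominates the one you control via $\|\Psi\|^+_{\cb}$, which is the wrong inequality for an upper bound. For $p\in[\frac12,1)$ the inequality flips, which is again the wrong way for lower-bounding a co-quasi-norm.

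The input-side move of $Q_2$ past $B_1$, which you flagged as the problem, is actually the favourable one in both regimes. Placing $Q_2$ outermost changes nothing: $(Q_2:1,E:1,A_1:1,B_1:p)$ is the same space as $(EA_1Q_2:1,B_1:p)$ up to a local unitary. With only swaps and left-placed references, some index-$1$ output must cross an index-$p$ output in exactly this unfavourable direction. Such swaps give the achievability direction, which is how the paper uses them in \Cref{cb.co.1.to.1.p.additivity}, and the maximal-output \Cref{thm:p.leq.1.cond.multiplicativity}. They never give the converse for the minimal-output quantity.

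\textbf{The missing idea is removing an inner identity.} Keep your intermediate space $(EA_1Q_2:1,B_1:p)$ but do not move $Q_2$. The output $A_2B_2$ of $\Psi$ then replaces $Q_2$, giving $(EA_1A_2:1,B_2:p,B_1:p)$. This is already the target, up to reordering the equal-index block $B_2B_1$. The second factor is then $\id_{EA_1}\otimes\Psi\otimes\id_{B_1}$, with the identity on the innermost system. Remove it as follows:
\begin{itemize}
\item For $p>1$, use \Cref{lem:reductionLemma} with $\cX=\cS_p(B_1)$. This is the route behind \cite[Theorem 4.11]{Fawzi.2026}, which the paper cites for that range.
\item For $p\in[\frac12,1)$, use \Cref{thm:id.right.quasi}, in the form of \Cref{cor:cb.id.right.quasi}. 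This is where the restriction of all indices to $[\frac12,1]$ is needed.
\end{itemize}
What remains has $EA_1$ with index $1$ as an ordinary left reference. So \Cref{lem:cb.norm.via.t}, resp.\ \Cref{lem:cb.co.norm.via.t}, gives the cb quantity of $\Psi$. The paper's proof of \Cref{cb.co.1.to.1.p.additivity} is the mirror image of this: it applies $\Psi$ first with $Q_1$ as reference, then $\Phi$ with $\id_{B_2}$ removed. Every swap there is between equal indices, and the co-normed identity removal does the real work.
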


\begin{remark}
    For $p\in[1,\infty]$ this result was published in \cite{Fawzi.2026} and finds applications in security proofs for time-varying quantum key distribution protocols. Hence the above extends the additivity to the whole range of admissible $p$.
\end{remark}

It hence suffices to prove the case $p\in[\frac{1}{2},1)$. Via \Cref{prop:Extremal:Entropies.Mixed.Norms} its proof will be a direct consequence of the following multiplicativity results.

\begin{lemma}\label{cb.co.1.to.1.p.additivity}
Let $p\in[\frac{1}{2},1]$ and $\Phi:Q_1\to A_1B_1,\,\Psi:Q_2\to A_2B_2$ be CP maps, then
    \begin{align}
        \|\Phi\otimes\Psi\|^+_{\cb,\co,(Q_1Q_2:1)\to (A_1A_2:1,B_1B_2:p)} \!=\! \|\Phi\|^+_{\cb,\co,(Q_1:1)\to (A_1:1,B_1:p)} \|\Psi\|^+_{\cb,\co,(Q_2:1)\to (A_2:1,B_2:p)}
    \end{align}
\end{lemma}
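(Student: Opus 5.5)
We prove the two inequalities separately, following the pattern of \Cref{thm:cb.reverse.hypercontractive.bound}, with \Cref{thm:id.right.quasi} in place of the plain identity argument. Throughout, the reference index is $1$, which is compatible with all indices in $[\frac12,1]$; by \Cref{lem:cb.co.norm.via.t} this choice is allowed.

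\emph{Achievability (``$\leq$'').} Take positive product inputs $X_{E_1Q_1}\otimes Y_{E_2Q_2}$, normalized in $\|\cdot\|_{(E_1Q_1:1)}$ and $\|\cdot\|_{(E_2Q_2:1)}$. The output lives on $(E_1E_2:1,A_1A_2:1,B_1B_2:p)$. Since the indices $1$ on $E_1,E_2,A_1,A_2$ are equal, \Cref{cor:equal.adjacient.indices} lets us regroup this as $(E_1A_1:1,E_2A_2:1,B_1:p,B_2:p)$. \Cref{lem:basic.4.index.multiplicativity} with $q=1$ then splits the norm into the product of the two single-channel output norms. Taking the infimum over $E_i$ and the inputs gives the upper bound.

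\emph{Converse (``$\geq$'').} Fix $E$ and write $\id_E\otimes\Phi\otimes\Psi=(\id_{EA_1B_1}\otimes\Psi)\circ(\id_E\otimes\Phi\otimes\id_{Q_2})$. We then run the following chain of steps.
\begin{enumerate}
\item By super-multiplicativity of co-quasi-norms under composition (\Cref{prop:standard.sub.and.sup.mult}), the co-norm of the composite is at least the product of the co-norms of the two factors, through a suitably chosen intermediate space.
\item Choose the intermediate space to be $(E:1,A_1:1,Q_2:1,B_1:p)$. The reordering from the natural output order $(E:1,A_1:1,B_1:p,Q_2:1)$ to this one is handled by \Cref{thm:QuasiNormMinkowskiInequality}: since $p\leq 1$, moving $Q_2$ in front of $B_1$ is co-expansive in the required direction.
\item For the first factor, the input space $(E:1,Q_1:1,Q_2:1)$ equals $(EQ_2:1,Q_1:1)$ by \Cref{cor:equal.adjacient.indices}. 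Its co-norm is therefore at least $\|\Phi\|^+_{\cb,\co}$ by \Cref{lem:cb.co.norm.via.t}.
\item For the second factor, $\Psi$ acts on $Q_2$ with $EA_1$ carrying index $1$ in front and $B_1$ carrying index $p$ behind. The $B_1$ system is an inner identity, which we remove with \Cref{thm:id.right.quasi} using $\cT=(p)\subset[\frac12,1]$. This requires reordering the output to $(EA_1:1,A_2:1,B_2:p,B_1:p)$, which is harmless since $B_2,B_1$ carry equal indices (\Cref{cor:equal.adjacient.indices}). After the removal we are left with $\id_{EA_1}\otimes\Psi$, whose co-norm is at least $\|\Psi\|^+_{\cb,\co}$ by \Cref{lem:cb.co.norm.via.t}.
\end{enumerate}
Taking the infimum over $E$ then yields the lower bound.

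\emph{Main obstacle.} The delicate point is the bookkeeping of the swap directions. Each application of \Cref{thm:QuasiNormMinkowskiInequality} must go the co-expansive way, i.e.\ it must move an index-$1$ system in front of an index-$p$ system on the side where the inequality helps us. We also have to check that \Cref{thm:id.right.quasi} really applies with the outer index-$1$ block, so that its hypotheses $\cQ=(1)$, $\cP=(1,p)$, $\cT=(p)$ all lie in $[\frac12,1]$. If the direction of a swap fails at some step, we would first try instead to place $\Psi$ first in the decomposition, or to factor the map the other way round.
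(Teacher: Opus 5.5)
Your argument is correct, and it is essentially the paper's proof with the roles of $\Phi$ and $\Psi$ mirrored. The paper applies $\id_{Q_1}\otimes\Psi$ first and removes the inner identity on $B_2$ from $\id_{A_2}\otimes\Phi\otimes\id_{B_2}$ via \Cref{cor:cb.id.right.quasi}. You apply $\Phi$ first and remove $B_1$ from $\id_{EA_1}\otimes\Psi\otimes\id_{B_1}$ via \Cref{thm:id.right.quasi}. For achievability you use \Cref{lem:basic.4.index.multiplicativity}, which gives equality directly, where the paper uses \Cref{thm:QuasiNormMinkowskiInequality} together with \Cref{cor.multi.tensor.multiplicativity}; both work.

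One correction: the justification in your Step 2 is wrong, but the step is not needed.
\begin{itemize}
\item By \Cref{thm:QuasiNormMinkowskiInequality}, moving $Q_2$ (index $1$) in front of $B_1$ (index $p\leq 1$) is the \emph{contractive} direction: $\|Z\|_{(E:1,A_1:1,Q_2:1,B_1:p)}\leq\|Z\|_{(E:1,A_1:1,B_1:p,Q_2:1)}$.
\item So a separate SWAP factor in that direction has co-norm at most $1$, with no guarantee of equality, and cannot feed a lower bound.
\item The rule in your ``main obstacle'' paragraph is backwards for the same reason: the co-expansive swap moves the index-$p$ system in front of the index-$1$ system.
\end{itemize}
No swap is required at all. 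The intermediate space in \Cref{prop:standard.sub.and.sup.mult} is yours to choose. In $(E:1,A_1:1,Q_2:1,B_1:p)$ the systems $E,A_1,Q_2$ form a single index-$1$ block, so the first factor is literally $\id_{EQ_2}\otimes\Phi$ from $(EQ_2:1,Q_1:1)$ to $(EQ_2:1,A_1:1,B_1:p)$. It is bounded below by $\|\Phi\|^+_{\cb,\co,(Q_1:1)\to(A_1:1,B_1:p)}$ exactly as in your Step 3. As in the paper, only swaps of equal indices (\Cref{cor:equal.adjacient.indices}) occur, and your fallback of placing $\Psi$ first is precisely the paper's ordering.
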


\begin{proof}[Proof of \cref{cb.co.1.to.1.p.additivity}]
    For the non-trivial inequality see that
    \begin{align}
       \|\Phi\otimes\Psi\|^+&_{\cb,\co,(Q_1:1,Q_2:1)\to (A_1:1,A_2:1,B_1:p,B_2:p)}  \\&\geq \|\id_{Q_1}\otimes\Psi\|^+_{\cb,\co,(Q_1:1,Q_2:1)\to (Q_1:1,A_2:1,B_2:p)} \\&\qquad\times \|\SWAP_{Q_1\leftrightarrow A_2}\otimes\id_{B_2}\|^+_{\cb,\co,(Q_1:1,A_2:1,B_2:p)\to (A_2:1,Q_1:1,B_2:p)} \\ &\qquad \times \|\id_{A_2}\otimes\Phi\otimes\id_{B_2}\|^+_{\cb,\co,(A_2:1,Q_1:1,B_2:p)\to (A_2:1,A_1:1,B_1:p,B_2:p)}\\&\qquad\times\|\SWAP_{A_2\leftrightarrow A_1}\otimes\id_{B_1B_2}\|^+_{\cb,\co, (A_2:1,A_1:1,B_1:p,B_2:p)\to (A_1:1,A_2:1,B_1:p,B_2:p)} \\
       &\overset{(1)}{=} \|\id_{Q_1}\otimes\Psi\|^+_{\cb,\co,(Q_1:1,Q_2:1)\to (Q_1:1,A_2:1,B_2:p)} \\&\qquad\times\|\id_{A_2}\otimes\Phi\otimes\id_{B_2}\|^+_{\cb,\co,(A_2:1,Q_1:1,B_2:p)\to (A_2:1,A_1:1,B_1:p,B_2:p)} \\
       &\overset{(2)}{\geq} \|\Psi\|^+_{\cb,\co,(Q_2:1)\to (A_2:1,B_2:p)} \|\Phi\|^+_{\cb,\co,(Q_1:1)\to (A_1:1,B_1:p)}
    \end{align}
where in $(1)$ we used that swapping adjacent equal indices does not affect the underlying quasi-norms, \Cref{cor:equal.adjacient.indices}, and in $(2)$ we used \Cref{lem:cb.co.norm.via.t} and \Cref{cor:cb.id.right.quasi}.

For the achievability take $X_{E_1Q_1}, Y_{E_2Q_2}\geq0$ with $\|X\|_{(E_1:1,Q_1:1)}=\|Y\|_{(E_2:1,Q_2:1)}=1$ and see that
\begin{align}
    \|\Phi\otimes\Psi\|^+&_{\cb,\co,(Q_1:1,Q_2:1)\to (A_1:1,A_2:1,B_1:p,B_2:p)} \\&\leq \|(\id_{E_1E_2}\otimes\Phi\otimes\Psi)(X_{E_1Q_1}\otimes Y_{E_2Q_2})\|_{(E_1:1,E_2:1,A_1:1,A_2:1,B_1:p,B_2:p)} \\ &\leq \|(\id_{E_1}\otimes\Phi)(X)\otimes(\id_{E_2}\otimes\Psi)(Y)\|_{(E_1:1,A_1:1,B_1:p,E_2:1,A_2:1,B_2:p)} \\ &= \|(\id_{E_1}\otimes\Phi)(X)\|_{(E_1:1,A_1:1,B_1:p)}\|(\id_{E_2}\otimes\Psi)(Y)\|_{(E_2:1,A_2:1,B_2:p)},
\end{align} where the second inequality follows from the non-commutative Minkowski inequality \Cref{thm:QuasiNormMinkowskiInequality} with $p\in[\frac{1}{2},1]$ and the last equality from \Cref{cor.multi.tensor.multiplicativity}. Minimizing over such $X,Y$ proves the achievability.
\end{proof}

\begin{proof}[Proof of \cref{thm:cb.min.cond.entropy.add}]
This follows from \Cref{prop:Extremal:Entropies.Mixed.Norms} combined with \Cref{cb.co.1.to.1.p.additivity} for $p\in[\frac{1}{2},1)$ and \cite[Theorem 4.11]{Fawzi.2026} for $p\in(1,\infty]$. Technically the latter gives the non-trivial inequality, while the achievability follows from product inputs and \Cref{lem:basic.4.index.multiplicativity} or \cite[Corollary 5.9]{book:Tomamichel.2016} as in the achievability part of the proof of \Cref{cb.co.1.to.1.p.additivity}.
\end{proof}

\subsubsection{Maximal Output Conditional Entropies are Additive}

Considering additivity of the maximal output sandwiched Rényi conditional entropies we have the following result.

\begin{theorem}[Additivity of maximal output conditional entropies] \label{thm:max.cond.entropy.add}
Let $p\in[\frac{1}{2},1)\cup(1,\infty]$ then for two quantum channels $\Phi:Q_1\to A_1B_1$, $\Psi:Q_2\to A_2B_2$ it holds that
\begin{align}
    \sup_{\rho_{Q_1Q_2}}\tilde{H}^\uparrow_p(B_1B_2|A_1A_2)_{(\Phi\otimes\Psi)(\rho_{Q_1Q_2})} = \sup_{\rho_{Q_1}}\tilde{H}^\uparrow_p(B_1|A_1)_{\Phi(\rho_{Q_1})}+\sup_{\rho_{Q_2}}\tilde H^\uparrow_p(B_2|A_2)_{\Psi(\rho_{Q_2})}.
\end{align}
\end{theorem}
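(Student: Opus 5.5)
Throughout, set $\alpha:=p$. By \Cref{prop:Extremal:Entropies.Mixed.Norms}, the claim is a multiplicativity statement for positive mixed quantities $1\to(A:1,B:\alpha)$, where the first output index $1$ sits on the conditioning system. For $\alpha\in[\frac12,1)$ the relevant quantity is the quasi-norm $\|\cdot\|^+_{1\to(A:1,B:\alpha)}$. For $\alpha\in(1,\infty]$ it is the co-quasi-norm $\|\cdot\|^+_{\co,1\to(A:1,B:\alpha)}$. I treat $\alpha<1$ in detail. The case $\alpha>1$ should follow by the same steps with all inequalities reversed, using the co-versions of the tools (\Cref{lem:cb.co.norm.via.t} and the second statement of \Cref{thm:QuasiNormMinkowskiInequality}).

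\emph{Step 1 (complete boundedness is free).} I first show that $\|\Phi\|^+_{\cb,1\to(A:1,B:\alpha)}=\|\Phi\|^+_{1\to(A:1,B:\alpha)}$, mirroring data processing in $E$. The inequality ``$\geq$'' is trivial. For ``$\leq$'', take $X_{EQ}\geq0$ and put the reference index $1$ on $E$, which \Cref{lem:cb.norm.via.t} permits. By \Cref{cor:equal.adjacient.indices} we have $\|\cdot\|_{(E:1,A:1,B:\alpha)}=\|\cdot\|_{(A:1,E:1,B:\alpha)}$. Since $\alpha\leq1$, the non-commutative Minkowski inequality \Cref{thm:QuasiNormMinkowskiInequality} gives
\[
\|\cdot\|_{(A:1,E:1,B:\alpha)}\leq\|\cdot\|_{(A:1,B:\alpha,E:1)}.
\]
\Cref{cor:partial.trace.1} then converts the right-hand side into $\|\Phi(X_Q)\|_{(A:1,B:\alpha)}$. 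Together with $\|X\|_{(E:1,Q:1)}=\Tr X=\|X_Q\|_1$, this shows the cb quantity is bounded by the non-cb one. For $\alpha>1$ the Minkowski step reverses, which yields the analogous equality for the co-quasi-norms.

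\emph{Step 2 (upper bound).} For $\rho_{Q_1Q_2}\geq0$, I start from $\|(\Phi\otimes\Psi)(\rho)\|_{(A_1:1,A_2:1,B_1:\alpha,B_2:\alpha)}$. I apply \Cref{thm:QuasiNormMinkowskiInequality} to move $B_1$ (index $\alpha\leq1$) in front of $A_2$ (index $1$), which can only increase the quantity:
\[
\|(\Phi\otimes\Psi)(\rho)\|_{(A_1:1,A_2:1,B_1:\alpha,B_2:\alpha)}\leq\|(\Phi\otimes\Psi)(\rho)\|_{(A_1:1,B_1:\alpha,A_2:1,B_2:\alpha)}.
\]
I write the right-hand side as $(\id_{A_1B_1}\otimes\Psi)(Y)$ with $Y:=(\Phi\otimes\id_{Q_2})(\rho)$. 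The multi-indexed part of \Cref{lem:cb.norm.via.t}, with $\cE=(1,\alpha)$ compatible with both $(1)$ and $(1,\alpha)$, bounds this by $\|\Psi\|^+_{\cb}\,\|Y\|_{(A_1:1,B_1:\alpha,Q_2:1)}$. By \Cref{cor:partial.trace.1}, $\|Y\|_{(A_1:1,B_1:\alpha,Q_2:1)}=\|\Phi(\rho_{Q_1})\|_{(A_1:1,B_1:\alpha)}\leq\|\Phi\|^+\,\Tr\rho$. Step 1 then replaces $\|\Psi\|^+_{\cb}$ by $\|\Psi\|^+$. For $\alpha>1$ the same chain holds with co-quasi-norms and with the direction of every inequality flipped.

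\emph{Step 3 (achievability).} Take product inputs $\rho_{Q_1}\otimes\sigma_{Q_2}$. \Cref{lem:basic.4.index.multiplicativity} with $q=1$ and $p=\alpha$ gives
\[
\|\Phi(\rho)\otimes\Psi(\sigma)\|_{(A_1:1,A_2:1,B_1:\alpha,B_2:\alpha)}=\|\Phi(\rho)\|_{(A_1:1,B_1:\alpha)}\,\|\Psi(\sigma)\|_{(A_2:1,B_2:\alpha)}.
\]
Optimizing over $\rho$ and $\sigma$ yields the matching bound. Translating back through \Cref{prop:Extremal:Entropies.Mixed.Norms} gives the additivity.

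\emph{Expected main obstacle.} I expect the delicate points to be Steps 1 and 2. First, one must check that each swap goes in the direction that \Cref{thm:QuasiNormMinkowskiInequality} permits: index $\alpha$ versus $1$, reversed when $\alpha>1$. Second, the multi-indexed reference collection $(A_1:1,B_1:\alpha)$ must be admissible in \Cref{lem:cb.norm.via.t}, respectively \Cref{lem:cb.co.norm.via.t}, with positivity preserved throughout so that \Cref{cor:partial.trace.1} applies.
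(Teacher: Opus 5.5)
Your proposal is correct and follows essentially the same route as the paper, which obtains the theorem from \Cref{thm:p.leq.1.cond.multiplicativity} (for $\alpha\leq 1$) and \Cref{co.1.to.1.p.multiplicativity} (for $\alpha\geq 1$). Those lemmas use the same four ingredients as you: a non-commutative Minkowski swap of $B_1$ past $A_2$; factorization through $\Phi\otimes\id_{Q_2}$ and $\id_{A_1B_1}\otimes\Psi$, using \Cref{cor:partial.trace.1} and the multi-indexed reference system of \Cref{lem:cb.norm.via.t} (respectively \Cref{lem:cb.co.norm.via.t}); collapse of the cb quantity by one further swap plus a partial trace; and achievability from product inputs via \Cref{lem:basic.4.index.multiplicativity}. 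The only difference is cosmetic: the paper bounds the cb quantity of $\Phi\otimes\Psi$ directly, which is slightly stronger than needed, whereas you bound the non-cb one, which suffices for the theorem.
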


\begin{remark}
Notice that here, in contrast to the minimal output case, additivity holds without complete boundedness. For the non-completely bounded Rényi entropies this was already known \cite{Devetak.2006, Li.2026, Kochanowski.2026}. To the best of our knowledge this additivity of maximal output optimized conditional Rényi-entropies has not appeared explicitly in the quantum information literature before.
\end{remark}

Via \Cref{prop:Extremal:Entropies.Mixed.Norms} its proof will be a consequence of the following two multiplicativity results. 

\begin{lemma}\label{thm:p.leq.1.cond.multiplicativity}
Let $p\in[\frac{1}{2},1]$ and $\Phi:Q_1\to A_1B_1$ and $\Psi:Q_2\to A_2B_2$ be CP maps, then
    \begin{align}
        \|\Phi\otimes\Psi\|^+_{\cb,(Q_1:1,Q_2:1)\to (A_1:1,A_2:1,B_1:p,B_2:p)} = \|\Phi\|^+_{(Q_1:1)\to (A_1:1,B_1:p)} \|\Psi\|^+_{(Q_2:1)\to (A_2:1,B_2:p)}.
    \end{align}
\end{lemma}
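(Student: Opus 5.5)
The plan mirrors the proof of \Cref{cb.co.1.to.1.p.additivity}, with the co-quasi-normed chain rule \Cref{thm:id.right.quasi} replaced by the normed identity-removal \Cref{lem:reductionLemma}. The complication is that the inner index $p<1$ is not normed, so the inner identity has to be removed on the dual side.

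\textbf{Achievability.} This is the direction ``$\geq$''. Note that the left-hand side is the $\cb$ quantity while the right-hand side is not, so it suffices to bound the $\cb$ quantity from below by products. For any states $X_{Q_1},Y_{Q_2}\geq0$ of unit trace I take $E$ trivial. The swap inequality \Cref{thm:QuasiNormMinkowskiInequality} with $1\geq p$ reads
\begin{align}
\|Z\|_{(A_1:1,A_2:1,B_1:p,B_2:p)}\geq\|Z\|_{(A_1:1,B_1:p,A_2:1,B_2:p)},
\end{align}
where it is used in the direction that moves the larger index $1$ inward and thereby does not increase the value. Combining this with \Cref{cor.multi.tensor.multiplicativity} gives
\begin{align}
\|\Phi(X)\otimes\Psi(Y)\|_{(A_1:1,A_2:1,B_1:p,B_2:p)}\geq\|\Phi(X)\|_{(1,p)}\|\Psi(Y)\|_{(1,p)}.
\end{align}
Taking suprema over $X$ and $Y$ then yields the bound.

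\textbf{Converse.} This is the direction ``$\leq$''. Fix an environment $E$ carrying index $1$, which is allowed by \Cref{lem:cb.norm.via.t}. I would factor
\begin{align}
\id_E\otimes\Phi\otimes\Psi=(\id_E\otimes\Phi\otimes\id_{A_2B_2})\circ(\id_{EQ_1}\otimes\Psi)
\end{align}
and use submultiplicativity from \Cref{prop:standard.sub.and.sup.mult}. Intermediate swaps between index-$1$ systems are free by \Cref{cor:equal.adjacient.indices}, and the final arrangement $(A_1A_2:1,B_1:p,B_2:p)$ is reached by merging equal indices. The first factor is $\|\id_{EQ_1}\otimes\Psi\|^+_{(EQ_1:1,Q_2:1)\to(EQ_1A_2:1,B_2:p)}$; here $EQ_1$ is an outermost index-$1$ system. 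The second factor is $\|\id_{EA_2}\otimes\Phi\otimes\id_{B_2}\|^+_{(EA_2:1,Q_1:1,B_2:p)\to(EA_2A_1:1,B_1:p,B_2:p)}$, with $B_2:p$ as an inner identity.

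The key claim is that, unlike the cb co-case, both factors collapse to the plain quantities $\|\Psi\|^+_{1\to(1,p)}$ and $\|\Phi\|^+_{1\to(1,p)}$ rather than their cb versions. For the first factor I would use the partial-trace \Cref{cor:partial.trace.1} and the relational consistency \Cref{thm:multi.quasi.relational.consistency} with outer index $1$. For CP $\Psi$ and positive $Z_{EQ_1Q_2}$, the outer system with index $1$ acts like an $\ell_1$-direct sum via the optimal $\sigma_{EQ_1}$-weighting; this is the quantum analogue of \Cref{cor:multi.block.collapse}. Concretely, for positive inputs the outer index $1$ lets me write $\|Z\|_{(EQ_1:1,\cdot)}$ as an infimum over $a$ with $\|a\|_{2r}$ controlled, and then reduce to $\|\Psi(\tilde Z)\|/\|\tilde Z\|$ with $\tilde Z=\tr_{EQ_1}$-weighted slices. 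Since the supremum of ratios over convex combinations is attained at extreme points, this gives $\|\Psi\|^+_{1\to(1,p)}$.

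\textbf{Main obstacle.} I expect the hardest step to be removing the inner identity on $B_2:p$ in the second factor. \Cref{lem:reductionLemma} needs a normed inner space, and \cite[Counterexample 4.1]{Kochanowski.2026} shows that this removal fails in general for norms with $p<1$. I would therefore pass to duals. Because $\alpha=p\in[\frac12,1)$ corresponds to a $\sup$ of a quasi-norm, I would rewrite it via reverse Hölder duality \Cref{cor:reverse.multi.hölder} as an infimum with negative dual indices $p'\in[-\infty,-1]$. Then, following the proof of \Cref{thm:id.right.quasi}, I would apply the operator Jensen inequality to the Kraus block vector of $\Phi^*$ together with \Cref{thm:main.multi.pisier} on the inverted, normed indices $-p'\geq1$. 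The structural difference from \Cref{thm:id.right.quasi} is that here the unfavorable direction (norm rather than co-norm) must be shown to come out right: the Jensen inequality on inverses should reverse the role of the inequality so that the identity on $B_2$ can be dropped. If that fails, my fallback is to merge $B_2$ into the outer $EA_2$ block using \Cref{thm:QuasiNormMinkowskiInequality} in a direction that only helps the upper bound.
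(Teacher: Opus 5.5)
\textbf{Achievability.} You apply \Cref{thm:QuasiNormMinkowskiInequality} in the wrong direction. For $q\leq p$, the arrangement with the smaller index further out is the larger one. So for $p\leq1$,
\[
\|Z\|_{(A_1:1,A_2:1,B_1:p,B_2:p)}\leq\|Z\|_{(A_1:1,B_1:p,A_2:1,B_2:p)},
\]
that is, moving the index-$1$ system $A_2$ inward past $B_1$ \emph{increases} the value. Your displayed inequality is false for $p<1$: with $A_1,B_2$ trivial and $Z$ maximally entangled on $A_2B_1$ of dimension $d$, the two sides are $d^{1-\frac1p}$ and $d^{\frac1p-1}$. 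The lower bound you need for product outputs does hold, but it is exactly \Cref{lem:basic.4.index.multiplicativity}. For $q=1\geq p$, its ``$\geq$'' half comes from the supremum formula of \Cref{thm:multi.quasi.relational.consistency} with product weights $a_{A_1}\otimes a'_{A_2}$, not from the swap. Citing that lemma is what the paper does.

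\textbf{Converse.} Your factorization leaves an inner identity on the index-$p$ system $B_2$. The resulting second factor is genuinely larger than $\|\Phi\|^+_{(Q_1:1)\to(A_1:1,B_1:p)}$, so no duality or operator-Jensen argument can prove your key claim. \Cref{thm:id.right.quasi} is a co-quasi-norm statement and has no norm counterpart for inner indices below $1$; this is \cite[Counterexample 4.1]{Kochanowski.2026}.

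Concretely, restrict to inputs $e_{EA_2}\otimes Z_{Q_1B_2}$, which reduces to trivial $E,A_2$. Take $A_1$ trivial and $\Phi(X)=\Tr[X]\sigma_{B_1}$, so $\|\Phi\|^+_{1\to(1,p)}=\|\sigma\|_p$, and let $\Omega$ be maximally entangled on $Q_1B_2$ of dimension $d\geq2$. Then $\|(\Phi\otimes\id_{B_2})(\Omega)\|_p=\|\sigma\|_pd^{\frac1p-1}$ and $\|\Omega\|_{(Q_1:1,B_2:p)}=d^{1-\frac1p}$. Hence the second factor is at least $\|\sigma\|_pd^{\frac2p-2}>\|\sigma\|_p$ for $p<1$. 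Your first factor is at least $\|\Psi\|^+_{1\to(1,p)}$ (use product inputs), so the product bound is strictly weaker than the claim.

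Your fallback fails as well. Moving $B_2:p$ outward in the input gives $\|Z\|_{(EA_2:1,Q_1:1,B_2:p)}\leq\|Z\|_{(EA_2:1,B_2:p,Q_1:1)}$, which is an upper bound on the denominator, the wrong direction.

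The fix, and the paper's proof, is to arrange that the only inner identity carries index $1$:
\begin{itemize}
\item First use the displayed swap in its favorable direction to pass to the output $(A_1:1,B_1:p,A_2:1,B_2:p)$.
\item Then factor $\Phi\otimes\Psi=(\id_{A_1B_1}\otimes\Psi)\circ(\Phi\otimes\id_{Q_2})$.
\item The inner $\id_{Q_2}$ has index $1$ and is removed by \Cref{cor:partial.trace.1}, since $\Phi$ is CP and the inputs are positive.
\item The outer $\id_{A_1B_1}$ is absorbed into the cb quantity by \Cref{lem:cb.norm.via.t} with $\cE=(1,p)$.
\item Finally, $\|\Phi\|^+_{\cb,1\to(1,p)}\leq\|\Phi\|^+_{1\to(1,p)}$ follows by swapping $E:1$ inward past $B_1:p$ (again favorable, as $p\leq1$) and tracing out $E$ with \Cref{cor:partial.trace.1}.
\end{itemize}
This last step also replaces your $\ell_1$-direct-sum heuristic for the first factor, which does not apply to entangled, non-block-diagonal inputs.
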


\begin{remark}
   Note that the result implicitly contains the collapse for CP maps
   \begin{align}
       \|\Phi\|^{(+)}_{\cb,(Q:1)\to (A:1,B:p)}  = \|\Phi\|^{(+)}_{(Q:1)\to (A:1,B:p)}.
   \end{align}
   \noindent Also by \cref{cor:mixed.positivity.normed} we have that this result is equally true without the restriction of the input states to be positive, i.e. \begin{align}
        \|\Phi\otimes\Psi\|_{\cb,(Q_1:1,Q_2:1)\to (A_1:1,A_2:1,B_1:p,B_2:p)} = \|\Phi\|_{(Q_1:1)\to (A_1:1,B_1:p)} \|\Psi\|_{(Q_2:1)\to (A_2:1,B_2:p)}
    \end{align} holds for CP maps and is equal to the above.
\end{remark}

\begin{proof}[Proof of \cref{thm:p.leq.1.cond.multiplicativity}]
    The non-trivial inequality follows rather directly using the complete contractiveness of the SWAP \Cref{thm:QuasiNormMinkowskiInequality} with $p\leq 1$ in 
    \begin{align}
       \|\Phi\otimes\Psi\|^+&_{\cb,(Q_1:1,Q_2:1)\to (A_1:1,A_2:1,B_1:p,B_2:p)} \\&\leq \|\Phi\otimes\Psi\|^+_{\cb,(Q_1:1,Q_2:1)\to (A_1:1,B_1:p,A_2:1,B_2:p)} \\ &\leq \|\Phi\otimes\id_{Q_2}\|^+_{\cb,(Q_1:1,Q_2:1)\to (A_1:1,B_1:p,Q_2:1)}\\&\quad\times\|\id_{A_1B_1}\otimes \Psi\|^+_{\cb,(A_1:1,B_1:p,Q_2:1)\to (A_1:1,B_1:p,A_2:1,B_2:p)} \\
       &=\|\Phi\|^+_{\cb,(Q_1:1)\to (A_1:1,B_1:p)}\|\Psi\|^+_{\cb,(Q_2:1)\to (A_2:1,B_2:p)}
    \end{align} where in the last line we implicitly used \Cref{cor:partial.trace.1} since $\Phi$ is CP and \Cref{lem:cb.norm.via.t}. Explicitly this is
    \begin{align}
        \|\Phi\otimes\id_{Q_2}\|^+_{\cb,(Q_1:1,Q_2:1)\to (A_1:1,B_1:p,Q_2:1)} &= \sup_{E,X_{EQ_1Q_2}\geq0}\frac{\|\tr_{Q_2}[(\id_E\otimes\Phi\otimes\id_{Q_2})(X)]\|_{(E:1,A_1:1,B_1:p)}}{\|\tr_{Q_2}[X_{EQ_1Q_2}]\|_{(EQ_1:1)}} \\&= \sup_{E,X_{EQ_1Q_2}\geq0}\frac{\|(\id_E\otimes\Phi)(\tr_{Q_2}[X_{EQ_1Q_2}])\|_{(E:1,A_1:1,B_1:p)}}{\|\tr_{Q_2}[X_{EQ_1Q_2}]\|_{(EQ_1:1)}} \\&=\sup_{E,X_{EQ_1}\geq0}\frac{\|(\id_E\otimes\Phi)(X_{EQ_1})\|_{(E:1,A_1:1,B_1:p)}}{\|X_{EQ_1}\|_{(EQ_1:1)}} \\&=\|\Phi\|^+_{\cb,(Q_1:1)\to (A_1:1,B_1:p)}.
    \end{align}
    Further we have for $\Phi,\Psi$,
    \begin{align}
        \|\Phi\|^{(+)}_{\cb,(Q_1:1)\to (A_1:1,B_1:p)} &= \sup_E\|\id_E\otimes\Phi\|^{(+)}_{(E:1,Q_1:1)\to (E:1,A_1:1,B_1:p)}  \\ &\leq \sup_E\|\Phi\otimes\id_E\|^{(+)}_{(Q_1:1,E:1)\to (A_1:1,B_1:p,E:1)} \\ &= \|\Phi\|^{(+)}_{(Q_1:1)\to (A_1:1,B_1:p)}
    \end{align} by \Cref{thm:QuasiNormMinkowskiInequality}, since $p\in[\frac{1}{2},1]$, and \Cref{cor:partial.trace.1}.

    The achievability follows equally via product state inputs $X_{Q_1},Y_{Q_2}\geq0$ with $\|X\|_1=\|Y\|_1=1$.
    \begin{align}
        \|\Phi\otimes\Psi\|^+_{\cb,(Q_1:1,Q_2:1)\to (A_1:1,A_2:1,B_1:p,B_2:p)} &\geq \|\Phi\otimes\Psi\|^+_{(Q_1:1,Q_2:1)\to (A_1:1,A_2:1,B_1:p,B_2:p)} \\ &\geq \|\Phi(X)\otimes\Psi(Y)\|_{(A_1:1,A_2:1,B_1:p,B_2:p)} \\ &= \|\Phi(X)\|_{(A_1:1,B_1:p)}\|\Psi(Y)\|_{(A_2:1,B_2:p)},
    \end{align} where the last equality follows from  \Cref{lem:basic.4.index.multiplicativity}. Minimizing over $X,Y$ finishes the claim.
\end{proof}

The other multiplicativity statement we require for \cref{thm:max.cond.entropy.add} is the following.
\begin{lemma}\label{co.1.to.1.p.multiplicativity}
Let $p\in[1,\infty]$ and $\Phi:Q_1\to A_1B_1,\,\Psi:Q_2\to A_2B_2$ be CP maps, then
    \begin{align}
        \|\Phi\otimes\Psi\|^+_{\cb,\co,(Q_1:1,Q_2:1)\to (A_1:1,A_2:1,B_1:p,B_2:p)} \!=\! \|\Phi\|^+_{\co,(Q_1:1)\to (A_1:1,B_1:p)} \|\Psi\|^+_{\co,(Q_2:1)\to (A_2:1,B_2:p)}.
    \end{align}
\end{lemma}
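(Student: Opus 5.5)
The plan is to prove the two inequalities separately. As in the proofs of \Cref{thm:p.leq.1.cond.multiplicativity} and \cref{cb.co.1.to.1.p.additivity}, the non-trivial inequality ``$\geq$'' comes from the non-commutative Minkowski inequality (\Cref{thm:QuasiNormMinkowskiInequality}), super-multiplicativity of co-quasi-norms under composition (\Cref{prop:standard.sub.and.sup.mult}), and the partial trace corollary (\Cref{cor:partial.trace.1}). Achievability comes from product inputs and \Cref{lem:basic.4.index.multiplicativity}. Note that the right-hand side is \emph{not} completely bounded, so along the way we also show $\|\Phi\|^+_{\cb,\co,(Q:1)\to(A:1,B:p)}=\|\Phi\|^+_{\co,(Q:1)\to(A:1,B:p)}$ for $p\geq1$.

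\emph{Collapse step.} For any $E$ and $X_{EQ}\geq0$, first swap the adjacent equal indices $E:1$ and $A:1$ using \Cref{cor:equal.adjacient.indices}. Then swap $E:1$ past $B:p$: since $1\leq p$, \Cref{thm:QuasiNormMinkowskiInequality} gives
\begin{align}
\|(\id_E\otimes\Phi)(X)\|_{(E:1,A:1,B:p)}=\|(\id_E\otimes\Phi)(X)\|_{(A:1,E:1,B:p)}\geq\|(\id_E\otimes\Phi)(X)\|_{(A:1,B:p,E:1)}.
\end{align}
By \Cref{cor:partial.trace.1}, the right-hand side equals $\|\Phi(\tr_E X)\|_{(A:1,B:p)}$. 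Since also $\|X\|_1=\|\tr_EX\|_1$, we obtain $\|\id_E\otimes\Phi\|^+_{\co}\geq\|\Phi\|^+_{\co}$ for every $E$. The reverse inequality (take $E$ trivial) is immediate.

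\emph{Lower bound.} Fix $E$ and $X\geq0$. Using $1\leq p$, apply \Cref{thm:QuasiNormMinkowskiInequality} to swap $A_2:1$ and $B_1:p$ in the output, which can only decrease the norm:
\begin{align}
\|Z\|_{(E:1,A_1:1,A_2:1,B_1:p,B_2:p)}\geq\|Z\|_{(E:1,A_1:1,B_1:p,A_2:1,B_2:p)},
\end{align}
where $Z:=(\id_E\otimes\Phi\otimes\Psi)(X)$. Next factor $\id_E\otimes\Phi\otimes\Psi=(\id_{EA_1B_1}\otimes\Psi)\circ(\id_E\otimes\Phi\otimes\id_{Q_2})$ and use super-multiplicativity of co-quasi-norms. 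The first factor maps $(E:1,Q_1:1,Q_2:1)\to(E:1,A_1:1,B_1:p,Q_2:1)$. By \Cref{cor:partial.trace.1} its positive co-norm equals $\|\id_E\otimes\Phi\|^+_{\co,(E:1,Q_1:1)\to(E:1,A_1:1,B_1:p)}$, since $\tr_{Q_2}$ commutes with $\Phi$ exactly as in the proof of \Cref{thm:p.leq.1.cond.multiplicativity}. By the collapse step, this is at least $\|\Phi\|^+_{\co}$. For the second factor, treat $EA_1B_1$ as a reference system carrying the multi-index $(1,1,p)$. This index is compatible with the input index $(Q_2:1)$ and the output indices $(A_2:1,B_2:p)$ because $|1-\frac1p|\leq1$. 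Hence \Cref{lem:cb.co.norm.via.t} bounds this factor below by $\|\Psi\|^+_{\cb,\co,(Q_2:1)\to(A_2:1,B_2:p)}$, which the collapse step again bounds below by $\|\Psi\|^+_{\co}$. Finally, take the infimum over $E$ and $X$.

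\emph{Achievability and main obstacle.} For ``$\leq$'', take $E$ trivial and product inputs $X_{Q_1}\otimes Y_{Q_2}$ with unit trace norm. \Cref{lem:basic.4.index.multiplicativity} with $q=1$ gives
\begin{align}
\|\Phi(X)\otimes\Psi(Y)\|_{(A_1:1,A_2:1,B_1:p,B_2:p)}=\|\Phi(X)\|_{(A_1:1,B_1:p)}\|\Psi(Y)\|_{(A_2:1,B_2:p)},
\end{align}
and minimizing over $X,Y$ gives the bound. The main point to get right is the direction of each swap. For $p\geq1$, every application of \Cref{thm:QuasiNormMinkowskiInequality} must move an index-$1$ system to the \emph{right} of an index-$p$ system, since that is the direction in which the norm decreases. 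In particular, the reference $E$ has to be pushed to the innermost position and traced out there, rather than peeled off on the left as in the $p\leq1$ case. The one remaining check is that the reference multi-index $(1,1,p)$ is admissible in \Cref{lem:cb.co.norm.via.t}.
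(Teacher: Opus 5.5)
Your proposal is correct and follows essentially the same route as the paper. The shared steps are: the non-commutative Minkowski swap moving $A_2:1$ past $B_1:p$; super-multiplicativity under composition, with the $\tr_{Q_2}$ reduction via \Cref{cor:partial.trace.1}; \Cref{lem:cb.co.norm.via.t} for the $\Psi$ factor; the cb-to-non-cb collapse obtained by pushing $E:1$ to the innermost position and tracing it out; and achievability from product inputs via \Cref{lem:basic.4.index.multiplicativity}. The only difference is cosmetic: you apply the swap directly to the output norm instead of inserting the SWAP as a third factor in the composition, and the two are equivalent.
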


\begin{remark}
    Just as above we have here a collapse of the completely bounded quantity for CP maps
    \begin{align}
        \|\Phi\|^+_{\cb,\co,(Q:1)\to (A:1,B:p)}  = \|\Phi\|^+_{\co,(Q:1)\to (A:1,B:p)} 
    \end{align} implicit in this statement.
\end{remark}

\begin{proof}
    For the non-trivial bound observe that
    \begin{align}
        \|\Phi\otimes\Psi\|^+&_{\cb,\co,(Q_1:1,Q_2:1)\to (A_1:1,A_2:1,B_1:p,B_2:p)} \\&\geq \|\Phi\otimes\id_{Q_2}\|^+_{\cb,\co,(Q_1:1,Q_2:1)\to(A_1:1,B_1:p,Q_2:1)}  \\&\qquad \times\|\id_{A_1B_1}\otimes\Psi\|^+_{\cb,\co,(A_1:1,B_1:p,Q_2:1)\to (A_1:1,B_1:p,A_2:1,B_2:p)} \\&\qquad\times\|\id_{A_1}\otimes\SWAP\otimes\id_{B_2}\|^+_{\cb,\co,(A_1:1,B_1:p,A_2:1,B_2:p)\to (A_1:1,A_2:1,B_1:p,B_2:p)} \\ 
        &\geq \|\Phi\|^+_{\cb,\co,(Q_1:1)\to(A_1:1,B_1:p)}  \|\Psi\|^+_{\cb,\co,(Q_2:1)\to (A_2:1,B_2:p)} 
    \end{align} where last inequality follows from the complete expansivity of the SWAP due to \Cref{thm:QuasiNormMinkowskiInequality}, \Cref{lem:cb.co.norm.via.t}, and $p\geq 1$ and implicitly \Cref{cor:partial.trace.1} with complete positivity of $\Phi$ as
    \begin{align}
       \|\Phi\otimes\id_{Q_2}\|^+_{\cb,\co,(Q_1:1,Q_2:1)\to(A_1:1,B_1:p,Q_2:1)}  &= \inf_{E,\underset{\|X\|_1=1}{X_{EQ_1Q_2}\geq0}}\|(\id_E\otimes\Phi\otimes\id_{Q_2})(X_{EQ_1Q_2})\|_{(EA_1:1,B_1:p,Q_2:1)} \\&=\inf_{E,\underset{\|X\|_1=1}{X_{EQ_1Q_2}\geq0}}\|(\id_E\otimes\Phi)(\tr_{Q_2}[X_{EQ_1Q_2}])\|_{(EA_1:1,B_1:p)} \\&=\inf_{E,\underset{\|X\|_1=1}{X_{EQ_1}\geq0}}\|(\id_E\otimes\Phi)(X_{EQ_1})\|_{(EA_1:1,B_1:p)} \\&= \|\Phi\|^+_{\cb,\co,(Q_1:1)\to(A_1:1,B_1:p)}.
    \end{align}

    Further, similarly to the proof of the previous lemma, we have via \Cref{thm:QuasiNormMinkowskiInequality} and \Cref{cor:partial.trace.1} that the cb quantity is equal to the non-cb quantity: 
    \begin{align}
        \|\Phi\|^+_{\cb,\co,(Q_1:1)\to(A_1:1,B_1:p)}  &= \inf_{E} \|\id_E\otimes\Phi\|^+_{\co,(E:1,Q_1:1)\to(E:1,A_1:1,B_1:p)} \\& \geq \inf_{E} \|\Phi\otimes\id_E\|^+_{\co,(Q_1:1,E:1)\to(A_1:1,B_1:p,E:1)} \\ &\geq \|\Phi\|^+_{\co,(Q_1:1)\to(A_1:1,B_1:p)}.
    \end{align}
    
    The achievability follows via product state inputs $X_{Q_1}, Y_{Q_2}$ and \Cref{lem:basic.4.index.multiplicativity}.
\end{proof}

\begin{proof}[Proof of \cref{thm:max.cond.entropy.add}]
    This is a direct combination of \Cref{prop:Extremal:Entropies.Mixed.Norms} with \Cref{thm:p.leq.1.cond.multiplicativity} for $p\in[\frac{1}{2},1]$ and \Cref{co.1.to.1.p.multiplicativity} for $p\in[1,\infty]$. 
\end{proof}

\begin{remark}[Duality]\label{rem:duality}
Let $V:Q\to ABC$ be a Stinespring isometry of $\Phi:Q\to AB$ and $\Phi^\prime:=\tr_B[V\cdot V^*]:Q\to AC$. Since conditioning on a purifying system can only decrease $\tilde H^\uparrow_\alpha$, the infimum defining the completely bounded minimal output conditional entropy is attained on pure inputs $\psi_{EQ}$, and the duality relation $\tilde H^\uparrow_\alpha(A|BE)=-\tilde H^\uparrow_\beta(A|C)$ for pure states with $\frac1\alpha+\frac1\beta=2$ \cite{Lennert.2013, Beigi.2013}, yields
\begin{align}
    \inf_{E,\rho_{EQ}}\tilde H^\uparrow_\alpha(A|BE)_{(\id_E\otimes\Phi_{Q\to AB})(\rho)}=-\sup_{\rho_Q}\tilde H^\uparrow_\beta(A|C)_{\Phi_{Q\to AB}^\prime(\rho_Q)}.
\end{align}
As $(\Phi_1\otimes\Phi_2)^\prime=\Phi_1^\prime\otimes\Phi_2^\prime$, \Cref{thm:cb.min.cond.entropy.add} for $\alpha\in[\frac12,1)$ and \Cref{thm:max.cond.entropy.add} for $\beta\in(1,\infty]$ are equivalent, and likewise \Cref{thm:max.cond.entropy.add} for $\alpha\in[\frac12,1)$ is equivalent to \cite[Theorem 4.11]{Fawzi.2026}. Operationally, the completely bounded minimal output conditional entropy of $\Phi$ thus measures how much an adversary holding the environment $C$ can learn about $A$ in the worst case.
\end{remark}

\subsection{Additivity of Completely Bounded Rényi Coherent Information}\label{sec:Renyi.coherent.info}

Another related entropic quantity which is induced by the above conditional entropy is the $\alpha$-Rényi coherent information of a quantum channel \cite{Gao.2018.2, Beigi.2023, Berta.2026.1, Li.2026, Shor.2002, Leditzky.2016}. For any $\alpha\in[\frac{1}{2},1)\cup(1,\infty]$ the $\alpha$-coherent information of some bipartite quantum state $\rho_{AB}$ is given by
\begin{align}
    I_\alpha(A\rangle B)_\rho:=-\tilde H^\uparrow_\alpha(A|B)_\rho
\end{align}
and the $\alpha$-\emph{Rényi coherent information} of the quantum channel $\Phi:Q\to P$ is the maximal output coherent information of the channel $\Phi_{Q\to P}\otimes\id_R$, where $R\simeq Q$ are isomorphic. That is
\begin{align}
    I^{\text{coh}}_\alpha(\Phi):=\max_{\rho_{QR}}I_\alpha(R\rangle P)_{(\Phi\otimes\id_R)(\rho_{QR})}=\max_{\rho_{QR}}-\tilde H^\uparrow_\alpha(R| P)_{(\Phi\otimes\id_R)(\rho_{QR})},
\end{align} where the maximization is over quantum states $\rho_{QR}\in\cD(\cH_{QR})$.
Consequently, the following is easy to show.
\begin{proposition}
Let $\alpha\in[\frac{1}{2},1)\cup(1,\infty]$ and $\Phi:Q\to P$ be a quantum channel, then
\begin{align}
    I^{\text{coh}}_\alpha(\Phi) = \frac{\alpha}{\alpha-1}\log\begin{cases}
        &\|\Phi\otimes\id_R\|_{(QR:1)\to (P:1,R:\alpha)} \qquad \text{ for } \alpha\in (1,\infty], \\
        &\|\Phi\otimes\id_R\|^+_{\co,(QR:1)\to (P:1,R:\alpha)} \quad \text{ for } \alpha\in[\frac{1}{2},1). 
    \end{cases}
\end{align}
\end{proposition}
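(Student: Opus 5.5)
The plan is to unwind the definition of $I^{\text{coh}}_\alpha$ and use the norm expression for $\tilde H^\uparrow_\alpha$ from \Cref{sec:Entopic.Norms}. For any state $\omega_{PR}$ we have
\begin{align}
    -\tilde H^\uparrow_\alpha(R|P)_\omega=\frac{\alpha}{\alpha-1}\log\|\omega_{PR}\|_{(P:1,R:\alpha)}.
\end{align}
This is the identity $\tilde H^\uparrow_\alpha(A|B)_\rho=\frac{\alpha}{1-\alpha}\log\|\rho_{BA}\|_{(B:1,A:\alpha)}$, applied with $B=P$ and $A=R$. It holds for all $\alpha\in[\frac12,1)\cup(1,\infty]$ by \Cref{thm:multi.quasi.relational.consistency}. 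Substituting $\omega=(\Phi\otimes\id_R)(\rho_{QR})$ gives
\begin{align}
    I^{\text{coh}}_\alpha(\Phi)=\max_{\rho_{QR}}\frac{\alpha}{\alpha-1}\log\|(\Phi\otimes\id_R)(\rho_{QR})\|_{(P:1,R:\alpha)}.
\end{align}

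Next, the maximization over states is rewritten as an optimization over positive operators normalized in the input norm. Any $X\geq0$ on $QR$ has $\|X\|_{(QR:1)}=\Tr[X]$. Both the output quasi-norm and the input norm are positively homogeneous, so the ratio $\|(\Phi\otimes\id_R)(X)\|_{(P:1,R:\alpha)}/\|X\|_{(QR:1)}$ depends only on $X/\Tr X$, which is a state. Hence the supremum (resp. infimum) of this ratio over $X\geq0$ equals the corresponding optimum over density operators.

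The sign of the prefactor $\frac{\alpha}{\alpha-1}$ now fixes which optimum appears. For $\alpha>1$ it is positive and $\log$ is increasing, so maximizing $I$ corresponds to the supremum of the norm ratio. This is $\|\Phi\otimes\id_R\|^+_{(QR:1)\to(P:1,R:\alpha)}$. For $\alpha\in[\frac12,1)$ the prefactor is negative, so the maximum becomes the infimum, giving $\|\Phi\otimes\id_R\|^+_{\co,(QR:1)\to(P:1,R:\alpha)}$.

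Two technical points remain, and the main one is the first. The statement for $\alpha>1$ has no ${}^+$, so the positive restriction has to be removed there. Since $\Phi\otimes\id_R$ is CP and $(1)$, $(1,\alpha)$ are normed multi-indexed spaces, \Cref{cor:mixed.positivity.normed} gives $\|\Phi\otimes\id_R\|=\|\Phi\otimes\id_R\|^+$. The second point is that the definitions of the mixed norms range over $X>0$ rather than $X\geq0$. Continuity of the quasi-norms (\Cref{thm:Multi.Quasi.Norms}) lets one perturb $X\geq0$ to $X+\epsilon\1$ and pass to the limit, so the two optima agree. The same continuity justifies writing $\max$ once the optimum is attained on the compact set of states.
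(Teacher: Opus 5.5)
Your proposal is correct and is essentially the paper's argument, which it only calls ``easy to show.'' The proposition is the minimal-output conditional-entropy case of \Cref{prop:Extremal:Entropies.Mixed.Norms} applied to the channel $\Phi\otimes\id_R:QR\to PR$ with $A=R$, $B=P$; you re-derive that case directly from the norm formula for $\tilde H^\uparrow_\alpha$, and you correctly use \Cref{cor:mixed.positivity.normed} to drop the ${}^+$ for $\alpha>1$ so that it matches the statement.
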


\begin{remark}
    For $\alpha>1$ this was shown in \cite[Proposition 5.4]{Beigi.2023}, while for $\alpha<1$ it is new.
\end{remark}

Unlike the above quantity, which is not expected to be additive, since already the coherent information ($\alpha\to1$) is not \cite{DiVincenzo.1998, SmithYard.2008}, we show here that its completely bounded version is additive for any $\alpha\in[\frac{1}{2},1)\cup(1,\infty]$.

The completely bounded version is, naturally, given by
\begin{align}
    I_\alpha^{\cb,\text{coh}}(\Phi)&:=\max_{E,\rho_{EQR}}-\tilde H_\alpha^\uparrow(R|EP)_{(\id_{E}\otimes\Phi_{Q\to P}\otimes\id_R)(\rho_{EQR})} \\&= \frac{\alpha}{\alpha-1}\log\begin{cases}
        \|\Phi\otimes\id_R\|_{\cb,(Q:1,R:1)\to(P:1,R:\alpha)} \qquad \text{ for } \alpha\in(1,\infty], \\
        \|\Phi\otimes\id_R\|^+_{\cb,\co,(Q:1,R:1)\to(P:1,R:\alpha)} \quad \text{ for } \alpha\in[\frac{1}{2},1).
    \end{cases}
\end{align}

\begin{theorem}[Additivity of the completely bounded Rényi coherent information]\label{thm:cb.coherent.add}
Let $\alpha\in[\frac{1}{2},1)\cup (1,\infty]$ then for any two quantum channels $\Phi:Q_1\to P_1,\,\Psi:Q_2\to P_2$ it holds that
\begin{align}
   I_\alpha^{\cb,\textup{coh}}(\Phi\otimes\Psi) = I_\alpha^{\cb,\textup{coh}}(\Phi) + I_\alpha^{\cb,\text{coh}}(\Psi). 
\end{align}
The same holds for $I^{\cb,\textup{coh}}(\Phi):=\lim_{\alpha\to 1}I_\alpha^{\cb,\textup{coh}}(\Phi)$.
\end{theorem}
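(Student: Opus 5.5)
The plan is to reduce the statement to multiplicativity of the mixed (co-)quasi-norms in the displayed formula for $I_\alpha^{\cb,\text{coh}}$, namely
\begin{align}
    \|(\Phi\otimes\Psi)\otimes\id_{R_1R_2}\|_{\cb,(Q_1Q_2:1,R_1R_2:1)\to(P_1P_2:1,R_1R_2:\alpha)}
\end{align}
for $\alpha>1$, and the analogous $\cb,\co$ quantity for $\alpha\in[\frac12,1)$. After relabelling, $\Phi\otimes\id_{R_1}:Q_1R_1\to P_1R_1$ is a CP map with output indices $(P_1:1,R_1:\alpha)$, and likewise for $\Psi$. So the required statements are exactly a multiplicativity of $\cb,1\to(1,\alpha)$ norms under tensor products, with the four-index ordering $(P_1:1,P_2:1,R_1:\alpha,R_2:\alpha)$.

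For $\alpha\in[\frac12,1)$ I apply \Cref{cb.co.1.to.1.p.additivity} directly, with $A_i=P_i$, $B_i=R_i$ and the CP maps $\Phi\otimes\id_{R_1}$, $\Psi\otimes\id_{R_2}$. The input systems $Q_iR_i$ carry index $1$, and the lemma imposes no restriction on inputs beyond CP-ness. This gives the identity of co-quasi-norms, and taking $\frac{\alpha}{\alpha-1}\log$ turns the product into a sum.

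For $\alpha\in(1,\infty]$ I need the normed analogue
\begin{align}
    \|\Phi\otimes\Psi\|^+_{\cb,1\to(A_1:1,A_2:1,B_1:\alpha,B_2:\alpha)}=\|\Phi\|^+_{\cb,1\to(1,\alpha)}\|\Psi\|^+_{\cb,1\to(1,\alpha)},
\end{align}
which is \cite[Theorem 4.11]{Fawzi.2026} (cited above for \Cref{thm:cb.min.cond.entropy.add}). If I want to rederive it, I follow the same chain as in \Cref{cb.co.1.to.1.p.additivity}. First factor $\Phi\otimes\Psi$ as $(\id\otimes\Psi)$, then a swap of the index-$1$ systems, which is free by \Cref{cor:equal.adjacient.indices}, then $(\id_{A_2}\otimes\Phi\otimes\id_{B_2})$, then another swap of equal indices. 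Submultiplicativity (\Cref{prop:standard.sub.and.sup.mult}) bounds the norm by the product of the factors. The inner $\id_{B_2}$ with index $\alpha\geq1$ is removed by \Cref{lem:reductionLemma}, taking $\cX=\cS_\alpha$, which is normed, together with \Cref{lem:cb.norm.via.t}. By \Cref{cor:mixed.positivity.normed}, the $^+$ and non-$^+$ versions agree.

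Achievability in both regimes comes from product inputs $X_{E_1Q_1R_1}\otimes Y_{E_2Q_2R_2}$. For the $\co$ case I reorder to $(E_1:1,A_1:1,B_1:\alpha,E_2:1,\dots)$ using \Cref{thm:QuasiNormMinkowskiInequality} in the right direction, then split with \Cref{cor.multi.tensor.multiplicativity}. In the sup case I instead use \Cref{lem:basic.4.index.multiplicativity} after merging $E_iA_i$ into the index-$1$ block. The main obstacle is the bookkeeping: the reference $E$ sits before $P$ and $R$, and the swaps must move only equal or favorably ordered indices. In particular, in the achievability for $\alpha>1$ in the sup case, the swap direction of \Cref{thm:QuasiNormMinkowskiInequality} is unfavorable, so there I will rely on \Cref{lem:basic.4.index.multiplicativity} with $q=1$, $p=\alpha$ rather than on Minkowski.

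Finally, for $\alpha\to1$, additivity holds for every $\alpha$ near $1$, so the limit of the sum is the sum of the limits, assuming the limits exist as in the definition of $I^{\cb,\textup{coh}}$.
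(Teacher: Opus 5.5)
Your proposal is correct and is essentially the paper's proof. The paper likewise applies \Cref{cb.co.1.to.1.p.additivity} (for $\alpha\in[\frac12,1)$) and \cite[Theorem 4.11]{Fawzi.2026} with product-input achievability (for $\alpha>1$) to the channels $\Phi\otimes\id_{R_1}$ and $\Psi\otimes\id_{R_2}$, and your optional rederivation of the $\alpha>1$ bound through \Cref{lem:reductionLemma} with $\cX=\cS_\alpha$ also goes through.

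One small correction concerns your remark that \Cref{thm:QuasiNormMinkowskiInequality} is unfavorable in the $\alpha>1$ achievability. There, moving $B_1{:}\alpha$ in front of $E_2A_2{:}1$ decreases the quasi-norm, which is exactly the direction needed for the lower bound. Indeed, this is how \Cref{lem:basic.4.index.multiplicativity} itself proves that inequality, so the remark is mistaken, though citing the lemma is fine.
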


\begin{proof}
    This follows directly from the established additivity of the completely bounded minimal output conditional sandwiched Rényi entropy \Cref{thm:cb.min.cond.entropy.add} (or in the mixed-norm picture the underlying \Cref{cb.co.1.to.1.p.additivity} and \cite[Theorem 4.11]{Fawzi.2026}) applied to the channels $\Phi_{Q_1\to P_1}\otimes\id_{R_1}: Q_1R_1\to P_1R_1$ and $\Psi_{Q_2\to P_2}\otimes\id_{R_2}:Q_2R_2\to P_2R_2$, where $Q_1\simeq R_1, Q_2\simeq R_2$.
\end{proof}

For completeness we record that the \emph{channel Rényi mutual information} is a similar quantity which has received some attention in the literature. It is a mutual information, rather than conditional entropy, version of the coherent information of \Cref{sec:Renyi.coherent.info}. It is also effectively a completely bounded quantity and its additivity was established for $\alpha\in(1,\infty]$ in \cite{Devetak.2006, Gupta.2015} and for $\alpha\in[\frac{1}{2},1)$ in \cite{Li.2026}.

Finally we go beyond the tensor-product setting.

\subsection{A Signaling Chain Rule for DI Entanglement Certification}\label{sec:DI.ent.cert}

In this subsection we derive a chain rule for $\alpha<1$ conditional sandwiched Rényi entropies and discuss its operational relevance in DI entanglement certification. We leave it, however, to future work \cite{Hahn.in.prep} to take care of the practical details and derive actual numerical improvements in terms of certification rates.

In certain quantum settings the relevant protocols are well-modeled using product channels. This is the case, e.g., in device-dependent quantum key distribution protocols \cite{Fawzi.2026, Wooltorton.2026}, device-dependent entanglement certification protocols \cite{Friis.2019, Pallister.2018}, or dissipation of open quantum systems under independent Markovian noise \cite{Beigi.2016, Bardet.2022, Cubitt.2015, Beigi.2020}.
However, this is not the case in general and in particular in device-independent (DI) settings, where tensor-product assumptions are too strong, see e.g. \cite{Wooltorton.2026, Metger.2024, Dupuis.2020, Arnon.2018}
for QKD and \cite{Arnon-Friedman.2019} for entanglement certification.

\subsubsection{Device Independent Entanglement Distillation Certification}

Since entanglement is a key property differentiating quantum from classical protocols \cite{Horodecki.2009, JozsaLinden.2003, RaussendorfBriegel.2001, Vidal.2003, BennettWiesner.1992, Bennett.1993, Ekert.1991, Bennett.2002} it is a relevant question how to certify whether a given device or protocol actually produces entangled output states. The most general setting for this is the device-independent setting \cite{Arnon-Friedman.2019} in which both the source and the measurement devices are treated as black boxes. Device-independent entanglement certification studied in \cite{Arnon-Friedman.2019} bounds the smooth max-entropy \cite{Koenig.2009}.
While the smooth max-entropy has a similar operational interpretation to Rényi entropies for $\alpha<1$, Rényi entropies have led to tighter finite-size bounds than smooth entropies in related tasks, e.g.\ in privacy amplification and decoupling with $\alpha>1$ \cite{Dupuis.2022}, and one may hope for a similar improvement for bounds on the distillable entanglement. Hence here we consider the $\alpha<1$ conditional optimized sandwiched entropies and more specifically, we consider the maximal output conditional entropy as a measure of how much entanglement a given quantum channel can produce. That is because a negative conditional entropy certifies entanglement, as separable states have non-negative conditional entropy, and hence the maximal output conditional entropy is a worst-case measure of entangled outputs and can be used to bound the global distillable entanglement in the worst case \cite{Arnon-Friedman.2019, book:KhatriWilde.2024}. We will bound the maximal output conditional entropy after $n$ rounds of a certification protocol in terms of the individual rounds.

The protocol and channel structure of device-independent entanglement certification of \cite{Arnon-Friedman.2019}, which is also the one we consider here, is the following. We have a \emph{source} which sequentially outputs possibly entangled quantum states into the system $Q$, acting as a quantum register. During an honest implementation $Q$ should be thought of as consisting internally of $n$ systems $A_i, B_i$ such that the source emits pairs of (possibly) entangled systems $A_iB_i$ at time steps $i\in[n]$. The certification procedure proceeds by deciding at any given time step $j$ whether to `store' systems $A_jB_j$ or whether to extract them and perform some ``test'' measurement on them yielding classical information. We model both by the application of a quantum channel $\Phi_j$ and ignore the exact details of these measurements as they are not relevant in our consideration here. The remaining, non-extracted and potentially altered, quantum system is stored in the
register $Q_j$. Moreover, similarly to \cite{Arnon-Friedman.2019}, we allow $Q_j$ to additionally hold a copy of all classical information produced during previous measurement rounds, i.e., it typically contains the choice of measurement settings and the
corresponding classical outputs for all previous rounds. This allows us to additionally consider channels $\Phi_i$ that are adaptable, in the sense that they may act differently, depending on the classical measurement data that is stored in $Q_{i-1}$. For a visual representation of this see \cref{fig:sequential.rounds}.

\begin{figure}[h]
\centering
\begin{tikzpicture}[
    >={Stealth[length=2mm]},
    box/.style  = {draw, rounded corners=1.5pt, fill=black!5,
                   minimum width=9mm, minimum height=9mm,
                   inner sep=2pt, font=\normalsize},
    wire/.style = {thick},
    lab/.style  = {font=\footnotesize, inner sep=1.5pt},
    note/.style = {font=\footnotesize\itshape, inner sep=2pt},
]

\node[box] (P1) at (1.60,0) {$\Phi_1$};
\node[box] (P2) at (3.60,0) {$\Phi_2$};
\node      (dots) at (5.30,0) {$\cdots$};
\node[box] (Pn) at (7.00,0) {$\Phi_n$};

\draw[wire,->] (0.30,0) -- (P1.west) node[lab, midway, above] {$Q$};
\draw[wire,->] (P1.east) -- (P2.west) node[lab, midway, above] {$Q_1$};
\draw[wire,->] (P2.east) -- (4.95,0)  node[lab, midway, above] {$Q_2$};
\draw[wire,->] (5.65,0)  -- (Pn.west) node[lab, midway, above] {$Q_{n-1}$};
\draw[wire,->] (Pn.east) -- (8.30,0)  node[lab, midway, above] {$Q_n$};

\foreach \n/\i in {P1/1, P2/2, Pn/n} {
  \draw[wire,->] (\n.north) -- ++(0,1.05) node[lab, above] {$A_{\i}$};
  \draw[wire,->] (\n.south) -- ++(0,-1.05) node[lab, below] {$B_{\i}$};
}
\node[lab] at (5.30, 1.15) {$\cdots$};
\node[lab] at (5.30,-1.15) {$\cdots$};

\draw[decorate, decoration={brace, amplitude=4pt}]
      (0.95, 1.9) -- (7.65, 1.9)
      node[lab, midway, above=6pt] {$A_1\cdots A_n:\,\alpha$};
\draw[decorate, decoration={brace, amplitude=4pt, mirror}]
      (0.95,-1.9) -- (7.65,-1.9)
      node[lab, midway, below=6pt] {$B_1\cdots B_n:\,1$};

\end{tikzpicture}
\caption{The device-independent entanglement certification structure of \cite{Arnon-Friedman.2019}. A source feeds the quantum memory $Q$; in round $i$ the channel $\Phi_i:Q_{i-1}\to A_iB_iQ_i$ either stores the pair $A_iB_i$ or measures it, and passes the remaining register $Q_i$ --- which carries all classical measurement outcomes so far --- on to the next round. Both the choice of $\Phi_i$ and the state emitted by the source at step $i$ may therefore depend on every earlier outcome. The certified quantity is the maximal output conditional entropy $\tilde H^\uparrow_\alpha(A_1\cdots A_n\mid B_1\cdots B_n)$, whose negativity witnesses entanglement; the braces indicate the corresponding index assignment $(B_1\cdots B_n:1,\;A_1\cdots A_n:\alpha)$. Iterating \Cref{thm:signaling.sup.chain.rule} bounds it by the sum of the $n$ single-round quantities, for every $\alpha\in[\tfrac12,1)$.}
\label{fig:sequential.rounds}
\end{figure}

The chain rules below concern the following elementary step of such a sequential protocol. We consider two quantum channels $\Phi:Q\to A_1B_1Q_1, \Psi:Q_1\to A_2B_2$, where the input of the second channel is part of the output of the first. The global map is then
\begin{align}\label{equ:signaling.channel.structure}
    (\id_{A_1B_1}\otimes\Psi)\circ \Phi.
\end{align}

\begin{remark}
Specific choices in this rather generic setup let us recover the tensor-product setting as well as several non-signaling settings \cite{Beckman.2001, Chiribella.2009, Gutoski.2007}; see \Cref{fig:composed.channels} for a graphic representation.

The tensor-product setting is recovered when setting $Q=Q_0Q_1$ and $\Phi$ is replaced by $\Phi^\prime_{Q_0\to A_1B_1}\otimes\id_{Q_1}$.  Conversely, the channel structure of the non-signaling setting in \cite[Section 3.3, Lemma 3.6]{Metger.2024} is recovered using that non-signaling maps are semi-localizable \cite{Beckman.2001}, i.e.\ of the form $\mathcal{M}=(\mathcal{N}_{MR\to A^\prime R^\prime}\otimes\id_{E^\prime})\circ(\mathcal{V}_{E\to E^\prime M}\otimes\id_R)$ with a memory system $M$: setting $Q\to ER$, $\Phi\to\mathcal{V}_{E\to E^\prime M}\otimes\id_R$ with $B_1\to E^\prime$, $A_1$ trivial and $Q_1\to MR$, and $\Psi\to\tr_{R^\prime}\circ\mathcal{N}_{MR\to A^\prime R^\prime}$ with $A_2\to A^\prime$ and $B_2$ trivial.  
\end{remark}

\subsubsection{A Signaling Chain Rule for Maximal Output Conditional Entropies}

We now present a chain rule for the optimized conditional entropies $\tilde H^\uparrow_\alpha$, $\alpha\in[\frac{1}{2},1)$, applicable to this task. In \cite{Arnon-Friedman.2019} the smooth max-entropy is bounded via the entropy accumulation theorem \cite{Dupuis.2020}, which builds on chain rules for the non-optimized conditional Rényi entropies under a Markov condition. The chain rule below concerns the optimized entropies and requires no such condition. Since it holds for all $\alpha\in[\frac{1}{2},1)$, it in principle allows one to optimize the resulting statements and certification rates over $\alpha$. We leave this, however, to the work in preparation \cite{Hahn.in.prep}.

\begin{theorem}[Signaling Max-Output Chain Rule]\label{thm:signaling.sup.chain.rule}
For any $\alpha\in[1/2,1)$ and quantum channels $\Phi:Q\to Q_1A_1B_1, \Psi:Q_1\to A_2B_2$, we have that
\begin{align}
   \sup_{\rho\in\cD(\cH_Q)} \tilde H_\alpha^\uparrow(A_1A_2|B_1B_2)_{(\id_{A_1B_1}\otimes\Psi)\circ\Phi(\rho)} &\leq  \sup_{\rho\in\cD(\cH_Q)} \tilde H_\alpha^\uparrow(A_1|B_1)_{\tr_{Q_1}[\Phi(\rho)]} +\sup_{\rho\in\cD(\cH_{Q_1})} \tilde H_\alpha^\uparrow(A_2|B_2)_{\Psi(\rho)}.
\end{align}
In the mixed norm picture this is equivalent to 
\begin{align}
    \|(\id_{A_1B_1}\otimes\Psi)\circ\Phi \|^+_{(Q:1)\to (B_1B_2:1,A_1A_2:\alpha)}\leq \|\Phi\|^+_{(Q:1)\to (B_1:1,A_1:\alpha,Q_1:1)} \|\Psi\|^+_{(Q_1:1)\to (B_2:1,A_2:\alpha)}
\end{align} which holds true for any $\alpha\in[1/2,1]$.
\end{theorem}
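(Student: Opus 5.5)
The plan is to prove the mixed-norm inequality first and then translate it into the entropic one. For the translation, \Cref{prop:Extremal:Entropies.Mixed.Norms} converts the maximal output conditional entropy for $\alpha\in[\frac12,1)$ into $\frac{\alpha}{1-\alpha}\log\|\cdot\|^+_{1\to(B:1,A:\alpha)}$. \Cref{cor:partial.trace.1} identifies $\|\Phi(\rho)\|_{(B_1:1,A_1:\alpha,Q_1:1)}$ with $\|\tr_{Q_1}[\Phi(\rho)]\|_{(B_1:1,A_1:\alpha)}$. Since $\frac{\alpha}{1-\alpha}>0$, taking logarithms of the multiplicative inequality gives the additive one.

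For the norm inequality we factor the composition as $(\id_{B_1A_1}\otimes\Psi)\circ\Phi$ and use submultiplicativity, \Cref{prop:standard.sub.and.sup.mult}, through the intermediate space $(B_1:1,A_1:\alpha,Q_1:1)$. The chain of inequalities is
\begin{align}
\|(\id\otimes\Psi)\circ\Phi\|^+_{(Q:1)\to(B_1B_2:1,A_1A_2:\alpha)}
&\leq \|(\id\otimes\Psi)\circ\Phi\|^+_{(Q:1)\to(B_1:1,A_1:\alpha,B_2:1,A_2:\alpha)}\\
&\leq \|\Phi\|^+_{(Q:1)\to(B_1:1,A_1:\alpha,Q_1:1)}\,\|\id_{B_1A_1}\otimes\Psi\|^+_{(B_1:1,A_1:\alpha,Q_1:1)\to(B_1:1,A_1:\alpha,B_2:1,A_2:\alpha)}.
\end{align}
The first inequality is the non-commutative Minkowski inequality, \Cref{thm:QuasiNormMinkowskiInequality}. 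On the right-hand ordering $(B_1:1,A_1:\alpha,B_2:1,A_2:\alpha)$ we swap the adjacent pair $(A_1:\alpha,B_2:1)$ into $(B_2:1,A_1:\alpha)$, with outer system $E=B_1$ and inner system $\cR=(A_2:\alpha)$. Since $\alpha\leq1$, moving the smaller index outward can only increase the norm. This gives $\|Z\|_{(B_1:1,B_2:1,A_1:\alpha,A_2:\alpha)}\leq\|Z\|_{(B_1:1,A_1:\alpha,B_2:1,A_2:\alpha)}$. The indices $1$ and $\alpha\geq\frac12$ are compatible, so the theorem applies. Positivity is preserved throughout because all maps are CP, so the $^+$ versions compose correctly.

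The remaining and main step is to bound the second factor by $\|\Psi\|^+_{(Q_1:1)\to(B_2:1,A_2:\alpha)}$, i.e.\ to remove an identity acting on the outer systems $(B_1:1,A_1:\alpha)$. The outer multi-indexed space $\cE=(1,\alpha)$ is compatible with both $(Q_1:1)$ and $(B_2:1,A_2:\alpha)$. By \Cref{lem:cb.norm.via.t}, the supremum over $E$ with such a multi-indexed reference space equals $\|\Psi\|^+_{\cb,(Q_1:1)\to(B_2:1,A_2:\alpha)}$. Hence the second factor is at most this cb quantity.

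It then remains to collapse the cb quantity to the plain one. \Cref{thm:p.leq.1.cond.multiplicativity} and the remark after it show $\|\Psi\|^+_{\cb,(Q:1)\to(A:1,B:p)}=\|\Psi\|^+_{(Q:1)\to(A:1,B:p)}$ for $p\in[\frac12,1]$. The argument there is to swap the reference system $E:1$ past $B_2:\alpha$ to the right using the Minkowski inequality, and then remove $E$ with \Cref{cor:partial.trace.1}. I expect the care to be needed in this collapse and in the index bookkeeping for the cb step: the reference system must be placed outermost with a compatible index, and each swap must move a larger index inward. Once these two steps are in place, combining the chain gives the claim for all $\alpha\in[\frac12,1]$.
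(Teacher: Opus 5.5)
Your proposal is correct and follows the paper's proof essentially step for step. The paper uses the same four ingredients: submultiplicativity through $(B_1:1,A_1:\alpha,Q_1:1)$; the non-commutative Minkowski swap of $A_1:\alpha$ past $B_2:1$, valid since $\alpha\leq1$; absorbing the outer $(B_1:1,A_1:\alpha)$ into $\|\Psi\|^+_{\cb}$ via \Cref{lem:cb.norm.via.t}; and the cb-collapse from the proof of \Cref{thm:p.leq.1.cond.multiplicativity}. The only cosmetic difference is the order of the first two steps: you swap on the composed map before splitting, while the paper splits first and then swaps in the target space of the second factor.
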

By iteration, this directly extends to arbitrarily many channels $\Phi_i:Q_{i-1}\to Q_iA_{i}B_{i}$ with $Q_0=Q$ as in \cref{fig:sequential.rounds} yielding
\begin{align}
    \sup_{\rho\in\cD(\cH_Q)} \tilde H_\alpha^\uparrow(A_1\dots A_n|B_1\dots B_n)_{\rho_{A_1B_1\dots A_nB_n}}\leq \sum_{i=1}^n\sup_{\rho\in\cD(\cH_{Q_{i-1}})} \tilde H_\alpha^\uparrow(A_i|B_i)_{\tr_{Q_i}[\Phi_i(\rho)]} 
\end{align} where $\rho_{A_1B_1\dots A_nB_n}$ denotes the output after $n$ rounds with $Q_n$ traced out.

\begin{figure}[h]
\centering
\begin{tikzpicture}[
    >={Stealth[length=2mm]},
    box/.style   = {draw, rounded corners=1.5pt, fill=black!5,
                    inner sep=2pt, font=\normalsize},
    wire/.style  = {thick},
    lab/.style   = {font=\footnotesize, inner sep=1.5pt},
    note/.style  = {font=\footnotesize\itshape, inner sep=2pt},
]

\node[box, minimum width=9mm, minimum height=28mm] (Phi) at (1.65, 0)    {$\Phi$};
\node[box, minimum width=9mm, minimum height=15mm] (Psi) at (5.05,-0.80) {$\Psi$};

\draw[wire,->] (0.20,0) -- (1.20,0) node[lab, midway, above] {$Q$};

\draw[wire,->] (2.10,-0.80) -- (4.60,-0.80) node[lab, midway, above] {$Q_1$};

\draw[wire,->] (2.10, 0.90) -- (7.30, 0.90) to[out=0,in=180] (8.60, 0.90);
\draw[wire,->] (2.10, 0.20) -- (7.30, 0.20) to[out=0,in=180] (8.60,-0.35);
\draw[wire,->] (5.50,-0.35) -- (7.30,-0.35) to[out=0,in=180] (8.60, 0.25);
\draw[wire,->] (5.50,-1.25) -- (7.30,-1.25) to[out=0,in=180] (8.60,-1.25);

\node[lab, above] at (2.50, 0.90) {$B_1$};
\node[lab, above] at (2.50, 0.20) {$A_1$};
\node[lab, above] at (5.90,-0.35) {$B_2$};
\node[lab, above] at (5.90,-1.25) {$A_2$};

\node[note] at (5.05, 0.55) {$\id_{A_1B_1}$};

\draw[dashed, rounded corners=3pt, black!55]
      (0.95,-1.75) rectangle (5.65, 1.45);
\node[note, black!55] at (3.30, 1.70) {$(\id_{A_1B_1}\otimes\Psi)\circ\Phi$};

\draw[decorate, decoration={brace, amplitude=4pt}]
      (8.95, 1.15) -- (8.95, 0.35)
      node[lab, midway, right=6pt] {$B_1B_2:\,1$};
\draw[decorate, decoration={brace, amplitude=4pt}]
      (8.95,-0.35) -- (8.95,-1.15)
      node[lab, midway, right=6pt] {$A_1A_2:\,\alpha$};

\end{tikzpicture}
\caption{The channel structure of 
\Cref{thm:signaling.sup.chain.rule}. The first channel
$\Phi:Q\to A_1B_1Q_1$ acts on the input register and passes the residual register $Q_1$ on to $\Psi:Q_1\to A_2B_2$, while the systems $A_1B_1$ produced in the first step are left
untouched; the global map is $(\id_{A_1B_1}\otimes\Psi)\circ\Phi$, indicated by the dashed
box. Later channels may therefore depend on everything the earlier ones produced, so the structure can genuinely be signaling. The single crossing on the right is the reordering step $(2)$ in the proof: the composition naturally delivers the output in the order $(B_1,A_1,B_2,A_2)$, whereas the norm on the left-hand side of the chain rule groups it as $(B_1B_2:1,\,A_1A_2:\alpha)$ which is important for the interpretation as conditional entropies. 
Setting $Q=Q^\prime Q_1$ and $\Phi=\Phi^\prime_{Q^\prime\to A_1B_1} \otimes\id_{Q_1}$ recovers the tensor-product setting of \Cref{thm:p.leq.1.cond.multiplicativity}.} \label{fig:composed.channels}
\end{figure}

Before giving its proof we discuss two points.
\begin{remark}\label{rem:optimized.vs.unpotimized}
Note that the regime of $\alpha$ for which this mixed-quasi-norm statement holds is maximal in the sense that this mixed norm statement breaks down for $\alpha>1$: already for tensor-product channels with trivial conditioning systems $B_1,B_2$ it reduces to the multiplicativity of the maximal output $1\to\alpha$ norm, which fails for all $\alpha>1$ \cite{Werner.2002, Hayden.2008}; natural strengthenings of chain rules for $\alpha>1$ also fail in the device-independent setting \cite[Section 4]{Wooltorton.2026}. I.e.\ the $\alpha>1$ signaling chain rule with minimal output conditional sandwiched-$\alpha$ entropies, as compared to the maximal and $\alpha<1$ above, does not hold. Interestingly, when replacing the optimized sandwiched conditional entropies by their unoptimized version
\begin{align}
    \tilde H_\alpha^\downarrow(A|B)_\rho=-\tilde D_\alpha(\rho_{AB}\|\1_A\otimes \rho_B),
\end{align} a chain rule of this type can be recovered for $\alpha\in(1,2)$ under a non-signaling condition \cite[Lemma 3.6]{Metger.2024}. See also \cite[Lemma 5.2]{Arqand.2024}.
\end{remark}

\begin{remark}
    Note further that for tensor product channels, we recover the (non-trivial) submultiplicativity part of \Cref{thm:p.leq.1.cond.multiplicativity} by setting $Q=Q^\prime Q_1$ and $\Phi_{Q^\prime Q_1\to A_1B_1Q_1}=\Phi_{Q^\prime\to A_1B_1}^\prime\otimes\id_{Q_1}$ and that by \cref{cor:mixed.positivity.normed} the mixed norm statement is equal to itself with the $^+$ superscripts dropped, i.e. when we optimize over arbitrary non-positive trace-normalized inputs.
\end{remark}

\begin{proof}[Proof of \Cref{thm:signaling.sup.chain.rule}]
We have
\begin{align}
     \|(\id_{B_1A_1}\otimes\Psi)\circ\Phi \|^+_{(Q:1)\to (B_1B_2:1,A_1A_2:\alpha)} &\overset{(1)}{\leq} \|\Phi\|^+_{(Q:1)\to (B_1:1,A_1:\alpha,Q_1:1)}\\&\qquad\cdot \|\id_{B_1A_1}\otimes\Psi\|^+_{(B_1:1,A_1:\alpha,Q_1:1)\to (B_1:1,B_2:1,A_1:\alpha,A_2:\alpha)} \\&\overset{(2)}{\leq}  \|\Phi\|^+_{(Q:1)\to (B_1:1,A_1:\alpha,Q_1:1)}\\ &\qquad\cdot \|\id_{B_1A_1}\otimes\Psi\|^+_{(B_1:1,A_1:\alpha,Q_1:1)\to (B_1:1,A_1:\alpha,B_2:1,A_2:\alpha)} \\
     &\overset{(3)}{\leq} \|\Phi\|^+_{(Q:1)\to (B_1:1,A_1:\alpha,Q_1:1)}\\&\qquad\cdot \|\Psi\|^+_{\cb,(Q_1:1)\to (B_2:1,A_2:\alpha)} \\
     &\overset{(4)}{=} \|\Phi\|^+_{(Q:1)\to (B_1:1,A_1:\alpha,Q_1:1)} \|\Psi\|^+_{(Q_1:1)\to (B_2:1,A_2:\alpha)},
\end{align} where the first inequality $(1)$ is just the submultiplicativity under composition, the second inequality $(2)$ is a consequence of the complete contractiveness under the SWAP with $1\geq \alpha\geq\frac{1}{2}$ from \Cref{thm:QuasiNormMinkowskiInequality}. $(3)$ is the definition of the completely bounded norm with \Cref{lem:cb.norm.via.t} and $(4)$ is as in the proof of \Cref{thm:p.leq.1.cond.multiplicativity}.  
\end{proof}

\noindent Thus what we have shown is that
\begin{align}
   \|(\id_{A_1B_1}\otimes\Psi)\circ\Phi \|_{(Q:1)\to (B_1B_2:1,A_1A_2:\alpha)}\leq \|\Phi\|_{(Q:1)\to (B_1:1,A_1:\alpha,Q_1:1)} \|\Psi\|_{(Q_1:1)\to (B_2:1,A_2:\alpha)} 
\end{align} holds for CP maps $\Phi,\Psi$.
This is a somewhat more general result than what is technically required for the DI entanglement certification setup. Operationally, iterating \Cref{thm:signaling.sup.chain.rule} reduces the analysis of an $n$-round, adaptive certification protocol to $n$ single-round optimization problems, each of which only involves the channel of a single round, while the order $\alpha\in[\frac12,1)$ remains free and informally speaking can be optimized at the end. We leave deriving numerical certification rates and hence quantifying the resulting improvement over the smooth max-entropy approach of \cite{Arnon-Friedman.2019} to the work in preparation \cite{Hahn.in.prep}.

\subsubsection{A Signaling Chain Rule for Minimal Output Conditional Entropies}

For completeness, although not of immediate use in device-independent entanglement certification, we also show a similar chain rule statement for the minimal, instead of maximal, sandwiched conditional output entropies. Note the slight difference, in that the minimal conditional entropy of the output of $\Phi$ is conditioned on the register $Q_1$, instead of it being traced out, and the minimal conditional entropy of the output of $\Psi$ is considered within environments of arbitrary sizes.

\begin{theorem}\label{thm:signaling.inf.chain.rule}
    Let  $\alpha\in[\frac{1}{2},1)$ then for quantum channels $\Phi:Q\to Q_1A_1B_1, \Psi:Q_1\to A_2B_2$ it holds that
\begin{align}
   \inf_{\rho\in\cD(\cH_Q)} \!\!\tilde H_\alpha^\uparrow(A_1A_2|B_1B_2)_{(\id_{A_1\!B_1}\!\otimes\Psi)\circ\Phi(\rho)} &\!\geq \!\! \inf_{\rho\in\cD(\cH_Q)} \!\!\tilde H_\alpha^\uparrow(A_1|B_1Q_1)_{\Phi(\rho)}+\!\!\!\inf_{E,\rho\in\cD(\cH_{EQ_1})}\!\!\!\!\!\!\tilde H_\alpha^\uparrow(A_2|B_2E)_{(\id_E\!\otimes\Psi)(\rho)}.
\end{align}
In the mixed-co-quasi-norm picture this is 
\begin{align}
    \|(\id_{A_1B_1}\otimes\Psi)\circ\Phi \|^+&_{\co,(Q:1)\to (B_1B_2:1,A_1A_2:\alpha)}\geq \|\Phi\|^+_{\co,(Q:1)\to (B_1:1,Q_1:1,A_1:\alpha)}\cdot \|\Psi\|^+_{\cb,\co,(Q_1:1)\to (B_2:1,A_2:\alpha)}.
\end{align}
\end{theorem}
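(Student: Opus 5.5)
The plan mirrors the proof of \Cref{thm:signaling.sup.chain.rule}, with suprema replaced by infima and sub- by super-multiplicativity. First, \Cref{prop:Extremal:Entropies.Mixed.Norms} reduces the entropic statement to the mixed co-quasi-norm inequality. For $\alpha\in[\frac12,1)$ the factor $\frac{\alpha}{1-\alpha}$ is positive, so taking logarithms turns products of co-quasi-norms into sums of minimal output conditional entropies. The conditioning on $B_1Q_1$ comes from putting index $1$ on $Q_1$ and placing $Q_1$ on the conditioning side, i.e.\ using the tuple $(B_1:1,Q_1:1,A_1:\alpha)$. The environment $E$ in the second term comes from the $\cb$ version.

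The chain of inequalities then reads as follows.
\begin{enumerate}
\item[(1)] Super-multiplicativity of co-quasi-norms under compositions of CP maps (\Cref{prop:standard.sub.and.sup.mult}). This splits the composition into $\|\Phi\|^+_{\co,(Q:1)\to(B_1:1,Q_1:1,A_1:\alpha)}$ times $\|\id_{B_1A_1}\otimes\Psi\|^+_{\co,(B_1:1,Q_1:1,A_1:\alpha)\to(B_1B_2:1,A_1A_2:\alpha)}$.
\item[(2)] Reordering of the second factor's output. Apply \Cref{cor:equal.adjacient.indices} to swap $B_2$ next to $B_1$, since these are adjacent equal indices $1$ once grouped. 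Then use the complete expansivity of the SWAP for co-quasi-norms from \Cref{thm:QuasiNormMinkowskiInequality}, with inner index $\alpha\le 1$, to move $A_1:\alpha$ past $(B_2:1,A_2:\alpha)$. Concretely, write the target as $(B_1:1,B_2:1,A_1:\alpha,A_2:\alpha)$ and compare it with $(B_1:1,A_1:\alpha,B_2:1,A_2:\alpha)$. Moving $B_2:1$ leftward past $A_1:\alpha$ (so $1\ge\alpha$) only decreases the quasi-norm, and that is the direction needed for a lower bound on the co-norm.
\item[(3)] Reduction to $\cb$. At this point the factor has the form $\id\otimes\Psi$ on the input tuple $(B_1:1,A_1:\alpha,Q_1:1)$. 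The input $A_1:\alpha$ sits before $Q_1:1$ while $Q_1$ is still at the end, so the input also has to be reordered. The plan is to apply the swap once more, now on the input side: $(B_1:1,Q_1:1,A_1:\alpha)$ versus $(B_1:1,A_1:\alpha,Q_1:1)$.
\item[(4)] Bound the result below by $\|\Psi\|^+_{\cb,\co,(Q_1:1)\to(B_2:1,A_2:\alpha)}$, treating $B_1A_1$ as a multi-indexed environment $\cE=(1,\alpha)$ via \Cref{lem:cb.co.norm.via.t}.
\end{enumerate}

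The main obstacle is the direction bookkeeping in steps (2)–(3). For co-quasi-norms we need the output norm to shrink and the input norm to grow under each reordering. An output index $A_1:\alpha$ inside the environment block sits to the right of $Q_1$ in the statement but to the left of the $\Psi$ outputs, and it is not obvious that both reorderings point the favorable way. If the direct reordering fails, the fallback is to prepend $E=B_1A_1$ as the outer environment. Concretely, one would check that $\|(\id_{E}\otimes\Psi)(Y)\|_{(B_1:1,A_1:\alpha,B_2:1,A_2:\alpha)}/\|Y\|_{(B_1:1,A_1:\alpha,Q_1:1)}$ is bounded below by the $\cb$ co-norm with environment $\cE=(1,\alpha)$; this is exactly \Cref{lem:cb.co.norm.via.t}, compatibility being ensured by $\alpha\ge\frac12$. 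The only SWAP then needed is the output-side move of $B_2$ past $A_1$, whose direction is correct because $1\ge\alpha$. The choice of $(B_1:1,Q_1:1,A_1:\alpha)$ in the first factor would then be reconciled with $(B_1:1,A_1:\alpha,Q_1:1)$ in the intermediate norm, which is where the hypothesis $\alpha\le1$ is used once more.
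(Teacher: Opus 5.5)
Your step (2) uses the non-commutative Minkowski inequality in the wrong direction, so the chain breaks there.

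By \Cref{thm:QuasiNormMinkowskiInequality} with $q=\alpha\le p=1$, putting the larger index first \emph{decreases} the quasi-norm. Hence for every output $Y$
\[
\|Y\|_{(B_1:1,B_2:1,A_1:\alpha,A_2:\alpha)}\leq\|Y\|_{(B_1:1,A_1:\alpha,B_2:1,A_2:\alpha)} .
\]
So the co-norm into the target ordering is \emph{at most} the co-norm into your intermediate ordering, not at least. Equivalently, the SWAP from $(A_1:\alpha,B_2:1)$ to $(B_2:1,A_1:\alpha)$ is the contractive direction, not the expansive one, and its co-norm is strictly below $1$. For instance, on a maximally entangled $A_1B_2$ state of local dimension $d$ the ratio is $d^{2-2/\alpha}<1$.

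This is exactly the step that works in \Cref{thm:signaling.sup.chain.rule}, where an upper bound is wanted, but it gives nothing for a lower bound on a co-norm. Your fallback in the last paragraph relies on the same output-side move of $B_2$ past $A_1$, so it fails for the same reason. Only your input-side swap in step (3) points the right way.

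The missing idea is never to move $A_1$ past $B_2$ at all. Since $A_1$ and $A_2$ carry the same index, \Cref{cor:equal.adjacient.indices} lets you write the target as $(B_1:1,B_2:1,A_2:\alpha,A_1:\alpha)$. Then $A_1$ is the innermost system both in the intermediate space $(B_1:1,Q_1:1,A_1:\alpha)$ and in the output. This is precisely why the first factor keeps $Q_1$ between $B_1$ and $A_1$, i.e.\ conditions on $Q_1$ rather than tracing it out.

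With this ordering, $B_1:1$ plays the role of an outer reference system. By \Cref{def:cb.co.quasi.norm} (reference index $1$), the second factor is at least
\[
\|\Psi\otimes\id_{A_1}\|^+_{\cb,\co,(Q_1:1,A_1:\alpha)\to(B_2:1,A_2:\alpha,A_1:\alpha)} .
\]
The inner identity $\id_{A_1}$ is then removed by the co-quasi-normed identity-removal theorem, \Cref{thm:id.right.quasi}, in the form of \Cref{cor:cb.id.right.quasi}. That result is proved via reverse H\"older duality and the operator Jensen inequality, not via swaps. It is where $\alpha\in[\frac12,1]$ is genuinely needed, and with it no non-trivial SWAP appears in the argument.
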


\begin{remark}
    Analogously to above, \Cref{cb.co.1.to.1.p.additivity} can be viewed as a special case of \Cref{thm:signaling.inf.chain.rule} for tensor-product channels, where we additionally prove achievability via \Cref{lem:basic.4.index.multiplicativity}. This implies the additivity of the minimal output completely bounded conditional $\alpha<1$ entropies.
\end{remark}

\begin{proof}[Proof of \Cref{thm:signaling.inf.chain.rule}]
Just as in the proof above we begin with
\begin{align}
    \|(\id_{A_1B_1}\otimes\Psi)\circ\Phi\|^+&_{\co,(Q:1)\to (B_1B_2:1,A_1A_2:\alpha)} \\&\overset{(1)}{\geq} 
    \|\Phi\|^+_{\co,(Q:1)\to (B_1:1,Q_1:1,A_1:\alpha)}\\&\qquad\times\|\id_{B_1}\otimes\Psi\otimes\id_{A_1}\|^+_{\co,(B_1:1,Q_1:1,A_1:\alpha)\to (B_1:1,B_2:1,A_2:\alpha,A_1:\alpha)} \\ 
    &\overset{(2)}{\geq} \|\Phi\|^+_{\co,(Q:1)\to (B_1:1,Q_1:1,A_1:\alpha)}\\&\qquad\times\|\Psi\otimes\id_{A_1}\|^+_{\cb,\co,(Q_1:1,A_1:\alpha)\to (B_2:1,A_2:\alpha,A_1:\alpha)} \\ &\overset{(3)}{=}\|\Phi\|^+_{\co,(Q:1)\to (B_1:1,Q_1:1,A_1:\alpha)}\|\Psi\|^+_{\cb,\co,(Q_1:1)\to (B_2:1,A_2:\alpha)},
\end{align} where the first inequality $(1)$ is the standard supermultiplicativity under composition of the co-quasi-norm, \Cref{prop:standard.sub.and.sup.mult}, together with the fact that we can swap $A_1,A_2$ as they carry the same index. The second inequality $(2)$ follows from the definition of the completely bounded co-quasi-norm, \Cref{def:cb.co.quasi.norm}, and the last equality follows from \Cref{cor:cb.id.right.quasi}, which requires $\alpha\in[\frac{1}{2},1]$.
\end{proof}

\noindent Note that the same proof, applied to $\cb$ versions of the above mixed-co-norms yields also 
\begin{align}
    \|(\id_{A_1B_1}\otimes\Psi)\circ \Phi\|^+_{\cb,\co,(Q:1)\to (B_1B_2:1,A_1A_2:\alpha)} \geq \|\Phi\|^+_{\cb,\co,(Q:1)\to (B_1:1,Q_1:1,A_1:\alpha)}\|\Psi\|^+_{\cb,\co,(Q_1:1)\to (B_2:1,A_2:\alpha)} 
\end{align} for any $\alpha\in[\frac{1}{2},1]$ which in terms of entropies implies
\begin{align}
    \inf_{E,\rho} \tilde H_\alpha^\uparrow(A_1A_2|B_1B_2E)_{(\id_E\otimes(\id_{A_1B_1}\otimes\Psi)\circ\Phi)(\rho)} &\geq \inf_{E,\rho} \tilde H_\alpha^\uparrow(A_1|B_1Q_1E)_{(\id_E\otimes\Phi)(\rho)}\\&\quad+\inf_{E,\rho} \tilde H_\alpha^\uparrow(A_2|B_2E)_{(\id_E\otimes\Psi)(\rho)}
\end{align} for any two quantum channels $\Phi,\Psi$.

\section{Observations and Outlook}\label{sec:out.multi.indexed}

In this paper we combined recent results from matrix analysis with some simple operator space results in order to enrich the available non-commutative toolbox, available to mathematicians and information theorists alike, by multi-indexed Schatten spaces, and thereby narrowing the gap between the commutative and the non-commutative multi-indexed spaces. We further derived several important tools for handling linear and CP maps between such spaces and demonstrated several use cases within quantum information theory in form of additivity statements and chain rules for Rényi conditional entropies under channels. In the following we want to discuss a few observations and open directions. 

\vspace{-0.1cm}
\paragraph{Multi-Indexed Theory without Operator Space Theory} Our construction of multi-indexed Schatten quasi-norms is matrix analytic, yet the proof of the factorization formula \Cref{thm:Multi.Thm.Pisier.like} still uses a small part of Pisier's theory: the operator-valued factorization formula \Cref{thm:VariationalSpOperatorspace} and the concavity statement \Cref{lem.app.op.concavity}, which relies on the complete isometric embedding of operator spaces into some $\cB(\cK)$. In the two-indexed case no such input was necessary \cite{Kochanowski.2026}. It would be interesting to find a purely matrix-analytic proof of \Cref{thm:Multi.Thm.Pisier.like}.

\vspace{-0.2cm}
\paragraph{Expanding the Framework}
One theme which recurred throughout this work and its predecessors is the question of how far the framework can ultimately be pushed. After having extended it here to, among others, arbitrary numbers of indices, the most immediate direction is that of the indices themselves: We required the indices $q_i$ to be of the same sign, either all positive or all negative. This was to simplify the situation and because it was not necessary for the applications to linear maps and entropic chain rules presented here. However, the condition actually used in the proofs is $|\frac{1}{q_i}-\frac{1}{q_j}|\leq1$, which can also be satisfied by indices of mixed sign, e.g.\ $(2,-2)$, leaving their interpretation and meaning open. Part of what makes this case interesting is that mixed signs are the one place where an otherwise very ordered picture breaks. For $q,p\geq1$ we have norms, duality and convexity, whereas for $q,p\leq1$ we demonstrated quasi- or anti-norms, reverse duality and concavity; the two halves mirror each other closely. For indices of mixed sign we would not expect either picture to apply, and in particular neither a (reverse) duality nor a (reverse) triangle inequality to hold. Such mixed-sign indices should, however, allow one to extend the reach of completely bounded quantities to input and output tuples which admit no common reference index of a single sign, and to further entropic quantities --- possibly a $-\infty$-conditional Rényi entropy, or an umlaut information of order $\alpha\in[1,\infty]$. Not only are most of the tools required for their analysis therefore presumably already in place, there is also an operational motivation for wanting them.

\vspace{-0.2cm}
\paragraph{More General Entropic Additivities} Further, our multiplicativity statements are proved for output tuples $(\cP)=(1,p)$, which is what the entropic applications in \Cref{sec:ApplictoQI} required. The framework of \Cref{sec:LinerMaps} defines the mixed norms and the tools in dealing with them, however, for arbitrary compatible (in- and) output tuples, and multiplicativity does not stop at two output indices. More general $\cb,1\to(\cP)$ multiplicativity statements would be natural additions to \Cref{sec:ApplictoQI}, currently hindered mainly by the lack of an operational interpretation of the corresponding multi-indexed quantities as entropic quantities.

\vspace{-0.2cm}
\paragraph{Beyond Finite Dimensions} Finally, for simplicity we restricted ourselves to the setting of finite-dimensional Hilbert spaces in this work. Pisier's operator-valued Schatten spaces, though, are defined for general (semi-)finite von Neumann algebras, and the Araki--Masuda realization of the sandwiched Rényi divergences \cite{Berta.2018} is likewise infinite dimensional. We believe that with essentially no modifications our results extend to separable settings as well, and that extensions to type II and type III von Neumann algebras are quite likely for weighted versions. The technical difficulty in proving all our results in separable Hilbert spaces is that we have to be more careful when invoking Sion's minimax \Cref{thm:Sion}. It requires one of the sets over which we optimize to be compact. In this work we could always choose this set to be the set of subnormalized quantum states, which for separable Hilbert spaces, however, is only weak-$^*$ compact by Banach-Alaoglu, implying that one effectively only needs to prove weak-$^*$ upper semi-continuity in \Cref{lem.app.op.concavity} to push our results beyond finite dimensions. Establishing the multi-indexed quasi-norms in the von Neumann algebraic setting would make them available in quantum field theory and in the study of infinite spin systems, where the corresponding entropic quantities are of independent interest.

\noindent\hrulefill
\medskip

\noindent \textbf{AI Statement}
The author used AI tools for proofreading purposes and help with language editing and organization. The author reviewed and verified all resulting text and takes full responsibility for the content of the manuscript.
\newline 

\noindent \textbf{Acknowledgments}
JK would like to thank Thomas Hahn and Aby Philip for discussions related to device-independent entanglement certification and the relevant chain rules. JK acknowledges support from the Program QuanTEdu-France n° ANR-22-CMAS-0001 France 2030.

\newpage

\bibliographystyle{abbrv} 
\bibliography{lit}  
\hfill 
\newpage  

\appendix

\section{Useful Matrix-Analytical Tools}\label{app:matrix.tools}

The proofs of the technical main results of this work require dealing with optimizations of multivariate trace-formulas. These involve sometimes suprema and infima at the same time and Sion's seminal minimax theorem gives sufficient criteria for their exchange.

\begin{theorem}[Ky Fan's Sion's minimax theorem, \cite{KyFan.53} Theorem 2]\label{thm:Sion} Let $X$ be a compact convex subset of a Hausdorff topological vector space and $Y$ a subset of a linear space. Let $f:X\times Y\to \mathbb{R}\cup\{-\infty\}$ be upper semi-continuous on $X$ for fixed $y\in Y$ and concave in $x$ and convex in $y$, then
\begin{align}
    \sup_{x\in X}\inf_{y\in Y}f(x,y)= \inf_{y\in Y}\sup_{x\in X}f(x,y).
\end{align}
\end{theorem}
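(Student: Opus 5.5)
The inequality $\sup_{x\in X}\inf_{y\in Y}f(x,y)\leq\inf_{y\in Y}\sup_{x\in X}f(x,y)$ holds for any function, so only the reverse inequality needs work. I would argue by contradiction in Ky Fan's style. The plan has three steps: compactness in $x$ reduces the problem to finitely many $y$'s, a separating hyperplane in $\R^n$ handles that finite problem, and convexity in $y$ turns the resulting weights back into a single point of $Y$. Concretely, suppose there is a real $c$ with
\begin{align}
\sup_{x\in X}\inf_{y\in Y}f(x,y)<c<\inf_{y\in Y}\sup_{x\in X}f(x,y).
\end{align}

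\emph{Step 1: finite reduction.} For each $y\in Y$ consider $U_y:=\{x\in X: f(x,y)<c\}$, which is open by upper semi-continuity of $f(\cdot,y)$. For every $x$ we have $\inf_y f(x,y)<c$, so the $U_y$ cover $X$. Compactness of $X$ then gives $y_1,\dots,y_n$ with $\min_i f(x,y_i)<c$ for all $x\in X$. The function $x\mapsto\min_i f(x,y_i)$ is a minimum of u.s.c.\ functions, hence u.s.c., and on the compact set $X$ it attains its supremum. This yields some $\varepsilon>0$ with
\begin{align}
\min_i f(x,y_i)\leq c-\varepsilon \quad\text{for all } x\in X.
\end{align}
Values $-\infty$ cause no problem here.

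\emph{Step 2: separation in $\R^n$.} Let $K:=\{v\in\R^n:\exists x\in X,\ v_i\leq f(x,y_i)\ \forall i\}$. This set is convex: for $x,x'\in X$ and $\lambda\in[0,1]$, the point $\lambda x+(1-\lambda)x'$ lies in $X$ by convexity, and concavity of each $f(\cdot,y_i)$ gives $f(\lambda x+(1-\lambda)x',y_i)\geq\lambda f(x,y_i)+(1-\lambda)f(x',y_i)$. By Step 1, $K$ is disjoint from the open convex set $O:=(c-\varepsilon,\infty)^n$. The finite-dimensional separating hyperplane theorem gives a nonzero $\lambda\in\R^n$ with $\langle\lambda,v\rangle\leq\langle\lambda,w\rangle$ for all $v\in K$, $w\in O$. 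Since $K$ is closed under decreasing coordinates, $\lambda\geq0$, and we normalize to $\sum_i\lambda_i=1$. Letting $w\to(c-\varepsilon,\dots,c-\varepsilon)$ gives
\begin{align}
\sum_i\lambda_i f(x,y_i)\leq c-\varepsilon\quad\text{for all } x\in X.
\end{align}
(If some $f(x,y_i)=-\infty$, the inequality holds trivially for that $x$.)

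\emph{Step 3: recombination via convexity in $y$.} Set $\bar y:=\sum_i\lambda_i y_i$. For this we need $Y$ to be closed under convex combinations, which I read as implicit in ``convex in $y$''. Convexity of $f(x,\cdot)$ gives $f(x,\bar y)\leq\sum_i\lambda_i f(x,y_i)\leq c-\varepsilon$ for every $x$. Hence $\sup_x f(x,\bar y)\leq c-\varepsilon<c$, contradicting $c<\inf_y\sup_x f(x,y)$.

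The main obstacle is Step 2. The separation has to produce a \emph{uniform} gap, which is why Step 1 extracts $\varepsilon$ via attainment of the maximum, rather than only the pointwise strict inequality $\min_i f(x,y_i)<c$. One also has to check that the normal vector can be taken nonnegative and normalized. The possibility $f=-\infty$ has to be tracked in the definition of $K$ (only finite $v$ are allowed), but it does not affect the argument.
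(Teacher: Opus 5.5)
For real-valued $f$, your compactness--separation--recombination scheme is the classical argument and is correct. The paper itself gives no proof; it only quotes Ky Fan. In the real-valued case the $\varepsilon$ from Step~1 is not even needed: $\min_i f(x,y_i)<c$ already makes $K$ disjoint from $(c,\infty)^n$, and $\sup_x f(x,\bar y)\le c$ suffices for the contradiction.

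The genuine gap is the value $-\infty$, which the statement allows and which you say does not affect the argument. Your set $K$ only sees points of $D:=\{x\in X: f(x,y_i)>-\infty \text{ for all } i\}$, so the separation only yields $\sum_i\lambda_i f(x,y_i)\le c-\varepsilon$ for $x\in D$. For $x\notin D$, your parenthetical remark is valid only if $\lambda_i>0$ for some $i$ with $f(x,y_i)=-\infty$. Nothing prevents the separating normal from vanishing on exactly those indices.

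For example, on $X=[0,1]$ let $f(x,y_1)=x$, and let $f(0,y_2)=0$ and $f(x,y_2)=-\infty$ for $x>0$. Both functions are concave and u.s.c., and $\min_i f(x,y_i)\le 0$ for all $x$, so you may take $c-\varepsilon=0$. Then $K=\{v\in\mathbb{R}^2: v_1\le 0,\ v_2\le 0\}$, and $\lambda=(1,0)$ separates $K$ from $(0,\infty)^2$. But then $\bar y=y_1$ and $f(1,\bar y)=1>c-\varepsilon$, so the bound $f(x,\bar y)\le c-\varepsilon$ that Steps~2--3 are meant to deliver fails. Whatever convention you adopt for $0\cdot(-\infty)$, either Step~2 or the Jensen inequality in Step~3 breaks. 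Moreover, if $D=\emptyset$ then $K=\emptyset$, and neither the separation theorem nor your argument for $\lambda\ge 0$ has anything to act on.

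The repair is short, and it is where your uniform gap actually pays off. By upper semicontinuity on the compact set $X$, $M:=\max_i\sup_x f(x,y_i)<\infty$.
\begin{itemize}
\item If $D=\emptyset$, take $\lambda=\frac1n(1,\dots,1)$. Then $\sum_i\lambda_i f(x,y_i)=-\infty$ for every $x$.
\item If $D\neq\emptyset$, then $M$ is finite. Replace the separating $\lambda$ by $\lambda^\delta:=(1-\delta)\lambda+\frac{\delta}{n}(1,\dots,1)$ with $\delta\in(0,1)$. It has full support, so $\sum_i\lambda^\delta_i f(x,y_i)=-\infty$ for $x\notin D$. For $x\in D$ the sum is at most $(1-\delta)(c-\varepsilon)+\delta M$, which is $<c$ for $\delta$ small.
\end{itemize}
Now set $\bar y:=\sum_i\lambda^\delta_i y_i$ and apply Step~3, using the $n$-point Jensen inequality where a $-\infty$ term of positive weight makes the right-hand side $-\infty$. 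This gives $\sup_x f(x,\bar y)<c$, which is the contradiction you want.
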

We note here, that when minimizing and maximizing over multiple variables each, then the function needs to be jointly concave, respectively convex in them.

Two further important tools we require are rewritings of optimizations over a product of suitably homogeneous functions as a sum of suitable powers of these functions. In particular, we recall \cite[Lemmas B.3 and B.4]{Kochanowski.2026}. Such rewritings are relatively straightforward consequence of (reverse) AM-GM inequalities, see also \cite{Rubboli.2025}.

For a supremum of a product of three functions we recall the following.
\begin{lemma}[Supremum of products as sum; Lemma B.3 of \cite{Kochanowski.2026}]\label{lem:Sup.Prod.to.Sum}
Given three positive functions $g,h:\cB^+(\cH)\to [0,\infty],\, f:\cB^+(\cH\otimes\cH)\to [0,\infty]$ and numbers $r_2<0<r_1$ such that $\frac{1}{r_0}:=\frac{2}{r_1}+\frac{1}{r_2}<0$ and
\begin{align}
    g(\lambda C)=\lambda^{r_1}g(C)\quad h(\mu D)=\mu^{r_1}h(D)\quad f(\lambda C,\mu D)= (\lambda\mu)^{-r_2}f(C,D),
\end{align} holds for any $\lambda,\mu\geq0$.
Then 
\begin{align}
    \sup_{C,D\geq0}\left\{g(C)^{\frac{r_0}{r_1}}h(D)^{\frac{r_0}{r_1}}f(C,D)^{\frac{r_0}{r_2}}\right\} = \sup_{C,D\geq0}\left\{\frac{r_0}{r_1}g(C)+\frac{r_0}{r_1}h(D)+\frac{r_0}{r_2}f(C,D)\right\}.
\end{align}
\end{lemma}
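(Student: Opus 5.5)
The plan is to compare the two sides pointwise in $(C,D)$ and then use the homogeneity to close the gap. Write $a:=\frac{r_0}{r_1}$ and $b:=\frac{r_0}{r_2}$. From $r_2<0<r_1$ and $\frac1{r_0}<0$ we get $r_0<0$, hence $a<0<b$. Multiplying the definition of $r_0$ by $r_0$ gives $2a+b=1$, so that $b=1-2a>1$.

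\emph{Step 1 (pointwise ``$\geq$'').} I would invoke the reverse weighted AM--GM (reverse Young) inequality. For weights $a,a,b$ summing to $1$, with $a<0$ and $b>1$, and for $x,y,z>0$, it states
\begin{align}
x^a y^a z^b\;\geq\; a x+a y+b z .
\end{align}
This follows by concavity of $\log$ applied to the convex combination $z=\tfrac1b\big(bz+(-a)x+(-a)y\big)$ after rearranging. Equivalently, it is the ordinary AM--GM inequality applied to $z$ expressed via $x^a y^a z^b$, $x$ and $y$ with positive weights $\tfrac1b,-\tfrac ab,-\tfrac ab$. Applying it with $x=g(C)$, $y=h(D)$, $z=f(C,D)$ gives that the left-hand integrand dominates the right-hand one for each fixed $(C,D)$. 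Taking suprema yields $\sup\text{LHS}\geq\sup\text{RHS}$.

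\emph{Step 2 (``$\leq$'' via scaling).} Fix $C,D$ with $g(C),h(D),f(C,D)\in(0,\infty)$ and replace them by $\lambda C,\mu D$. By the assumed homogeneities the product is invariant. Indeed,
\begin{align}
g(\lambda C)^a h(\mu D)^a f(\lambda C,\mu D)^b=\lambda^{r_0}\mu^{r_0}(\lambda\mu)^{-r_0}\,g^ah^af^b ,
\end{align}
which equals $g^ah^af^b$, since $r_1a=r_0$ and $-r_2b=-r_0$. The sum, by contrast, becomes
\begin{align}
a\lambda^{r_1}g+a\mu^{r_1}h+b(\lambda\mu)^{-r_2}f .
\end{align}
I then choose $\lambda,\mu$ so that the three quantities $\lambda^{r_1}g$, $\mu^{r_1}h$ and $(\lambda\mu)^{-r_2}f$ coincide. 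This is possible: taking $\lambda^{r_1}g=\mu^{r_1}h=t$ leaves a single monotone equation in $t>0$, since $r_1\neq r_2$. At that point the reverse AM--GM inequality is tight, so the sum equals the product. Hence $\sup\text{RHS}\geq g(C)^ah(D)^af(C,D)^b$, and taking the supremum over $(C,D)$ gives the reverse inequality.

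\emph{Step 3 (degenerate values).} The main obstacle I expect is bookkeeping for values in $\{0,\infty\}$, since the exponents have mixed signs. The conventions to fix are $0^a=\infty$, $\infty^a=0$ and $a\cdot 0=0$.
\begin{itemize}
\item If $f(C,D)=0$, then both integrands are trivial or $-\infty$-type and contribute nothing, so this case is harmless.
\item If $g$ or $h$ is $0$ while $f>0$, scaling with $\mu\to\infty$ or $\lambda\to\infty$ shows that both sides are $+\infty$.
\item If one of $g,h,f$ is $\infty$, I would check each case directly, using that the same scaling drives both expressions to the same limit.
\end{itemize}
Once these cases are matched, Steps 1 and 2 give the claimed equality.
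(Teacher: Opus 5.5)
Your proof is correct and follows the route the paper points to (it defers to Lemma B.3 of the cited work and describes it as a reverse AM--GM argument). You use the pointwise reverse weighted AM--GM inequality with weights $(\frac{r_0}{r_1},\frac{r_0}{r_1},\frac{r_0}{r_2})$ summing to $1$, and then the scale invariance of the product to reach the equality case $\lambda^{r_1}g=\mu^{r_1}h=(\lambda\mu)^{-r_2}f$.

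Two small touch-ups are needed:
\begin{itemize}
\item The balancing equation is $t^{r_2/r_0}=(gh)^{r_2/r_1}f$. It is solvable because $1/r_0\neq 0$, i.e.\ $r_1\neq -2r_2$, not because $r_1\neq r_2$.
\item At degenerate points where the left integrand is $0$ (e.g.\ $g(C)=\infty$ or $f(C,D)=0$), what you actually need is $\sup\mathrm{RHS}\geq 0$. This follows by sending $\lambda,\mu\to 0$ at any point where all three values are finite. It does not follow from ``the same scaling drives both sides to the same limit'': for $g(C)=\infty$ the two sides stay at $0$ and $-\infty$ respectively.
\end{itemize}
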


Similarly, for an infimum of a product of three functions we recall:
\begin{lemma}[Infimum of products as sum; Lemma B.4 of \cite{Kochanowski.2026}]\label{lem:Inf.Prod.to.Sum}
Given three positive functions $g,h:\cB^+(\cH)\to [0,\infty],\, f:\cB^+(\cH\otimes\cH)\to [0,\infty]$ and numbers $0<r_1,r_2$ s.t. 
\begin{align}
   g(\lambda C)=\lambda^{r_1}g(C), \quad  h(\mu D)=\mu^{r_1}h(D), \quad f(\lambda C,\mu D)=(\lambda\mu)^{-r_2}f(C,D)
\end{align} 
holds for any $\lambda,\mu\geq0$. 
Define $\frac{1}{r_0}:=\frac{2}{r_1}+\frac{1}{r_2}$, then it holds that
\begin{align}
    \inf_{C,D\geq0}g(C)^{\frac{r_0}{r_1}}h(D)^{\frac{r_0}{r_1}}f(C,D)^{\frac{r_0}{r_2}} = \inf_{C,D\geq0}\left\{\frac{r_0}{r_1}g(C)+\frac{r_0}{r_1}h(D)+\frac{r_0}{r_2}f(C,D)\right\}.
\end{align}
\end{lemma}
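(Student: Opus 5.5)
The plan is to prove the two inequalities separately. The lower bound comes from a pointwise weighted AM--GM inequality, and the upper bound from the scaling freedom of $(C,D)$, used to reach the equality case of AM--GM.

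\emph{Exponents.} Set $w:=\frac{r_0}{r_1}$ and $v:=\frac{r_0}{r_2}$. Since $r_1,r_2>0$, we have $r_0>0$ and $w,v>0$, and the definition of $r_0$ gives $2w+v=r_0\big(\frac{2}{r_1}+\frac{1}{r_2}\big)=1$. So $(w,w,v)$ is a probability weight vector.

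\emph{Step 1: ``$\leq$''.} For every fixed $C,D\geq0$ with finite values, the weighted AM--GM inequality gives
\begin{align}
    w\,g(C)+w\,h(D)+v\,f(C,D)\geq g(C)^{w}h(D)^{w}f(C,D)^{v}.
\end{align}
Taking the infimum over $C,D$ on both sides yields that the product infimum is at most the sum infimum. Pairs where some value is $+\infty$ make the right-hand side of the claim infinite and can be discarded on that side.

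\emph{Step 2: scale invariance of the product.} Replace $(C,D)$ by $(\lambda C,\mu D)$ with $\lambda,\mu>0$. The homogeneity assumptions give
\begin{align}
    g(\lambda C)^{w}h(\mu D)^{w}f(\lambda C,\mu D)^{v}=\lambda^{r_1w}\mu^{r_1w}(\lambda\mu)^{-r_2v}\,g(C)^wh(D)^wf(C,D)^v .
\end{align}
Since $r_1w=r_2v=r_0$, the prefactor is $1$, so the product functional is constant along each scaling orbit. The sum functional is not, and we may minimise it over the orbit.

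\emph{Step 3: ``$\geq$'' in the generic case.} Assume $g(C),h(D),f(C,D)\in(0,\infty)$. We choose $\lambda,\mu>0$ so that all three values coincide:
\begin{align}
    \lambda^{r_1}g(C)=\mu^{r_1}h(D)=(\lambda\mu)^{-r_2}f(C,D)=:\tau .
\end{align}
The first equality fixes $\mu/\lambda=(g/h)^{1/r_1}$. Substituting, the remaining equation becomes $\lambda^{r_1}g=\lambda^{-2r_2}(\mu/\lambda)^{-r_2}f$, which has a unique positive solution $\lambda$ because $r_1+2r_2>0$. At such a point AM--GM holds with equality, so the sum equals $\tau$, which equals the product at $(\lambda C,\mu D)$ and hence, by Step 2, the product at $(C,D)$. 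Therefore the sum infimum is at most the product value at $(C,D)$.

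\emph{Step 4: degenerate values (main obstacle).} The main obstacle is bookkeeping the degenerate cases, with the conventions $0^{w}=0$ and $\infty^{w}=\infty$, and with $0\cdot\infty$ treated carefully.
\begin{itemize}
    \item If $g(C)=0$ and $h(D),f(C,D)<\infty$, the sum at $(\lambda C,\mu D)$ is $w\,\mu^{r_1}h(D)+v(\lambda\mu)^{-r_2}f(C,D)$. First send $\mu\to0$; then, for fixed small $\mu$, send $\lambda\to\infty$. This drives the sum to $0$, matching the product value $0$.
    \item The case $h(D)=0$ is symmetric.
    \item If $f(C,D)=0$, send $\lambda,\mu\to0$ together; this kills the first two terms while the $f$ term stays $0$.
    \item If some value is $+\infty$, the product is $+\infty$ unless another factor vanishes. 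In the purely infinite case that pair imposes no constraint on the product side, so it is irrelevant for ``$\geq$''. In the mixed case $0\cdot\infty$, I would adopt the convention that such pairs are excluded or that the product is taken to be $+\infty$, and check that the statement is consistent with the use in \Cref{thm:Multi.Thm.Pisier.like}, where all quantities are finite on the admissible supports.
\end{itemize}
With these cases handled, the infimum of the sum is at most the infimum of the product. Combined with Step 1, this gives equality.
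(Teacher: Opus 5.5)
Your proof is correct and follows essentially the same route as the paper. The paper only cites Lemma~B.4 of \cite{Kochanowski.2026} and describes it as a consequence of the weighted AM--GM inequality, and your argument does exactly that: pointwise weighted AM--GM gives ``$\leq$'', and the scale invariance of the product under $(C,D)\mapsto(\lambda C,\mu D)$ lets you reach the equality case $g=h=f$ for ``$\geq$''. Your caution about $0\cdot\infty$ is genuinely needed. For $\cH=\C$, $g(C)=C^{r_1}$, $h(D)=D^{r_1}$ and $f(C,D)=(CD)^{-r_2}$ (set to $+\infty$ when $CD=0$), the pair $(0,1)$ gives a product infimum of $0$ under the convention $0\cdot\infty=0$, while the sum infimum is $1$. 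So such pairs must be excluded or assigned the value $+\infty$, which is harmless in the paper's applications, where all quantities are finite on the admissible supports.
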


Finally we will frequently use the following factorization formulas for Schatten spaces. 

\begin{proposition}[Factorization formulas for $\cS_q$; Proposition 1.1 of \cite{Kochanowski.2026}]\label{prop:SpVariationalFormulas}
Let $q,p\neq0$ and set $\frac{1}{r}=\big|\frac{1}{q}-\frac{1}{p}\big|$, then for any $X\in\cS_q(\cH)$ it holds that
\begin{align}
    \|X\|_q=\begin{cases}
        \underset{\substack{X=\Pi_aX\Pi_b\\a,b\in\cS^+_{2r}(\cH)}}{\inf}\|a\|_{2r}\|b\|_{2r}\|a^{-1}Xb^{-1}\|_p  &\text{ if } q\leq p, \\
       \underset{\substack{X=\Pi_aX\Pi_b\\a,b\in\cS_{2r}^+(\cH)}}{\sup}\|a\|^{-1}_{2r}\|b\|^{-1}_{2r}\|aXb\|_p  &\text{ if } q\geq p.
    \end{cases}
\end{align} Moreover, when $X$ is normal\footnote{An operator $X$ is normal when $XX^*=X^*X$.} one may choose $a=b$.
\end{proposition}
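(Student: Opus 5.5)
The plan is to prove each case in two parts: a one-sided bound from the generalized Hölder inequality \eqref{equ:Schatten.gen.Hölder}, and an explicit optimizer built from the polar decomposition showing the bound is attained. The all-negative case is then reduced to the positive one by inversion.

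\emph{Positive indices, $0<q\leq p$.} Here $\frac1r=\frac1q-\frac1p$, so $\frac1q=\frac1{2r}+\frac1p+\frac1{2r}$. For any admissible $a,b\geq0$ with $X=\Pi_aX\Pi_b$, set $Y=a^{-1}Xb^{-1}$. Then $X=aYb$, and Hölder gives $\|X\|_q\leq\|a\|_{2r}\|Y\|_p\|b\|_{2r}$, which is the ``$\leq$'' direction of the infimum formula. For attainment, write $X=U|X|$ and use $|X^*|^{s}U=U|X|^{s}$ for $s>0$. Take
\begin{align}
a=|X^*|^{\frac{q}{2r}},\qquad b=|X|^{\frac{q}{2r}}.
\end{align}
These have supports equal to the left and right supports of $X$, and
\begin{align}
Y=a^{-1}Xb^{-1}=U|X|^{1-\frac qr}=U|X|^{\frac qp}.
\end{align}
Hence $\|Y\|_p=\Tr[|X|^q]^{1/p}$ and $\|a\|_{2r}=\|b\|_{2r}=\Tr[|X|^q]^{1/(2r)}$. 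Since $\frac1p+\frac1r=\frac1q$, the product equals $\|X\|_q$.

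\emph{Positive indices, $q\geq p>0$.} Here $\frac1p=\frac1{2r}+\frac1q+\frac1{2r}$, and Hölder gives $\|aXb\|_p\leq\|a\|_{2r}\|X\|_q\|b\|_{2r}$, which is the ``$\geq$'' direction of the supremum formula. For attainment, take again $a=|X^*|^{q/(2r)}$ and $b=|X|^{q/(2r)}$. Then $aXb=U|X|^{1+q/r}=U|X|^{q/p}$, and the same exponent bookkeeping yields equality. In both positive cases, if $X$ is normal then $|X|=|X^*|$, so the optimizers coincide and $a=b$.

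\emph{Negative indices.} When $q,p<0$ (so $X>0$, and the optimization runs over $a,b>0$), I would use $\|Z\|_q=\|Z^{-1}\|_{-q}^{-1}$. Substituting $a\mapsto a^{-1}$ sends an infimum of $\|a\|_{2r}\|b\|_{2r}\|a^{-1}Xb^{-1}\|_p$ to the reciprocal of a supremum of $\|a\|_{2r}^{-1}\|b\|_{2r}^{-1}\|aX^{-1}b\|_{-p}$. Note that $q\leq p<0$ corresponds to $-q\geq-p>0$, and $r$ is unchanged. The already proven positive supremum formula for $X^{-1}$ then closes this case, and the other order is symmetric.

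\emph{Expected obstacle: mixed signs.} The case $q$ and $p$ of opposite sign with $X>0$ is where I expect the real difficulty, since the forward Hölder inequality is then unavailable. I would first try to replace it by the reverse Hölder inequality for positive operators, derived from the reverse Hölder duality \eqref{equ:reverse.Hölder.Schatten} applied to $a^{-1}Xb^{-1}$. The explicit optimizers above, which become powers of $X$ in this case, should still give equality.
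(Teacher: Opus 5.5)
Your positive-index proof is correct, and so is your reduction of the case $q,p<0$, $X>0$ via $\|Z\|_p=\|Z^{-1}\|_{-p}^{-1}$. The positive-index argument uses the generalized Hölder inequality for one inequality, the polar-decomposition optimizers $a=|X^*|^{q/(2r)}$, $b=|X|^{q/(2r)}$ for attainment, and $a=b$ when $|X|=|X^*|$. This is the route the paper indicates: it gives no proof, calls the formulas consequences of the generalized Hölder inequality, and remarks that they persist for negative indices with $X>0$, $a=b^*>0$.

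Two small repairs.
\begin{enumerate}
\item In the negative case you are not substituting $a\mapsto a^{-1}$; that would change the weight to $\|a^{-1}\|_{2r}$. What you do is invert the operator, $\|a^{-1}Xb^{-1}\|_p=\|bX^{-1}a\|_{-p}^{-1}$, keeping the weights $\|a\|_{2r}\|b\|_{2r}$.
\item For $q=\infty$ in the supremum case the exponent $q/(2r)$ is meaningless. Take instead $a=|X^*|^{\gamma}$, $b=|X|^{\gamma}$. Then $\|aXb\|_p/(\|a\|_{2p}\|b\|_{2p})=\big(\operatorname{Tr}|X|^{(2\gamma+1)p}/\operatorname{Tr}|X|^{2\gamma p}\big)^{1/p}$, which tends to $\|X\|_\infty$ as $\gamma\to\infty$.
\end{enumerate}

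The genuine error is your last paragraph. The mixed-sign case is not a hard case but a false one, so neither reverse Hölder nor any powers of $X$ can give equality there. Take $q=-1$, $p=1$ (so $q\le p$), hence $\frac1r=|{-1}-1|=2$ and $2r=1$, and take $X=\1_d>0$. The generalized Hölder inequality gives
\begin{align}
d^3=\|\1_d\|_{1/3}=\|a\,(a^{-1}b^{-1})\,b\|_{1/3}\leq\|a\|_1\,\|b\|_1\,\|a^{-1}b^{-1}\|_1 ,
\end{align}
with equality at $a=b=\1_d$. So the infimum is $d^3$, while $\|\1_d\|_{-1}=1/d$.

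The hypothesis ``$q,p\neq0$'' must therefore be read, as everywhere in the paper where compatible indices always share a sign, as $q,p$ both positive or both negative. A correct mixed-sign analogue would need the signed exponent $\frac1r=\frac1q-\frac1p<0$, i.e.\ weights measured in a negative index, which is a different statement. Simply drop that paragraph; what remains proves the statement in its intended range.
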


These are essentially just consequences of the generalized Hölder's inequality for Schatten spaces. A well-known special case for $p=1$ of these is \cite[Lemma 3.4]{book:Tomamichel.2016}. We note here, without proving, that these also hold for negative indices when restricting to $a=b^*>0$ and $X>0$.

\section{Operator Spaces}\label{App:OpSpaces}

We briefly recall what is required of operator space theory and Pisier's operator valued Schatten spaces and refer, for a proper exposition of operator spaces to \cite{book:Pisier.2003, book:Paulsen.2003}, operator-valued Schatten spaces to \cite{book:Pisier.1998}, and for an overview aimed at quantum information theory to \cite{Beigi.2023}. 

A (concrete) \emph{operator space} is a closed subspace $\cX\subseteq\cB(\cK)$. The linear spaces
\begin{align}
    M_{m,n}(\mathcal{X}):=\Big\{[X_{ij}]_{i\in[m],j\in[n]}|X_{ij}\in\mathcal{X}\Big\}, \quad M_{n}(\mathcal{X})\equiv M_{n,n}(\mathcal{X})
\end{align} of $\mathcal{X}$-valued $m\times n$ matrices can be endowed with a natural norm by viewing elements of $M_{m,n}(\mathcal{X})$ as linear maps in $\mathcal{B}(\mathcal{K}^n,\mathcal{K}^m)$, where $\mathcal{K}^n:=\bigoplus_{i=1}^n\mathcal{K}=\mathbb{C}^n\otimes\mathcal{K}$, via the identification $M_{m,n}(\mathcal{X})\subset M_{m,n}(\mathcal{B}(\mathcal{K}))\simeq \mathcal{B}(\mathcal{K}^n,\mathcal{K}^m)$:
\begin{align}
    X=[X_{ij}]\in M_{m,n}(\mathcal{X}) \leftrightarrow X:\mathcal{K}^n \to \mathcal{K}^m;\quad  v^n=[v_1,\dots, v_n]^\intercal\mapsto \Big[\sum_j X_{ij}v_j\Big]^\intercal\,.
\end{align}
Hence the space $M_{m,n}(\mathcal{B}(\mathcal{K}))$ is naturally equipped with the norms induced by $\mathcal{B}(\mathcal{K}^n,\mathcal{K}^m)$, i.e.,
\begin{align} 
    \|X\|_{m,n}\equiv\|X\|_{M_{m,n}(\mathcal{X})}&:=\sup\{\|Xv^n\|_{\mathcal{K}^m}|v^n\in\mathcal{K}^n, \|v^n\|_{\mathcal{K}^n}\leq 1\} \label{equ:def.mxn.norm} \\ &= \sup\left\{\left(\sum_{i=1}^m\bigg\|\sum_{j=1}^nX_{ij}v_j\bigg\|_{\mathcal{K}}^2\right)^{\frac{1}{2}}\Bigg|\sum_{j=1}^n\|v_j\|_{\mathcal{K}}^2=1\right\},
\end{align} where $\|\cdot\|_{\mathcal{K}}$ denotes the norm on $\mathcal{K}$ induced by its inner product. In the case $m=n$, we write $\|\cdot\|_n:=\|\cdot\|_{n,n}$. For simplicity we will also denote $\|X\|_{\mathcal{X}}\equiv \|X\|_{M_1(\mathcal{X})}$.

\begin{proposition} \label{prop:central.properties.OSnorms}
The family of norms defined above satisfies the following two main properties, namely for any $m,n\in\mathbb{N}$,
\begin{align}
\textup{(i)} \quad & \|FXG\|_{m} \leq \|F\|_{m,n}\|X\|_{n}\|G\|_{n,m} \quad \forall F,G^*\in M_{m,n}(\mathbb{C}), X\in M_n(\mathcal{X}), \\
\textup{(ii)} \quad & \|X\oplus Y\|_{m+n} = \max\{\|X\|_{m},\|Y\|_{n}\} \quad \forall X\in M_n(\mathcal{X}), Y\in M_m(\mathcal{X}),
\end{align} where 
$X\oplus Y= \begin{pmatrix}
    X & 0 \\
    0 & Y
\end{pmatrix}$, and $FXG\equiv (F\otimes \1_\mathcal{K})X(G\otimes \1_\mathcal{K})$.
\end{proposition}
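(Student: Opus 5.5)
Both properties follow directly from the definition \eqref{equ:def.mxn.norm}, which identifies $\|\cdot\|_{m,n}$ with the operator norm on $\cB(\cK^n,\cK^m)$ under $M_{m,n}(\cX)\subset M_{m,n}(\cB(\cK))\simeq\cB(\cK^n,\cK^m)$. So I would reduce everything to elementary facts about operator norms of block operators on Hilbert spaces $\cK^n=\C^n\otimes\cK$.

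For (i), I would first check that under this identification the product $FXG$ of scalar matrices with an $\cX$-valued matrix is the composition of the three bounded operators
\begin{align}
    G\otimes\1_\cK:\cK^m\to\cK^n,\qquad X:\cK^n\to\cK^n,\qquad F\otimes\1_\cK:\cK^n\to\cK^m .
\end{align}
This is a one-line entrywise computation, $(FXG)_{ij}=\sum_{k,l}F_{ik}X_{kl}G_{lj}$, which is again an element of $\cX$ because $\cX$ is a linear subspace.

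Submultiplicativity of the operator norm then gives
\begin{align}
    \|FXG\|_m\leq\|F\otimes\1_\cK\|\,\|X\|_n\,\|G\otimes\1_\cK\|.
\end{align}
The only point needing an argument is $\|F\otimes\1_\cK\|_{\cB(\cK^n,\cK^m)}=\|F\|_{M_{m,n}(\C)}$, and similarly for $G$. I would prove this via a singular value decomposition $F=UDV$. Then $F\otimes\1_\cK=(U\otimes\1)(D\otimes\1)(V\otimes\1)$, where $U\otimes\1$ and $V\otimes\1$ are unitaries and $D\otimes\1$ is a block-diagonal operator with scalar blocks $d_i\1_\cK$. The norm of $D\otimes\1$ is therefore $\max_i d_i=\|F\|$. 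Alternatively one can cite the standard identity $\|A\otimes B\|=\|A\|\|B\|$ for the spatial tensor product. I expect this identification to be the only mildly nontrivial step.

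For (ii), $X\oplus Y$ acts on $\cK^n\oplus\cK^m$ by $(v_1,v_2)\mapsto(Xv_1,Yv_2)$. For the upper bound, take a unit vector $(v_1,v_2)$; then
\begin{align}
    \|Xv_1\|^2+\|Yv_2\|^2\leq\max\{\|X\|_n,\|Y\|_m\}^2\,\big(\|v_1\|^2+\|v_2\|^2\big),
\end{align}
so $\|X\oplus Y\|_{m+n}\leq\max\{\|X\|_n,\|Y\|_m\}$. For the reverse inequality, I would test on unit vectors supported in a single block, $(v_1,0)$ or $(0,v_2)$, and take suprema. Together these give equality.
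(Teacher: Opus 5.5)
Your proof is correct and matches the paper. The paper gives no proof of this proposition; it recalls it as the standard fact that concrete operator spaces satisfy Ruan's axioms, citing Pisier, and your direct verification through the operator norm on $\mathcal{B}(\mathcal{K}^n,\mathcal{K}^m)$ is exactly the argument behind that fact: submultiplicativity together with $\|F\otimes\1_{\mathcal{K}}\|=\|F\|$ gives (i), and the orthogonal splitting $\mathcal{K}^{n+m}=\mathcal{K}^n\oplus\mathcal{K}^m$ gives (ii). You also correctly read (ii) as $\max\{\|X\|_n,\|Y\|_m\}$, which fixes the swapped subscripts in the stated formula.
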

More generally,
\begin{definition} A linear space $\mathcal{X}$ with a family of norms $\|\cdot \|_{m,n}$ on $M_{m,n}(\mathcal
X)$ that satisfy the above properties (i) and (ii) is called an \textit{(abstract) operator space}.
\end{definition}

\noindent By Ruan's theorem \cite[Ruan's Theorem]{book:Pisier.2003} it turns out that the only linear spaces $\mathcal{X}$ with endowed norms $\|\cdot \|_{m,n}$ on $M_{m,n}(\mathcal{X})$ satisfying properties (i) and (ii) are closed linear subspaces of some $\mathcal{B}(\mathcal{K})$, where $\mathcal{K}$ is a (possibly infinite-dimensional) Hilbert space.  Importantly, this inclusion is a completely isometric embedding \cite{book:Pisier.2003}, meaning that for any $X\in\cX, Y\in M_n[\cX]$ it holds that
\begin{align}
    \|X\|_{\cX}=\|X\|_{\cB(\cK)} = \|X\|_\infty, \text{ moreover, }  \|Y\|_{M_n(\cX)}=\|Y\|_{\cB(\mathbb{C}^n\otimes\cK)} = \|Y\|_\infty
\end{align} where the second norm is the norm of $X\subset\cB(\cK)$ via this embedding, which is nothing but the operator or Schatten-$\infty$-norm on $\cK$. Similarly for $Y$.

\noindent Given some map $\phi:\cX\to \cY$ between operator spaces $\cX,\cY$ then we denote with $\phi_n=\id_n\otimes \phi$ the canonical dilation of $\phi$ acting on $M_n(\cX)\to M_n(\cY)$ \cite{book:Pisier.2003}.
This map $\phi$ is now called \emph{completely bounded (cb)} from $\cX\to \cY$ whenever its associated completely bounded norm
\begin{align}
    \|\phi\|_{\cb,\cX\to \cY}:=\sup_{n\in\mathbb{N}}\|\phi_n\|_{M_n(\cX)\to M_n(\cY)}<\infty
\end{align} is finite. We denote the set of all completely bounded maps from $\cX\to \cY$ with $\mathcal{CB}(\cX,\cY)$
Analogously we call a map \emph{completely contractive} if $\|\phi\|_{\cb,\cX\to \cY}\leq1$ and \emph{completely isometric} if $\phi_n$ is an isometry for each $n\in\mathbb{N}$.

\noindent Using completely bounded maps we can define \emph{dual} operator spaces. 
Given an operator space $\cX$, its Banach space dual is given as $\cX^*=\cB(\cX,\mathbb{C})= \mathcal{CB}(\cX,\mathbb{C})$, which is again a Banach space under the corresponding induced norm of $\cB(\cX,\mathbb{C})$. The equality of $\cB(\cX,\mathbb{C})= \mathcal{CB}(\cX,\mathbb{C})$ is a consequence of the commutative (as one-dimensional) nature of $\mathbb{C}$, effectively a consequence of Smith's lemma \cite[Proposition 1.12]{PisierXu.2003}.
It turns out that if we give it the structure
\begin{align}\label{equ:prelim.pisier.op.dual.norm}
    M_n(\cX^*)= \mathcal{CB}(\cX,M_n(\mathbb{C}))
\end{align} by associating elements of $M_n(\mathcal{CB}(\cX,\mathbb{C}))$ in the natural way with $\mathcal{CB
}(\cX,M_n(\mathbb{C}))$, that is, for any $\Phi=[\phi_{ij}]\in M_n(\mathcal{CB}(\cX,\mathbb{C}))$ we associate it to
\begin{align}
     &\tilde{\Phi}\in \mathcal{CB}(\cX,M_n(\mathbb{C})) \\
     \cX\ni X &\mapsto [\phi_{ij}(X)]\in M_n(\mathbb{C})
\end{align} \cite[Section 2.3]{book:Pisier.2003}.
The in this way constructed $\cX^*$ with the family of norms $\|\cdot\|_n$ via \eqref{equ:prelim.pisier.op.dual.norm} is called the \emph{operator space dual} of $\cX$.

\begin{remark}
    For readers who are less operator space and algebraically minded there is one important takeaway from this passage. Namely that every operator space embeds completely isometrically into some $\cB(\cK)$ and thus its norm is, through this embedding, `just' the regular operator ($p=\infty$-Schatten) norm over the Hilbert space $\cK$. Implicitly thus also always a convex and so never a `true' quasi-norm which is not a norm.
\end{remark}

\section{Proofs Omitted from Section \cref{sec:Multi.Pisier}}\label{app:Generalized.Variational.Expressions}

\subsection{Proof of \Cref{thm:VariationalSpOperatorspace}}

\begin{proof}[Proof of \Cref{thm:VariationalSpOperatorspace}]
We will prove (i) by separately showing upper and lower bounds and (ii) will then follow by duality. This proof is based on the proofs of \cite[Lemma 3.1]{Fawzi.2026}, which itself is inspired by the one of \cite[Theorem 4.5]{Beigi.2023}
Note that to keep notations short and as done before, we omit writing identity operators, e.g.\ for $a,b\in \mathcal{S}_{2r}(\mathcal{H})$ and $Y\in \mathcal{S}_\infty[\mathcal{H};\mathcal{X}]$ we will write $aYb\equiv(a\otimes\1)Y(b\otimes\1)$. 

    (i) Let $X=aYb\in \mathcal{S}_p[\mathcal{H};\mathcal{X}]$ with $a,b\in \mathcal{S}_{2r}(\mathcal{H}),Y\in \mathcal{S}_q[\mathcal{H};\mathcal{X}]$ and suppose $Y=cZd$ with $c,d\in \mathcal{S}_{2q}(\mathcal{H}), Z\in \mathcal{S}_\infty[\mathcal{H};{\mathcal{X}}]$, then $X=ac Z db$ and by H\"{o}lder's inequality $\|ac\|_{2p}\leq \|a\|_{2r}\|c\|_{2q}$, since $\frac{1}{p}=\frac{1}{r}+\frac{1}{q}$, we get $ac,db\in \mathcal{S}_{2p}(\mathcal{H})$. Hence
    \begin{align}
\|X\|_{\mathcal{S}_p[\mathcal{H};\mathcal{X}]}&\leq \|ac\|_{2p}\|Z\|_{\mathcal{S}_\infty[\mathcal{H};\mathcal{X}]}\|db\|_{2p} \\&\leq \|a\|_{2r}\|c\|_{2q}\|Z\|_{\mathcal{S}_\infty[\mathcal{H};\mathcal{X}]}\|d\|_{2q}\|b\|_{2r}\,.
    \end{align} 
     After minimization over $c,d,Z$, we obtain
    \begin{align}
\|X\|_{\mathcal{S}_p[\mathcal{H};\mathcal{X}]} \leq \|a\|_{2r}\|Y\|_{\mathcal{S}_q[\mathcal{H};\mathcal{X}]}\|b\|_{2r} \implies  \|X\|_{\mathcal{S}_p[\mathcal{H};\mathcal{X}]} \leq \inf_{\substack{a,b\in \mathcal{S}_{2r}\\Y\in \mathcal{S}_q[\mathcal{H};\mathcal{X}]}}\|a\|_{2r}\|Y\|_{\mathcal{S}_q[\mathcal{H};\mathcal{X}]}\|b\|_{2r}\,.
    \end{align}

    On the other hand, let $\epsilon>0$. Then there exists $Y\in \mathcal{S}_\infty[\mathcal{H};\mathcal{X}]$ and $a,b\in \mathcal{S}_{2p}(\mathcal{H})$ such that $X=aYb$ and 
    \begin{align}
\|X\|_{\mathcal{S}_p[\mathcal{H};\mathcal{X}]}+\epsilon \geq \|a\|_{2p}\|Y\|_{\mathcal{S}_\infty[\mathcal{H};\mathcal{X}]}\|b\|_{2p}\,.
    \end{align}
    By performing a polar decomposition of $a$ and $b$ and absorbing the unitaries into $Y$, which does not change the norm, we may assume $a,b$ to be positive semidefinite and thus $a^{\frac{p}{q}},b^{\frac{p}{q}}\in \mathcal{S}_{2q}(\mathcal{H})$ and $a^{\frac{p}{r}},b^{\frac{p}{r}}\in \mathcal{S}_{2r}(\mathcal{H})$. As a result, $X = a^{\frac{p}{r}} a^{\frac{p}{q}} Y b^{\frac{p}{q}} b^{\frac{p}{r}}$. Hence

    \begin{align}
        \inf_{\substack{X=cWd\\c,d\in \mathcal{S}_{2r}(\mathcal{H}), W\in \mathcal{S}_q[\mathcal{H};\mathcal{X}], }}\|c\|_{2r}\|W\|_{\mathcal{S}_q[\mathcal{H};\mathcal{X}]}\|d\|_{2r} &\leq \|a^{\frac{p}{r}}\|_{2r}\|a^{\frac{p}{q}}Yb^{\frac{p}{q}}\|_{\mathcal{S}_q[\mathcal{H};\mathcal{X}]}\|b^{\frac{p}{r}}\|_{2r} \\
        &\leq \|a^{\frac{p}{r}}\|_{2r}\|a^{\frac{p}{q}}\|_{2q}\|Y\|_{\mathcal{S}_\infty[\mathcal{H};\mathcal{X}]}\|b^{\frac{p}{q}}\|_{2q}\|b^{\frac{p}{r}}\|_{2r} \\&=\|a\|_{2p}\|Y\|_{\mathcal{S}_\infty[\mathcal{H};\mathcal{X}]} \|b\|_{2p} = \|X\|_{\mathcal{S}_p[\mathcal{H};\mathcal{X}]}+\epsilon.
    \end{align}
    Since $\epsilon>0$ is arbitrary the claim follows.

    \medskip 

   (ii) by the duality $\mathcal{S}_{q}[\mathcal{H};\mathcal{X}]^*=\mathcal{S}_{q^\prime}[\mathcal{H};\mathcal{X}^*]$, i.e.\ there is a complete isometry between these two operator spaces, where $1/q+1/q^\prime=1$ (see \cite[Proposition 4.3]{Beigi.2023}). For $1/p + 1/p^\prime = 1$, we have $1 \leq q^\prime \leq p^\prime$ and also $\frac{1}{r} = \frac{1}{q^\prime} - \frac{1}{p^\prime}$, we thus obtain
    \begin{align}
    \|X\|_{\mathcal{S}_q[\mathcal{H};\mathcal{X}]}
        &=\sup_Y\frac{|\Tr[Y^*X]|}{\|Y\|_{\mathcal{S}_{q^\prime}[\mathcal{H};\mathcal{X}^*]}}\\
        &\overset{\operatorname{(i)}}{=} \sup_Y\sup_{ Y = aZb}\frac{|\Tr[Y^*X]|}{\|a\|_{2r}\|b\|_{2r}\|Z\|_{\mathcal{S}_{p^\prime}[\mathcal{H};\mathcal{X}^*]}} \\ &=\sup_{a,b,Z}\frac{|\Tr[b^*Z^*a^*X]|}{\|a\|_{2r}\|b\|_{2r}\|Z\|_{\mathcal{S}_{p^\prime}[\mathcal{H};\mathcal{X}^*]}}\\
        &=\sup_{a,b,Z}\frac{|\Tr[Z^*(a^*Xb^*)]|}{\|a\|_{2r}\|b\|_{2r}\|Z\|_{\mathcal{S}_{p^\prime}[\mathcal{H};\mathcal{X}^*]}}\\
&=\sup_{a,b}\frac{\|aXb\|_{\mathcal{S}_p[\mathcal{H};\mathcal{X}]}}{\|a\|_{2r}\|b\|_{2r}}.
    \end{align}

\medskip

To derive the second lines of both of these equations we will argue first for positive semidefiniteness of the optimizers. This follows via polar decomposition and unitary invariance of both the Schatten-$2r$-norms and local unitary invariance of the Pisier norms $\cS_q[\cH;\cX]$. 
Explicitly,
\begin{align}
    \|X\|_{\cS_p[\cH;\cX]} &= \inf_{\substack{a,b\in \mathcal{S}_{2r}(\mathcal{H}), Y\in\mathcal{S}_q[\mathcal{H}; \mathcal{X}] \\X=aYb}}\|a\|_{2r}\|b\|_{2r}\|Y\|_{\cS_q[\cH;\cX]} \\ & = \inf_{\substack{a,b\in \mathcal{S}_{2r}(\mathcal{H}), Y\in\mathcal{S}_q[\mathcal{H}; \mathcal{X}] \\X=|a|UYV|b^*|}}\||a|\|_{2r}\||b|\|_{2r}\|U_1YV_1\|_{\cS_q[\cH;\cX]} \\ & = \inf_{\substack{a,b\in \mathcal{S}_{2r}(\mathcal{H}), Y\in\mathcal{S}_q[\mathcal{H}; \mathcal{X}] \\X=|a|UYV|b^*|}}\||a|\|_{2r}\||b|\|_{2r}\|Y\|_{\cS_q[\cH;\cX]} 
\end{align} and likewise
\begin{align}
    \|X\|_{\mathcal{S}_q[\mathcal{H};\mathcal{X}]} &= \sup_{a,b\in \mathcal{S}_{2r}(\mathcal{H})}\|a\|_{2r}^{-1}\|b\|_{2r}^{-1}\|aXb\|_{\mathcal{S}_p[\mathcal{H};\mathcal{X}]} \\
    &= \sup_{a,b\in \mathcal{S}_{2r}(\mathcal{H})}\||a|\|_{2r}^{-1}\||b|\|_{2r}^{-1}\|U_1|a^*|X|b|V_1\|_{\mathcal{S}_p[\mathcal{H};\mathcal{X}]} 
    \\
    &= \sup_{a,b\in \mathcal{S}_{2r}(\mathcal{H})}\||a|\|_{2r}^{-1}\||b|\|_{2r}^{-1}\||a^*|X|b|\|_{\mathcal{S}_p[\mathcal{H};\mathcal{X}]}.
\end{align}
The local unitary invariance of the Pisier norms of $\cS_q[\cH;\cX]$ for any $q\in[1,\infty]$ follows via \Cref{thm:PisierOriginal}:
\begin{align}
    \|U_1XV_1\|_{\cS_q[\cH;\cX]} &= \inf_{\underset{U_1XV_1=a_1Yb_1}{a,b\in\cS_{2q}(\cH), Y\in\cS_\infty[\cH;\cX]}}\|a\|_{2q}\|b\|_{2q}\|Y\|_{\cS_\infty[\cH;\cX]} \\
    &= \inf_{\underset{X=U_1^*a_1Yb_1V_1^*}{a,b\in\cS_{2q}(\cH), Y\in\cS_\infty[\cH;\cX]}}\|a\|_{2q}\|b\|_{2q}\|Y\|_{\cS_\infty[\cH;\cX]}\\
    &= \inf_{\underset{X=a^\prime_1Yb^\prime_1}{a^\prime,b^\prime\in\cS_{2q}(\cH), Y\in\cS_\infty[\cH;\cX]}}\|Ua^\prime\|_{2q}\|b^\prime V\|_{2q}\|Y\|_{\cS_\infty[\cH;\cX]} \\
    &= \inf_{\underset{X=a^\prime_1Yb^\prime_1}{a^\prime,b^\prime\in\cS_{2q}(\cH), Y\in\cS_\infty[\cH;\cX]}}\|a^\prime\|_{2q}\|b^\prime \|_{2q}\|Y\|_{\cS_\infty[\cH;\cX]} = \|X\|_{\cS_q[\cH;\cX]}.
\end{align}

\medskip

To prove that we can assume the operators $a,b$ in the optimization to have support governed by the left-, respectively right-marginal support of $X$ we can argue exactly as in \cite[Lemma 3.2]{Kochanowski.2026}. We hence don't explicitly repeat the argument here, but mention that the generalized Hölder inequality for the Schatten-$p$-quasi-norms used in the proofs of \cite[Section 3.1]{Kochanowski.2026} needs to be replaced by \Cref{thm:PisierOriginal}.

\medskip

Finally concerning self-adjoint operators $X=X^*$ we will first show that, already in Pisier's formula \Cref{thm:PisierOriginal}, the optimizers $a,b$ can be chosen equal $a=b$ when $X=X^*$. This will then via $(i)$ and $(ii)$ directly imply the claim for this theorem. The proof we present below is inspired by the one of \cite[Proposition 5.2 (iii)]{Bardet.2024}, which proves that claim for the special case $\cX=\cS_\infty(\cH_2)$. 
For completeness we mention that the proof of \cite[Lemma 3.2]{Kochanowski.2026}, which relies only on matrix analytic arguments, would suffice to prove this.
Let $X=X^*\in\cS_q[\cH;\cX]$ and
WLOG assume $\|X\|_{(q;\cX)}=1$. By Pisier's formula \Cref{thm:PisierOriginal} there exists positive $a,b\in \cS^+_{2q}(\cH)$ and $Y\in \cS_\infty[\cH;\cX]$ s.t. $X=aYb$ and $\|a\|_{2q}=\|b\|_{2q}=1, \|Y\|_{(\infty;\cX)}\leq 1+\epsilon$.    
Defining $Q:= \frac{1}{\sqrt{2}}(a^2+b^2)^\frac{1}{2}$, denote its generalized Moore-Penrose inverse with $Q^{-1}$, then we can rewrite
\begin{align}
    X=aYb=X^*=bY^*a\equiv QZQ,
\end{align} where 
\begin{align}
Z&:= \frac{1}{2}Q^{-1}(aYb+bY^*a)Q^{-1}=\frac{1}{2}\begin{pmatrix}
    Q^{-1}a & Q^{-1}b    
    \end{pmatrix}\begin{pmatrix}
        Y & 0 \\
        0 & Y^*
    \end{pmatrix}
    \begin{pmatrix}
        bQ^{-1} \\ aQ^{-1}
    \end{pmatrix},
\end{align} since the support projections of both $a$ and $b$ are in positive operator order dominated by the projection $QQ^{-1}=Q^{-1}Q$.
Direct computation yields
\begin{align}
    \|Q\|_{2q} = \left\|\frac{1}{2}a^2+\frac{1}{2}b^2\right\|_q^{\frac{1}{2}}\leq \left(\frac{1}{2}\|a\|^2_{2q}+\frac{1}{2}\|b\|^2_{2q}\right)^\frac{1}{2}\leq 1
\end{align} by triangle inequality and assumption. Further we get, using the $C^*$-identity
\begin{align}
    \|\begin{pmatrix}
        Q^{-1}a & Q^{-1}b
    \end{pmatrix}\|_\infty &= \|Q^{-1}(a^2+b^2)Q^{-1}\|^{\frac{1}{2}}_\infty\leq \sqrt{2}, \\
    \left\|\begin{pmatrix}
        bQ^{-1} \\ aQ^{-1}
    \end{pmatrix}\right\|_\infty &= \|Q^{-1}(a^2+b^2)Q^{-1}\|^{\frac{1}{2}}_\infty\leq \sqrt{2}. 
\end{align} Using the above and Pisier's formula we get
\begin{align}
    \|Z\|_{(\infty;\cX)}\leq \frac{1}{2}\left\|\begin{pmatrix}
        Q^{-1}a & Q^{-1}b
    \end{pmatrix}\right\|_\infty \left\|\begin{pmatrix}
        Y & 0 \\ 0 & Y^*
    \end{pmatrix} \right\|_{(\infty,\infty;\cX)} \left\|\begin{pmatrix}
        bQ^{-1} \\ aQ^{-1}
    \end{pmatrix} \right\|_\infty \leq \|Y\|_{(\infty;\cX)}\leq 1+\epsilon.
\end{align}
So in total we have shown that
\begin{align}
\left\|\left(\frac{Q}{\|Q\|_{2q}}\right)^{-1}X\left(\frac{Q}{\|Q\|_{2q}}\right)^{-1}\right\|_{(\infty;\cX)} = \|Q\|_{2q}^2\|Z\|_{(\infty;\cX)}\leq 1+\epsilon,
\end{align} which since this works for any $\epsilon>0$ finishes the proof, as it shows that 
\begin{align}
    \inf_{a\geq 0,\, X=aYa} \|a\|^2_{2q}\|Y\|_{(\infty;\cX)}=1=\|X\|_{(q;\cX)}.
\end{align}
Now given some $X=X^*$ repeating $(i)$ with this symmetric Pisier's formula when $X=X^*$ yields the infimum formula of the theorem with $a=b$. With this the duality argument of $(ii)$ finally yields the supremum formula with $a=b$.
\end{proof}

\subsection{Proof of \Cref{thm:main.multi.pisier}}\label{app:main.multi.pisier}

In the following we will prove \Cref{thm:main.multi.pisier}.
Recall that the statement we wish to prove is that for a normed multi-indexed Schatten space $\mathcal{Q}$ and an operator space $\mathcal{X}$ then it holds that
\begin{align}
    \|X\|_{(Q:\mathcal{Q};\cX)} &=\inf_{\underset{X=(A\otimes\1)Y(B\otimes\1)}{A,B,Y}}\|AA^*\|_{(Q:\mathcal{Q})}^{\frac{1}{2}}\|B^*B\|_{(Q:\mathcal{Q})}^{\frac{1}{2}}\|Y\|_{(Q:\infty;\cX)} \\ &=\inf_{\underset{\Pi_A=L_X, \Pi_B=R_X}{A,B\geq0}}\|A\|_{(Q:\mathcal{Q})}^{\frac{1}{2}}\|B\|_{(Q:\mathcal{Q})}^{\frac{1}{2}}\|(A^{-\frac{1}{2}}\otimes \1)X(B^{-\frac{1}{2}}\otimes\1)\|_{(Q:\infty;\cX)}.
\end{align} and that, moreover, when $X=X^*$ is self-adjoint, then we can restrict to $A=B^*$

\begin{proof}[Proof of \Cref{thm:main.multi.pisier}]
    We will prove this inductively in the size $|\mathcal{Q}|=n$. The case $n=1$ is just Pisier's theorem \cite[Theorem 1.5]{book:Pisier.1998}, see also \Cref{thm:VariationalSpOperatorspace}. For the induction step $n\to n+1$ take $\mathcal{Q}_{n+1}:=(q_1,...,q_{n+1})\equiv (\mathcal{Q}_n,\,q_{n+1})$. Then we have
\begin{align}
&\|X\|_{(Q:\mathcal{Q}_{n+1};\cX)}\\&=\inf_{A,B}\|AA^*\|_{\mathcal{Q}_n}^\frac{1}{2}\|B^*B\|_{\mathcal{Q}_n}^\frac{1}{2}\|A^{-1}XB^{-1}\|_{(Q:\infty, q_{n+1};\cX)} \\&=\inf_{A,B}\|AA^*\|_{\mathcal{Q}_n}^\frac{1}{2}\|B^*B\|_{\mathcal{Q}_n}^\frac{1}{2}\sup_{a,b\geq0}\|a\|^{-1}_{2q_{n+1}}\|b\|^{-1}_{2q_{n+1}}\|aA^{-1}XB^{-1}b\|_{(Q:q_{n+1};\cX)} \\&=\inf_{A,B}\|AA^*\|_{\mathcal{Q}_n}^\frac{1}{2}\|B^*B\|_{\mathcal{Q}_n}^\frac{1}{2}\sup_{a,b\geq0}\inf_{c,d\geq0}\|a\|^{-1}_{2q_{n+1}}\|b\|^{-1}_{2q_{n+1}}\|c\|_{2q_{n+1}}\|d\|_{2q_{n+1}}\|c^{-1}aA^{-1}XB^{-1}bd^{-1}\|_{( Q:\infty;\cX)} \\ &=\inf_{A,B}\|AA^*\|_{\mathcal{Q}_n}^\frac{1}{2}\|B^*B\|_{\mathcal{Q}_n}^\frac{1}{2}\sup_{a,b\geq0}\inf_{C,D}\|a\|^{-1}_{2q_{n+1}}\|b\|^{-1}_{2q_{n+1}}\|aA^{-1}CC^*A^{*-1}a\|^\frac{1}{2}_{q_{n+1}}\|bB^{*-1}D^*DB^{-1}b\|^\frac{1}{2}_{q_{n+1}}\\ &\hspace{5cm}\times\|C^{-1}XD^{-1}\|_{(Q:\infty;\cX)} \\ &=\inf_{A,B}\|AA^*\|_{\mathcal{Q}_n}^\frac{1}{2}\|B^*B\|_{\mathcal{Q}_n}^\frac{1}{2}\sup_{a,b\geq0}\inf_{C,D\geq0}\|a^{\frac{1}{2q_{n+1}}}A^{-1}CA^{*-1}a^{\frac{1}{2q_{n+1}}}\|^\frac{1}{2}_{q_{n+1}}\|b^{\frac{1}{2q_{n+1}}}B^{*-1}DB^{-1}b^{\frac{1}{2q_{n+1}}}\|^\frac{1}{2}_{q_{n+1}}\\ &\hspace{5cm}\times\|C^{-\frac{1}{2}}XD^{-\frac{1}{2}}\|_{(Q:\infty;\cX)},
\end{align} where we repeatedly used \Cref{thm:VariationalSpOperatorspace}. In the fourth line we set $C:=Aa^{-1}c$ and $D:=db^{-1}B$ and used the standard identity $\|x\|^2_{2q}=\|xx^*\|_q=\|x^*x\|_q$ for Schatten norms. In the last equality we substituted $\widetilde{C}=CC^*, \widetilde{D}=D^*D$ and renamed them to $C,D$ while using that the operator-valued Schatten norms are invariant under local unitaries on the outer index.
Next we will argue how we can swap the infimum and supremum. 
First raising everything to the power $\frac{q_{n+1}}{1+q_{n+1}}$ and then applying \Cref{lem:Inf.Prod.to.Sum} to the infimum yields 
\begin{align}
   &\left(\inf_{C,D\geq0}\|a^{\frac{1}{2q_{n+1}}}A^{-1}CA^{*-1}a^{\frac{1}{2q_{n+1}}}\|^\frac{1}{2}_{q_{n+1}}\|b^{\frac{1}{2q_{n+1}}}B^{*-1}DB^{-1}b^{\frac{1}{2q_{n+1}}}\|^\frac{1}{2}_{q_{n+1}}\|C^{-\frac{1}{2}}XD^{-\frac{1}{2}}\|_{(Q:\infty;\cX)}\right)^\frac{q_{n+1}}{q_{n+1}+1} \\ &= \inf_{C,D\geq0}\frac{1}{2q_{n+1}+2}\|a^{\frac{1}{2q_{n+1}}}A^{-1}CA^{*-1}a^{\frac{1}{2q_{n+1}}}\|^{q_{n+1}}_{q_{n+1}}+\frac{1}{2q_{n+1}+2}\|b^{\frac{1}{2q_{n+1}}}B^{*-1}DB^{-1}b^{\frac{1}{2q_{n+1}}}\|^{q_{n+1}}_{q_{n+1}}\\  &\qquad +\frac{q_{n+1}}{q_{n+1}+1}\|C^{-\frac{1}{2}}XD^{-\frac{1}{2}}\|_{(Q:\infty;\cX)}.
\end{align}
We claim now that the RHS of this expression is jointly convex in $(C,D)$ and concave in $a$ and $b$ each, as well as upper- and lower semi-continuous in the correct sense to apply Sion's minimax theorem \Cref{thm:Sion}. First we have that
\begin{align}
\|a^{\frac{1}{2q_{n+1}}}A^{-1}CA^{*-1}a^{\frac{1}{2q_{n+1}}}\|^{q_{n+1}}_{q_{n+1}} &= \Tr[(a^{\frac{1}{2q_{n+1}}}A^{-1}CA^{*-1}a^{\frac{1}{2q_{n+1}}})^{q_{n+1}}] \\ &=\Tr[(a^{\frac{1}{2q_{n+1}}}A^{-1}CA^{*-1}a^{\frac{1}{q_{n+1}}}A^{*-1}CA^{-1}a^{\frac{1}{2q_{n+1}}})^{\frac{q_{n+1}}{2}}]
\end{align} is continuous and concave in $a\geq0$ since $q_{n+1}\geq 1$ by \cite[Theorem 1.1 (1)]{Zhang.2020} and continuous and convex in $C\geq0$ since $q_{n+1}\geq 1$ by \cite[Theorem 1.1]{Carlen.2008}. The same clearly applies to the second summand. For the third summand we use that the operator space $\cS_\infty[\cH;\cX]\hookrightarrow\cB(\cH\otimes\cK)$ is completely isometrically embedded in $\cB(\cH\otimes\cK)$, when $\cX\subset\cB(\cK)$ \cite{book:Pisier.1998}. We have by \cite[Theorem 5.1 (2)]{Hiai.2013} that
\begin{align}
    \|(C^{-\frac{1}{2}}\otimes\1)X(D^{-\frac{1}{2}}\otimes\1)\|_{(\infty;\cX)}&= \|(C^{-\frac{1}{2}}\otimes\1)X(D^{-\frac{1}{2}}\otimes\1)\|_{\cB(\cH\otimes\cK)}\\&=\big\|\left((C^{-\frac{1}{2}}\otimes\1)X(D^{-1}\otimes\1))X^*(C^{-\frac{1}{2}}\otimes\1)\right)^\frac{1}{2}\big\|_{\cB(\cH\otimes\cK)}
\end{align} is jointly convex in $(C,D)$, which on $\cS_\infty[\cH;\cX]$ reduces to our desired statement.
Continuity in $(C,D)$ follows when restricting to $C,D$ with $\Pi_C=L_X, \Pi_D=R_X$.

Hence \Cref{thm:Sion} applies and we may exchange the supremum over $a,b$ with the infimum over $C,D$. After the exchange, the supremum over $a$ only acts on the first summand, and by \Cref{thm:VariationalSpOperatorspace}~(ii), with symmetric optimizers for the positive operator $A^{-1}CA^{*-1}$ and after undoing the substitution $a\to a^{\frac{1}{2q_{n+1}}}$, and using
\begin{align}
    \sup_{a\geq0,\,\|a\|_1\leq1}\|a^{\frac{1}{2q_{n+1}}}A^{-1}CA^{*-1}a^{\frac{1}{2q_{n+1}}}\|^{q_{n+1}}_{q_{n+1}}=\|A^{-1}CA^{*-1}\|^{q_{n+1}}_{(Q_1\cdots Q_n:\infty,\,Q_{n+1}:q_{n+1})},
\end{align}
and likewise for the second summand. Since these expressions are homogeneous of degree $q_{n+1}$ in $C$, respectively $D$, \Cref{lem:Inf.Prod.to.Sum} can be applied in reverse, and we obtain
\begin{align}
    \|X\|_{(Q:\cQ_{n+1};\cX)} &= \inf_{A,B}\|AA^*\|_{\cQ_n}^{\frac12}\|B^*B\|_{\cQ_n}^{\frac12}\inf_{C,D\geq0}\|A^{-1}CA^{*-1}\|^{\frac12}_{(\infty,q_{n+1})}\\&\hspace{3cm}\times\|B^{*-1}DB^{-1}\|^{\frac12}_{(\infty,q_{n+1})}\|C^{-\frac12}XD^{-\frac12}\|_{(Q:\infty;\cX)} \\
    &= \inf_{C,D\geq0}\Big(\inf_{A}\|AA^*\|_{\cQ_n}\|A^{-1}CA^{*-1}\|_{(\infty,q_{n+1})}\Big)^{\frac12}\\&\hspace{3cm}\times\Big(\inf_{B}\|B^*B\|_{\cQ_n}\|B^{*-1}DB^{-1}\|_{(\infty,q_{n+1})}\Big)^{\frac12}\|C^{-\frac12}XD^{-\frac12}\|_{(Q:\infty;\cX)} \\
    &= \inf_{C,D\geq0}\|C\|^{\frac12}_{(Q:\cQ_{n+1})}\|D\|^{\frac12}_{(Q:\cQ_{n+1})}\|C^{-\frac12}XD^{-\frac12}\|_{(Q:\infty;\cX)},
\end{align}
where in the last line we used the induction hypothesis for the positive operators $C$ and $D$ with the operator space $\cS_{q_{n+1}}(\cH_{Q_{n+1}})$ and symmetric optimizers, together with $\|\cdot\|_{(\cQ_n;\cS_{q_{n+1}})}=\|\cdot\|_{(\cQ_{n+1})}$. This is the second expression in \eqref{equ:main.multi.pisier}; the first one follows from it by polar decomposition.

In the case where $X=X^*$ repeat the analogous argument in the proof of \Cref{thm:VariationalSpOperatorspace} or use that statement in the induction step above to derive the claim.
\end{proof}

\end{document}